\documentclass[pra,onecolumn,superscriptaddress]{article}

\usepackage[T1]{fontenc}
\usepackage[utf8]{inputenc}
\usepackage{amsmath,amsfonts,amsthm,amssymb}
\usepackage{mathtools}
\usepackage{subeqnarray}
\usepackage{setspace}
\usepackage{graphics,graphicx,color}
\usepackage{url}
\usepackage{enumitem}
\usepackage{float} 
\usepackage{multirow}
\usepackage{hhline}
\usepackage[margin=1.25in]{geometry}
\usepackage{braket}
\usepackage{dsfont} 
\usepackage{authblk}
\usepackage{multirow}
\usepackage{hhline}
\usepackage[makeroom]{cancel}
\usepackage{ifthen}
\usepackage{comment}
\usepackage{breakcites}
\usepackage[dvipsnames]{xcolor}
\usepackage{microtype}
\usepackage{tcolorbox}

\newlist{notation}{description}{1}
\setlist[notation]{
    style=multiline,
    align=left,
    font=\normalfont,
    labelwidth=7em,
    leftmargin=9em,
    labelsep=1em,
    itemsep=0.4\baselineskip,
    parsep=0pt,
    topsep=0.5\baselineskip
}

\usepackage{mleftright}
\renewcommand{\left}{\mleft}
\renewcommand{\right}{\mright}

\usepackage{sectsty}
\allsectionsfont{\boldmath}

\makeatletter
\renewcommand*\l@section[2]{%
  \ifnum \c@tocdepth >\z@
    \addpenalty\@secpenalty
    \addvspace{1.0em \@plus\p@}%
    \setlength\@tempdima{1.5em}%
    \begingroup
      \parindent \z@
      \rightskip \@pnumwidth
      \parfillskip -\@pnumwidth
      \leavevmode
      \bfseries\boldmath
      \advance\leftskip\@tempdima
      \hskip -\leftskip
      #1\nobreak\hfil
      \nobreak\hb@xt@\@pnumwidth{\hss #2\kern-\p@\kern\p@}\par
    \endgroup
  \fi
}
\makeatother

\usepackage[nottoc]{tocbibind}

\definecolor{ddgreen}{rgb}{.05,.4,.05}
\definecolor{damethyst}{rgb}{0.4, 0.2, 0.6}
\usepackage[colorlinks, linkcolor=damethyst,citecolor=ddgreen]{hyperref}

\usepackage[nameinlink]{cleveref}

\usepackage{mathrsfs}

\usepackage{tabularray}

\newtheorem{theorem}{Theorem}
\newtheorem{cor}{Corollary}
\newtheorem{definition}{Definition}

\newtheorem{lem}{Lemma}
\newtheorem{fact}{Fact}
\newtheorem{prop}{Proposition}
\newtheorem*{prop*}{Proposition}

\theoremstyle{definition}

\usepackage{keytheorems}
\newkeytheoremstyle{boxthm}
  {
    noteseparator={ },
    notebraces={(}{)},
    headpunct={.},
    bodyfont=\normalfont,
    tcolorbox={
    coltitle=black,
    toptitle=2pt,
    colback=OliveGreen!5,
  colframe=OliveGreen!25,
  boxrule=1pt,
  arc=6pt,
  left=6pt,right=6pt,top=6pt,bottom=6pt
    }, %
  }
\newkeytheorem{protocol}[style=boxthm]

\newkeytheoremstyle{minorboxthm}
  {
    noteseparator={ },
    notebraces={(}{)},
    headpunct={.},
    bodyfont=\normalfont,
    tcolorbox={
    coltitle=black,
    toptitle=2pt,
    colback=Gray!5,
  colframe=Gray!25,
  boxrule=1pt,
  arc=6pt,
  left=6pt,right=6pt,top=6pt,bottom=6pt
    }, %
  }
\newkeytheorem{minorprotocol}[name=Protocol,style=minorboxthm,sibling=protocol]

\usepackage[scr=boondoxo,scrscaled=1.05]{mathalfa}

\DeclareSymbolFont{yhlargesymbols}{OMX}{yhex}{m}{n} \DeclareMathAccent{\reallywidehat}{\mathord}{yhlargesymbols}{"62}

\newcommand{\C}{\mathbb{C}}

\newcommand{\1}{\mathbf 1}

\newcommand{\Ext}{\mathrm{Ext}}

\newcommand{\Var}{\mathrm{Var}}

\newcommand{\unif}{\textnormal{unif}}

\newcommand{\Int}{\mathrm{int}}

\newcommand{\frechet}{\textnormal{D}}

\newcommand{\beq}{\begin{eqnarray}}
\newcommand{\eeq}{\end{eqnarray}}

\newcommand\norm[1]{\left\Vert#1\right\Vert}

\newcommand{\poly}{\mathsf{poly}}

\DeclareMathOperator{\E}{\mathbf{E}}

\newcommand{\R}{\mathcal{R}}

\newcommand{\Avg}{\textnormal{\textbf{Avg}}}

\renewcommand{\epsilon}{\varepsilon}
\newcommand{\eps}{\epsilon}
\newcommand{\bs}{\boldsymbol}

\newcommand{\op}{\mathrm{op}}

\newcommand{\ketbra}[2]{\ket{#1}\!\!\bra{#2}}

\usepackage{tikz}

\newcommand{\tr}{\mbox{\textnormal{tr}}\,}

\definecolor{amethyst}{rgb}{0.6, 0.4, 0.8}

\title{Robust certification of almost all quantum states\\from single-qubit measurements}
\title{Robust quantum state certification and\\uncertainty principles for
total influence}

\author{Andrea Coladangelo\thanks{Email: \texttt{coladan@cs.washington.edu}}} 
\author{Jerry Li\thanks{Email: \texttt{jerryzli@cs.washington.edu}}} 
\author{Joseph Slote\thanks{Email: \texttt{jslote@cs.washington.edu}}}
\affil{Paul G. Allen School of Computer Science and Engineering, University of Washington}
\date{}

\newcommand{\wh}{\widehat}
\newcommand{\wt}{\widetilde}
\newcommand{\mc}{\mathcal}

\newcommand{\tsum}{{\textstyle\sum}}
\newcommand{\Inf}{\mathbf{Inf}}
\newcommand{\Bias}{\mathrm{Bias}}

\usepackage{tikz}

\newcommand{\gobackuparrow}{%
  \mathrel{%
    \tikz[baseline=-.13ex, x=1ex, y=1ex]{
      \draw[->, line width=0.6pt]
        (0,0)
        arc[start angle=270, end angle=180, radius=0.8]
        -- (-0.8,1.6);
    }%
  }%
}

\usepackage{fancyhdr}
\begin{document}

\maketitle
\begin{abstract}
We show that nonadaptive single-qubit Pauli measurements suffice to test whether an unknown $n$-qubit state $\rho$ is $\eps$-close to or $\mc O(\eps)$-far from an ideal target state $\ket{\psi}$,
for all but a $2^{-\Omega(n)}$ fraction of target states.
The test uses $\mc O(\eps^{-2}\log(1/\delta))$ copies of $\rho$ to achieve confidence $1-\delta$, which is information-theoretically optimal even among protocols with arbitrary joint measurements.

The main technical innovation is an uncertainty principle for weighted generalizations of the total influence of Boolean functions.
As a simple example, the unweighted variant states that $\Inf[f]+\Inf[\wh f\,] = \Omega(n)$, which is a natural hypercube analogue of the Heisenberg uncertainty principle (here $\,\wh\cdot\,$ denotes the $2^{-n/2}$-normalized Fourier transform).
The weighted case generalizes $\Inf[\,\cdot\,]$ and $\Inf[\,\wh \cdot\,]$ to Dirichlet energies associated with Glauber dynamics for certain dual measures on the cube.
    
\end{abstract}

\newpage
\hypertarget{mytoc}{}
\setcounter{tocdepth}{2}
\tableofcontents

\newpage

\fancyfoot[L]{\footnotesize\hyperlink{mytoc}{$\gobackuparrow$\,\textsf{Contents}}}
\fancyfoot[R]{\footnotesize\hyperlink{gloss}{\textsf{Glossary}\,\raisebox{-1pt}{\rotatebox[origin=c]{180}{$\gobackuparrow$}}}}

\section{Introduction}

A basic task in quantum information is \textit{state certification}: testing whether a copy of an unknown state is close to a pure target state $\ket{\psi}$ (or far from it). This task has broad applications in quantum science, from benchmarking quantum devices to verifying the implementation of quantum algorithms~\cite{Badescu2019,ZhuHayashi,KlieschRoth,Huang2025}. With no constraints on measurement capabilities, the  optimal measurement is well-understood: it is simply the projective measurement $\{\ketbra{\psi}{\psi},\mathbf 1-\ketbra{\psi}{\psi}\}$.
However, implementing this measurement is in general just as difficult as preparing $\ket{\psi}$ itself, and there is significant interest in finding state certification protocols that only require simple measurements---ideally of only a few qubits at a time.
Remarkably, recent work has shown that single-qubit measurements suffice.

The first such result was by Huang, Preskill, and Soleimanifar \cite{Huang2025}, who proposed a protocol consisting of non-adaptive single-qubit Pauli measurements, which can certify all but an exponentially-small fraction of $n$-qubit target states.
In subsequent, independent works, Gupta, He, and O'Donnell~\cite{guptaHeODonnell2025}, and Li and Zhu~\cite{li2025universal} proposed somewhat more intricate protocols consisting of single-qubit measurements that can certify \textit{any} target state.
\cite{guptaHeODonnell2025} also showed that adaptivity is \textit{necessary} to get a single-qubit measurement protocol that can certify any state. In a more recent work, Abdul Sater et al.~\cite{sater2025efficientcertificationintractablequantum} describe a fully-efficient protocol (using only efficiently accessible information about the target state) to certify a useful, but restricted, class of states with Pauli measurements.

Unfortunately, all of the above protocols share a limitation, which is that they are non-robust: they can only certify whether the unknown state $\rho$ is $\mc O(\eps/n)$-close or at least $\eps$-far from the $n$-qubit target state $\ket{\psi}$. Said another way, these protocols only succeed in positively certifying lab states that are vanishingly close to the ideal state as the number of qubits $n$ grows, making them difficult to apply in practice: if an experimentalist wishes to double their system size (\textit{i.e.}, double $n$), the test will only remain usable if they simultaneously double their preparation~accuracy.
Moreover, one can show that this non-robustness is not just a limitation of the analysis, but is inherent to their protocols.\footnote{See, for example, \cite[Appendix A]{CLSW26}}

More recently, Coladangelo et al.\ \cite{CLSW26} gave two protocols that make a substantial step towards removing this limitation.
Their first protocol involves single-qubit measurements along with one $\log(n)$-qubit measurement and achieves \emph{constant} robustness: it can certify whether $\rho$ is $\eps/C$-close to or $\eps$-far from $\ket{\psi}$, for a universal constant $C\approx 2$, independent of $n$.
Their second protocol requires only single-qubit measurements, but only achieves $\log(n)$ robustness (\textit{i.e.},\ $C = \Theta(\log n)$).

The authors of \cite{CLSW26} left open the tantalizing possibility that one could achieve the best of both worlds: a certification protocol for almost all target states that achieves \emph{constant} robustness using only \emph{single-qubit} measurements (non-adaptively).
In particular, they conjectured \cite[Conjecture 1]{CLSW26} that a modification of their protocol (which amounts to running the \cite{Huang2025} test in two bases) should suffice.
In this work we prove their conjecture in the affirmative.  We show the following.
\begin{theorem}[Constant robustness with single-qubit measurements]
\label{thm:informal-1}
There exists an efficient protocol using single-qubit measurements that, for all but a $2^{-\Omega(n)}$ fraction of pure $n$-qubit target states $\ket{\psi}$, satisfies the following. Given oracle access to a classical description of the target state $\ket{\psi}$ (via the model in \Cref{def:oracle}), and a single copy of an $n$-qubit mixed state $\rho$,
\begin{itemize}
    \item (Completeness) Outputs $\mathsf{accept}$ with probability at least $\bra{\psi}\rho\ket{\psi}$.
    \item (Soundness) Outputs $\mathsf{reject}$ with probability at least
    $\frac{1-\braket{\psi|\rho|\psi}}{C} - o(1)$, for a universal constant $C>1$ (\textit{i.e.}\ approximately the infidelity up to a constant factor).
\end{itemize}
Moreover, the single-qubit measurements are either \emph{all} in the $X$-basis or \emph{all} in the $Z$-basis, except for one final (non-Pauli) adaptive measurement.
\end{theorem}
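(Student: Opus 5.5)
Our plan follows the structure suggested by \cite{CLSW26}: run the \cite{Huang2025} shadow-overlap / random-walk test in two bases and reduce the soundness analysis to a spectral-gap statement. Concretely, the protocol picks a basis $B\in\{X,Z\}$ uniformly at random. It then measures all qubits but one uniformly chosen qubit $i$ in basis $B$, obtaining a string $y\in\{0,1\}^{n-1}$. Finally, it measures qubit $i$ in the (non-Pauli) basis containing the normalized conditional state $\ket{\psi_{i,y}}\propto(\bra{y}_{[n]\setminus i}\otimes\1)\ket{\psi}$ (in the $B$-rotated frame), accepting on that outcome. The amplitudes needed for $\ket{\psi_{i,y}}$ are obtained from the oracle of \Cref{def:oracle}.

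\textbf{Completeness.} The acceptance probability is $\tr(L\rho)$ with $L=\tfrac12(L_Z+L_X)$, where $L_Z=\frac1n\sum_i\sum_y \ketbra{y}{y}\otimes\ketbra{\psi_{i,y}}{\psi_{i,y}}$ and $L_X=H^{\otimes n}L_Z'H^{\otimes n}$ is the analogue in the Hadamard frame. Each $L_B$ satisfies $L_B\ket\psi=\ket\psi$ and $0\le L_B\le \1$. Hence $L=\ketbra\psi\psi+L^\perp$ with $L^\perp\ge0$ supported on $\ket\psi^\perp$, which gives $\tr(L\rho)\ge\bra\psi\rho\ket\psi$.

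\textbf{Soundness.} It suffices to show that on $\ket\psi^\perp$ we have $\1-L\ge \frac1C$, i.e.\ that $L$ has spectral gap $\Omega(1)$ below its top eigenvalue $1$. Then
\[\Pr[\mathsf{reject}]=\tr((\1-L)\rho)\ge \tfrac1C\bigl(1-\bra\psi\rho\ket\psi\bigr).\]
Following \cite{Huang2025}, $\1-L_Z$ restricted to vectors $\ket\phi=\sum_x \phi(x)\ket x$ is, up to a $1/n$ factor, the Dirichlet form of Glauber dynamics on the cube with stationary measure $\pi(x)=|\psi(x)|^2$, evaluated at $f=\phi/\psi$. Similarly, $\1-L_X$ is the Dirichlet form for the dual measure $\wh\pi(x)=|\wh\psi(x)|^2$ evaluated at $\wh\phi/\wh\psi$. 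The gap claim therefore becomes an uncertainty principle: for $\phi\perp\psi$ with $\|\phi\|=1$,
\[\tfrac1n\bigl(\mathcal E_\pi(\phi/\psi)+\mathcal E_{\wh\pi}(\wh\phi/\wh\psi)\bigr)=\Omega(1).\]
In the uniform case this is exactly $\Inf[f]+\Inf[\wh f\,]=\Omega(n)$, as advertised in the abstract.

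\textbf{Main obstacle.} The key step is proving this weighted uncertainty principle for all but a $2^{-\Omega(n)}$ fraction of Haar-random $\psi$. We would first prove the unweighted version: write $\Inf[f]=\sum_S|S|\wh f(S)^2$, and note that if both $\Inf[f]$ and $\Inf[\wh f\,]$ were $o(n)$, then $f$ and $\wh f$ would both concentrate on Hamming balls of radius $o(n)$, of size $2^{o(n)}\cdot$. This contradicts a discrete Donoho–Stark/entropic uncertainty bound $|{\rm supp}_\epsilon f|\cdot|{\rm supp}_\epsilon\wh f|\gtrsim 2^n$. For the weighted case, we would use that a random $\psi$ has amplitudes that are, with probability $1-2^{-\Omega(n)}$, roughly flat on most of the cube in both bases. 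We would compare $\mathcal E_\pi$ to the uniform influence, losing only constants on a good set, and control the bad set by concentration for Haar states (Lévy's lemma). The $-o(1)$ term in the soundness bound absorbs the contribution from the bad set. Handling this comparison, especially ensuring that the perturbation from non-flat amplitudes does not destroy the $\Omega(n)$ bound, is where we expect most of the work to lie.
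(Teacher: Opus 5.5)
Your protocol, the completeness argument, and the reduction of soundness to an $\Omega(1)$ spectral gap of $L$ on $\ket{\psi}^\perp$ (equivalently, to the weighted Dirichlet-form uncertainty principle) all match the paper. The gap is in your plan for proving that principle. The chain ``amplitudes are nearly flat in both bases, so compare $\mathcal E_\pi,\mathcal E_{\wh\pi}$ to uniform influences, then invoke the unweighted (Donoho--Stark) principle'' cannot work. Even when $\pi$ and $\wh\pi$ are \emph{exactly} uniform, the two forms are evaluated at $\phi/\psi$ and $\wh\phi/\wh\psi$, and these are not a Fourier pair: $\wh\phi/\wh\psi\neq\wh{\phi/\psi}$, because the phases of $\psi$ twist the two sides in unrelated ways. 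Concretely, let $n$ be even, let $\ket{\chi}=\mathrm{CZ}\ket{++}=\frac12(\ket{00}+\ket{01}+\ket{10}-\ket{11})$, which satisfies $H^{\otimes 2}\ket{\chi}=\ket{\chi}$, and let $\ket{\psi}=\ket{\chi}^{\otimes n/2}$. Then $|\psi(x)|^2=|\wh\psi(s)|^2=2^{-n}$ for all $x,s$, so there is no bad set at all. Yet $\ket{\phi}=\ket{\chi'}\otimes\ket{\chi}^{\otimes n/2-1}$, for any two-qubit $\ket{\chi'}\perp\ket{\chi}$, is accepted with probability at least $1-2/n$. The reason is that when the holdout qubit lies outside the first pair, its conditional state depends only on its partner's outcome, and this state is identical for $\phi$ and $\psi$ in either basis. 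Equivalently, $\phi/\psi$ and $\wh\phi/\wh\psi$ each depend on two coordinates and have $O(1)$ influence. So no argument that uses only (approximate) flatness of the moduli can prove the theorem. You must exploit the randomness of the \emph{phases} of $\psi$, uniformly over the moving target $\phi\perp\psi$, and a fixed-$\phi$ L\'evy-type bound does not provide that.

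Your treatment of the non-flat region also fails as stated. Under Porter--Thomas, $B=\{x:\mu(x)\notin[1/(TN),T/N]\}$ has density about $1-e^{-1/T}+e^{-T}$. This is a constant for every fixed $T$, and letting $T$ grow with $n$ costs polynomial factors of $T$ in the good-set comparison, which kills the constant gap. Since a unit vector $\phi$ may put all its mass on $B$, the contribution of $B$ is $\Theta(1)$ and cannot be absorbed into the $-o(1)$ term. In fact, low weighted influence does not even control covering numbers there: the paper exhibits an $\Omega(N)$-dimensional space of such $q$ with $\Inf_\mu[q]\le\kappa n$.

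The missing ideas are the ones the paper uses:
\begin{itemize}
\item Substitute $q=\phi/\psi$. The constraints $\phi\perp\psi$, $\|\phi\|=1$ become $\E_\mu q=0$, $\|q\|_{L^2(\mu)}=1$, which depend only on $\mu=|\psi|^2$. Then condition on $\mu$ and randomize only the phases.
\item Run a win-win on $\Inf_\mu[q]$.
\item In the low-influence case, upper-bound the $X$-side \emph{average} by roughly $\frac34$, rather than lower-bounding an $X$-side influence by duality. This uses fixed-$q$ concentration $\Avg_{\wh f}[\wh{fq}]\le\frac12+t$ with failure probability $\exp(-ct^6N^{1/3-o(1)})$, together with a union bound over a net of size $\exp(N^{H(\gamma)+o(1)})$ for the $A$-part. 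That net requires the extension operator built from the local escape property.
\item For the bad part, use the random-paving operator bounds $\|\Pi_B\wh{M_{\wh f}}\Pi_B\|_{\op}\le\frac12+\eps$ and $\|\Pi_A\wh{M_{\wh f}}\Pi_B\|_{\op}\le\frac14+\eps$.
\end{itemize}
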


The completeness-soundness gap can be amplified via sequential repetition using more copies, and the small amount of adaptivity can be completely removed using classical shadow tomography.
As a result, \Cref{thm:informal-1} implies the following corollary: a copy-optimal test using exclusively non-adaptive single-qubit Pauli measuremerents, with arbitrarily low failure probability.
\begin{cor}[Optimal copy-complexity with with non-adaptive single-qubit Pauli measurements]
\label{cor:1}
    There exists an efficient protocol using only non-adaptive single-qubit Pauli measurements that, for all but a $2^{-\Omega(n)}$ fraction of pure $n$-qubit target states $\ket{\psi}$, satisfies the following. Given oracle access to a classical description of the target state $\ket{\psi}$ (via the model in \Cref{def:oracle}), parameters $\epsilon, \delta > 0$, and $O\left( \epsilon^{-2} \log (1/\delta)\right)$ copies of an arbitrary state $\rho$, 
    \begin{itemize}
        \item (completeness) Outputs \emph{\textsf{accept}} with probability at least $1 - \delta$ if $\bra{\psi} \rho \ket{\psi} \geq 1 - \frac{\epsilon}{C}$.
        \item (soundness) Outputs \emph{\textsf{reject}} with probability at least $1 - \delta$, if $\bra{\psi} \rho \ket{\psi} \leq 1 - \epsilon$.
    \end{itemize}
for some constant $C>1$.

Moreover, the single-qubit measurements are either \emph{all} in the $X$-basis or \emph{all} in the $Z$-basis, except for one qubit that is (non-adaptively) measured in a random Pauli basis.
\end{cor}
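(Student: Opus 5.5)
We derive \Cref{cor:1} from \Cref{thm:informal-1} in two steps: first remove the single adaptive non-Pauli measurement, then amplify the completeness--soundness gap by repetition with a threshold test.

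\textbf{Step 1: removing adaptivity.} In the protocol of \Cref{thm:informal-1}, all qubits but one are measured in a fixed basis ($X$ or $Z$). This yields a classical outcome string $y$ and leaves a single-qubit post-measurement state $\sigma_y$. The final adaptive step measures a two-outcome observable $M_y$ on $\sigma_y$, where $M_y$ is determined by $y$ and the oracle for $\ket\psi$, and accepts with probability $\tr(M_y\sigma_y)$.

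We replace this step by measuring the remaining qubit in a uniformly random Pauli basis $P\in\{X,Y,Z\}$ with outcome $b\in\{\pm1\}$. Write $M_y=\tfrac12(a_0 I+\sum_P a_P P)$. The single-qubit classical-shadow estimator
\[
\hat m = \tfrac12\Bigl(a_0 + 3\,a_P\, b\Bigr)
\]
is then an unbiased estimate of $\tr(M_y\sigma_y)$. Since $0\le M_y\le \1$, the coefficients are bounded, so $\hat m$ is bounded by a constant $B$ in absolute value. Averaged over $y$ and over $\rho$, we get $\E[\hat m]=\Pr[\mathsf{accept}]$ in the original protocol. By \Cref{thm:informal-1}, this means each round produces a bounded random variable $\hat m_i$ with
\[
\E[\hat m_i]\ge F \quad\text{and}\quad \E[1-\hat m_i]\ge \frac{1-F}{C'}-o(1),
\]
where $F=\braket{\psi|\rho|\psi}$. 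The qubit and basis choices are all fixed in advance, so the measurements are non-adaptive Pauli measurements. The estimator computation is classical post-processing, which is efficient given oracle access to the target-state description.

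\textbf{Step 2: amplification.} Define $r_i=1-\hat m_i$ and run $N$ independent rounds on fresh copies. Since the copies are unentangled across rounds, the $r_i$ are i.i.d.; to handle product inputs, we can instead assume the $r_i$ are independent with the stated per-copy bounds. A general joint state $\rho^{\otimes N}$ is given here, so independence holds.

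We reject iff $\bar r=\frac1N\sum_i r_i > \tau$, and the threshold $\tau$ must separate the two cases. Completeness ($F\ge 1-\eps/C$) only gives the upper bound $\E r\le 1-F\le \eps/C$. Soundness gives $\E r\ge \eps/C'-o(1)$. Choosing $C=4C'$ and $\tau=\eps/(2C')$ produces a gap of order $\eps$ on both sides. Here we need the $o(1)$ term to be at most $\eps/(4C')$. That is valid when $\eps$ exceeds the $2^{-\Omega(n)}$-type vanishing term; otherwise we absorb it by observing that \Cref{thm:informal-1}'s $o(1)$ can be taken below any fixed constant fraction of $\eps$ for $n$ large, or we restrict to that regime.

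Hoeffding's inequality for $B$-bounded variables with deviation $\Theta(\eps)$ then gives failure probability $\exp(-\Omega(N\eps^2))$. Taking $N=O(\eps^{-2}\log(1/\delta))$ makes both failure probabilities at most $\delta$. The resulting copy complexity matches the stated optimal bound.

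\textbf{Main obstacle.} The delicate point is the $o(1)$ loss in soundness. Its relation to $\eps$ must be understood precisely, since it determines for which $\eps$ the constant-factor gap survives. I would first try to show from the proof of \Cref{thm:informal-1} that the loss is actually multiplicative (or $2^{-\Omega(n)}$ additive) for typical $\ket\psi$.
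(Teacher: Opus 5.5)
Your proposal is correct and follows the paper's own route: replace the adaptive holdout measurement by a uniformly random Pauli measurement with the bounded, unbiased classical-shadow estimator, then repeat on independent copies and threshold the empirical mean between the completeness and soundness levels via Hoeffding/Chernoff. The $o(1)$ issue you flag is resolved as you suspect: the paper's derivation of soundness from \Cref{prop:2} actually gives the multiplicative bound $\Pr[\mathsf{reject}]\ge (c-o(1))\bigl(1-\braket{\psi|\rho|\psi}\bigr)$, and the proof of \Cref{thm:full-uncertainty} gives a clean constant for all large $n$, so the constant-factor gap holds for every $\epsilon$ with no restriction needed.
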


We pause here to make some comments on these results.
Concretely, the results resolve Conjecture~1 of~\cite{CLSW26}, and demonstrate that the two-basis test proposed in~\cite{CLSW26} yields optimal state certification guarantees (up to constants) for almost all states. 
Note that the restriction that the method works for \emph{almost} all states is inherent (and not a limitation of the analysis) since the test is non-adaptive, and~\cite{guptaHeODonnell2025} demonstrates that adaptivity is required to certify all states.

On a more technical level, our main conceptual contribution is a novel proof technique for analyzing the soundness of the two-basis test, which significantly departs from the previous proof strategy of~\cite{CLSW26}.
In particular, we draw a novel connection between the soundness of this test and new uncertainty principles for total influence of Boolean functions and, especially, more general weighted versions of total influence.
These uncertainty principles generalize the natural hypercube analog of the famous Heisenberg uncertainty principle, as we explain in \Cref{sec:ups}.

\begin{figure}
\centering
\begin{tblr}{
    colspec={Q[l] Q[l,wd=3.6cm] Q[l,wd=1.6cm] Q[l, wd=1.8cm] Q[l] Q[l]},
    row{1} = {f,font=\itshape},
  vlines={white,1.5pt},
  hlines={white, 1.5pt},
hline{2} = {1}{-}{0.4pt, black},
hline{2} = {2}{-}{0.4pt, black},
hline{2} = {3}{-}{.5pt, white},
hline{6} = {3}{-}{0.4pt, gray},
hline{6} = {3}{-}{1.5pt, white},
}
Reference      & Largest measurement required & Copy complexity & Fraction of target states & Adaptivity & Robustness\\
\cite{Huang2025}       & \SetCell{YellowGreen!20}Single-qubit (Pauli)         & $\mc O(n^2)$            & $1-2^{-\Omega(n)}$    & \SetCell{YellowGreen!20}No       & $\mc O(1/n)$          \\
\cite{guptaHeODonnell2025}       & \SetCell{YellowGreen!20}Single-qubit (arbitrary)     & $\mc O(n^2)$            & \SetCell{YellowGreen!20}All           & Yes      & $\mc O(1/n)$          \\
\cite{CLSW26}      & \SetCell{YellowGreen!20}Single-qubit (arbitrary)     & $\mc O(\log^2 n)$        & $1-2^{-\Omega(n)}$    & Yes      & $\mc O(1/\log n)$     \\
\cite{CLSW26}      & $\mathrm{log}(n)$-qubit (Clifford)      & \SetCell{YellowGreen!20}$\mc O(1)$            & $1-2^{-\Omega(n)}$    & \SetCell{YellowGreen!20}No       & \SetCell{YellowGreen!20}$\Omega(1)$ \\
This work & \SetCell{YellowGreen!20}Single-qubit (Pauli)         & \SetCell{YellowGreen!20}$\mc O(1)$            & $1-2^{-\Omega(n)}$    & \SetCell{YellowGreen!20}No       & \SetCell{YellowGreen!20}$\Omega(1)$       
\end{tblr}

    \caption{Comparing this work to previous protocols certifying all or almost all target states.
    Green cells denote the best possible for the given column note that it is impossible for all columns to be green at the same time by the adaptivity no-go theorem of \cite{guptaHeODonnell2025}.
    Accuracy $\eps$ and confidence $\delta$ are assumed to be constants in the quantities above.
    A $t$-robust protocol accepts states that are $t\eps$-close to the target states and rejects those which are $\eps$-far. A concurrent update to a work of Du et al.~\cite{du2025certifying}, which we discuss below, also achieves the desiderata that we achieve in this work.} %
    \label{fig:comparing-protocols}
\end{figure}

\paragraph{Concurrent work.} Shortly before this work was posted, we were informed by the authors of \cite{du2025certifying} of an update to their paper, giving a test that also achieves constant robustness for almost all target states using Pauli measurements. We briefly elaborate on the differences between our respective tests (for the unfamiliar reader, these differences should be more understandable after reading our technical overview). 

In short, the test in \cite{du2025certifying} runs the \cite{Huang2025} test in a uniformly random Pauli basis without randomizing the location of the ``leftover'' qubit. Our test is even simpler (and its soundness was left as Conjecture 1 in \cite{CLSW26}): it runs the \cite{Huang2025} test in one of only two bases, either all-$Z$ or all-$X$, but randomizes the location of the leftover qubit. It is this two-basis picture that leads to the connections with uncertainty principles for total influences that we uncover here. As a result, the requirements on the oracle access to the target state needed for our test are in some sense lighter: we require queries to the amplitudes of 1-qubit states post-selected in only two bases, whereas the test of \cite{du2025certifying} requires an oracle that can answer such queries in $3^{n-1}$ potential Pauli bases.

The authors of \cite{du2025certifying} also make progress on one of the questions that we mention in \Cref{sec:outlook}, namely finding examples of natural families of states that are in the ``good'' set of target states: they show that random graph states, a subclass of stabilizer states, are in the ``good'' set.

\subsection*{Acknowledgements and AI use}
A.C. is thankful for support from the Google Research Scholar program.
J.L. acknowledges support from the National Science Foundation through the NSF AI Institute for Foundations of Machine Learning (IFML), under Award No. CCF-2505865, and the Institute for Foundations of Data Science (IFDS), under Award No. 2023166.
ChatGPT was consulted throughout the research process to strategize proof approaches and conduct numerical tests.

\section{Technical Overview}
\subsection{The test}
We first describe the (slightly adaptive) single-shot test that achieves the guarantees of \Cref{thm:informal-1}. We will then describe the nonadaptive version that achieves the guarantess of \Cref{cor:1}.
\begin{minorprotocol}[One-shot (slightly adaptive) robust state certification]
\label{prot:adaptive}
\textbf{Input:} One copy of an $n$-qubit lab state $\rho$; and access to the $n$-qubit target state $\ket{\psi}$ (via the access model described in \Cref{sec:access-model}).
\medskip

\textbf{1. Sample a measurement.} Sample the following:
    \begin{itemize}[itemsep=0pt, topsep=6pt]
        \item A basis $\mathsf B\in\{\mathsf X,\mathsf Z\}$ with equal probability,
        \item A holdout qubit $j\in[n]$ with uniform probability.
    \end{itemize}
\medskip

\noindent \textbf{2. Measure the bulk of the lab state.} Measure all the qubits $k\neq j$ in basis $\mathsf B$, obtaining outcome string $b\in\{0,1\}^{n-1}$ and 1-qubit post-measurement state $\rho_b^{(j,\mathsf B)}$ on the holdout qubit $j$.
\medskip

\noindent \textbf{3. Compute target.} Compute (a classical description of) $\ket{\psi_{b}^{(j,\mathsf B)}}$, the 1-qubit state on the holdout qubit $j$ resulting from measuring qubits $k\neq j$ of the target state $\ket{\psi}$ in basis $\mathsf B$ and obtaining outcome $b$. (The classical description of $\ket{\psi_{b}^{(j,\mathsf B)}}$ is computed via the access model described in \Cref{sec:access-model}).
\medskip 

\noindent\textbf{4. Measure the holdout qubit.}
Perform the projective measurement $\big\{\ketbra{\psi_{b}^{(j,\mathsf B)}}{\psi_{b}^{(j,\mathsf B)}}, \mathbf 1 - \ketbra{\psi_{b}^{(j,\mathsf B)}}{\psi_{b}^{(j,\mathsf B)}}\big\}$ on $\rho_b^{(j,\mathsf B)}$.
Accept if the first outcome is obtained.
\end{minorprotocol}
What we show is that this protocol accepts with probability at least $\bra{\psi}\rho\ket{\psi}$ (\textit{i.e.}\ the fidelity between $\rho$ and $\psi$), and rejects with probability at least (approximately) $\frac{1-\bra{\psi}\rho\ket{\psi}}{C}$ for some constant $C>1$ (\textit{i.e.}\ the infidelity up to a constant factor).
One can readily convert this one-shot protocol to one that distinguishes, with any desired success probability $1-\delta$, a lab state $\rho$ that has $1-\frac{\epsilon}{C'}$ fidelity with $\ket{\psi}$ from a $\rho$ that has at most $1-\epsilon$ fidelity with $\ket{\psi}$ (for some constant $C'>1$ closely related to $C$). The latter protocol can be obtained via a standard sequential repetition using $O\left( \epsilon^{-2} \log (1/\delta)\right)$ copies of $\rho$, which is information-theoretically optimal even among protocols that are allowed to make arbitrary entangled measurements.

The only source of adaptivity in the resulting protocol comes from Step 4 of the one-shot protocol, where the measurement basis depends on the previous outcome $b$. This adaptivity can be removed via classical shadow tomography, as it is done in \cite{Huang2025}. In short, the adaptive measurement on the holdout qubit $j$ in Step~4 can be replaced by a non-adaptive measurement in a uniformly random choice of the $X$, $Y$, or $Z$ basis. This allows one to compute (by classical post-processing) an unbiased estimator for the outcome of any adaptively-chosen measurement on the holdout qubit. This process is repeated $T = O\left( \epsilon^{-2} \log (1/\delta)\right)$ times (using that many copies of $\rho$), and the protocol accepts if the average of the $T$ estimators is above the appropriate threshold. Note that the holdout qubit, as well as the post-measurement state, in general changes in each of the $T$ repetitions, but this is fine: each of the $T$ runs is an unbiased estimator of the one-shot acceptance probability (and since the runs are independent, we can invoke Chernoff). We refer to \cite{Huang2025} for more details about this application of classical shadows.
The nonadaptive protocol is described below.
The constant $C>2$ in \Cref{prot:main} is universal and independent of $n$.
In our proofs, the constant we establish is only implicit.
However according to numerical experiments, the tight constant appears to be $C < 5.5$.

\begin{protocol}[Non-adaptive robust state certification]
\label{prot:main}
\textbf{Parameters:} $\epsilon, \delta>0$.
\medskip

\textbf{Input:} $T = \mc O\left( \epsilon^{-2} \log (1/\delta)\right)$ copies of an $n$-qubit lab state $\rho$; and access to the $n$-qubit target state $\ket{\psi}$ (via the access model described in \Cref{sec:access-model}).

\medskip
\textbf{1. Sample a measurement.} Choose:
    \begin{itemize}[itemsep=0pt, topsep=6pt]
        \item A basis $\mathsf B\in\{\mathsf X,\mathsf Z\}$ with equal probability,
        \item A holdout qubit $j\in[n]$ with uniform probability,
        \item A holdout basis $\mathsf{C}\in\{\mathsf{X},\mathsf{Y},\mathsf{Z}\}$
        with uniform probability.
    \end{itemize}
\medskip

\noindent \textbf{2. Measure the lab state.}
Measure the lab state $\rho$ in the following single-qubit product basis: all qubits $k\neq j$ in basis $\mathsf B$ and the holdout qubit $j$ in basis $\mathsf C$.
    This yields an outcome bit  $c\in\{0,1\}$ for the holdout qubit and an outcome string $b\in\{0,1\}^{n-1}$ for the rest of the qubits.
\medskip

\noindent \textbf{3. Classical postprocessing.} Let $\rho_b^{(j,\mathsf B)}$ and $\ket{\psi_{b}^{(j,\mathsf B)}}$ be as in Protocol 1. Compute the classical shadow tomography estimate $\omega$ for 
$$ \bra{\psi_{b}^{(j,\mathsf B)}} \rho_b^{(j,\mathsf B)} \ket{\psi_{b}^{(j,\mathsf B)}}$$
corresponding to basis $\mathsf C$ and outcome $c$.

\medskip
\noindent \textbf{4. Repeat.} Repeat steps 1--3 $T$ times to obtain estimates $\omega_1, \ldots, \omega_T$. Let $\hat{\omega} = \frac{1}{T} \sum_{i=1}^T \omega_i$. Accept if $\hat{\omega} \geq 1- \frac{3\epsilon}{2C}$.
\end{protocol}
The protocol above accepts with probability at least $1-\delta$ if $\bra{\psi} \rho \ket{\psi} > 1-\frac{\epsilon}{C}$, and rejects with probability at least $1-\delta$ if $\bra{\psi} \rho \ket{\psi} < 1-\epsilon$. The latter follows in a straightforward way from the guarantees of Protocol 1 and the fact that classical shadow tomography gives an unbiased estimator for the true overlap.

\subsection{Access model}
\label{sec:access-model}
As in earlier protocols \cite{Huang2025,guptaHeODonnell2025} (henceforth HPS and GHO) we assume classical access to an oracle-based description of the target state $\ket{\psi}$. The access model that we work with is, in terms of strength,  between that of HPS and GHO.
In particular, we require knowledge of amplitudes of the target state in \emph{two} bases: standard and Hadamard.
For comparison, HPS only requires amplitudes in the standard basis, whereas GHO require amplitudes in all product bases (but is capable of certifying \emph{all} target states, rather than ``almost all'').

We acknowledge that our access model creates a bit of a tension: Haar-random states are highly entangled and exponentially complex, whereas states for which this kind of amplitude access can be simulated efficiently are generally more structured and tractable. For example, tensor network states with polynomial bond dimension allow efficient amplitude queries but are low-entangled, and neural quantum states typically support efficient queries in only one basis. This means that our test will generally require inefficient classical post-processing. 
We note that, while GHO require an even stronger access model, their certification works for all states (which thus includes classes of states for which the access model is efficiently reproducible). That said, we find the existence of a robust test that uses few-qubit measurements surprising even from a purely information-theoretic standpoint.
\begin{definition}
\label{def:oracle}
We consider an oracle model which accepts queries in the format $(b,\mathsf{B})$ for $b\in\{0,1\}^n$ and $\mathsf{B}\in\{\mathsf{std},\mathsf{had}\}$\,, and returns the complex number
\[\mathsf{O}_\psi(b,\mathsf B)=\begin{cases}
\braket{b|\psi} & \text{if } \mathsf{B}=\mathsf{std}\\[0.5em]
\braket{b|H^{\otimes n}|\psi} & \text{if } \mathsf{B}=\mathsf{had}\,.
\end{cases}\]
\end{definition}

This access model allows us to explicitly and efficiently learn the leftover state on the last qubit, conditioned on the outcome of any measurement on $\ket{\psi}$ used in our protocols.

\subsection{Uncertainty principles for (weighted) total influence}
\label{sec:ups}
The correctness of our protocol relies centrally on a new uncertainty principle for a certain weighted generalization of the total influence of Boolean functions.
This uncertainty principle may be of independent interest, so we describe it here in self-contained form.
The connection to our protocols is explained in the next subsection, \Cref{sec:proof-ideas-reduction}.

Before stating the full theorem, let us record a simple unweighted version of the phenomenon.
For any function $h:\{0,1\}^n\to \mathbb{C}$, define its total influence via
    \[\Inf[h]=\sum_{i,x}\left|\frac{h(x)-h(x+i)}{2}\right|^2\,,\]
    where here and throughout, for an index $i \in [n]$, we use the shorthand notation $x +i$ to denote the string where we flip the $i$-th bit of $x$. 
    Then we have the following.

\begin{prop}[Uncertainty principle for unweighted total influence]
\label{thm:baby-uncertainty-intro}
    For any $h:\{0,1\}^n\to\C$ with $\E_x|h(x)|^2=1$,
    \[\Inf[h]+\Inf[\wh h] \gtrsim n\,.\]
\end{prop}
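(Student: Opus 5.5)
The plan is to diagonalize both total influences simultaneously by recasting them as expectation values of commuting single-qubit operators. Throughout, the norm of $h$ is measured in the paper's normalization; the constant only changes by the overall factor relating $\sum_x|h(x)|^2$ to $\E_x|h(x)|^2$.

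First, fix the unitary Fourier transform $\wh h(s)=2^{-n/2}\sum_x(-1)^{s\cdot x}h(x)$. By Parseval, $\sum_s|\wh h(s)|^2=\sum_x|h(x)|^2$. Since this transform is an involution, $\wh{\wh h}=h$.

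Second, rewrite each influence as a weighted $\ell_2$ mass.
\begin{itemize}
\item For fixed $i$, the function $x\mapsto \frac{h(x)-h(x+i)}{2}$ has Fourier transform $\wh h(s)\,\1[s_i=1]$.
\item By Parseval, $\sum_x\bigl|\frac{h(x)-h(x+i)}{2}\bigr|^2=\sum_s|\wh h(s)|^2\1[s_i=1]$.
\item Summing over $i$ gives $\Inf[h]=\sum_s|s|\,|\wh h(s)|^2$.
\item Applying the same identity to $\wh h$ and using $\wh{\wh h}=h$ gives $\Inf[\wh h]=\sum_x|x|\,|h(x)|^2$.
\end{itemize}

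Third, interpret these in quantum language. Let $\ket{h}=\sum_x h(x)\ket{x}$, so that $\ket{\wh h}=H^{\otimes n}\ket h$. Then
\[
\sum_x|x|\,|h(x)|^2=\Bigl\langle h\Big|\textstyle\sum_i\tfrac{\1-Z_i}{2}\Big|h\Bigr\rangle,\qquad
\sum_s|s|\,|\wh h(s)|^2=\Bigl\langle h\Big|\textstyle\sum_i\tfrac{\1-X_i}{2}\Big|h\Bigr\rangle,
\]
where the second identity uses $H Z H=X$. Hence
\[
\Inf[h]+\Inf[\wh h]=\bigl\langle h\big|\textstyle\sum_{i=1}^n A_i\big|h\bigr\rangle,\qquad A_i=\1-\tfrac{X_i+Z_i}{2}.
\]

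Fourth, bound the operator. The $A_i$ act on distinct qubits and therefore commute. Each $A_i$ is a single-qubit operator whose eigenvalues are $1\mp\frac{1}{\sqrt2}$, because $(X+Z)/\sqrt2$ is a reflection with eigenvalues $\pm1$. Consequently the smallest eigenvalue of $\sum_iA_i$ is exactly $n(1-1/\sqrt2)$, and
\[
\Inf[h]+\Inf[\wh h]\ \ge\ \Bigl(1-\tfrac1{\sqrt2}\Bigr)\,n\,\|h\|_2^2 .
\]
With the normalization $\E_x|h(x)|^2=1$ this gives the claimed bound $\gtrsim n$. Equality is attained by the tensor power of the ground state of $A_1$, so the constant is tight.

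The main obstacle is bookkeeping rather than substance. One must check that the influence identity holds with the stated factor $1/2$, and track the Parseval and normalization conventions so that $\Inf$ and $\|h\|^2$ are measured consistently. Once the reduction to $\sum_i A_i$ is in place, the bound is a product-operator eigenvalue computation. This reduction is also the template I would expect to generalize to the weighted (Dirichlet-energy) versions, where the $A_i$ would no longer commute and a genuine argument would be needed.
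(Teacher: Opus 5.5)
Your proof is correct and is essentially the paper's argument: the paper also writes $\Inf_{\mathrm{count}}[g]$ and $\Inf_{\mathrm{count}}[\wh g]$ as quadratic forms of $H^{\otimes n}\bigl(\sum_j D_j\bigr)H^{\otimes n}$ and $\sum_j D_j$ with $D=\ketbra{1}{1}$, and reduces the bound to $n$ times the minimum eigenvalue $1-\tfrac{1}{\sqrt2}$ of the single-qubit operator $M=HDH+D=\1-\tfrac{X+Z}{2}$. Your normalization remark is also fine, since both sides are quadratic in $h$ and the bound $\bigl(1-\tfrac{1}{\sqrt2}\bigr)n\|h\|_2^2$ therefore transfers to either convention.
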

\noindent Here and throughout, for quantities $A$ and $B$, the notation $A \gtrsim B$ means that there exists a universal constant $c>0$ such that $A \geq c B$.
Also, the Fourier transform $\,\wh{\cdot}\,$ will throughout bear the unitary normalization  $2^{n/2}$.
That is, for $s\in\{0,1\}^n$ we take the convention
\[\wh f(s) =2^{-n/2}\tsum_{x\in\{0,1\}^n}f(x)(-1)^{\sum_ix_is_i}\,.\]

The total influence, $\Inf$, is a central quantity in the analysis of Boolean functions and plays an important role throughout theoretical computer science~\cite{ODonnell2014}.
Despite this, an uncertainty principle for $\Inf$ does not appear to be well-known within the TCS community.
The proof of \Cref{thm:baby-uncertainty-intro} is simple however, and can be obtained for example by specializing a graph-theoretic uncertainty principle due to Benedetto and Koprowski \cite{7148912} to the hypercube graph.
For completeness, we give a direct proof in \Cref{sec:toy-case}.
As elaborated in \cite{7148912}, this influence uncertainty principle is the natural analogue to the additive Heisenberg uncertainty principle.

The correctness of our test reduces to the following \emph{weighted} generalization of the influence-uncertainty phenomenon.
For any probability measure $\mu$ on $\{0,1\}^n$ and any $h:\{0,1\}^n\to \C$, define the $\mu$-weighted total influence to be
\[\Inf_\mu[h]=\sum_{i,x}\frac{2\mu(x)\mu(x+i)}{\mu(x)+\mu(x+i)}\left|\frac{h(x) - h(x+i)}{2}\right|^2\,.\]
With this definition, we have the following uncertainty principle for $\Inf_\mu$.

\begin{theorem}[Uncertainty principle for weighted total influence]
\label{thm:influences-informal}
With probability $1-2^{-\Omega(n)}$ over a Haar-random $f:\{0,1\}^n\to \C$ with $\|f\|_2=1$, for all $g:\{0,1\}^n\to \C$ with $\|g\|_2=1$ and $\langle f,g\rangle=0$, 
\[\Inf_{\mu_f}\left[\frac{g}{f}\right]+\Inf_{\mu_{\wh f}}\left[\frac{\wh g}{\wh f}\right] \gtrsim n\,.\]
Here $\mu_f$ (resp. $\mu_{\wh f}$) is defined by $\mu_f(x)=|f(x)|^2$ (resp. $\mu_{\wh f}(s)=|\wh f(s)|^2$).
\end{theorem}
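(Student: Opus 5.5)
The plan is to turn the statement into a spectral-gap claim for an explicit Hamiltonian built from the two one-basis tests. \emph{Step 1 (local rewriting).} Fix $i$ and an edge $\{x,x+i\}$, and write $a=f(x)$, $b=f(x+i)$. The weight is $w=\frac{2|a|^2|b|^2}{|a|^2+|b|^2}$, and the edge term is $\frac w4\,|g(x)/a-g(x+i)/b|^2$. This is a rank-one form $\frac w4\,|\langle u,(g(x),g(x+i))\rangle|^2$ with $u=(1/\bar a,-1/\bar b)$, and $\frac w4\|u\|^2=\frac12$. Hence the term equals $\frac12\big(\|g_e\|^2-|\langle \phi_e,g_e\rangle|^2\big)$, where $\phi_e=(a,b)/\|(a,b)\|$ is the normalized restriction of $f$ to the edge. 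Summing over edges gives
\[
\Inf_{\mu_f}\!\left[\tfrac gf\right]=\tfrac12\sum_{i=1}^n\langle g,(\mathbf 1-\Pi^{\mathsf Z}_i)g\rangle,\qquad \Pi^{\mathsf Z}_i=\sum_{y\in\{0,1\}^{n-1}}\ketbra{y}{y}_{-i}\otimes\ketbra{f^{(i)}_y}{f^{(i)}_y},
\]
where $\ket{f^{(i)}_y}$ is the post-measurement state of qubit $i$ after measuring the others of $\ket f$ in the $\mathsf Z$ basis. Edges where $f$ vanishes are measure zero for Haar $f$. Applying the same computation to $\wh f,\wh g$, and using unitarity of $H^{\otimes n}$, gives the analogous operator $\Pi^{\mathsf X}_i$ with $\mathsf X$-basis conditionals. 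The theorem is therefore equivalent to showing that, with high probability over $f$,
\[
\mathcal H_f=\sum_{i=1}^n\big(2\cdot\mathbf 1-\Pi^{\mathsf Z}_i-\Pi^{\mathsf X}_i\big)\ \succeq\ c\,n\,(\mathbf 1-\ketbra ff).
\]
$\mathcal H_f$ is frustration-free with $f$ in its kernel, so the task is to show it has a spectral gap $\Omega(n)$. As a sanity check, for flat $f$ (all amplitudes $2^{-n/2}$) we get $\Pi^{\mathsf Z}_i=\frac{\mathbf 1+X_i}{2}$ and $\Pi^{\mathsf X}_i=\frac{\mathbf 1+Z_i}{2}$. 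Then $\mathcal H=\sum_i(\mathbf 1-\frac{X_i+Z_i}{2})$, which has minimal eigenvalue $n(1-1/\sqrt2)$. This recovers \Cref{thm:baby-uncertainty-intro}, and I would present that case first as the template.

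\emph{Step 2 (why a per-site bound fails, and what replaces it).} For a single $i$, the ranges of $\Pi^{\mathsf Z}_i$ and $\Pi^{\mathsf X}_i$ are $2^{n-1}$-dimensional, essentially generic subspaces. Two random half-dimensional subspaces have principal angles accumulating at $0$, so $\Pi^{\mathsf Z}_i+\Pi^{\mathsf X}_i$ is \emph{not} bounded away from $2$ on $f^\perp$. The gap must come from summing over $i$. I would compare $\mathcal H_f$ to the flat Hamiltonian of Step 1 through its low moments. Write $\Pi^{\mathsf Z}_i=\frac{\mathbf 1+X_i}{2}+E^{\mathsf Z}_i$, and similarly for $\mathsf X$. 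For Haar $f$ the conditionals $f^{(i)}_y$ are nearly independent random qubit states (after normalization), so the error operators $E^{\mathsf Z}_i,E^{\mathsf X}_i$ are block-diagonal with random, roughly mean-zero $2\times2$ blocks. I would then bound the top of the spectrum of $\sum_i(E^{\mathsf Z}_i+E^{\mathsf X}_i)$ on $f^\perp$ by a trace-moment method, estimating $\E_f\tr\big[(\sum_i(\cdot))^{2k}\big]$ with $k\approx n$ via Weingarten calculus or a Gaussian surrogate for $f$. The target is an operator-norm bound $\le(1-1/\sqrt2-c)\,n$, or at least an energy bound $\langle g,\sum_i E_i\, g\rangle\le(1-1/\sqrt2-c)n$ on $f^\perp$. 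The probability $1-2^{-\Omega(n)}$ should come from Markov's inequality on this $2k$-th moment.

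\emph{Step 3 (main obstacle and fallback).} I expect the main obstacle to be that the errors $E_i$ are \emph{not} small in operator norm: each block is an $O(1)$ perturbation. So the comparison to the flat case must exploit cancellation across the $2^{n-1}$ blocks and across the $n$ sites, and the required bound is only a constant fraction of $n$. The naive moment bound may lose too much because $E^{\mathsf Z}_i$ and $E^{\mathsf X}_j$ are correlated through the common $f$. If the moment approach is too lossy, my first fallback is a detectability-lemma/Knabe-style argument. This reduces the $\Omega(n)$ gap of $\mathcal H_f$ to controlling the overlaps $\|\Pi^{\mathsf Z}_i\Pi^{\mathsf X}_j-\Pi^{\mathsf X}_j\Pi^{\mathsf Z}_i\|$ in an averaged sense. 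It exploits the fact that $\Pi^{\mathsf Z}_i$ acts as a projection only on qubit $i$ conditioned on $\mathsf Z$-data elsewhere, while $\Pi^{\mathsf X}_j$ mixes in the Hadamard basis, much as $X_i$ and $Z_i$ anticommute in the flat case. The remaining step is a union bound over the $n$ sites and a concentration argument, using Lipschitz dependence of the relevant quantities on $f\in S^{2^{n+1}-1}$, to upgrade expectation bounds to the stated $1-2^{-\Omega(n)}$ probability.
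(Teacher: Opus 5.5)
Your Step 1 is correct up to a harmless factor: each $i$-edge is counted twice in $\sum_x$, so in fact $\Inf_{\mu_f}[g/f]=\sum_i\langle g,(\mathbf 1-\Pi^{\mathsf Z}_i)g\rangle$. It is exactly the paper's spectral reformulation $\mathcal H_f=n\bigl(2-M_f-\wh{M_{\wh f}}\bigr)$. The gap is in Step 2, whose target is false.

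First, the blocks of $E^{\mathsf Z}_i$ are not mean-zero. Under Haar measure the pair $(f(x),f(x+i))$ is $U(2)$-invariant, so $\E\,\Pi^{\mathsf Z}_i=\frac{1}{2}\mathbf 1$, $\E E^{\mathsf Z}_i=-\frac{1}{2}X_i$ and $\E E^{\mathsf X}_i=-\frac{1}{2}Z_i$. The center of $\mathcal H_f$ is $n\mathbf 1$, not $\mathcal H_{\mathrm{flat}}$.

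Second, and decisively, take $g=f\chi_{e_1}$ with $\chi_{e_1}(x)=(-1)^{x_1}$. It is orthogonal to $f$ up to $\sqrt N|\wh\mu(e_1)|=O(\sqrt{n/N})$, so projecting onto $f^\perp$ changes the energies below by $o(1)$.
\begin{itemize}
\item For $i\neq 1$, each edge vector $g_e$ is parallel to $f_e$, so $\sum_i\langle g,\Pi^{\mathsf Z}_ig\rangle\ge n-1$.
\item Since $\wh g(s)=\wh f(s+e_1)$ and $\wh f$ is itself Haar, phase invariance kills the cross term. Exchangeability of the remaining coordinates then gives $\E\langle g,\Pi^{\mathsf X}_ig\rangle=\frac{1}{2}$ exactly for $i\neq 1$.
\item Also $\E\langle g,X_ig\rangle=\E\langle g,Z_ig\rangle=0$.
\end{itemize}
Hence $\E\langle g,\mathcal H_{\mathrm{flat}}g\rangle=n$ while $\E\langle g,\mathcal H_fg\rangle\le\frac{n+3}{2}$. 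That is, $\E\langle g,\sum_i(E^{\mathsf Z}_i+E^{\mathsf X}_i)g\rangle\ge\frac{n-3}{2}>(1-\frac{1}{\sqrt2})n$. All these operators have norm $O(n)$ deterministically, so neither your operator-norm target nor your energy target on $f^\perp$ can hold with high probability.

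This is structural, not a matter of lossy estimates. $\mathcal H_{\mathrm{flat}}$ is near its minimum only close to special product states. The dangerous directions for $\mathcal H_f$ are $g=fq$ with $q$ of low $\mu_f$-influence, where $\mathcal H_{\mathrm{flat}}\approx n$ but $\mathcal H_f\approx n/2$.

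A trace-moment method recentered at $n\mathbf 1$ faces its own difficulty. $\sum_i(\Pi^{\mathsf Z}_i-\frac{1}{2})$ and $\sum_i(\Pi^{\mathsf X}_i-\frac{1}{2})$ each already have norm $n/2-O(1)$ on $f^\perp$, because the one-basis gap is $\Theta(1/n)$. The former is within $\kappa n$ of that on an $\Omega_\kappa(N)$-dimensional subspace (cf.\ \Cref{app:badspace}). So you would need $\E\tr$ of order-$n$ powers of these non-polynomial functions of $f$ to exponentially small relative error. The proposal gives no route to that.

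The Knabe/detectability fallback does not supply the missing idea. Those arguments rely on locality or bounded overlap of terms, whereas every $\Pi^{\mathsf Z}_i$ acts on all qubits and overlaps every $\Pi^{\mathsf X}_j$. Generic combination lemmas (e.g.\ Kitaev's geometric lemma) only merge the individual gaps of $\sum_i(\mathbf 1-\Pi^{\mathsf Z}_i)$ and $\sum_i(\mathbf 1-\Pi^{\mathsf X}_i)$. Each of these is $\Theta(1)$, giving $\Omega(1)$ rather than $\Omega(n)$.

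What must actually be shown is that the whole spectral subspace of $\sum_i(\mathbf 1-\Pi^{\mathsf Z}_i)$ below $\kappa n$ has $\sum_i\Pi^{\mathsf X}_i$-energy at most $(\frac{3}{4}+o(1))n$. The paper does this with ingredients absent from your plan:
\begin{itemize}
\item Writing $g=fq$, so the $\mathsf Z$-side energy is $\Inf_\mu[q]$ and the constraint is $\E_\mu q=0$. Both depend only on $\mu=|f|^2$, leaving the phases of $f$ as fresh randomness.
\item A win-win on whether $\Inf_\mu[q]\le\kappa n$.
\item For each fixed flat $q$, the concentration bound $\Pr[\Avg_{\wh f}[\wh{fq}]\ge\frac{1}{2}+t]\le\exp(-ct^6N^{1/3-o(1)})$, via smoothing, Lindeberg replacement and a Herbst argument.
\item A union bound over an $\eps$-net of low-influence functions. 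This net is small only on the near-uniform set $A$, which is shown via the extension operator and local escape.
\item For the rest, operator-norm bounds $\|\Pi_B\wh{M_{\wh f}}\Pi_B\|_\op\le\frac{1}{2}+\eps$ and $\|\Pi_A\wh{M_{\wh f}}\Pi_B\|_\op\le\frac{1}{4}+\eps$, via Gaussianization plus Tropp's random paving.
\end{itemize}
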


It is worth mentioning that $\Inf_\mu$ is precisely the Dirichlet energy associated with continuous-time Glauber dynamics for the stationary distribution $\mu$.
See, for example, \cite[\S 3.3.2 and \S 13.2.1]{LevinPeres2017}.
In comparison to the unweighted version, the uncertainty principle of \Cref{thm:influences-informal} seems to demand a dramatically more involved proof, spanning Sections \ref{sec:concentration}--\ref{sec:finishing}.

\subsection{Proof ideas I: From state certification to uncertainty principles}
\label{sec:proof-ideas-reduction}
As mentioned earlier, the crux is to analyze the \Cref{prot:adaptive}, \textit{i.e.}\ the one-shot protocol. This will be our focus for the rest of the technical overview.

\medskip
Notice that the acceptance probability in \Cref{prot:adaptive} can be written as follows:

\[\Pr[\mathsf{accept}]=\frac12\left(\mathop{\E}_{\substack{j\sim[n]\\x\sim \mathrm{Meas}(\rho, \mathsf X, [n]\backslash j)}}\braket{\psi_x^{(j,\mathsf X)}|\rho_x^{(j,\mathsf X)}|\psi_x^{(j,\mathsf X)}}
    \;\;+ \mathop{\E}_{\substack{j\sim[n]\\z\sim \mathrm{Meas}(\rho, \mathsf Z,[n]\backslash j)}}\braket{\psi_z^{(j,\mathsf Z)}|\rho_z^{(j,\mathsf Z)}|\psi_z^{(j,\mathsf Z)}}\right),\]
where $\ket{\psi_b^{(j,\mathsf B)}}$ and $\rho_b^{(j,\mathsf B)}$ for $B \in \{\mathsf X, \mathsf Z\}$ and $b \in \{0,1\}^{n-1}$ are defined as in \Cref{prot:adaptive}: they are the post-measurement state of the $j$-th qubit when measuring the rest of the qubits of, respectively, $\ket{\psi}$ and $\rho$ in basis $\mathsf B$ and obtaining outcome $b$. And $\mathrm{Meas}(\rho, \mathsf B, [n]\backslash j)$ is the corresponding distribution of over the outcome $b$ (when measuring $\rho$).
    
It is straightforward to see that $\Pr[\mathsf{accept}]\geq \bra{\psi}\rho\ket{\psi}$:  we can equivalently write 
$\Pr[\mathsf{accept}] = \text{Tr}[A \rho]$
for an appropriate positive-semidefinite operator $A$, which has $\ket{\psi}$ as a $+1$ eigenvector (see \Cref{sec:4} for the definition of $A$). Then, it follows that $\text{Tr}[A\rho] \geq \bra{\psi}\rho\ket{\psi}$. The bulk of the work is to prove an \emph{upper bound} on $\Pr[\mathsf{accept}]$ in terms of fidelity.
Ultimately, we will show the following.
\begin{theorem}
\label{prop:ortho-pure-UP}
    With all but $2^{-\Omega(n)}$ probability over a Haar-random target state $\ket{\psi}$, for all lab states $\rho$,
    \[\mathop{\E}_{\substack{j\sim[n]\\x\sim \mathrm{Meas}(\rho, \mathsf X, [n]\backslash j)}}\hspace{-1em}\braket{\psi_x^{(j,\mathsf X)}|\rho_x^{(j,\mathsf X)}|\psi_x^{(j,\mathsf X)}}
    \;\;+ \hspace{-1em}\mathop{\E}_{\substack{j\sim[n]\\z\sim \mathrm{Meas}(\rho, \mathsf Z,[n]\backslash j)}}\hspace{-1em}\braket{\psi_z^{(j,\mathsf Z)}|\rho_z^{(j,\mathsf Z)}|\psi_z^{(j,\mathsf Z)}}\;\;\leq\;\; 2-c \big(1-\braket{\psi|\rho|\psi}\!\big)\]
    Here $c>0$ is a universal constant.
\end{theorem}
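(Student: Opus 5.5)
Write $\Pr[\mathsf{accept}]=\tr[A\rho]$ with $A=\frac12(A_{\mathsf X}+A_{\mathsf Z})$. Here $A_{\mathsf Z}=\E_j\sum_{z}\ketbra{z}{z}_{[n]\setminus j}\otimes\ketbra{\psi_z^{(j,\mathsf Z)}}{\psi_z^{(j,\mathsf Z)}}$, and $A_{\mathsf X}$ is its conjugate by $H^{\otimes n}$. Both are PSD with $\|A\|\le1$ and $A\ket\psi=\ket\psi$. The claim is then equivalent to an operator inequality $2A\preceq 2\,\ketbra{\psi}{\psi}+(2-c)(\mathbf 1-\ketbra{\psi}{\psi})$. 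Since $\ket\psi$ is an eigenvector of $A$, $A$ is block diagonal with respect to $\ket\psi\oplus\ket\psi^\perp$. Linearity in $\rho$ therefore lets us reduce to pure $\rho=\ketbra gg$ with $g\perp\psi$, $\|g\|=1$. It suffices to show $\braket{g|(\mathbf 1-A_{\mathsf Z})|g}+\braket{g|(\mathbf 1-A_{\mathsf X})|g}\ge c$ for all such $g$.

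The next step computes $\braket{g|(\mathbf 1-A_{\mathsf Z})|g}$ explicitly. Identify $\psi$ with $f:\{0,1\}^n\to\C$, $f(x)=\braket{x|\psi}$, and $g$ with a function of the same name. Fix $j$ and an outcome $z$ on $[n]\setminus j$; the corresponding fiber is the edge $\{x,x+j\}$. The rejection weight on that edge is $|\braket{\psi^\perp_{\text{edge}}|g_{\text{edge}}}|^2$, where $\psi^\perp_{\text{edge}}\propto(\overline{f(x+j)},-\overline{f(x)})$ up to conjugation conventions. This equals $\frac{|f(x+j)g(x)-f(x)g(x+j)|^2}{|f(x)|^2+|f(x+j)|^2}$. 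Writing $h=g/f$ (defined where $f\neq0$, which holds almost surely), this becomes $\frac{|f(x)|^2|f(x+j)|^2}{|f(x)|^2+|f(x+j)|^2}|h(x)-h(x+j)|^2$. Summing over edges and averaging over $j$ gives, up to the factor $1/n$ and harmless constants, exactly $\frac1n\Inf_{\mu_f}[g/f]$. The normalization of $\mu_f$ has to be tracked relative to $\|f\|_2=1$; this is routine bookkeeping. The $\mathsf X$ term is identical after the Hadamard transform, giving $\frac1n\Inf_{\mu_{\wh f}}[\wh g/\wh f]$, since $H^{\otimes n}$ implements $\wh\cdot$ and preserves orthogonality.

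The claim then reads $\Inf_{\mu_f}[g/f]+\Inf_{\mu_{\wh f}}[\wh g/\wh f]\gtrsim n$ uniformly over $g\perp f$, with probability $1-2^{-\Omega(n)}$ over Haar-random $f$. This is precisely \Cref{thm:influences-informal}, so the theorem follows once that uncertainty principle is established.

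The real obstacle is therefore \Cref{thm:influences-informal} itself. If it is assumed, the remaining work is the block-diagonal reduction and the edge computation above, and neither is difficult. The points to check carefully are the edge formula (a two-dimensional computation) and that the uniform-over-$g$ quantifier matches the "for all lab states $\rho$" quantifier. The first direction is immediate because the bound is needed for every $g$ simultaneously, not merely on average.
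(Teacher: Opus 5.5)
Your reduction is correct and is essentially the paper's own Section~\ref{sec:4}. You write the acceptance probability as $\tr[A\rho]$ and use that $\ket{\psi}$ is a $+1$ eigenvector of $A$ to reduce to unit $g\perp f$; this is \Cref{prop:2} and the argument following it. Your edgewise computation gives exactly the identity $\Avg_f[g]=1-\frac1n\Inf_{\mu_f}[g/f]$ of \Cref{app:identity}, and the orthogonal-complement form $|f(x+j)g(x)-f(x)g(x+j)|^2/(|f(x)|^2+|f(x+j)|^2)$ is a clean way to derive it.

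The genuine gap is that you then invoke \Cref{thm:influences-informal} as a black box. By the identity you just derived, that theorem is equivalent to the statement you are asked to prove. In the paper it is obtained as \Cref{cor:influences-formal} from \Cref{thm:full-uncertainty}, which is precisely the orthogonal pure-state form of this theorem. So your argument establishes only the easy equivalence; as a proof of the statement it is circular.

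What is missing is a proof of the uncertainty principle itself, which is where all the work lies. The paper's route is as follows.

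\begin{enumerate}
\item Substitute $g=fq$. The constraints $\E_\mu q=0$ and $\|q\|_{L^2(\mu)}=1$, as well as $\Avg_f[g]$, then depend only on $\mu=\mu_f$. The leftover phase randomness $f\sim F_\mu$ is what drives the concentration.
\item Run a win-win on $\Inf_\mu[q]$. If $\Inf_\mu[q]\ge\kappa n$, the $\mathsf Z$-side term is at most $1-\kappa$. Otherwise one must show $\Avg_{\wh f}[\wh g]\le\frac34+O(\eps)$, which needs three ingredients:
\begin{enumerate}
\item A fixed-$q$ tail bound $\Pr_{f\sim F_\mu}[\Avg_{\wh f}[\wh{fq}]\ge\frac12+t]\le\exp(-ct^6N^{1/3-o(1)})$ (\Cref{thm:avg-hat-fixed-q}). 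It is proved by $\delta$-smoothing the denominators, computing in a Gaussian model, Lindeberg replacement, and a localized Herbst argument.
\item A union bound over a net. This cannot be taken over all low-$\Inf_\mu$ functions, since their covering number is $\exp(\Omega(N))$ (\Cref{app:badspace}). Instead one restricts to the typical-mass set $A$ and controls the metric entropy of $q\1_A$ via the extension operator $\Ext_A$ and the local escape property (\Cref{thm:metric-entropy}).
\item Operator-norm bounds on the atypical set, $\|\Pi_B\wh{M_{\wh f}}\Pi_B\|_\op\le\frac12+\eps$ and $\|\Pi_A\wh{M_{\wh f}}\Pi_B\|_\op\le\frac14+\eps$ (\Cref{thm:B-op-norm-bounds}). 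These come from Gaussianizing low-order tracial moments and applying Tropp's random paving.
\end{enumerate}
\item Recenter $g^\perp=g-\alpha f$ and combine the pieces through a $2\times 2$ matrix whose top eigenvalue is about $\frac34$.
\end{enumerate}

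None of this is addressed, so the proposal is a correct reduction rather than a proof.
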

Accounting for the factor of $\frac12$ in the test, we arrive at
\[\braket{\psi|\phi|\psi}\leq \Pr[\mathsf{accept}]\leq 1-\frac{c}{2}\big(1-\braket{\psi|\rho|\psi}\!\big)\,.\]

\paragraph{Reducing to a formulation on the hypercube.}
By a straightforward argument, it suffices to argue the case where $\rho=\ketbra{\phi}{\phi}$ is pure and orthogonal to $\ket{\psi}$ (the case of general $\rho$ can then be understood by writing $\rho$ in terms of the eigenbasis of the ``acceptance'' operator $A$ mentioned earlier, which includes $\ket{\psi}$ as a top eigenvector).
Our main technical result is the following.

\begin{prop}[Orthogonal pure state version]
\label{prop:ortho-pure-UP}
    With probability $1-2^{-\Omega(n)}$ over a Haar-random target state $\ket{\psi}$, for all pure states $\ket{\phi}$ orthogonal to $\ket{\psi}$,
    \[\mathop{\E}_{\substack{j\sim[n]\\x\sim \mathrm{Meas}(\ket{\phi}, \mathsf X, [n]\backslash j)}}|\braket{\psi_x^{(j,\mathsf X)}|\phi_x^{(j,\mathsf X)}}|^2
    + \mathop{\E}_{\substack{j\sim[n]\\z\sim \mathrm{Meas}(\ket{\phi}, \mathsf Z,[n]\backslash j)}}|\braket{\psi_z^{(j,\mathsf Z)}|\phi_z^{(j,\mathsf Z)}}|^2\leq 2-c,\]
    for a universal constant $c>0$ independent from $n$.
\end{prop}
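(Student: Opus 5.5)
The plan is to reduce \Cref{prop:ortho-pure-UP} to the weighted uncertainty principle, \Cref{thm:influences-informal}, with $f(x)=2^{n/2}\braket{x|\psi}$ and $g(x)=2^{n/2}\braket{x|\phi}$. Under this normalization $\wh f$ and $\wh g$ are the Hadamard-basis amplitudes. Since $\braket{\psi|\phi}=0$, we have $\langle f,g\rangle=0$, and a Haar-random $\ket\psi$ gives a Haar-random unit $f$.

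\textbf{Step 1: compute the $\mathsf Z$-term.} Fix $j$ and an outcome $z\in\{0,1\}^{n-1}$ on the other qubits. Let $x,x+j$ be the two completions of $z$. The conditional states are $\ket{\psi_z}\propto f(x)\ket{x_j}+f(x+j)\ket{x_j\oplus1}$, and similarly for $\phi$ with $g$. Outcome $z$ occurs with probability $\propto |g(x)|^2+|g(x+j)|^2$. Hence the deficit
\[
1-|\braket{\psi_z|\phi_z}|^2=\frac{|f(x)g(x+j)-f(x+j)g(x)|^2}{(|f(x)|^2+|f(x+j)|^2)(|g(x)|^2+|g(x+j)|^2)}.
\]
After weighting by the outcome probability, the $g$-normalization cancels. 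Writing $h=g/f$, the numerator becomes $|f(x)|^2|f(x+j)|^2|h(x)-h(x+j)|^2$. The $\mathsf Z$-deficit, averaged over $j$ and $z$, is therefore exactly $\frac{c_0}{n}2^{-n}\sum_{j,x}\frac{\mu_f(x)\mu_f(x+j)}{\mu_f(x)+\mu_f(x+j)}|h(x)-h(x+j)|^2$. This is a constant times $\frac1n\Inf_{\mu_f}[g/f]$, with normalizations matched via $\|f\|_2=1$ in the paper's convention.

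The $\mathsf X$-term is identical after applying $H^{\otimes n}$. Measuring in $\mathsf X$ on $[n]\setminus j$ and projecting the holdout onto its conditional target is unitarily equivalent: rotating the holdout qubit by $H$ does not change fidelities. So that term contributes $\frac1n\Inf_{\mu_{\wh f}}[\wh g/\wh f]$.

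A small technical point is that $f$ may vanish. Haar-random amplitudes are almost surely nonzero, so $h$ is well defined. If $\mu_f(x)\mu_f(x+j)=0$, the weight vanishes consistently.

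\textbf{Step 2: conclude.} The left side of the proposition equals $2-\frac{c_0}{n}\big(\Inf_{\mu_f}[g/f]+\Inf_{\mu_{\wh f}}[\wh g/\wh f]\big)$. \Cref{thm:influences-informal} says that with probability $1-2^{-\Omega(n)}$, simultaneously for all admissible $g$, the bracket is $\gtrsim n$. This gives $\le 2-c$.

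The only real work is bookkeeping of normalizations, the $2\mu\mu'/(\mu+\mu')$ factor and the $1/4$ from $|\cdot/2|^2$, all of which affect only constants. The substantive obstacle, the uncertainty principle itself, is assumed from \Cref{thm:influences-informal}. If I had to prove it, I would use concentration of $\mu_f$ and $\mu_{\wh f}$ near uniform for Haar $f$ to compare with the unweighted \Cref{thm:baby-uncertainty-intro}. The difficulty is that $\mu_f$ is not pointwise near uniform, since it has exponential-like fluctuations, so the comparison must be robust to a constant fraction of small-weight edges.
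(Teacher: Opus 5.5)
Your Step 1 is correct and matches the paper's reduction. The Lagrange identity gives the per-edge deficit $\frac{\mu_f(x)\mu_f(x+j)}{\mu_f(x)+\mu_f(x+j)}|q(x)-q(x+j)|^2$ with $q=g/f$, which is the identity $\Avg_f[g]=1-\frac1n\Inf_{\mu_f}[g/f]$ of \Cref{app:identity}, and the $\mathsf X$ side follows by conjugating with $H^{\otimes n}$.

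The gap is Step 2, which is circular. In the paper, \Cref{thm:influences-informal} is only the informal form of \Cref{cor:influences-formal}. That corollary is \emph{deduced from} \Cref{thm:full-uncertainty} using exactly the identity you derived, and \Cref{thm:full-uncertainty} is this proposition rewritten as $\Avg_f[g]+\Avg_{\wh f}[\wh g]\le 2-c$. So you have shown that the proposition is equivalent to the weighted influence uncertainty principle, which the paper already notes. The actual content, the proof of that principle (Sections~\ref{sec:concentration}--\ref{sec:finishing}), is assumed rather than proved.

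Your fallback sketch would not fill the gap, because near-uniformity of $\mu_f$ and $\mu_{\wh f}$ is not what makes the inequality true. Take $\ket{\psi}=\ket{+i}^{\otimes n}$ and $\ket{\phi}=\ket{+i}^{\otimes n-1}\ket{-i}$. Then $\braket{\psi|\phi}=0$ and both $\mu_f$ and $\mu_{\wh f}$ are \emph{exactly} uniform. Yet $g/f=(-1)^{x_n}$ and $\wh g/\wh f=-\mathrm{i}(-1)^{s_n}$ are dictators, so the left-hand side equals $2-2/n$.

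Two things break your comparison. First, $\wh g/\wh f$ is not the Fourier transform of $g/f$, while the unweighted principle concerns a Fourier pair $h,\wh h$. Second, the weighted statement is false without $\langle f,g\rangle=0$ (take $g=f$), and this constraint moves with $f$. What must be exploited is the randomness of the \emph{phases} of $f$, uniformly over all admissible $g$.

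The paper does this by conditioning on $\mu=\mu_f$ and running a win-win argument:
\begin{itemize}
\item If $\Inf_\mu[q]\ge\kappa n$, the $\mathsf Z$-term is already at most $1-\kappa$.
\item Otherwise, restrict to the good set $A=\{x:\mu(x)\in[1/(TN),T/N]\}$. The extension operator $\Ext_A$ and the local escape property put $q\1_A$ in a class with log-covering number $N^{H(\gamma)+o(1)}$ (\Cref{thm:metric-entropy}).
\item For each fixed flat $\mu$-compatible $q$, $\Avg_{\wh f}[\wh{fq}]\le\frac12+t$ except with probability $\exp(-ct^6N^{1/3-o(1)})$ over the phases (\Cref{thm:avg-hat-fixed-q}). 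This uses smoothing, Lindeberg replacement and a localized Herbst argument, and lets a union bound over a net go through.
\item The bad set $B$ is handled separately by $\|\Pi_B\wh{M_{\wh f}}\Pi_B\|_\op\le\frac12+\eps$ and $\|\Pi_A\wh{M_{\wh f}}\Pi_B\|_\op\le\frac14+\eps$ (\Cref{thm:B-op-norm-bounds}), via Gaussianization and Tropp's random paving.
\end{itemize}
Together these give $\Avg_{\wh f}[\wh g]\le\frac34+5\eps$ in the low-influence case.

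Non-flatness of $\mu_f$ is also not a perturbative nuisance to be absorbed by a robust comparison. \Cref{app:badspace} exhibits an $\Omega_\kappa(N)$-dimensional space of $q$ with $\Inf_\mu[q]\le\kappa n$. That is why the $B$-part needs its own operator-norm argument rather than a comparison of Dirichlet forms.
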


Let us define the functions $f,g:\{0,1\}^n\to \C$ via
\[\ket{\psi}=\sum_{x\in\{0,1\}^n}f(x)\ket{x}\qquad\text{and}\qquad\ket{\phi}=\sum_{x\in\{0,1\}^n}g(x)\ket{x}\,.\]
For convenience, we will also switch to the natural notation $\ket{f} = \ket{\psi}$, and $\ket{g} = \ket{\phi}$.

To formulate \Cref{prop:ortho-pure-UP} in terms of $f$ and $g$ (and their Fourier transforms), we can define the ``edgewise averages''
\begin{align*} \Avg_f[g]&:=\frac{1}{2n}\sum_i\sum_x\frac{\Big|\overline{f(x)}g(x)+\overline{f(x+i)}g(x+i)\Big|^2}{|f(x)|^2+|f(x+i)|^2}\,\in[0,1],\\
\Avg_{\wh f}[\wh g]&:=\frac{1}{2n}\sum_i\sum_s\frac{\left|\overline{\wh f(s)}\wh g(s)+\overline{\wh f(s+i)}\wh g(s+i)\right|^2}{|\wh f(s)|^2+|\wh f(s+i)|^2}\;\;\in[0,1]\,,
\end{align*}
noting that 
\[\Avg_f[g]=\hspace{-1em}\mathop{\E}_{\substack{j\sim[n]\\x\sim \mathrm{Meas}(\ket{\phi}, \mathsf X, [n]\backslash j)}}\hspace{-1em}|\braket{\psi_x^{(j,\mathsf X)}|\phi_x^{(j,\mathsf X)}}|^2
\qquad\text{and}\qquad\Avg_{\wh f}[\wh g]=\hspace{-1em}\mathop{\E}_{\substack{j\sim[n]\\z\sim \mathrm{Meas}(\ket{\phi}, \mathsf Z,[n]\backslash j)}}\hspace{-1em}|\braket{\psi_z^{(j,\mathsf Z)}|\phi_z^{(j,\mathsf Z)}}|^2.\]
(The factor of $\frac12$ in $\Avg_f[g]$ and $\Avg_{\wh f}[\wh g]$ is because the sum double counts each inner product.)
With these definitions, \Cref{prop:ortho-pure-UP} is equivalent to the following ``uncertainty principle'', which, as we will show along the way, is equivalent to the uncertainty principle for weighted influences of \Cref{thm:influences-informal}.
\begin{theorem}
    \label{thm:full-uncertainty}
    With probability $1-2^{-\Omega(n)}$ over a Haar-random $f:\{0,1\}^n\to \C$, $\|f\|_2=1$,
    for all $g:\{0,1\}^n\to \C$, $\|g\|_2=1$ with $\langle f,g\rangle=0$,
    \[\Avg_f[g]+\Avg_{\wh f}[\wh g]\leq 2-c,\]
    where $c>0$ is a universal constant independent of $n$.
\end{theorem}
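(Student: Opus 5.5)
The plan is to reduce \Cref{thm:full-uncertainty} to the weighted influence uncertainty principle of \Cref{thm:influences-informal}, via an exact identity relating the edgewise average to the weighted influence.

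\textbf{Step 1: a per-edge identity.} Fix an edge $\{x,x+i\}$ and write $a=f(x)$, $b=f(x+i)$, $u=g(x)/f(x)$, $v=g(x+i)/f(x+i)$ (Haar-random $f$ has no zeros almost surely, and the same holds for $\wh f$). Then
\[\frac{|\bar a a u+\bar b b v|^2}{|a|^2+|b|^2}=\frac{\big|\,|a|^2u+|b|^2v\,\big|^2}{|a|^2+|b|^2}=|a|^2|u|^2+|b|^2|v|^2-\frac{|a|^2|b|^2}{|a|^2+|b|^2}|u-v|^2 .\]
This is the weighted-variance identity for a two-point measure. Summing over $i$ and $x$, each vertex $x$ appears in $2n$ ordered pairs. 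The first two terms therefore contribute $2n\sum_x|g(x)|^2=2n$. The last term, after accounting for the double counting of edges, contributes exactly $4\,\Inf_{\mu_f}[g/f]$ in the normalization of $\Inf_\mu$ given before \Cref{thm:influences-informal}. Collecting constants gives
\[\Avg_f[g]=1-\frac{2}{n}\,\Inf_{\mu_f}\!\left[\frac gf\right]\]
(up to a harmless absolute constant factor, which I would recheck).

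\textbf{Step 2: the Fourier side.} The same identity applies verbatim with $\wh f,\wh g$ in place of $f,g$. By unitarity of the normalized transform, $\|\wh g\|_2=1$ and $\langle \wh f,\wh g\rangle=\langle f,g\rangle=0$. Consequently
\[\Avg_f[g]+\Avg_{\wh f}[\wh g]=2-\frac{2}{n}\Big(\Inf_{\mu_f}[g/f]+\Inf_{\mu_{\wh f}}[\wh g/\wh f]\Big).\]
On the good event of \Cref{thm:influences-informal}, which has probability $1-2^{-\Omega(n)}$, the bracket is at least $c'n$ uniformly in $g$. Hence the sum is at most $2-2c'$, and \Cref{thm:full-uncertainty} follows with $c=2c'$. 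Equivalently, \Cref{thm:full-uncertainty} and \Cref{thm:influences-informal} are the same statement.

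\textbf{Main obstacle.} The reduction above is mechanical; all of the difficulty lies in proving \Cref{thm:influences-informal} itself. For that, I would proceed as follows.
\begin{enumerate}
\item Use concentration for Haar-random $f$: with high probability, $|f(x)|^2$ and $|\wh f(s)|^2$ are within constant factors of $2^{-n}$ except on sets of small measure.
\item On the regular part, each weight $\frac{2\mu(x)\mu(x+i)}{\mu(x)+\mu(x+i)}$ is comparable to $2^{-n}$. This lets me compare $\Inf_{\mu_f}[g/f]$ to an unweighted influence of $g\,\overline{f}/|f|$ restricted to good edges.
\item Invoke the unweighted uncertainty principle \Cref{thm:baby-uncertainty-intro}, using $\langle f,g\rangle=0$ to rule out $g/f$ being nearly constant.
\end{enumerate}
The hard part is controlling the phase factors $\overline f/|f|$ and the exceptional vertices uniformly over all $g$. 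I expect this to require a net argument combined with union bounds on the bad events.
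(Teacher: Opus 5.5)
Your Steps 1--2 only establish the equivalence of \Cref{thm:full-uncertainty} with the weighted influence principle. In the paper this equivalence is \Cref{cor:influences-formal}, deduced \emph{from} \Cref{thm:full-uncertainty} via \Cref{app:identity}. (The exact constant is $\Avg_f[g]=1-\frac1n\Inf_{\mu_f}[g/f]$, since the ordered edge sum of $\frac{\mu(x)\mu(x+i)}{\mu(x)+\mu(x+i)}|q(x)-q(x+i)|^2$ equals $2\Inf_\mu[q]$.) So invoking \Cref{thm:influences-informal} is circular, and the proof rests entirely on your three-step sketch, whose Step 3 cannot work.

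\Cref{thm:baby-uncertainty-intro} bounds $\Inf[h]+\Inf[\wh h]$ for a single function and its own Fourier transform, and it holds for every unit $h$ with no orthogonality constraint. Your two functions are $q=g/f$ and $\wh g/\wh f$, which are not a Fourier pair: $\widehat{g/f}\neq \wh g/\wh f$, and the random phases of $f$ and $\wh f$ destroy any algebraic relation between them. The paper stresses this already in the phase-state case. The choice $g=f$ makes both weighted influences vanish, which no Fourier-pair inequality valid for all unit $h$ could allow. Orthogonality to $f$ by itself only gives $\Inf_{\mu_f}[g/f]\gtrsim 1$ (the $\Theta(1/n)$ gap of the one-basis test), not $\Omega(n)$.

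Step 2 is also invalid. Having weights within constant factors of $2^{-n}$ does not make $|q(x)-q(y)|$ comparable to $\bigl|q(x)|f(x)|-q(y)|f(y)|\bigr|$. Take $q\equiv 1$: the weighted influence is $0$, while the unweighted influence of $2^{n/2}|f|$ is $\Theta(n)$. The legitimate comparison is between $\Inf_\mu$ and $\Inf_\unif$ of the same $q$ on interior edges of the typical set.

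What is missing is the probabilistic mechanism the paper actually uses, which has three ingredients.
\begin{enumerate}
\item \textbf{Fixed-$q$ concentration.} Write $g=fq$ and condition on $\mu=|f|^2$, so only the phases of $f$ are random. For each fixed $\mu$-compatible, flat $q$, one shows $\Avg_{\wh f}[\wh{fq}]\le\frac12+t$ except with probability $\exp(-ct^6N^{1/3-o(1)})$. This is \Cref{sec:concentration}: smoothing the denominators by $\delta=\eta/N$, a Gaussian comparison for the mean, and a localized Herbst argument using the dominator $B_\delta$.
\item \textbf{Win-win with a net on $A$.} The quantifiers are then swapped. If $\Inf_\mu[q]\ge\kappa n$, then $\Avg_f[g]\le 1-\kappa$. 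Otherwise the $A$-part $q_A$ has small metric entropy, so a union bound over a net applies. Proving this needs the extension operator $\Ext_A$ and the local escape property, because interior influence alone does not control covering numbers.
\item \textbf{Operator bounds on $B$.} The atypical set $B$ admits no such net (\Cref{app:badspace}). It is handled by the random-paving bounds $\|\Pi_B\wh{M_{\wh f}}\Pi_B\|_\op\le\frac12+\eps$ and $\|\Pi_A\wh{M_{\wh f}}\Pi_B\|_\op\le\frac14+\eps$.
\end{enumerate}
Together these give $\Avg_{\wh f}[\wh g]\le\frac34+O(\eps)$ in the low-influence case. Your closing remark about ``a net argument combined with union bounds'' points in this direction, but none of the three ingredients is present, and the route through the unweighted principle cannot replace them.
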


Before explaining our approach and relating this inequality to other types of uncertainty, we pause to note one final formulation that will be used several times.
The averages can be expressed as quadratic forms in $g$, namely as
\[\Avg_f[g]=:\langle g, M_f g\rangle\]
for an appropriate matrix parameterized by $f$ (this is the matrix corresponding to the ``acceptance'' operator $A$ mentioned earlier).
Similarly,
\[\Avg_{\wh f}[\wh g]=\langle \wh g, M_{\wh f} \wh g\rangle= \braket{g|H^{\otimes n}M_{\wh f}H^{\otimes n}|g}=:\langle g, \wh{M_{\wh f}} \,g\rangle\,.\]
The uncertainty principle thus has the following spectral interpretation: individually, the two operators $M_f$ and $\wh{M_{\wh f}}$ have an inverse-polynomial spectral gap:
\[\mathrm{Spec}(M_f)=\left\{1,1-\Theta\left(\tfrac1n\right),\ldots \right\}\qquad \text{and}\qquad \mathrm{Spec}(\wh{M_{\wh f}})=\left\{1,1-\Theta\left(\tfrac1n\right),\ldots \right\},\]
where the top eigenspace is in both cases one-dimensional with eigenvector $\ket{f}$. The latter follows directly from the analysis of the ``one-basis'' test of \cite{Huang2025} (and the fact that this analysis is tight can be seen, for instance, via an example described in~\cite[Appendix A]{CLSW26}). Yet, what we show is that \[\mathrm{Spec}(M_f+\wh{M_{\wh f}})=\{2,2-c,\ldots\},\]
for a constant $c>0$ independent of $n$.

As we present the proof ideas in the rest of the technical overview, we will arrive along the way at a reformulation of this uncertainty principle in terms of the total influences (as stated earlier in \Cref{thm:influences-informal}).

\subsection{Proof ideas II: Proving the uncertainty principles}
We explain our proof approach in three stages: an ``unweighted'' toy case, the ``phase state'' case, and the general case.
\subsubsection{Unweighted toy case: an uncertainty principle for total influence}
\label{sec:toy-case}
Consider the uniform unweighted case, where we replace each occurrence of $f(x)$ or $\wh f(s)$ with $2^{-n/2}$ (note that this is a toy case because such a function $f$ does not exist!). Then, instantiating the previous definition of $\Avg_f[g]$ with the constant function $f = 2^{-n/2}$ we have %
\[\Avg[g] =\frac1{4n}\sum_i\sum_x|g(x)+g(x+i)|^2\,.\]
A simple, but revealing, observation is that, since $\norm{g}_2 = 1$, we can rewrite $\Avg[g]$ as
\begin{equation}
\label{eq:avg-as-influence}
\Avg[g]=1-\frac1n\sum_{i,x}\left(\frac{g(x)-g(x+i)}{2}\right)^2=1-\frac1n\Inf_{\mathrm{count}}[g] \,,
\end{equation}
where $\Inf_{\mathrm{count}} =\Inf[2^{n/2}\cdot g]$ is the \textit{total influence} of $g$ under the counting measure (which is the correct scaling for $g$).
The identity \eqref{eq:avg-as-influence} can be straightforwardly checked; for the proof of a much more general identity, we refer the reader to \Cref{app:identity}.

In this toy picture, the desired uncertainty principle for $\Avg$ is then, for a universal $c>0$,
\[\Avg[g]+\Avg[\wh g]\leq 2-c \iff \Inf_{\mathrm{count}}[g]+\Inf_{\mathrm{count}}[\wh g]\geq cn.\]
It turns out this uncertainty principle is true and for example follows from a graph-theoretic uncertainty principle of \cite{7148912} specialized to the hypercube.
We give a short, self-contained proof next.

\begin{prop}[Uncertainty principle for total influence]
\label{thm:influence-up}
    Let $g:\{0,1\}^n\to\C$ with $\|g\|_2=1$.
    Then
    \begin{equation}
    \label{eq:influence-up}
        \Inf_{\mathrm{count}}[g] + \Inf_{\mathrm{count}}[\wh g] \geq \left(1-\tfrac{1}{\sqrt{2}}\right)n
    \end{equation}
    Moreover, this inequality is sharp.
\end{prop}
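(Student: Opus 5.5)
The plan is to diagonalize both influences simultaneously: each one is a quadratic form that is diagonal in either the standard basis or the Fourier basis. Their sum then becomes a sum of commuting single-qubit operators, whose minimum eigenvalue can be computed exactly.

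\textbf{Step 1: Fourier formula for $\Inf_{\mathrm{count}}[g]$.} For fixed $i$, write $g(x)=2^{-n/2}\sum_s \wh g(s)(-1)^{s\cdot x}$. Then
\[
g(x)-g(x+i)=2^{-n/2}\sum_s \wh g(s)(-1)^{s\cdot x}\bigl(1-(-1)^{s_i}\bigr)=2\cdot 2^{-n/2}\sum_{s:\,s_i=1}\wh g(s)(-1)^{s\cdot x}.
\]
By Parseval (the transform is unitary), $\sum_x|g(x)-g(x+i)|^2=4\sum_{s:\,s_i=1}|\wh g(s)|^2$. Summing over $i$ and dividing by $4$ gives
\[
\Inf_{\mathrm{count}}[g]=\sum_s |s|\,|\wh g(s)|^2,
\]
where $|s|$ is the Hamming weight.

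\textbf{Step 2: the dual formula.} The normalized transform is an involution ($\wh{\wh g}=g$), so the same computation gives $\Inf_{\mathrm{count}}[\wh g]=\sum_x |x|\,|g(x)|^2$.

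\textbf{Step 3: operator form.} Identify $g$ with $\ket{g}=\sum_x g(x)\ket{x}$, so that $\ket{\wh g}=H^{\otimes n}\ket g$. Let $N=\sum_i \tfrac{1-Z_i}{2}$ be the Hamming weight operator. Then
\[
\Inf_{\mathrm{count}}[\wh g]=\braket{g|N|g}, \qquad \Inf_{\mathrm{count}}[g]=\braket{g|H^{\otimes n}NH^{\otimes n}|g}=\Bigl\langle g\Bigr|\sum_i\tfrac{1-X_i}{2}\Bigl|g\Bigr\rangle .
\]
Hence the left-hand side of \eqref{eq:influence-up} equals $\braket{g|K|g}$ with
\[
K=\sum_{i=1}^n\Bigl(1-\tfrac{X_i+Z_i}{2}\Bigr).
\]
This is a sum of commuting terms, each acting on a single qubit.

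\textbf{Step 4: spectrum and sharpness.} The single-qubit operator $\tfrac{X+Z}{2}$ has eigenvalues $\pm\tfrac{1}{\sqrt2}$, because $(X+Z)^2=2$. So each summand has minimum eigenvalue $1-\tfrac1{\sqrt2}$. The minimum eigenvalue of $K$ is therefore $n(1-\tfrac1{\sqrt2})$, and it is attained by the product state $\ket{v}^{\otimes n}$, where $\ket v$ is the $+\tfrac1{\sqrt2}$ eigenvector of $\tfrac{X+Z}{2}$ (the $H$-invariant state $\cos(\pi/8)\ket0+\sin(\pi/8)\ket1$). For unit $\ket g$ this gives the inequality, with equality for $g=v^{\otimes n}$, which proves sharpness.

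The only step needing care is the normalization bookkeeping in Step 1: the factor $\tfrac12$ inside $\Inf$ cancels the factor $2$ from $1-(-1)^{s_i}$, and the unitary normalization makes Parseval hold with no extra constants. Beyond that, the argument is routine.
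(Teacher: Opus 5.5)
Your proof is correct and follows essentially the same route as the paper: both write the two influences as the quadratic forms of $\sum_j \ketbra{1}{1}_j$ and of its Hadamard conjugate, reduce to the single-qubit operator $I-\tfrac{X+Z}{2}$ (the paper's $M=HDH+D$), and read off $\lambda_{\min}=1-\tfrac{1}{\sqrt2}$ from the tensor-product spectrum. Your explicit minimizer $\bigl(\cos\tfrac{\pi}{8}\ket0+\sin\tfrac{\pi}{8}\ket1\bigr)^{\otimes n}$ makes the sharpness claim concrete, which the paper leaves implicit.
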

\noindent
Note that it is a standard fact that 
\begin{equation}
\label{eq:influence}
    \Inf_{\mathrm{count}}[g] =  \sum_s |s| \cdot |\wh g(s)|^2 \; ,
\end{equation}
and thus \Cref{thm:influence-up} is equivalent to
\begin{equation}
\label{eq:influence-up-diag}
    \sum_s |s| \cdot |\wh g(s)|^2 +\sum_x|x|\cdot |g(x)|^2 \geq \left(1-\tfrac{1}{\sqrt{2}}\right)n\,.
\end{equation}
\begin{proof}[Proof of~\Cref{thm:influence-up}]
    Let $D=\ketbra{1}{1}=\left(\begin{smallmatrix}
        0 & 0 \\ 0 & 1
    \end{smallmatrix}\right)$.
    Inspecting \eqref{eq:influence-up-diag}, one observes that
    \begin{align*}
    \Inf_{\mathrm{count}}[g] &= \bra{g}H^{\otimes n}\Big(\sum_{j=1}^nI^{\otimes j-1}\otimes D\otimes I^{n-j}\Big) H^{\otimes n}\ket{g}\\
    \text{and}\qquad\Inf_{\mathrm{count}}[\wh g] &= \bra{g}\sum_{j=1}^nI^{\otimes j-1}\otimes D\otimes I^{n-j}\ket{g}\,,
    \end{align*}
    where $I=\left(\begin{smallmatrix}
        1 & 0 \\ 0 & 1
    \end{smallmatrix}\right)$ and  $H=\left(\begin{smallmatrix}
        1 & 1\\1& -1
    \end{smallmatrix}\right)/\sqrt{2}$ is the $2\times 2$ Hadamard matrix.
    The best constant in \eqref{eq:influence-up} is therefore given by the minimum eigenvalue of the operator
    \[A:=\tsum_{j=1}^nI^{\otimes j-1}\otimes(\underbrace{HDH+D\vphantom{\Big|}}_{:=M})\otimes I^{\otimes n-j}=\sum_{j=1}^nI^{j-1}\otimes M\otimes I^{n-j}\,.\]
    Let $\{\ket{u},\ket{v}\}$ be an eigenbasis for $M$ and note that the $(I^{j-1}\otimes M\otimes I^{n-j})$'s are simultaneously diagonalized by the eigenbasis $\{\ket{u},\ket{v}\}^{\otimes n}$.
    As a result, the spectrum of $A$ comprises $n$-term sums of elements of $\mathrm{Spec}(M)$, and in particular
    \[\lambda_{\min}(A) = n\lambda_{\min}(M)\,.\]
    The last quantity $\lambda_{\min}(M)$ is determined by direct calculation.
\end{proof}
We remark that this influence uncertainty principle, especially in its formulation in \eqref{eq:influence-up-diag}, is in some sense the right hypercube analogue of the Heisenberg uncertainty principle.
This is elaborated in \cite{7148912}.

\subsubsection{The phase state case and some key ideas}
Let us now return to the $f$-weighted quantities $\Avg_f[\,\cdot\,]$ and $\Avg_{\wh f}[\,\cdot\,]$, where recall that the goal is to show
 \[\Avg_f[g]+\Avg_{\wh f}[\wh g]\leq 2-c\,,\]
for some universal constant $c>0$.
In this subsection, we will focus on the crucial special case of $f$ a \textit{phase state}, \emph{i.e.,} $|f(x)|^2= 2^{-n}$ for all $x$.
Note that, while in the toy case the uncertainty principle holds for all $g$ with $\|g\|_2=1$, once we move to the $f$-weighted case, we must restrict our attention to $g$ that are orthogonal to $f$, as in the hypothesis of \Cref{thm:full-uncertainty} (and this restriction is necessary since, for example, $\Avg_f[f]+\Avg_{\wh f}[\wh f] = 2$).
Since we seek a theorem holding w.h.p. over a random $f$, it is natural to attempt to prove concentration of the operator norm of $M_f+\wh{M_{\wh f}}$ when restricted to the subspace of allowed $g$'s.
Unfortunately, the set of valid $g$'s is now a ``moving target'' (moving with $f$), and this seems to frustrate naive spectral approaches, such as those used in the toy unweighted case.

The first key idea in this work is a simple transformation that both decouples $f$ from $g$ and recovers a connection to total influence.
Specifically, we factor out a copy of $f$ from $g$, which is possible with probability $1$ over Haar-random $f$ and always possible for $f$'s coming from phase states (the latter being our current focus).
Formally, define $q:\{0,1\}^n\to\C$ via $g = fq$,
and note that
\[\|g\|_2^2=1\iff \sum_x|f(x)|^2|q(x)|^2=\|q\|_{L^2(\unif)}^2=1 \,,\]
and
\[\langle f,g\rangle=0\iff \sum_x|f(x)|^2q(x)=\E_{x\sim\unif}q=0,\]
where we have used that $|f(x)|^2= 2^{-n}$ for all $x$.
Recall that in the formulation of the uncertainty principle from \Cref{thm:full-uncertainty}, we care precisely about functions $g$ satisfying the two conditions on the LHS above.
Thus the class of $g$ ``compatible'' with $f$ is exactly those $g=fq$ where $q$ is nothing but a balanced Boolean function, with standard $L^2$ normalization with respect to the uniform probability measure on $\{0,1\}^n$.

Moreover, with this definition of $q$,
\begin{align}
\Avg_f[g] &= \frac{1}{2n}\sum_i\sum_x\frac{\left||f(x)|^2q(x)+|f(x+i)|^2q(x+i)\right|^2}{|f(x)|^2+|f(x+i)|^2}\nonumber\\
&=\frac{1}{4n}\sum_i\mathop{\E}_{x\sim\unif}\left|q(x)+q(x+i)\right|^2\nonumber\\
&= 1 - \frac{1}{n} \sum_i \mathop{\E}_{x \sim \unif} \left( \frac{|q(x) - q(x + i)|}{2} \right)^2 \nonumber \\
&= 1-\tfrac1n\Inf[q],
\label{eq:inf-appearance-phase}
\end{align}
where $\Inf$ is the standard total influence.
So we have conveniently recovered, for the standard-basis side quantity $\Avg_f[g]$, an analogous expression to the earlier \Cref{eq:avg-as-influence}, up to the correct normalization for the influence, and the passage from $g$ to $q$.

One may then hope the rest of the story matches, and we recover $\Inf[\wh q]$ on the Fourier side.
Unfortunately this is not what happens; making this same substitution into $\Avg_{\wh f}[\wh g]$ does \textit{not} yield $\Inf[\wh q]$ but the following expression:
\begin{equation}
\label{eq:ungainly-expr}
\Avg_{\wh f}[\wh{fq}]=\frac{1}{2n}\sum_i\sum_s\frac{\left|\overline{\wh f(s)}\widehat{fq}(s)+\overline{\wh f(s+i)}\widehat{fq}(s+i)\right|^2}{|\wh f(s)|^2 + |\wh f(s+i)|^2}\,.
\end{equation}

In dual fashion to the standard basis side, this can be simplified to $1-\frac{1}{n}\Inf\left[\wh g/\wh f\right]$, but $\wh g/\wh f$ has no simple algebraic relationship to $\wh q$, and we thus cannot pass to $1-\frac{1}{n}\Inf\left[\wh q\right]$.

Despite this, it turns out that one can (with significant effort) show the following concentration inequality: for essentially\footnote{We argue this concentration for all $q$ satisfying certain technical regularity conditions.
This suffices for the argument that follows.} any fixed $q$, we have
\begin{equation}
    \label{eq:fixed-q-conc-intro}
    \Pr_{f}\left[\Avg_{\wh f}[\wh{fq}]\geq \frac12 + t\right]\leq \exp(-ct^62^{n/3})\,.
\end{equation}
Note this observation is not immediately useful because the quantifiers are in the wrong order: recall that we want a statement asserting that with high probability over $f$, it holds that for all $q$, $\Avg_{\wh f}[\wh{fq}]$ is well-behaved, whereas in \eqref{eq:fixed-q-conc-intro} we are fixing a $q$ before selecting a random $f$.
But we \textit{could} hope to correct the quantifier order via a union bound over an $\eps$-net for the $q$'s, provided this net is not too large.

What set of $q$'s should we use?
The covering number of the set of \textit{all} $q$'s is unfortunately too large: $\exp(\Omega(2^n))$ for $\epsilon = \mc O(1)$.
But recall that the standard-basis quantity $\Avg_f[g]$ is directly related to the total influence of $q$---and it is well-known that the set of \emph{low-influence} $q$'s has small covering number.
Connecting influence to the concentration phenomenon in \eqref{eq:fixed-q-conc-intro} is the second key idea of this work, and we do so via a win-win argument.
In the first case, $\Inf[q]$ is large, which means $\Avg_f[g]$ is small and we are already done.
In the second case, $\Inf[q]$ is small and the $\eps$-net argument succeds in switching quantifiers in \eqref{eq:fixed-q-conc-intro}.
In more detail, we argue as follows. Fix a sufficiently small $\kappa >0$.
\begin{enumerate}
    \item Let $f$ be a random phase function.
    Let $g$ be compatible with $f$ (\textit{i.e.}\ $\|g\|=1$ and $\langle f,g\rangle=0$), and let $q=g/f$.
    \item If $\Inf[q]> \kappa n$ for some $\kappa >0$, then, by \eqref{eq:inf-appearance-phase}, $\Avg_f[g] < 1-\kappa$ and we are done: $\Avg_f[g]+\Avg_{\wh f}[\wh g]< 2-\kappa$.
    \item Otherwise, $\Inf[q]\leq \kappa n$.
    \begin{enumerate}
        \item Since $\Inf[q] = \sum_s |s| \cdot |\wh{q}(s)|^2$, small influence forces $q$ to be approximately a \emph{low-degree} polynomial.
    By a straightforward argument, there is a $0.01$-net (with respect to $L^2$ distance) $Q_\kappa^{(0.01)}$ of such $q$ with cardinality $|Q_\kappa^{(0.01)}|\leq \exp(c\cdot2^{\mc O(\sqrt{\kappa}n)})$.
    \item For $\kappa$ sufficiently small, the concentration in \eqref{eq:fixed-q-conc-intro} beats the $0.01$-net cardinality $|Q_\kappa^{(0.01)}|$.
    Thus by a union bound, w.h.p. over $f$, all $q^*\in Q_\kappa^{(0.01)}$ have $\Avg_{\wh f}[\wh{fq^*}]\leq 0.51$.
    Moreover, $\Avg_{\wh f}[\wh{f\,\cdot\,}]$ can easily be seen to be 1-Lipschitz, so we conclude that w.h.p. over $f$,
    \[\Avg_{\wh f}[\wh{fq}]\leq 0.52\text{ for all } q\text{ with }\Inf[q]\leq \kappa\,.\]
    Thus in this case, w.h.p. over $f$, $\Avg_f[g]+\Avg_{\wh f}[\wh g]\leq 1.52$.
    \end{enumerate}
    \item Putting things together, we conclude that w.h.p. over $f$, $\Avg_f[g]+\Avg_{\wh f}[\wh g]\leq \max\{2-\kappa, 1.52\}$.
\end{enumerate}
\noindent Although this argument is not strictly necessary for our general $f$ result, we include a formal version of it as an interlude between \Cref{sec:concentration,sec:AB-split}. For the case of general $f$ to be outlined shortly in \Cref{sec:full-monty}, several more obstacles will be lurking, but the general shape of the argument is shared.

\paragraph{Concentration bounds for $\Avg_{\wh f}[\wh{fq}]$: Proof of~\Cref{eq:fixed-q-conc-intro}.}
We now briefly describe the ideas behind the proof of the concentration bound in \eqref{eq:fixed-q-conc-intro}:
\begin{equation*}
    \Pr_{f}\left[\Avg_{\wh f}[\wh{fq}]\geq \frac12 + t\right]\leq \exp(-ct^62^{n/3})\,.
\end{equation*}
The main technical challenge is that (for a fixed $q$) while the numerator of $\Avg_{\wh f}[\wh{fq}]$, as in Equation~\eqref{eq:ungainly-expr}, is a quartic polynomial in the entries of $f$---and thus amenable to concentration of polynomial chaos arguments---the presence of the denominators seems to frustrate all off-the-shelf concentration approaches.
Moreover, the exponentially many terms in the sum are all mutually dependent (albeit weakly), and naive bounded difference (Efron--Stein) and Herbst-type arguments also do not work because $\Avg_{\wh f}[\wh{f q}]$ is highly non-differentiable in the $f(x)$'s.

Ultimately we overcome this barrier as follows:
we smooth each summand of ~\Cref{eq:fixed-q-conc-intro} by adding a carefully chosen term $\tau>0$ to the denominator and separately control both the smoothed version $\Avg^{(\tau)}_{\wh f}[\wh{fq}]$ and the error term $E=\big|\Avg^{(\tau)}_{\wh f}[\wh{fq}]-\Avg_{\wh f}[\wh{fq}]\big|$.
This fixes the differentiability of the average, but seems to simply move the problem into the $E$ term.
However, we show that $E$ is upper-bounded by another quantity that \textit{is} differentiable and indeed concentrates very strongly around 0.
This general perspective of smoothing and then differentiably bounding the smoothing error will also be crucially utilized in the Haar random setting that we discuss shortly.

\medskip
Overall, we have completed the description of the main ideas that go into to proving \Cref{thm:full-uncertainty} for the restricted class of $f$ corresponding to phase states. Restated as an uncertainty principle for total influences, we have that, w.h.p.\ over phase functions $f$, $\|f\|_2=1$,
    for all $g:\{0,1\}^n\to \C$, $\|g\|_2=1$ with $\langle f,g\rangle=0$, 
\begin{equation}
\label{eq:influence-uncertainty-principle}
\Inf\left[\frac{g}{f}\right] + \Inf\left[\frac{\wh g}{\wh f}\right] \geq C \cdot n \,, 
\end{equation}
for a universal constant $C>0$.

\medskip
Looking ahead to the full Haar-random $f$ case, we will only be able to use the above win-win argument on a ``good part'' of the hypercube where $|f(x)|^2$ is sufficiently close to its expectation (notice that in the phase state case we do not have to worry about this!), and we will have to handle the rest of the functions by other means.
We explain this in the next subsection.

\subsubsection{The full monty:  Haar-random states and the good-bad partition of $\{0,1\}^n$}
\label{sec:full-monty}
We now turn our attention to the case of a Haar random $f$. Let us see what happens when we factor an $f$ out of $g$ for general $f$ (not necessarily with $|f|^2\equiv 2^{-n}$).
Similarly to the above, we have
\[\|g\|_2^2=1\iff \sum_x|f(x)|^2|q(x)|^2=1\]
and
\[\langle f,g\rangle=0\iff \sum_x|f(x)|^2q(x)=0\,.\]
This suggests that for a given $f$ we should define the probability measure $\mu_f(x)=|f(x)|^2$, whereupon the compatibility requirements for $q$ can be written as
\[\|q\|_{L^2(\mu_f)}^2=1\qquad\text{and}\qquad\E_{x\sim\mu_f}q(x)=0.\]
Crucially, we see that in the $q$-world, compatibility is governed only by the moduli of the $f(x)$'s, not their arguments.
Thus from now on we will say that a $q$ is \emph{$\mu_f$-compatible} if it satisfies the two equations above.

This phenomenon carries through to $\Avg_f$. 
When viewed as a function of $q$, it only depends on $\mu_f$, and a straightforward calculation gives:
\begin{align*}\Avg_f[g] &= 1 - \frac{1}{2n}\sum_i\sum_x\frac{|f(x)|^2|f(x+i)|^2}{|f(x)|^2+|f(x+i)|^2}\left|\frac{g(x)}{f(x)}-\frac{g(x+i)}{f(x+i)}\right|^2\\[1em]
&= 1-\frac{1}{n}\sum_{i,x}\frac{2\mu_f(x)\mu_f(x+i)}{\mu_f(x)+\mu_f(x+i)}\left|\frac{q(x)-q(x+i)}{2}\right|^2\\
&=: 1-\frac1n\Inf_{\mu_f}[q]\,,
\end{align*}
where we have defined $\Inf_{\mu_f}[q]$ to match precisely the latter expression. Thus, $\Avg_f[g]$ is linearly related to a weighted version of the influence of $q$, where each edge is weighted by the harmonic mean of $\mu_f$ at each incident vertex.
What's more, since $f$ is Haar-random, by standard concentration inequalities, $\mu_f$ is approximately flat: $\mu_f(x) \approx 2^{-n}$ for most $x$
(if this approximation held uniformly for all $x$, then we would more or less be back in the phase state setting which we already understand).

Given the apparent dependence on the measure induced by $f$, rather than $f$ itself, we are naturally led to the following approach.
First, stratify the $f$'s by their measure; \textit{i.e.}, define
\[F_\mu=\{f:\mu_f=\mu\}\,.\]
Then, for each ``typical'' $\mu$, repeat the win-win argument from the phase-state case, using the randomness in $f\sim_\unif F_\mu$.
More concretely, for any $\mu$ we have two cases: first, if $\Inf_\mu[q] > \kappa n$ for a suitable constant $\kappa$, then we are done; second, if $\Inf_\mu[q]\leq\kappa n$, it is reasonable to try repeating the $\eps$-net argument from before.
And indeed, the fixed-$q$ and random-$f$ concentration bound \eqref{eq:fixed-q-conc-intro} holds in this adjusted setting where $f$ is drawn uniformly from $F_\mu$.

However, a problem appears when we try to take the union bound in the $\eps$-net argument.
It turns out that, unlike in the phase state case, a bound on $\Inf_\mu[q]$ does not, for typical $\mu$, meaningfully control the covering number of $\mu$-compatible $q$'s.
Roughly, the reason is that $\Inf_\mu[q]$ is not sensitive to changes in $q$'s value at the points $x$ with both: (\textit{i}) abnormally large mass $\mu(x)$ and (\textit{ii}) neighbors that are ``normal-sized.''
There are many such points in a typical $\mu$, and this allows one to ``hide'' a large $\eps$-net by varying the values of $q$ at such $x$.
To be clear, this is not a weakness in our analysis of the covering number; one really can show the existence of a $\Omega_\kappa(N)$-dimensional subspace of functions $q$, all of which have $\Inf_\mu[q]\leq\kappa n$---see \Cref{app:badspace}.

Our strategy for circumventing this barrier is to partition the domain into two sets: $A\sqcup B = \{0,1\}^n$, where $A$ represents the points where $\mu$ ``looks uniform'',
$A:=\{x:\mu(x)\in[1/(TN),T/N]\}$ for an appropriate $T>1$.
First we assume $\Inf_\mu[q]\leq \kappa n$ (otherwise we are done immediately), and then we then argue two facts:
\begin{itemize}
    \item \emph{$A$-part covering numbers.}
    The covering number of the $A$-part of $q$, \textit{i.e.,} the set of $q_A=q\cdot \mathbf{1}_A$ for $\mu$-compatible $q$, is suitably controlled by $\Inf_\mu[q]$. After some effort, this allows the $\eps$-net argument to go through for $q_A$, and we have: w.h.p over $f \in F_{\mu}$,  $\Avg_{\wh f}[\wh{fq_A}] \leq (1/2 + \text{small})\|q_A\|_{L^2(\mu)}$ for all $q_A$ corresponding to low-influence $q$.
    \item \emph{$B$-part operator bounds.}
    Second, for the ``bad'' part of $q$, $q_B=q\cdot\mathbf 1_B$ with $B=A^c$, we can afford to make a coarse bound because $|B|/N$ concentrates below a small constant of our choosing (depending on $T$).
In particular, defining $\wh{M_{\wh f}}$ by the relation $\Avg_{\wh f}[\wh{g}]=:\langle g,\wh{M_{\wh f}} \,g\rangle$, we show the operator norm of $\Pi_B\wh{M_{\wh f}}\Pi_B$ is small (close to 1/2) with high probability. Equivalently, $\langle g_B,\wh{M_{\wh f}} \,g_B \rangle$ is small with high probability.
The cross terms in \\$\Avg_{\wh f}[\reallywidehat{(g_A+g_B)}]=\langle g_A+g_B,\wh{M_{\wh f}} \,(g_A+g_B)\rangle$ are dispatched similarly, and are close to 1/4 with high probability.
\end{itemize}
Together, these bounds yield the following schematic argument in the low $\Inf_\mu[q]$ case:
\begin{align*}
    \Avg_{\wh f}[\wh g]&=\langle g, \wh{M_{\wh f}} \,g\rangle\\
    &= \langle g_A + g_B, \wh{M_{\wh f}} \,(g_A + g_B) \rangle \\
    &\leq \Avg_{\wh f}[\wh{fq_A}]+2\|\Pi_A\wh{M_{\wh f}}\Pi_B\|_\op \cdot \norm{g_A}_2 \norm{g_B}_2 + \|\Pi_BM_f\Pi_B\|_\op\|g_B\|_2^2\\
    &\leq (1/2+\text{small})\|q_A\|_{L^2(\mu)}^2 + (1/2+\text{small})\|g_A\|_2\|g_ B\|_2 + (1/2+\text{small})\|g_B\|_{2}^2\\
    &=(1/2+\text{small})\|g_A\|^2_2 + (1/2+\text{small})\|g_A\|_2\|g_B\|_2 + (1/2+\text{small})\|g_B\|_{2}^2 \\
    &= (\|g_A\|_2 \,,\, \|g_B\|_2) \, \begin{pmatrix}
    1/2 & 1/4\\
    1/4 & 1/2
\end{pmatrix} \,\begin{pmatrix}
    \| g_A\|_2\\
    \|g_B \|_2
\end{pmatrix}  + \text{small} \\
    &\leq 3/4+\text{small} \,.
\end{align*}
Here, the fourth line follows from the spectral bounds on $\Pi_AM_f\Pi_B$ and $\Pi_BM_f\Pi_B$ (plus the $A$-part analysis); the fifth line follows from the fact that $\|q_A\|_{L^2(\mu)} = \| g_A\|$; and the last line follows from the observations \[\left\|\left(\begin{smallmatrix}
    1/2 & 1/4\\1/4 & 1/2
\end{smallmatrix}\right)\right\|_\op= \frac34\qquad\text{and}\qquad\Big\|\big(\|g_A\|_2,\,\|g_B\|_2\big)\Big\|_2 = \| g\|_2 = 1\,.\]
Combining this bound with the high-influence side of the win-win argument, we conclude $\Avg_f[g]+\Avg_{\wh f}[\wh g]\leq \max\{2-\kappa,1.75 + \text{small}\}\leq 2-\Omega(1)$, as desired.

As anticipated, this uncertainty principle has a satisfying equivalent reformulation in terms of influences, as in \Cref{thm:influences-informal}.

\medskip
We conclude the technical overview with some technical details about the $A$- and $B$-part analyses.

\paragraph{$A$-part analysis: from influence on $A$ to influence everywhere via linear extensions.}
There is a significant---but subtle---issue with completing the covering number argument for low-influence $q_A$'s.
Recall that in the win-win argument we are only allowed to assume that $\Inf_\mu$ is small for $q$ itself, but here we would like to control the covering number of the $q_A$'s, and it is \textit{not} always the case that $\Inf_\mu(q_A)\leq \Inf_\mu(q)$.
This lack of monotonicity may be blamed on the edges crossing the $A$-$B$ boundary, which can contribute very different values to $\Inf_\mu$ for $q$ vs. $q_A$.
The most that can be said immediately is that $\Inf_\mu^{\Int(A)}(q_A)\leq \Inf_\mu(q)$, where the \textit{interior influence}
\[\Inf_\mu^{\Int(A)}(q_A):=\sum_{\substack{i, x\in A,\\x+i\in A}}\frac{\mu(x)\mu(x+i)}{2\mu(x)+\mu(x+i)}\left(\frac{q(x)-q(x+i)}{2}\right)^2\]
is the subsum of $\Inf_\mu[q]$ using only edges fully inside $A$.
And unfortunately, a bound on $\Inf_\mu^{\Int(A)}(q_A)$ alone does not always constrain the covering number of the $q_A$'s---some counterexamples are explained in
\Cref{sec:AB-split}.

The crucial insight is this: as long as $\mu$ is derived from a Haar-random $f$, the geometry necessary for these counterexamples to arise is not present except with very low probability.
More formally, to finish the proof, we show that with high probability over the choice of $\mu$, a bound on $\Inf_\mu^{\Int(A)}[q_A]$ actually \textit{does} upper-bound the influence of a certain extension $\Ext_Aq_A$ of $q_A$ to the whole cube, and this suffices to control the metric entropy of the $q_A$'s.
We make this comparison via a careful combinatorial argument, drawing some inspiration from the canonical paths analysis of \cite{Huang2025} and their ``local escape property.''

\paragraph{$B$-part analysis: decoupled Fourier coefficients and random pavings.}
We explain the argument for bounding $\|\Pi_B\wh{M_{\wh f}}\Pi_B\|_\op$; the cases of $\Pi_A\wh{M_{\wh f}}\Pi_B$ and $\Pi_B\wh{M_{\wh f}}\Pi_A$ are similar.
First observe that, as $\mu$ varies, $B$ acts as a random subset of $\{\pm 1\}^n$ with permutation-invariant law---and actually membership in $B$ is very nearly i.i.d. Bernoulli-distributed.
As such, $\Pi_B$ can be considered a random coordinate projection, and when applied to a fixed matrix as in $\Pi_B Q\Pi_B$, constitutes a \textit{random paving} in the sense of a work of Tropp \cite{MR2379999} concerning a partial resolution to the Kadison--Singer conjecture.
Provided certain technical conditions hold, \cite{MR2379999} roughly states that sufficiently sparse random coordinate projections have small operator norm.

Of course, our situation is not $\Pi_B Q\Pi_B$ for fixed $Q$; $\wh{M_{\wh f}}$ is certainly not independent of $B$.
However, the entries of $\wh{M_{\wh f}}$ are defined by the Fourier coefficients of $\wh f$, and it turns out that for Haar-random $f$, $\mc O(n)$-sized marginals of $\big(\wh f(s)\big)_s$ can actually be considered as independent of $\mu$.
Thus for tracial moments of $\Pi_B\wh{M_{\wh f}}\Pi_B$ up to order $\mc O(n)$, we can ``freeze'' an approximation to $\wh{M_{\wh f}}$ and apply the random paving result of Tropp.
Unfreezing $\wh{M_{\wh f}}$ gives the result.
This argument is quite involved, and requires several Lindeberg-type replacement and Gaussian interpolation steps. Moreover, throughout, we must work with the $\tau$-smoothed quantities introduced earlier.

\subsection{Outlook}
\label{sec:outlook}
Our work leaves several questions open.
\begin{enumerate}
	\item \textit{Hay from the haystack.} Our test can handle all but a $2^{-\Omega(n)}$ Haar-fraction of target states.
    What familiar, structured families are in the ``good'' set?
   For example, the concurrent work by Du et al.~\cite{du2025certifying}, which also certifies almost all target states with constant robustness, provably applies to random graph states, a subclass of stabilizer states, and, numerically, seems to apply to random brickwork states (which we expect to be the case for our test too).
	\item \textit{Improving the robustness constant.}
	We proved that our test is $C$-robust, \textit{i.e.}\ it can distinguish states that are $\eps$-close from states that are $C\eps$-far, for some \emph{very large} constant $C$ closely related to the second eigenvalue of the $M_f+H^{\otimes n}M_{\wh f}H^{\otimes n}$ operator. However, experiments suggest that $C$ should be at most $4$---see \Cref{fig:numerics}. The constant is important in practice: establishing that $C = 4$ means that our test can distinguish, \textit{e.g.}, states that have $99.9$ fidelity with the target from states that only have $99.6$ fidelity). For comparison, in our prior work~\cite{CLSW26}, although our test relied on one $\mc O(\log n)$-qubit measurement, we were able to formally show that $C=2$.%
    \item \textit{A ``fully tolerant'' test.}
	One might also wish for an algorithm that is $(\eps_1,\eps_2)$-tolerant for independently-settable $\eps_1,\eps_2>0$, meaning that it can distinguish between states that are $\epsilon_1$-close and states that are $\epsilon_2$-far. In order to achieve this, it is necessary to depart from the framework of a one-shot (independent of the tolerance parameters) test that is subsequently amplified sequentially.    
    
	\item \textit{Robustness for all states.} As mentioned, our test can only handle all but a $2^{-\Omega(n)}$ Haar-fraction of target states. This limitation is inherent for non-adaptive tests: Gupta, He, and O'Donnell~\cite{guptaHeODonnell2025} showed that some \emph{adaptivity} is required for any test (robust or not) that handles \emph{all} target states. Finding such a test, or showing that it does not exist, is one of the main open questions in this area. 
\end{enumerate}

\begin{figure}
    \centering
    \includegraphics[width=0.8\linewidth]{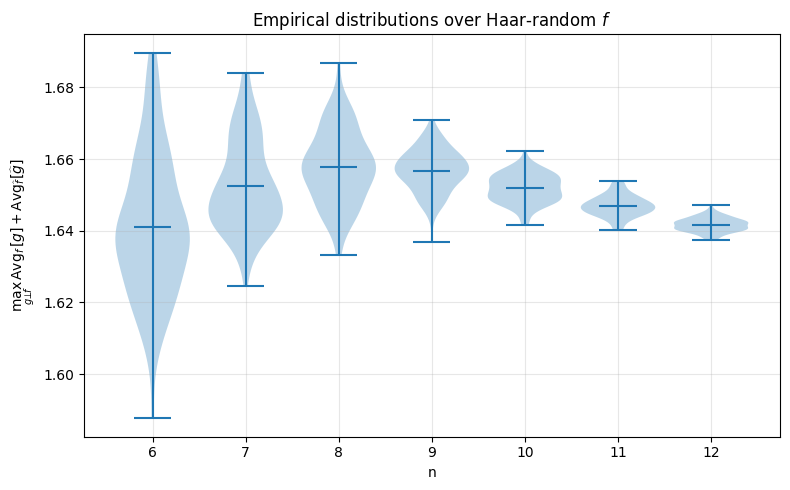}
    \caption{Numerical values for the uncertainty principle constant $2-c$.
    Each distribution represents the empirical distribution of the second eigenvalue of $A_f+H^{\otimes n}A_{\wh f}H^{\otimes n}$ for 100 Haar-random samples.}
    \label{fig:numerics}
\end{figure}

\section{Preliminaries and conventions}

\subsection{Notation}
We briefly fix some notation used throughout.
Further, special symbols defined throughout the proof are recorded in \Cref{sec:gloss}.
\begin{itemize}
    \item We denote $N=2^n$ always.
    \item We use the unitary normalization of the Fourier transform over $\mathbb Z_2^n$:
    \[\wh f (s)= \frac{1}{\sqrt{N}}\sum_x f(x)(-1)^{\sum_js_jx_j}\]
    \item We write $s+t$ for $s,t\in\{0,1\}^n$ to mean the bitwise XOR.
    If $i\in[n]$, we interpret $s+i$ as $s$ with the $i$th index negated.
    The latter will also be written as $s+e_i$ when extra clarity is beneficial.
    \item The undecorated $L^p$ norm $\|\cdot\|_p$ is always with respect to the \textit{counting} measure.
    \item We denote $\langle f, g\rangle = \sum_x \overline{f(x)}g(x)$.
    We use $\langle f, Mg\rangle = \bra{f}M\ket{g}$ interchangeably.
    For a probability measure $\mu$ on $\{0,1\}^n$, we use $\langle f, g\rangle_\mu=\sum_x\mu(x)\overline{f(x)}g(x)$.
    \item The shorthand ``$\unif$'' denotes the uniform probability distribution on $\{0,1\}^n$.
    \item A \textit{Steinhaus} random variable is a uniform (Haar) draw from the set $\{z\in \C:|z|=1\}$. 
    \item Exponents attach first.
    For example, for some random matrix $A$,
    \[\E\tr A^p := \E\left[\tr[A^p]\right]\]
    \item For a probability measure $\mu$ on the hypercube, define $\langle f, g\rangle_\mu= \sum_x \mu(x)\overline{f(x)}g(x)$.
\end{itemize}

\subsection{Basic facts about the Porter--Thomas distribution}

For any probability measure $\mu$ on $\{0,1\}^n$ let the \emph{(Fourier) bias} of $\mu$ be denoted by
\[
        \Bias(\mu)=\max_{s\neq 0^n}\sqrt{N}\,|\wh \mu(s)|
        = \max_{s\neq 0^n}\big|\tsum_x\mu(x)(-1)^{s\cdot x}\big|\,.
        \]
\begin{lem}[Bias of Porter-Thomas measures]
    \label{lem:muhat-bound}
    Let $\mu$ be Dirichlet$(\vec 1)$ on \(N=2^n\) coordinates. Then with
    probability at least \(1-2^{-\Omega(n)}\),
    \[\Bias(\mu)\leq
        \mc O\left(\sqrt{\frac nN}\right).
    \]
\end{lem}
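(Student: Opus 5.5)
We represent the Dirichlet$(\vec 1)$ measure through exponentials and reduce to a union bound over the $N-1$ nonzero characters. Let $E_1,\dots,E_N$ be i.i.d.\ $\mathrm{Exp}(1)$ random variables, indexed by $x\in\{0,1\}^n$, and let $S=\sum_x E_x$. Then $\mu(x)=E_x/S$ has the Dirichlet$(\vec 1)$ law. For each $s\neq 0^n$ we have
\[
\tsum_x\mu(x)(-1)^{s\cdot x}=\frac{1}{S}\tsum_x (E_x-1)(-1)^{s\cdot x},
\]
because $\sum_x(-1)^{s\cdot x}=0$ when $s\neq 0$. So it suffices to control the denominator $S$ and the centered numerators $Y_s:=\sum_x (E_x-1)(-1)^{s\cdot x}$.

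\emph{Step 1: the denominator.} By Bernstein's inequality for sums of i.i.d.\ centered subexponential variables, $\Pr[|S-N|\geq N/2]\leq \exp(-\Omega(N))$. On the complementary event we have $S\geq N/2$.

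\emph{Step 2: a fixed character.} Fix $s\neq 0$. The variable $Y_s$ is a sum of $N$ independent centered subexponential variables, since $(E_x-1)(-1)^{s\cdot x}$ has $\psi_1$-norm $\mc O(1)$. Bernstein's inequality gives
\[
\Pr[|Y_s|\geq u]\leq 2\exp\!\left(-c\min\left(\frac{u^2}{N},\,u\right)\right).
\]
We take $u=C\sqrt{nN}$. Since $u\leq N$ for large $n$, the quadratic branch is the active one, and the bound becomes $2\exp(-cC^2 n)$.

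\emph{Step 3: union bound and conclusion.} Summing over the $N-1<2^n$ nonzero $s$, the total failure probability is at most $2^{n+1}e^{-cC^2n}+e^{-\Omega(N)}$. This is $2^{-\Omega(n)}$ once $C$ is a large enough constant. On the good event,
\[
\Bias(\mu)\leq \frac{C\sqrt{nN}}{N/2}=\mc O\!\left(\sqrt{n/N}\right).
\]

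The only step that needs any care is Step 2. We must check that the variables are genuinely subexponential, which lets us use Bernstein rather than Hoeffding; Hoeffding is unavailable because the $E_x$ are unbounded. We must also verify that we stay in the Gaussian regime $u^2/N\leq u$. Both checks are routine, so I expect no real obstacle. The key point is that the normalization by $S$ decouples after removing the $s=0$ term.
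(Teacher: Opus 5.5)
Your proof is correct. It has the same architecture as the paper's: a per-character tail bound of order $\exp(-cNt^2)$ at $t\asymp\sqrt{n/N}$, followed by a union bound over the $N-1$ nonzero characters. Where you differ is in how you get the per-character tail.

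The paper uses the aggregation property of the Dirichlet law. It writes $\sum_x\mu(x)(-1)^{s\cdot x}=2\mu(E_s)-1$ with $\mu(E_s)\sim\mathrm{Beta}(N/2,N/2)$. It then invokes the fact, cited as standard, that this Beta law is subgaussian at scale $N^{-1/2}$. This gives a global subgaussian tail in one line, but it rests on the exact marginal distribution and an imported concentration fact.

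You instead use the exponential representation. You cancel the normalization in the numerator via $\sum_x(-1)^{s\cdot x}=0$, then control numerator and denominator separately with Bernstein. Bernstein only gives subgaussian behavior in the moderate-deviation range $u\lesssim N$, but that is all the lemma needs. In exchange, your argument is fully self-contained. It never uses the exact Beta marginal, so it applies unchanged to any i.i.d. subexponential weights in place of the $\mathrm{Exp}(1)$ variables. It also matches the exponential model the paper itself uses later, in the local escape proof and in the appendix.
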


\begin{proof}
Fix a nonzero \(s\in\{0,1\}^n\). Write
\[
    E_s:=\{x:\chi_s(x)=1\},
    \qquad
    O_s:=\{x:\chi_s(x)=-1\}.
\]
Since \(s\neq 0\), the character \(\chi_s\) is balanced, and hence
\[
    |E_s|=|O_s|=N/2.
\]
Therefore, using unitary Fourier normalization,
\[
    \sqrt N\,\widehat\mu(s)
    =
    \sum_x \mu(x)\chi_s(x)
    =
    \mu(E_s)-\mu(O_s)
    =
    2\mu(E_s)-1.
\]
By the aggregation property of the Dirichlet distribution,
\[
    \mu(E_s)\sim \mathrm{Beta}(N/2,N/2).
\]
It is standard that \(\mathrm{Beta}(N/2,N/2)\) is subgaussian with scale
\(N^{-1/2}\) around \(1/2\). Thus there is a universal constant \(c>0\) such
that for all \(t\ge0\),
\[
    \Pr\left[
        \sqrt N\,|\widehat\mu(s)|\ge t
    \right]
    =
    \Pr\left[
        |\mu(E_s)-1/2|\ge t/2
    \right]
    \le
    2\exp(-cNt^2).
\]
Taking a union bound over all \(N-1\) nonzero \(s\), we get
\[
    \Pr\left[
        \max_{s\neq0}\sqrt N\,|\widehat\mu(s)|\ge t
    \right]
    \le
    2N\exp(-cNt^2).
\]
Now choose
\[
    t=C\sqrt{\frac nN}
\]
with \(C\) sufficiently large. Since \(N=2^n\),
\[
    2N\exp(-cNt^2)
    =
    2\exp\bigl(n\log 2-cC^2n\bigr)
    \le
    2^{-\Omega(n)}
\]
for \(C\) large enough. This proves the claim.
\end{proof}

By similar, routine arguments, the same bound holds for the $3/2$-norm:

\begin{fact}
    With probability at least $1-\exp(-\Omega(n))$,
    \[\sum_x\mu(x)^{3/2}\leq \mc O\left(\sqrt{\frac{n}{N}}\right)\,.\]
\end{fact}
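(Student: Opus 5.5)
The plan is to use the standard exponential representation of the flat Dirichlet law. Let $E_x$, $x\in\{0,1\}^n$, be i.i.d.\ $\mathrm{Exp}(1)$ and $S=\sum_x E_x$. Then $\mu(x)=E_x/S$ has the Dirichlet$(\vec 1)$ law on $N$ coordinates, and
\[\sum_x\mu(x)^{3/2}=\frac{\sum_x E_x^{3/2}}{S^{3/2}}\,.\]
It therefore suffices to lower-bound the denominator and upper-bound the numerator with high probability. In fact this should give the stronger bound $\mc O(1/\sqrt N)$, which implies the stated $\mc O(\sqrt{n/N})$.

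\textbf{Denominator.} $S$ is a sum of $N$ i.i.d.\ subexponential variables with mean $1$. Bernstein's inequality (or the explicit Gamma$(N,1)$ tail) gives $\Pr[S<N/2]\le \exp(-\Omega(N))$. On the complementary event, $S^{3/2}\ge (N/2)^{3/2}$.

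\textbf{Numerator.} The variables $E_x^{3/2}$ have only stretched-exponential tails $\exp(-t^{2/3})$, so I would first truncate.
\begin{itemize}
\item Fix a large constant $C$. A union bound gives $\Pr[\max_x E_x> Cn]\le N e^{-Cn}=2^{-\Omega(n)}$.
\item On the complementary event, $\sum_x E_x^{3/2}$ equals $\sum_x Y_x$, where $Y_x:=E_x^{3/2}\mathbf 1[E_x\le Cn]\in[0,(Cn)^{3/2}]$.
\item The $Y_x$ are independent, with $\E Y_x\le \E E^{3/2}=\Gamma(5/2)$.
\item Hoeffding's inequality gives $\Pr\big[\sum_x Y_x\ge 2\Gamma(5/2)N\big]\le \exp\big(-\Omega(N/(Cn)^3)\big)$, which is doubly-exponentially small in $n$.
\end{itemize}

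\textbf{Combining.} Except with probability $2^{-\Omega(n)}$, all three good events hold, and then
\[\sum_x\mu(x)^{3/2}\le \frac{2\Gamma(5/2)N}{(N/2)^{3/2}}=\mc O\big(N^{-1/2}\big)\le \mc O\big(\sqrt{n/N}\big)\,.\]

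The only mildly delicate step is the numerator, because $E^{3/2}$ is not subexponential and direct Bernstein does not apply. The truncation at level $Cn$ handles this: the union bound over the $N=2^n$ coordinates costs only $2^{-\Omega(n)}$, and the loss of $n^3$ in the Hoeffding variance proxy is negligible against $N$. This truncation event is also the dominant source of failure probability, which is why the bound holds with probability $1-2^{-\Omega(n)}$ rather than $1-\exp(-\Omega(N))$.
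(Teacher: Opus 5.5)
Your proof is correct, but it takes a different and sharper route than the one the paper has in mind. The paper writes out no proof; it only points to ``similar, routine arguments'' to the bias lemma, i.e.\ a one-coordinate marginal law plus a union bound. The evident intended argument combines two facts:
\begin{itemize}
\item The inequality $\sum_x\mu(x)^{3/2}\le\|\mu\|_\infty^{1/2}\sum_x\mu(x)=\|\mu\|_\infty^{1/2}$, which the paper itself uses in the smooth-replacement step.
\item The bound $\|\mu\|_\infty\le Cn/N$. This follows from the marginal law $\mu(x)\sim\mathrm{Beta}(1,N-1)$, its tail $\Pr[\mu(x)\ge t]\le e^{-(N-1)t}$, and a union bound over the $N$ coordinates.
\end{itemize}
That is where the $\sqrt n$ in the statement comes from: it is the $\log N$ paid for controlling the largest atom.

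You instead concentrate the whole sum. This recovers the true order $\Theta(N^{-1/2})$, at the cost of the truncation-plus-Hoeffding step for the heavy-tailed $E_x^{3/2}$. Notice that your own good events $\max_xE_x\le Cn$ and $S\ge N/2$ already give $\sum_x\mu(x)^{3/2}\le\max_x\mu(x)^{1/2}\le\sqrt{2Cn/N}$ with no Hoeffding at all. So the concentration step only buys the extra factor $\sqrt n$, which nothing downstream needs: the paper only uses $\|\mu\|_{3/2}^{3/2}\le N^{-1/2+o(1)}$ and $\|\mu\|_\infty^{1/2}$.

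Finally, your closing remark describes your proof's bottleneck rather than the truth. If you truncate at $N^{1/3}$ instead of $Cn$ and use Bernstein (the truncated variables have $\mc O(1)$ variance), the failure probability improves to $\exp(-\Omega(N^{1/3}))$.
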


\section{Reducing to an uncertainty principle}
\label{sec:4}

We begin the analysis of our one-shot test, \Cref{prot:adaptive}. From here on, the paper will focus on proving that \Cref{prot:adaptive} satisfies \Cref{thm:informal-1}, restated here for convenience.
\theoremstyle{plain}
\newtheorem*{repeatedtheorem}{Theorem \ref{thm:informal-1}}

\begin{repeatedtheorem}[Constant robustness with single-qubit measurements]
There exists an efficient protocol using single-qubit measurements that, for all but a $2^{-\Omega(n)}$ fraction of pure $n$-qubit target states $\ket{\psi}$, satisfies the following. Given oracle access to a classical description of the target state $\ket{\psi}$ (via the model in \Cref{def:oracle}), and a single copy of an $n$-qubit mixed state $\rho$,
\begin{itemize}
    \item (Completeness) Outputs $\mathsf{accept}$ with probability at least $\bra{\psi}\rho\ket{\psi}$.
    \item (Soundness) Outputs $\mathsf{reject}$ with probability at least
    $\frac{1-\braket{\psi|\rho|\psi}}{C} - o(1)$, for some constant $C>1$ (\textit{i.e.}\ approximately the ``infidelity'' up to a constant factor).
\end{itemize}
Moreover, the single-qubit measurements are all non-adaptive Pauli measurements, except for one final (non-Pauli) adaptive measurement.
\end{repeatedtheorem}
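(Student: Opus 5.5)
The plan is to reduce \Cref{thm:informal-1} to the spectral statement \Cref{thm:full-uncertainty} (equivalently \Cref{prop:ortho-pure-UP}), treating that uncertainty principle as the main technical input.

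\textbf{Step 1: write acceptance as a quadratic form.} We write the acceptance probability of \Cref{prot:adaptive} as $\Pr[\mathsf{accept}]=\tr[A\rho]$ with
\[A=\tfrac12\big(M_f+\wh{M_{\wh f}}\big),\qquad \wh{M_{\wh f}}=H^{\otimes n}M_{\wh f}H^{\otimes n},\]
where $\ket{\psi}=\sum_x f(x)\ket{x}$. Concretely,
\[M_f=\frac1n\sum_j\sum_{b}\Pi_b^{(j)}\otimes\ketbra{\psi_b^{(j,\mathsf Z)}}{\psi_b^{(j,\mathsf Z)}},\]
where $\Pi_b^{(j)}$ projects the qubits other than $j$ onto outcome $b$; the $\mathsf X$ branch is the Hadamard conjugate of the same construction.

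Each summand is a projector, and the family $\{\Pi_b^{(j)}\}_b$ sums to the identity, so $0\preceq M_f,\wh{M_{\wh f}}\preceq \mathbf 1$. Moreover $\ket{\psi}$ factors as $\Pi_b^{(j)}\ket{\psi}\propto\ket{b}\otimes\ket{\psi_b^{(j)}}$, hence $M_f\ket{\psi}=\ket{\psi}$. The same holds for the Hadamard branch, since $H^{\otimes n}\ket\psi$ has amplitudes $\wh f$. Therefore $A\ket\psi=\ket\psi$.

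\textbf{Step 2: completeness.} Since $A\preceq\mathbf 1$ has $\ket\psi$ as a $+1$ eigenvector, it commutes with $\ketbra\psi\psi$. Decompose $A=\ketbra\psi\psi+A'$ with $A'\succeq0$ supported on $\ket\psi^\perp$. Then
\[\tr[A\rho]=\braket{\psi|\rho|\psi}+\tr[A'\rho]\ge\braket{\psi|\rho|\psi}.\]

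\textbf{Step 3: soundness.} Invoke \Cref{thm:full-uncertainty}: with probability $1-2^{-\Omega(n)}$ over Haar-random $f$, every unit $g\perp f$ satisfies $\langle g,(M_f+\wh{M_{\wh f}})g\rangle\le 2-c$. We check that the quantities match: $\Avg_f[g]=\langle g,M_fg\rangle$ by direct expansion of $\|\Pi_b^{(j)}\ket{g}\|^2\,|\braket{\psi_b|g_b}|^2$, which is exactly the edgewise average. It follows that $\|A'\|_{\op}\le 1-c/2$ on $\ket\psi^\perp$.

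For a general $\rho$, write $\rho=\sum_k p_k\ketbra{\phi_k}{\phi_k}$ and split each $\ket{\phi_k}=\alpha_k\ket\psi+\beta_k\ket{\phi_k^\perp}$. Cross terms vanish because $A$ is block diagonal with respect to $\ket\psi\oplus\ket\psi^\perp$. This gives
\[\tr[A\rho]\le\braket{\psi|\rho|\psi}+(1-c/2)\big(1-\braket{\psi|\rho|\psi}\big)=1-\tfrac c2\big(1-\braket{\psi|\rho|\psi}\big).\]
So rejection occurs with probability at least $(1-F)/C$ with $C=2/c$, and the $o(1)$ slack is not even needed. Efficiency follows because $\ket{\psi_b^{(j,\mathsf B)}}$ is determined by the two amplitudes $\mathsf O_\psi((b,0),\mathsf B)$ and $\mathsf O_\psi((b,1),\mathsf B)$, inserting $j$ appropriately. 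The measurements are single-qubit, non-adaptive in $\mathsf B$, with one adaptive final measurement.

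\textbf{Main obstacle.} Steps 1–3 are routine bookkeeping; essentially all difficulty lies in proving \Cref{thm:full-uncertainty}. For that I would follow the outlined route.

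First, substitute $g=fq$ to rewrite $\Avg_f[g]=1-\frac1n\Inf_{\mu_f}[q]$ (Appendix-type identity), and stratify $f$ by $\mu_f$. This uses that the phases of $f$ are independent Steinhaus variables given $\mu$, with $\mu\sim\mathrm{Dirichlet}(\vec1)$ (\Cref{lem:muhat-bound}).

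Next, run a win-win argument. If $\Inf_\mu[q]>\kappa n$, we are done. Otherwise split the cube as $A\sqcup B$.

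On $A$, prove the fixed-$q$ concentration \eqref{eq:fixed-q-conc-intro} via $\tau$-smoothing of the denominators. Then bound the covering number of low-influence $q_A$ through an extension $\Ext_Aq_A$ whose total influence is controlled by the interior influence, and finish with a union bound.

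On $B$, bound $\|\Pi_B\wh{M_{\wh f}}\Pi_B\|_\op$ and the cross block near $1/2$ and $1/4$ via trace moments, decoupling $\wh f$ from $\mu$, and Tropp's random paving. Combining these through the $2\times2$ matrix bound gives a value of at most $3/4+$small.

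The hardest piece I expect is the $A$-part covering argument, namely showing that interior influence controls the influence of an extension, for typical $\mu$.
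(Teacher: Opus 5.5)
Your proposal is correct and follows the paper's own route. You write the acceptance probability as $\tr[A\rho]$ with $A=\frac12(M_f+\wh{M_{\wh f}})$, obtain completeness from $\ket\psi$ being a $+1$ eigenvector of $0\preceq A\preceq \1$, and obtain soundness from \Cref{thm:full-uncertainty} through the block-diagonal form of $A$ on $\ket\psi\oplus\ket\psi^\perp$, which is equivalent to the paper's operator bound $A_f\preceq \ketbra{f}{f}+(1-c')(I-\ketbra{f}{f})$. Your outline for the uncertainty principle itself (win-win on $\Inf_\mu$, the $A$--$B$ split, smoothing, the extension operator, random paving) is also the paper's, and your remark that the $o(1)$ slack is unnecessary is consistent with \Cref{thm:full-uncertainty} as stated.
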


The first observation is that, for a target state $\ket{\psi}$ and a lab state $\rho$, we can rewrite the acceptance probability in \Cref{prot:adaptive} as $\textnormal{Tr}\left[A \rho\right]$ where $A$ is the following PSD operator:
$$ A = \frac{1}{2n} \sum_i \sum_{z \in \{0,1\}^{n-1}} \ket{z}\!\!\bra{z}_{[n]\setminus i}\otimes  \ket{\psi^{(i)}_z}\!\!\bra{\psi^{(i)}_z}_i + \frac{1}{2n}  \sum_i \sum_{z \in \{0,1\}^{n-1}}  (H^{\otimes n-1}\ket{z}\!\!\bra{z} H^{\otimes n-1})_{[n]\setminus i}\otimes\ket{\wh {\psi}^{(i)}_z}\!\!\bra{\wh {\psi}^{(i)}_z}_i \,,$$
where $\ket{\psi^{(i)}_z}$ is the post-measurement state of the $i$-th qubit upon measuring all qubits of $\ket{\psi}$, except the $i$-th, in the $Z$ basis and obtaining outcome $z$; and $\ket{\wh {\psi}^{(i)}_z}$ is defined analogously, except the measurements are in the $X$ basis. 

We first argue \emph{completeness}. Consider first a pure lab state $\ket{\phi}$, and write it in an eigenbasis of $A$. Notice then that $\textnormal{Tr}\left[A \ket{\phi}\!\!\bra{\phi}\right]$ is a convex combination of the contributions from each eigenvector of $A$ (since the eigenvectors are orthogonal to each other and remain orthogonal under the action of $A$). Since $\ket{\psi}$ itself is an eigenvector of $A$ with eigenvalue $+1$, we have $\textnormal{Tr}\left[A \ket{\phi}\!\!\bra{\phi}\right] \geq |\braket{\psi| \phi}|^2$. This extends to mixed lab states $\rho$, to give $\textnormal{Tr}[A \rho] \geq \bra{\psi} \rho \ket{\psi}$.  

Thus, from here on, we will focus on \emph{soundness}. For convenience, let us switch to a notation that will be more natural going forward: denote the target state by $\ket{f} = \sum_x f(x) \ket{x}$. Notice that it suffices to argue soundness for pure lab states $\ket{g} = \sum_x g(x) \ket{x}$, as soundness for mixed lab states $\rho$ will follow by convexity. We can further reduce soundness to a statement about lab states $\ket{g}$ that are orthogonal to $\ket{f}$:
 \begin{prop}
 \label{prop:2}
With all but $2^{-\Omega(n)}$ probability over a Haar-random target state $\ket{f}$, for all pure states $\ket{g}$ orthogonal to $\ket{f}$, we have
$$\textnormal{Tr}[A_f \ket{g}\!\!\bra{g}] \leq 1- c +o(1)\,,$$
for some constant $c>0$ where $A_f$ is the acceptance operator corresponding to the target state $\ket{f}$.
\end{prop}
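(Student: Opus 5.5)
Since $\mathrm{Tr}[A_f\ket{g}\!\!\bra{g}]=\tfrac12\big(\Avg_f[g]+\Avg_{\wh f}[\wh g]\big)$, \Cref{prop:2} is equivalent to \Cref{thm:full-uncertainty}. The plan is therefore the win-win argument of the overview, run after conditioning on the moduli $\mu=\mu_f$. Write $g=fq$ (possible almost surely). Orthogonality and normalization become $\E_{\mu}q=0$ and $\|q\|_{L^2(\mu)}=1$, and the standard-basis side becomes $\Avg_f[g]=1-\tfrac1n\Inf_\mu[q]$; I would verify the latter by expanding the edgewise square. First, discard atypical $\mu$: with probability $1-2^{-\Omega(n)}$, the bias bound of \Cref{lem:muhat-bound} holds, and the bad set $B=\{x:\mu(x)\notin[1/(TN),T/N]\}$ has $|B|/N\le\eta(T)$ with $\eta(T)\to0$, by Porter--Thomas tails and concentration. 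Since the phases of $f$ are independent Steinhaus variables given $\mu$, all remaining randomness is $f\sim\unif(F_\mu)$.

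Fix $\kappa$ small. If $\Inf_\mu[q]>\kappa n$, then $\Avg_f[g]<1-\kappa$ and we are done, because $\Avg_{\wh f}\le1$. Otherwise, split $g=g_A+g_B$.

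For the $A$-part, I would first prove the fixed-$q$ concentration \eqref{eq:fixed-q-conc-intro} for $f\sim F_\mu$, namely $\Avg_{\wh f}[\wh{fq_A}]\le(\tfrac12+t)\|q_A\|^2_{L^2(\mu)}$ except with probability $\exp(-ct^6N^{1/3})$. The method is $\tau$-smoothing of the denominators, polynomial-chaos and Lipschitz concentration for the smoothed average, and a separately concentrating differentiable upper bound on the smoothing error. Next comes the combinatorial step: for typical $\mu$, $\Inf_\mu^{\Int(A)}[q_A]\le\kappa n$ implies that an extension $\Ext_Aq_A$ has unweighted influence $O_T(\kappa n)$. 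This would use canonical paths inside $A$, relying on the fact that each point of $B$ has many $A$-neighbours with high probability. Low influence then puts $\Ext_A q_A$ within small $L^2$ distance of degree $\le O(\sqrt\kappa\, n)$-ish polynomials, giving a net of size $\exp(2^{O(\sqrt\kappa\, n)})$. This is beaten by $\exp(-ct^6N^{1/3})$ for $\kappa$ small, and the Lipschitz property of $\Avg_{\wh f}[\wh{f\,\cdot}]$ finishes the union bound.

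For the $B$-part and the cross terms, I would bound $\|\Pi_B\wh{M_{\wh f}}\Pi_B\|_\op\le\tfrac12+o(1)$ and $\|\Pi_A\wh{M_{\wh f}}\Pi_B\|_\op\le\tfrac14+o(1)$ by the trace-moment method at order $p=\Theta(n)$. Low-order marginals of $(\wh f(s))_s$ are nearly independent of $\mu$ and hence of $B$; this can be justified by Lindeberg replacement of $\wh f$ with Gaussians after smoothing. With $\wh{M}$ frozen, Tropp's random paving bound applies to the nearly Bernoulli projection $\Pi_B$. Combining via the $2\times2$ matrix $\left(\begin{smallmatrix}1/2&1/4\\1/4&1/2\end{smallmatrix}\right)$ gives $\Avg_{\wh f}[\wh g]\le\tfrac34+o(1)$ in the low-influence case. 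Hence $\mathrm{Tr}[A_f\ket g\!\!\bra g]\le\max\{1-\kappa/2,\tfrac78\}+o(1)$.

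I expect the main obstacles to be two steps. The first is the extension lemma controlling covering numbers of $q_A$ from interior influence alone, which has genuine counterexamples for non-random $\mu$. The second is the decoupling needed to apply random paving to the non-independent pair $(\wh{M_{\wh f}},B)$. I would attack the decoupling first by computing $\E\tr(\Pi_B \wh M\Pi_B)^p$ with Gaussian $f$, where moduli and phases are cleanly separable, and then transferring the result to the Haar case.
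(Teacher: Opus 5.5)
Your outline follows the paper's architecture: the win-win on $\Inf_\mu[q]$, the $A$/$B$ split, the extension operator and net for the $A$-part, Gaussianization plus Tropp paving for the $B$-blocks, and the $2\times2$ combination. The genuine gap is the cross term. You assert $\|\Pi_A\wh{M_{\wh f}}\Pi_B\|_\op\le\frac14+o(1)$, but the tool you name cannot give it directly. Tropp's theorem controls diagonal blocks $RXR$ of a single sparse random coordinate projection. In the off-diagonal block $\Pi_A$ has density close to $1$, so paving says nothing. The only free estimate is $\|\Pi_A(\wh{M_{\wh f}}-\frac12 I)\Pi_B\|_\op\le\frac12$, which turns your matrix into $\bigl(\begin{smallmatrix}1/2&1/2\\1/2&1/2\end{smallmatrix}\bigr)$ with top eigenvalue $1$, i.e.\ no gain at all. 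The missing step is to square. With $R=\wh M^{(\eta)}_{\wh f}-m_\eta I$ one has $\|\Pi_A\wh M^{(\eta)}_{\wh f}\Pi_B\|_\op^2=\|\Pi_BR\Pi_AR\Pi_B\|_\op\le\|\Pi_BR^2\Pi_B\|_\op$, and one then runs Gaussianization and paving on $\Pi_B(R^2-b_{n,\eta}I)\Pi_B$. The key point is that $R^2$ is \emph{not} centered. Its Gaussian mean $b_{n,\eta}I$ survives paving, so the cross-term constant is $\sqrt b$, where $b=\E\bigl[(\frac{X}{X+Y}-\frac12)(\frac{X}{X+Z}-\frac12)\bigr]=\frac{\pi^2}{3}-\frac{13}{4}<\frac1{16}$ for i.i.d.\ $\mathrm{Exp}(1)$ variables $X,Y,Z$ (from the $i\neq j$ terms of the square). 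Any $b<\frac14$ would give a constant gap, and Cauchy--Schwarz already gives $b\le\frac1{12}$. Still, some computation of this surviving mean is unavoidable, and your plan does not contain it.

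Two further repairs are needed.

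First, the $A$-part inequality $\Avg_{\wh f}[\wh{fq_A}]\le(\frac12+t)\|q_A\|^2_{L^2(\mu)}$ is not what the fixed-$q$ theorem provides. That theorem needs $q$ to be $\mu$-centered; already in the expectation bound, $\E_\mu q=0$ is what kills the diagonal of the regression matrix. Its concentration also needs flatness, $\max_x\mu(x)|q(x)|^2\le\|\mu\|_\infty^{1/3}$. In general $\E_\mu q_A=-\E_\mu q_B\neq0$, and the component of $g_A$ along $f$ lies in the eigenvalue-$1$ eigenspace of $\wh{M_{\wh f}}$. The paper shifts to $g^\perp=g-\alpha f$, with $\alpha$ chosen so that $g^\perp\1_A$ is $\mu$-centered. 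It then uses $\Avg_{\wh f}[\wh g]=\langle g^\perp,\wh{M_{\wh f}}g^\perp\rangle-|\alpha|^2$ together with $\|g^\perp\|_2^2=1+|\alpha|^2$, and takes a net of centered, flat functions. Alternatively, you can absorb the additive error $|\E_\mu q_A|^2\le\mu(B)$, which is small for large $T$.

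Second, your suggested first attack on decoupling, computing with Gaussian $f$, does not separate $\Pi_B$ from $\wh{M_{\wh f}}$. The set $B$ is a function of $|f|$, and $\wh f$ depends on $|f|$ as well. What works is:
\begin{itemize}
\item fix $\mu$, and hence $B$;
\item Lindeberg-replace the Steinhaus phases by physical Gaussians $\xi_x\sim\sqrt{\mu(x)}\,\mathcal{CN}(0,1)$, whose moduli are no longer tied to $B$;
\item interpolate covariances to i.i.d.\ Fourier Gaussians.
\end{itemize}
The residual coupling then appears only through covariance errors controlled by $\Bias(\mu)$, $\|\mu\|_{3/2}$ and $|B|/N+\max_y\mu(B+y)$. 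After that, exchangeability of $B$ lets Tropp's theorem apply.
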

\begin{proof}[Soundness of \Cref{thm:informal-1} from \Cref{prop:2}.] We know $\braket{f|A_f|f}=1$, and, by \Cref{prop:2}, with probability $1 -2^{-\Omega(n)}$ over the target state $\ket{f}$, the following holds for all $\ket{g}$ orthogonal to $\ket{f}$: $\braket{f|A_f|f}\leq 1 - c+o(1)$, for some $c>0$.
    Then, $A_f\preceq \ket{f}\!\!\bra{f}+(1-c+o(1))P$ for $P=I-\ket{f}\!\!\bra{f}$.
    That is,
    \[A_f\preceq (1-c +o(1)) I + \left(c-o(1) \right)\ket{f}\!\!\bra{f}\,.\]
    Thus, any lab state $\rho$ satisfies
    \[\tr[A_f \rho] \leq 1-c + o(1) + (c - o(1)) \braket{f|\rho|f} \,.\]
    Rearranging gives 
    $$\Pr[\mathsf{reject}] = 1- \tr[A_f \rho]  \geq c  (1-\braket{f|\rho|f}) - o(1)\,.$$
This yields exactly the desired soundness of \Cref{thm:informal-1} by setting $C = \frac{1}{c}$.
\end{proof}

So, our goal for the rest of the paper will be to prove \Cref{prop:2}. As discussed in the technical overview, we will view \Cref{prop:2} as an \emph{uncertainty principle}. Let $\ket{\wh f} = H^{\otimes n} \ket{f}$ and $\ket{\wh g} = H^{\otimes n} \ket{g}$. Now, $A_f$ is the sum of a ``$Z$-basis test'' term and an ``$X$-basis test'' term. It is straightforward to verify that we can rewrite $\textnormal{Tr}[A_f \ket{g}\!\!\bra{g}]$ as
$$ \frac12 \Avg_{f}[g] + \frac12 \Avg_{\wh f}[\wh{g}] \,,$$
where
\begin{align*} \Avg_f[g]&:=\frac{1}{2n}\sum_i\sum_x\frac{\Big|\overline{f(x)}g(x)+\overline{f(x+i)}g(x+i)\Big|^2}{|f(x)|^2+|f(x+i)|^2}\,,\\
\Avg_{\wh f}[\wh g]&:=\frac{1}{2n}\sum_i\sum_s\frac{\left|\overline{\wh f(s)}\wh g(s)+\overline{\wh f(s+i)}\wh g(s+i)\right|^2}{|\wh f(s)|^2+|\wh f(s+i)|^2}\,.
\end{align*}
Here, for an index $i \in [n]$, we used the shorthand notation $x +i$ to denote the string where we flip the $i$-th bit of $x$. And the additional factor of $\frac12$ in $\Avg_f[g]$ and $\Avg_{\wh f}[\wh g]$ comes from the fact that the sum over $x$ double counts each inner product. Thus, we can equivalently rewrite our goal as the following theorem (restated from the technical overview):
\theoremstyle{plain}
\newtheorem*{repeatedtheorem2}{\Cref{thm:full-uncertainty}}
\begin{repeatedtheorem2}[The uncertainty principle]
    With probability at least $1-2^{-\Omega(n)}$ over a Haar random $f:\{0,1\}^n\to \C$, $\|f\|_2=1$,
    for all $g:\{0,1\}^n\to \C$, $\|g\|_2=1$ with $\langle f,g\rangle=0$,
    \[\Avg_f[g]+\Avg_{\wh f}[\wh g]\leq 2-c,\]
    where $c>0$ is a universal constant independent of $n$.
\end{repeatedtheorem2}

As mentioned in the technical overview, the latter uncertainty principle has an elegant reformulation as an uncertainty principle for (generalized) total influences.

\begin{cor}[Uncertainty principle for generalized influences]
\label{cor:influences-formal}
For a probability measure $\mu$ on $\{0,1\}^n$ and a function $h:\{0,1\}^n\to \C$, define the ``Harmonic mean weighted influence''
\[\Inf_\mu[h]=\sum_{i,x}\frac{2\mu(x)\mu(x+i)}{\mu(x)+\mu(x+i)}\left(\frac{h(x) - h(x+i)}{2}\right)^2\]
Then, with high probability over a Haar random $f:\{0,1\}^n\to \C$, $\|f\|_2=1$,
    for all $g:\{0,1\}^n\to \C$, $\|g\|_2=1$ with $\langle f,g\rangle=0$,
\[\Inf_{\mu_f}\left[\frac{g}{f}\right]+\Inf_{\mu_{\wh f}}\left[\frac{\wh g}{\wh f}\right] \geq c n\,,\]
where $c>0$ is the same constant as in \Cref{thm:full-uncertainty}.
\end{cor}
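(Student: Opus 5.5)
The plan is to derive the corollary directly from \Cref{thm:full-uncertainty} via an exact pointwise identity. The identity rewrites each edgewise average as one minus a normalized harmonic-mean-weighted influence. No new probabilistic input is needed: the high-probability event will be the one from \Cref{thm:full-uncertainty}, intersected with a probability-one event. Throughout I read $\left(\frac{h(x)-h(x+i)}{2}\right)^2$ in the definition of $\Inf_\mu$ as $\left|\frac{h(x)-h(x+i)}{2}\right|^2$, as in \Cref{thm:influences-informal}, since $h$ is complex-valued.

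\textbf{Step 1 (well-definedness).} For Haar-random $f$, with probability $1$ every $f(x)\neq 0$ and every $\wh f(s)\neq 0$. Indeed, $\wh f$ is also Haar-random, since $H^{\otimes n}$ is unitary, and each coordinate of a Haar-random vector vanishes with probability zero. So $q:=g/f$ and $\wh q:=\wh g/\wh f$ are defined, and I intersect with this event.

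\textbf{Step 2 (the identity $\Avg_f[g]=1-\tfrac1n\Inf_{\mu_f}[g/f]$).} Fix an edge $(x,x+i)$ and write $\alpha=\mu_f(x)$, $\beta=\mu_f(x+i)$, $q_1=q(x)$, $q_2=q(x+i)$. Then the summand's numerator is $|\alpha q_1+\beta q_2|^2$. The elementary expansion
\[
\frac{|\alpha q_1+\beta q_2|^2}{\alpha+\beta}=\alpha|q_1|^2+\beta|q_2|^2-\frac{\alpha\beta}{\alpha+\beta}|q_1-q_2|^2
\]
follows by expanding the square and collecting the cross term $2\alpha\beta\,\mathrm{Re}(\bar q_1 q_2)$.

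Summing over $i$ and $x$, each vertex $x$ appears once as a left endpoint for each $i$ and once as a right endpoint for each $i$. Hence the diagonal parts contribute $2n\sum_x\mu_f(x)|q(x)|^2=2n\|g\|_2^2=2n$. After the prefactor $\frac1{2n}$, this gives
\[
\Avg_f[g]=1-\frac{1}{2n}\sum_{i,x}\frac{\mu_f(x)\mu_f(x+i)}{\mu_f(x)+\mu_f(x+i)}|q(x)-q(x+i)|^2,
\]
and the right-hand side is exactly $1-\frac1n\Inf_{\mu_f}[q]$.

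\textbf{Step 3 (Fourier side and conclusion).} The same computation, with $\wh f,\wh g$ in place of $f,g$, gives $\Avg_{\wh f}[\wh g]=1-\frac1n\Inf_{\mu_{\wh f}}[\wh g/\wh f]$. Here the only input used is $\|\wh g\|_2=\|g\|_2=1$, which holds by unitarity of the normalized transform.

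Adding the two identities,
\[
\Avg_f[g]+\Avg_{\wh f}[\wh g]=2-\frac1n\Big(\Inf_{\mu_f}[g/f]+\Inf_{\mu_{\wh f}}[\wh g/\wh f]\Big).
\]
So on the event of \Cref{thm:full-uncertainty} (intersected with Step 1), the bound $\Avg_f[g]+\Avg_{\wh f}[\wh g]\le 2-c$ for all admissible $g$ is equivalent to the claimed inequality $\geq cn$, with the same constant $c$.

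\textbf{Main obstacle.} The only real care needed is the bookkeeping in Step 2. One must count each unordered edge twice, consistently with the factor $\frac1{2n}$, and check that the harmonic-mean weight $\frac{2\alpha\beta}{\alpha+\beta}\cdot\frac14$ matches the coefficient $\frac{\alpha\beta}{2(\alpha+\beta)}$ produced by the expansion; it does. All the genuine difficulty lives in \Cref{thm:full-uncertainty} itself, which we assume.
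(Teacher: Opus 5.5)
Your proposal is correct and follows the paper's proof essentially verbatim. The paper also deduces the corollary from \Cref{thm:full-uncertainty} via the identity $\Avg_f[g]=1-\frac1n\Inf_{\mu_f}[g/f]$, applied as well to $\wh f,\wh g$ on the probability-one event that $f$ and $\wh f$ vanish nowhere, and it proves that identity in \Cref{app:identity} with the same expansion $\frac{|au+bv|^2}{a+b}=a|u|^2+b|v|^2-\frac{ab}{a+b}|u-v|^2$ and the same edge bookkeeping; your reading of the square as a modulus squared is the intended one.
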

\begin{proof}
This follows immediately from the identity 
$$\Avg_f[g] = 1-\frac1n\Inf_{\mu_f}\left[\frac{g}{f}\right] \,,$$
which holds for all $f$ and $g$ such that $\|f\|_2=1$, $\|g\|_2=1$, and hence also for $\wh f$ and $\wh g$
(as long as $f$ and $\wh f$ are non-zero everywhere, which happens with probability $1$). We leave the proof of this identity to \Cref{app:identity}.
\end{proof}

\section{Concentration of $\Avg_{\wh f}[\wh{fq}]$ for fixed $q$}
\label{sec:concentration}

The first key technical step in our argument (which is outlined at a high level in the technical overview) is to prove the following concentration bound for $\Avg_{\wh f}[\wh{fq}]$. This section is devoted to the proof of this bound.

\begin{theorem}[{Concentration of $\Avg_{\wh f}[\wh{fq}]$ for fixed $q$}]
\label{thm:avg-hat-fixed-q}
There exists a universal constant $c>0$ such that for all $t>0$, with probability $1-\exp(-\Omega(n))$ over $\mu\sim\bs\mu$, for any $\mu$-compatible $q$ satisfying
\[
    \max_x\mu(x)|q(x)|^2
    \le
    \|\mu\|_\infty^{1/3},
\]
we have the bound
\[
\Pr_{f\sim F_\mu}\left[
    \Avg_{\wh f}[\wh{fq}]
    \ge
    \frac12+t+o(1)
\right]
\le \exp\left(
    -ct^6N^{1/3-o(1)}
\right).
\]
\end{theorem}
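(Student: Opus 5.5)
Write $f(x)=\sqrt{\mu(x)}\,\theta_x$ with $\theta_x$ i.i.d.\ Steinhaus; this is exactly $f\sim F_\mu$. Then $g=fq$ has $g(x)=\sqrt{\mu(x)}\,q(x)\theta_x$. The random vectors $\wh f(s)=N^{-1/2}\sum_x\sqrt{\mu(x)}\theta_x\chi_s(x)$ and $\wh g(s)=N^{-1/2}\sum_x\sqrt{\mu(x)}q(x)\theta_x\chi_s(x)$ are linear in the Steinhaus vector $\theta$.

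\textbf{Step 1: the expectation heuristic.} For a single edge $(s,s+i)$, the pair $(\wh f(s),\wh f(s+i),\wh g(s),\wh g(s+i))$ is approximately a complex Gaussian vector, by the bound $\max_x\mu(x)|q(x)|^2\le\|\mu\|_\infty^{1/3}$ and the fact that $\|\mu\|_\infty=\tilde{\mc O}(1/N)$ w.h.p.\ under Porter--Thomas. Its covariances are $\E|\wh f(s)|^2=1/N$ and $\E\,\overline{\wh f(s)}\wh g(s)=N^{-1}\sum_x\mu(x)q(x)=0$ by compatibility. The cross covariances $\E\,\overline{\wh f(s)}\wh f(s+i)$ are $N^{-1}\sum_x\mu(x)(-1)^{x_i}$, which by \Cref{lem:muhat-bound} are $\mc O(\sqrt{n/N})\cdot N^{-1}$, i.e.\ negligible. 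So $\wh g(s)$ is essentially independent of $\wh f(s)$ up to correlations with $\wh f$ at other points. The summand
\[
\frac{|\bar a b+\bar c d|^2}{|a|^2+|c|^2}
\]
with $(a,c)$ independent of $(b,d)$ has conditional expectation $\E|b|^2\cdot\frac{|a|^2+|c|^2}{|a|^2+|c|^2}$ when $b,d$ are i.i.d.\ isotropic, which equals $1/N$. Summing over $s$ and $i$ with the prefactor $\frac1{2n}$ gives $\frac12$, plus $o(1)$ error from the Gaussian approximation and the bias terms. I would make this rigorous by a Lindeberg replacement of the $\theta_x$ by complex Gaussians, using third moments controlled by $\sum_x\mu(x)^{3/2}$ (the Fact).

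\textbf{Step 2: smoothing and concentration.} Replace each denominator by $|\wh f(s)|^2+|\wh f(s+i)|^2+\tau/N$ to get $\Avg^{(\tau)}$. This is a smooth function of $\theta$. I would show its gradient is Lipschitz-controlled, with $\|\nabla\|_2^2\lesssim\tau^{-2}\max_x\mu(x)|q(x)|^2\cdot N\cdot\mathrm{poly}$. The approach is to write $\Avg^{(\tau)}$ as $\frac1{2n}\sum_i\langle \wh g, D_i \wh g\rangle$-type forms, using that $\wh{\cdot}$ is unitary so that sensitivity to a single $\theta_x$ is bounded by $\sqrt{\mu(x)}|q(x)|/\sqrt\tau$ times operator norms. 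Concentration on the torus (or Gaussian concentration after the Lindeberg step) then gives tails of the form
\[
\exp\!\bigl(-c\,t^2\tau^2/\max_x\mu(x)|q(x)|^2N\bigr)\approx\exp\!\bigl(-ct^2\tau^2N^{2/3}\bigr).
\]

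\textbf{Step 3: controlling the smoothing error.} The error is
\[
E=\Avg-\Avg^{(\tau)}\le\frac1{2n}\sum_{i,s}\frac{\tau/N\cdot(|\wh g(s)|^2+|\wh g(s+i)|^2)}{|\wh f(s)|^2+|\wh f(s+i)|^2+\tau/N},
\]
using Cauchy--Schwarz on the numerator. This bound is again differentiable. Its mean is $\mc O(\tau\log(1/\tau))$, since $N(|\wh f(s)|^2+|\wh f(s+i)|^2)$ is approximately Gamma$(2)$ and independent of $\wh g$. It concentrates by the same gradient argument. Choosing $\tau\approx t$ makes the mean at most $t/2$, and balancing the two tails gives an exponent of order $t^6N^{1/3-o(1)}$. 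The exact powers come from the cruder gradient bounds, where the $N^{-1/3}$ loss is inherited from the hypothesis $\|\mu\|_\infty^{1/3}$.

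\textbf{Main obstacle.} I expect the main obstacle to be the gradient bound in Step 2. Each $\theta_x$ affects every $\wh f(s)$, so a naive bound sums $N$ terms. The fix I would try first is to exploit unitarity of $H^{\otimes n}$. I would bound $\nabla_\theta$ of the quadratic-over-quadratic form by $\sqrt{\mu}\,q$ times $\ell^2$ norms of vectors indexed by $s$, such as $\bigl(\overline{\wh f(s)}\wh g(s)/\text{denominator}\bigr)_s$. These norms are $\mc O(\tau^{-1/2})$ deterministically thanks to the $\tau$ shift, so the Lipschitz constant depends only on $\tau$ and $\max_x\mu(x)|q(x)|^2$. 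Separately, the statement must hold with probability $1-\exp(-\Omega(n))$ over $\mu\sim\bs\mu$. This requires $\|\mu\|_\infty\le \mc O(n/N)$, $\Bias(\mu)$ small, and the $3/2$-norm bound, all of which follow from \Cref{lem:muhat-bound} and the Fact.
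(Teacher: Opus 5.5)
Your outline matches the paper's: smooth the denominators by $\tau/N$, dominate the smoothing error by $\tfrac12\sum_s\delta\|w_s\|_2^2/(\|u_s\|_2^2+\delta)$, use Lindeberg for the mean, and a gradient argument for the fluctuations. But the concentration step fails as you describe it.

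\textbf{The gradient bound is not deterministic.} The phase derivative $\partial_{\theta_x}$ hits both $\wh f$ and $\wh g$. Only the $\wh g$-part carries the factor $\sqrt{\mu(x)}\,q(x)$, and that part is indeed deterministically $O(\max_x\mu(x)|q(x)|^2)$. The $\wh f$-part carries only $\sqrt{\mu(x)}$. After the Walsh--Parseval step, its square is bounded by $\|\mu\|_\infty\sum_s\|\nabla_u\Phi_s\|_2^2\lesssim\|\mu\|_\infty\delta^{-1}\sum_s\|w_s\|_2^4$, with $\delta=\tau/N$. The vector you name, $(\overline{\wh f(s)}\wh g(s)/\mathrm{denominator})_s$, has $\ell^2$ norm up to order $\sqrt{N/\tau}$, not $\tau^{-1/2}$. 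The only deterministic bound on $W:=\sum_s\|w_s\|_2^4$ is $W\le(\sum_s\|w_s\|_2^2)^2=4$, which gives a squared gradient of order $N\|\mu\|_\infty/\tau$, useless. This is not an artifact of the estimate. The gradient really is of order one, for fixed $\tau$, on atypical phase configurations, e.g.\ when $\wh g$ concentrates on a few frequencies where $\|u_s\|_2^2\approx\delta$. So no global Lipschitz or log-Sobolev bound can produce $N^{1/3}$. Your displayed tail also has the exponent inverted: under the hypothesis, $N\max_x\mu(x)|q(x)|^2\approx N^{2/3}$, so it actually reads $\exp(-ct^2\tau^2N^{-2/3})$.

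\textbf{The missing ingredient is localization.} $W=O(1/N)$ holds only with high probability. Since $W$ is a degree-4 Steinhaus chaos, hypercontractivity gives $\Pr[|W-\E W|>s]\le\exp(-c\sqrt{Ns})$ for $s\gtrsim1/N$. You then need a Herbst argument that tolerates a large gradient on a small exceptional event (the paper's localized Herbst bound), or an equivalent smooth truncation. The $N^{1/3}$ comes from balancing the good-event variance proxy $(\|\mu\|_\infty/\tau)^{1/3}$ against the exceptional probability $\exp(-c(\tau/\|\mu\|_\infty)^{1/3})$. The flatness hypothesis $\|\mu|q|^2\|_\infty\le\|\mu\|_\infty^{1/3}$ is tuned to match this. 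The same localization is needed for the dominator in your Step 3.

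\textbf{A second, fixable error is in Step 1.} The pair $(\wh f(s),\wh f(s+i))$ is not approximately independent of $(\wh g(s),\wh g(s+i))$. Indeed $\E\,\overline{\wh f(s+i)}\,\wh g(s)=N^{-1}\sum_x\mu(x)q(x)(-1)^{x_i}$, which is $\asymp1/N$, the same order as the variances, whenever $q$ correlates with $\chi_i$ under $\mu$. For $q\approx\chi_i$ one literally has $\wh g(s)\approx\wh f(s+i)$. The bound $\tfrac12+o(1)$ survives, but it needs an argument such as the paper's:
\begin{itemize}
\item Gaussian regression $w_s=Au_s+\gamma$ with $\gamma$ independent of $u_s$;
\item the bound $2\|A\|_{\op}^2/N$ on the $Au_s$ contribution;
\item averaging over $i$ using $\sum_i|\langle q,\chi_i\rangle_\mu|^2\le\|G\|_{\op}=1+o(1)$, where $G$ is the $\mu$-Gram matrix of the level-one characters, so only $O(1)$ directions are affected.
\end{itemize}
Also, Lindeberg cannot be run on the unsmoothed ratio, whose derivatives blow up near $u_s=0$. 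It must be applied to $A_\delta$, combined with $A_\delta\le A$ on the Gaussian side and a separate bound on $\E B_\delta$. So your Steps 1 and 3 have to be interleaved rather than done in sequence.
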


Fixing $f$ and $g$ for now, let us establish some notation.
Write
\[A^{(i)}=\tfrac12\tsum_{s}\left|\langle v_s,w_s\rangle\right|^2\]
for
\[v_s=\frac{u_s}{\|u_s\|_2}\quad\text{ with }\quad u_s=\big(\wh f(s),\wh f(s+i)\big) \qquad\text{and}\qquad w_s=\big(\wh g(s), \wh g(s+i)\big)\,.\]
Then $\Avg_{\wh f}[\wh g]=\frac1n\sum_iA^{(i)}$.

Each $A^{(i)}$s is a sum of terms each of which depend on disjoint pairs of Fourier coefficients of the underlying random vector $\big(f(x)\big)_x$.
If the underlying randomness were rotationally-invariant (namely, Gaussian), then unitarity of the Fourier transform would also imply $A^{(i)}$ is the sum of independent random variables, each functions of disjoint independent Fourier random variables.
Unfortunately, the Steinhaus ensemble $\big(f(x)\big)_x$ is not rotationally invariant.
That would not be a problem if $v_s$'s were not normalized, for then $A^{(i)}$ would at least be a low-degree polynomial chaos.
But normalized it is, and so typical approaches---even bounded differences or Herbst-type arguments---do not immediately apply.

To get around this, our strategy is to introduce for each $i\in[n]$ the following smoothed quantity, which has well-behaved derivatives.
For $\delta>0$, let
\[A_\delta^{(i)} := \frac12\sum_s\frac{|\langle u_s,w_s\rangle|^2}{\|u_s\|_2^2+\delta}.\]
Let $D^{(i)}_\delta = A^{(i)}-A^{(i)}_\delta$ be the difference.
Now $D^{(i)}_\delta$ also has bad derivatives, so it would seem we have done nothing to help the situation.
But in fact, $D_\delta^{(i)}$ is pointwise
dominated by something nice:
\begin{align*}
D^{(i)}_\delta = \frac12\sum_s|\langle u_s, w_s\rangle|^2\left(\frac1{\|u_s\|_2^2}-\frac1{\|u_s\|_2^2+\delta}\right) &= \frac12\sum_s |\langle u_s, w_s\rangle|^2 \frac{\delta}{\|u_s\|_2^2\cdot(\|u_s\|_2^2+\delta)} \\
&
\leq\frac12\sum_s\frac{\delta\|w_s\|_2^2}{\|u_s\|_2^2+\delta}\tag{Cauchy--Schwarz}\\
&=:B^{(i)}_\delta\,.
\end{align*}
The gradient of $B^{(i)}_\delta$ is sufficiently well-behaved for a Herbst-type argument succeed.

\Cref{thm:avg-hat-fixed-q} is proved in two parts.
First, we show the expectation is $1/2+o(1)$ by passing to a Gaussian model (each $f(x)$ distributed as a complex Gaussian with variance $\mu(x)$).
Then we show $\Avg_{\wh f}[\wh{fq}]$ concentrates about this expectation via a Herbst argument and log-Sobolev inequalities for Steinhaus ensembles.
The $\delta$-smoothing idea above will be crucial in both parts.

\subsection{Technical preparations}

We will need hypercontractivity for Steinhaus random variables:
\begin{fact}[$L^p$-$L^2$ Steinhaus Hypercontractivity \cite{WEISSLER1980218}]
\label{fact:steinhaus-hc}
	Let $\mathbb T^n = \{(z_1, \ldots, z_n): |z_i| = 1\}$. Let $f:\mathbb T^n\to \C$ be a polynomial in $z_i,\bar z_i$, $i=1,\ldots, n$ of total degree at most $d$.
	Then for all $p\geq 2$,
	\[\left(\E |f|^p\right)^{1/p}\leq(p-1)^{d/2}\left(\E |f|^2\right)^{1/2}.\]
\end{fact}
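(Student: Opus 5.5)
We would prove \Cref{fact:steinhaus-hc} by the usual semigroup route: establish a hypercontractive inequality for a Poisson-type semigroup on a single circle, tensorize it to $\mathbb T^n$, and convert it to a moment bound for low-degree polynomials using orthogonality of characters.

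\emph{Normalization.} Since $\bar z_i=z_i^{-1}$ on $\mathbb T$, every polynomial of total degree at most $d$ in the $z_i,\bar z_i$ can be written as a Laurent polynomial
\[
f=\sum_{k\in\mathbb Z^n} c_k z^k,\qquad |k|:=\sum_i|k_i|\le d .
\]
The characters $z^k$ are orthonormal in $L^2(\mathbb T^n)$, so $\E|f|^2=\sum_k|c_k|^2$.

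\emph{The semigroup.} For $0\le r\le 1$ let $P_r$ be the operator acting diagonally by $P_r z^k=r^{|k|}z^k$. On each coordinate this is the Poisson kernel, a positivity-preserving Markov operator, and on $\mathbb T^n$ it is the tensor product of these one-variable operators.

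\emph{Step 1: one-variable hypercontractivity.} We need
\[
\|P_r h\|_{L^p(\mathbb T)}\le \|h\|_{L^2(\mathbb T)}\qquad\text{whenever } r\le (p-1)^{-1/2},\ p\ge 2,
\]
for arbitrary complex $h$, not only holomorphic $h$. This is Weissler's theorem, and it is the main obstacle. The plan is to go through Gross's equivalence between hypercontractivity and logarithmic Sobolev inequalities. Writing $P_r=e^{-tL}$ with $r=e^{-t}$, where $L$ is the generator $|D|$ acting by $z^k\mapsto|k|z^k$, it suffices to show the log-Sobolev inequality
\[
\mathrm{Ent}(|h|^2)\le 2\langle h,Lh\rangle .
\]
Gross's theorem then gives $\|P_r\|_{2\to p}\le1$ once $e^{-2t}\le 1/(p-1)$.

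To prove this log-Sobolev inequality we would first treat nonnegative (real) $h$, by comparing $|D|$ with the two-point/Gaussian constants in the standard way. The complex case then follows by passing to $|h|$, using the Markov property $\langle |h|,L|h|\rangle\le\langle h,Lh\rangle$. If this direct route proves delicate, the fallback is to cite Weissler's sharp constant $\sqrt{(p-1)^{-1}}$ for the Poisson semigroup on the circle.

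\emph{Step 2: tensorization.} Apply Step 1 one coordinate at a time, using Minkowski's integral inequality with $p\ge2$ to exchange the mixed $L^p(L^2)$ norms. This gives $\|P_r h\|_{L^p(\mathbb T^n)}\le\|h\|_{L^2(\mathbb T^n)}$ with the same condition on $r$.

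\emph{Step 3: the degree trick.} Set $r=(p-1)^{-1/2}$ and define
\[
h:=\sum_k r^{-|k|}c_kz^k ,
\]
so that $f=P_rh$. By orthogonality and $|k|\le d$,
\[
\|h\|_2^2=\sum_k r^{-2|k|}|c_k|^2\le r^{-2d}\,\|f\|_2^2 .
\]
Therefore
\[
\|f\|_p=\|P_rh\|_p\le \|h\|_2\le (p-1)^{d/2}\,\|f\|_2 ,
\]
which is the claim.
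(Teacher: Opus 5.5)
Your proposal is correct and follows essentially the paper's route: the paper simply cites Weissler, and your Steps 2--3 are the standard derivation of the stated bound from Weissler's one-circle estimate, namely Minkowski tensorization for $p\ge 2$, and writing $f=P_rh$ with $\|h\|_2\le r^{-d}\|f\|_2$ (a monomial $z^a\bar z^b$ of total degree $\le d$ becomes $z^{a-b}$ with $\sum_i|a_i-b_i|\le d$). The one soft spot is your sketched self-contained proof of the circle log-Sobolev inequality, since comparing $|D|$ with the heat or Gaussian generators does not give the sharp constant (e.g.\ $|k|\le k^2$ makes the heat-semigroup inequality weaker, not stronger), so that step should rest on Weissler's theorem, as your fallback and the paper both do.
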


And let us also state the straightforward Wick-type pairing rule about Steinhaus mixed moments.
\begin{fact}[Steinhaus pairing rule]
Let $J \in \mathbb{N}$.
	\label{fact:wick}
	For $\omega_j$ i.i.d.\ Steinhaus random variables, and $a_j, b_j$ integers, for $j=1,\ldots, J$, one has
	\[\E\textstyle\prod_j\omega_j^{a_j}\bar\omega_j^{b_j}=\begin{cases}
		1 & \text{if }a_j=b_j,\,j=1,\ldots, J\\
		0 &\text{otherwise}
	\end{cases}\]
\end{fact}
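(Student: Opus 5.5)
The plan is to reduce the claim to a one-variable computation, using independence and the fact that Steinhaus variables lie on the unit circle. First, since the $\omega_j$ are independent and each factor $\omega_j^{a_j}\bar\omega_j^{b_j}$ depends only on $\omega_j$, the expectation of the product factorizes:
\[
\E\textstyle\prod_j\omega_j^{a_j}\bar\omega_j^{b_j}=\prod_{j=1}^J\E\big[\omega_j^{a_j}\bar\omega_j^{b_j}\big].
\]
Each factor is a bounded random variable (modulus $1$), so all expectations exist and there are no integrability issues, even if some exponents are negative.

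Next, I would use $|\omega_j|=1$, which gives $\bar\omega_j=\omega_j^{-1}$. Hence each factor equals $\E\,\omega_j^{k_j}$ with $k_j:=a_j-b_j\in\mathbb Z$. Writing $\omega_j=e^{i\theta}$ with $\theta$ uniform on $[0,2\pi)$ (the Haar measure on the circle), the factor becomes
\[
\E\,\omega_j^{k}=\frac{1}{2\pi}\int_0^{2\pi}e^{ik\theta}\,d\theta .
\]
This integral equals $1$ when $k=0$, and equals $\frac{e^{2\pi i k}-1}{2\pi i k}=0$ when $k\neq0$. Alternatively, for $k\neq 0$ one can use rotation invariance: $\omega_j$ and $e^{i\alpha}\omega_j$ have the same law, so $\E\omega_j^k=e^{ik\alpha}\E\omega_j^k$ for every $\alpha$, which forces $\E\omega_j^k=0$.

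Finally, the product $\prod_j\E\,\omega_j^{k_j}$ equals $1$ exactly when every $k_j=0$, that is, when $a_j=b_j$ for all $j$. If some $k_j\neq 0$, that factor vanishes and so does the product. This is precisely the two cases in the statement. No step is a genuine obstacle. The only point needing care is justifying $\bar\omega=\omega^{-1}$, which is what lets the mixed moment collapse to a single character of the circle group.
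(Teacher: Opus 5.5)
Your argument is correct. You factor by independence, use $\bar\omega_j=\omega_j^{-1}$ to reduce each factor to $\E\,\omega_j^{a_j-b_j}$, and then apply orthogonality of characters on the circle, which is exactly the standard justification. The paper records this as a ``straightforward'' fact without giving a proof, so your proposal supplies the routine argument it has in mind.
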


We will also use moments of $w_s$ sums in several places.

\begin{lem}
\label{lem:w4}
Let $W=\sum_s\|w_s\|_2^4$.
There exist universal constants $c,C>0$ such that
\[\E W\leq \frac8N\qquad\text{and}\qquad\Pr[|W-\E W|>t]\leq \exp\big(-c\sqrt{Nt}\big) \quad\text{for all} \quad  t\geq \frac{C}{N}.\]
\end{lem}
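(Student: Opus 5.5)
The plan is to write everything in terms of the underlying Steinhaus phases and treat $W$ as a low-degree polynomial chaos. For $f\sim F_\mu$ we have $f(x)=\sqrt{\mu(x)}\,\omega_x$ with $(\omega_x)_x$ i.i.d.\ Steinhaus. For a fixed $\mu$-compatible $q$, put $a_x:=\sqrt{\mu(x)}\,q(x)$, so that $\sum_x|a_x|^2=\|q\|^2_{L^2(\mu)}=1$. Then
\[\wh g(s)=\wh{fq}(s)=N^{-1/2}\tsum_x a_x\omega_x(-1)^{s\cdot x}.\]
Each $|\wh g(s)|^2$ is a degree-$2$ polynomial in $\omega,\bar\omega$, so $W=\sum_s\big(|\wh g(s)|^2+|\wh g(s+i)|^2\big)^2$ is a polynomial of total degree $4$ in the $\omega_x,\bar\omega_x$.

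\emph{Expectation.} I will first use $\|w_s\|_2^4\le 2|\wh g(s)|^4+2|\wh g(s+i)|^4$. Summing over $s$ (each Fourier index appears twice) gives $W\le 4\sum_s|\wh g(s)|^4$. Next I compute $\E|\wh g(s)|^4$ with \Cref{fact:wick}. Expanding gives $N^{-2}\sum a_{x_1}a_{x_2}\bar a_{x_3}\bar a_{x_4}\chi_s(x_1+x_2+x_3+x_4)\,\E[\omega_{x_1}\omega_{x_2}\bar\omega_{x_3}\bar\omega_{x_4}]$. Only the pairings $\{x_1,x_2\}=\{x_3,x_4\}$ survive, and for these the characters cancel. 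This yields
\[\E|\wh g(s)|^4=N^{-2}\Big(2\big(\tsum_x|a_x|^2\big)^2-\tsum_x|a_x|^4\Big)\le 2N^{-2}.\]
Summing over the $N$ values of $s$ and multiplying by $4$ gives $\E W\le 8/N$.

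\emph{Variance.} I next aim to show $\Var W\le C_0N^{-2}$ for a universal $C_0$. Expand $\E W^2$ as an $8$-fold sum over $x_1,\dots,x_8$ and $s,s'$. By \Cref{fact:wick}, only index configurations in which the $\omega$-indices are a permutation of the $\bar\omega$-indices contribute. The configurations that pair entirely within the $s$-factor and entirely within the $s'$-factor reproduce $(\E W)^2$, so the variance is carried by the ``crossing'' pairings. For these the characters $\chi_s,\chi_{s'}$ no longer cancel. However, after summing over $s$ and $s'$, each crossing term can be written as a sum of products of Fourier transforms of $|a|^2$-type and $a\bar a$-type sequences. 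I plan to bound each such term by Parseval/Cauchy--Schwarz: every crossing pairing should cost a factor of $N^{-1}$ relative to the naive count, because summing $\chi_s$ over $s$ forces an extra linear constraint. This gives $O(N^{-2})\cdot(\sum|a_x|^2)^4=O(N^{-2})$. The same argument applies to the exact $W$ with the $s+i$ terms included, since these only relabel characters.

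\emph{Tail.} $W-\E W$ is a polynomial of degree at most $4$. By \Cref{fact:steinhaus-hc},
\[\|W-\E W\|_p\le (p-1)^2\sqrt{C_0}/N.\]
Applying Markov's inequality to the $p$th moment gives $\Pr[|W-\E W|>t]\le\big((p-1)^2\sqrt{C_0}/(Nt)\big)^p$. Optimizing with $p\asymp\sqrt{Nt}$ yields $\exp(-c\sqrt{Nt})$, provided $p\ge2$. This is exactly where the restriction $t\ge C/N$ is needed.

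The main obstacle is the variance step. Enumerating the crossing pairings of an $8$-index Steinhaus moment, and checking that each loses a factor of $N$ after summation over $s,s'$ with no dependence on $\max_x|a_x|$, is the delicate bookkeeping. I would first try to organize it by writing $|\wh g(s)|^2=\langle \omega, R_s\,\omega\rangle$ for a rank-one-type matrix $R_s$ and using $\sum_s R_s\otimes R_s$-type identities. That should reduce the crossing terms to traces controlled by $\|a\|_2=1$.
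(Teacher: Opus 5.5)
Your computation of $\E W$ and your tail step are the paper's proof. The Steinhaus pairing rule gives $\E|\wh g(s)|^4\le 2/N^2$. Bounding $\|w_s\|_2^4\le 2|\wh g(s)|^4+2|\wh g(s+i)|^4$ instead of computing the cross term $\E|\wh g(s)|^2|\wh g(s+i)|^2$ is a harmless variant that gives the same $8/N$. Then come degree-$4$ hypercontractivity, Markov, and $p\asymp\sqrt{Nt}$, with $p\ge 2$ forcing $t\ge C/N$.

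The gap is the variance bound. You leave it as a plan, and its key mechanism is not correct. Crossing pairings do not lose a factor $N^{-1}$ uniformly in $q$. Summing the characters over $s,s'$ only imposes a constraint such as $x_1+x_2\in\{0,e_i\}$. That buys a factor comparable to $\sum_x|a_x|^4$, which equals $1/N$ only for flat $a$.

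For instance, let $a$ be supported on two points $x_0,x_1$ with $(x_0)_i=(x_1)_i$ and $|a_{x_0}|^2=|a_{x_1}|^2=\tfrac12$. Then $\|w_s\|_2^2=\tfrac2N\bigl(1+\chi_s(x_0+x_1)\cos\theta\bigr)$ with $\theta$ a uniformly random phase. So $W=\tfrac4N(1+\cos^2\theta)$ and $\Var W=2/N^2$, the same order as $(\E W)^2$. Comparable but unequal weights give the same order, so the mean-zero constraint on $q$ does not change this. The variance genuinely lives at the scale $(\E W)^2$, and looking for a gain in the crossing terms is a dead end.

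Fortunately no gain is needed, because the target $N^{-2}$ is already the order of $(\E W)^2$. It suffices to bound $\Var W\le \E W^2$ crudely.

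The paper does this with Minkowski, $\E W^2\le\bigl(\sum_s(\E\|w_s\|_2^8)^{1/2}\bigr)^2$. It then applies hypercontractivity to the degree-$2$ polynomial $\|w_s\|_2^2$ at $p=4$, giving $\E\|w_s\|_2^8\lesssim(\E\|w_s\|_2^4)^2\lesssim N^{-4}$.

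Within your own Wick framework, the triangle inequality does the same job. Consider each of the $16$ terms of $\|w_s\|_2^4\|w_{s'}\|_2^4$. Its expectation has modulus at most $N^{-4}\sum_{x}\sum_{y}\prod_{j}|a_{x_j}||a_{y_j}|$, where $x$ ranges over $(\{0,1\}^n)^4$ and $y$ over the at most $24$ rearrangements of $x$. This is at most $24N^{-4}$ since $\sum_x|a_x|^2=1$. Summing over the $N^2$ pairs $(s,s')$ gives $\E W^2\le 384/N^2$. With this in place, your tail argument goes through unchanged.
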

\noindent In particular, we will use the instantiation
\begin{equation}
	\label{ineq:Wbound}
	\Pr\left[W > \frac{K}{N^{1/3}}\right]\leq \exp(-c\sqrt{K}N^{1/3}).
\end{equation}
\begin{proof}
	We claim that for all $s$,
	\begin{equation}
		\label{ineq:Ew_s}
		\E\|w_s\|_2^4\leq \frac8{N^2}\,.
	\end{equation}
	Indeed, 
	\[\E\|w_s\|_2^4 = \E|\wh g(s)|^4 + 2\E|\wh g(s)|^2|\wh g(s+i)|^2+\E|\wh g(s+i)|^4\]
	and by \Cref{fact:wick}, for any $s$,
	\begin{align*}
		\E |\wh g(s)|^4 &= \frac{1}{N^2}\E|\tsum_x \sqrt{\mu(x)}\omega_x q(x)(-1)^{\langle x,s\rangle}|^4\\
		&= \frac1{N^2}\left(2\sum_{x, y}\mu(x)|q(x)|^2\mu(y)|q(y)|^2-\sum_x\mu(x)^2|q(x)|^4\right)\\
		&=\frac{1}{N^2}\left(2 - \sum_x \mu(x)^2|q(x)|^4\right)\leq \frac{2}{N^2}.
	\end{align*}
	Similarly,
	\begin{align*}
		\E|\wh g(s)|^2|\wh g(s+i)|^2 &= \frac{1}{N^2}\left(1+\left|\sum_x\mu(x)|q(x)|^2(-1)^{x_i}\right|^2-\sum_x\mu(x)^2|q(x)|^4\right)\\
		&\leq \frac{1}{N^2}\left(1+\left|\sum_x\mu(x)|q(x)|^2\right|^2\right)\tag{$\|g\|_2^2=1$}\\
		&= \frac{2}{N^2}\,.
	\end{align*}
	This proves \eqref{ineq:Ew_s}; summing over $s$ then gives $\E W\leq 8/N$.

	Towards concentration, applying Minkowski ($L^2$ triangle inequality) we get
	\[\Var[W] \leq \E W^2 = \E\left(\sum_s\|w_s\|_2^4\right)^2\leq\left(\sum_s \left(\E \|w_s\|_2^8\right)^{1/2}\right)^2\,.\]
	Then by hypercontractivity for Steinhaus (\Cref{fact:steinhaus-hc}) and Eq. \eqref{ineq:Ew_s},
	\begin{align*}
		\E\|w_s\|_2^8 &\lesssim (\E\|w_s\|_2^4)^2\lesssim \frac{1}{N^4}.
	\end{align*}
	So
	\[\Var[W]\lesssim \frac{1}{N^2}.\]
	Markov's inequality finishes the argument.
	With $p$ to be chosen later,
	\begin{align*}
		\Pr[|W-\E W| > t] &= \Pr[|W-\E W|^p > t^p]\\
		&\leq \frac{\E[|W-\E W|^p]}{t^p}\\
		&\leq t^{-p}(p-1)^{2p}\Var(W)^{p/2}\tag{\Cref{fact:steinhaus-hc}}\\
		&\leq \left(\frac{p^2C}{tN}\right)^p
	\end{align*}
	and choosing $p=\sqrt{tN/(eC)}$,
	\[\Pr[|W-\E W| > t]\leq \exp\left(-c\sqrt{tN}\right)\,.\qedhere \]
\end{proof}

\subsection{Bounding the expectation}
\label{sec:fourier-avg-small}

In this section we prove:
\begin{lem}
\label{lem:fixed-q-expect}
For all $t>0$, for all $n$ large enough, with exponential probability over $\mu\sim \bs \mu$, for all fixed $\mu$-compatible $q$,
\[\mathop{\E}_{f\sim F_\mu}\Avg_{\wh f}[\wh{fq}]=\mathop{\E}_{f\sim F_\mu}\left[\frac1n\sum_iA^{(i)}\right] \leq \frac12 + t\,.\]
\end{lem}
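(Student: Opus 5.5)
We will bound $\E_{f\sim F_\mu}A^{(i)}$ separately for each $i$, uniformly over $i$ and over $\mu$-compatible $q$, and then average over $i$. Throughout, the randomness is $f(x)=\sqrt{\mu(x)}\,\omega_x$ with i.i.d.\ Steinhaus phases $\omega_x$. Hence $\wh g(s)=N^{-1/2}\sum_x\sqrt{\mu(x)}\,\omega_x q(x)(-1)^{s\cdot x}$ and $\wh f(s)=N^{-1/2}\sum_x\sqrt{\mu(x)}\,\omega_x(-1)^{s\cdot x}$.

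\textbf{Step 1: smooth.} Use $A^{(i)}=A^{(i)}_\delta+D^{(i)}_\delta\le A^{(i)}_\delta+B^{(i)}_\delta$, with a choice like $\delta=N^{-1-\gamma}$ for a small $\gamma>0$. Since $\|u_s\|_2^2$ has typical size $2/N$, the bound $\E B^{(i)}_\delta$ should be $o(1)$. To see this, note that $\delta/(\|u_s\|^2+\delta)\le 1$ always, and it is small unless $\|u_s\|_2^2\lesssim \delta N^{\gamma/2}$. So it suffices to show:
\begin{itemize}
\item $\Pr[\|u_s\|_2^2\le \eta/N]\lesssim \eta$ up to $o(1)$ errors;
\item by Cauchy--Schwarz, $\E\|w_s\|_2^2\mathbf 1[\text{small}]\le (\E\|w_s\|^4)^{1/2}\Pr[\text{small}]^{1/2}\lesssim N^{-1}\eta^{1/2}$, using \eqref{ineq:Ew_s}.
\end{itemize}
Summing over $s$ gives $o(1)$. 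The small-ball estimate for $\|u_s\|_2^2$ will come from the Gaussian comparison in Step 2, or from anticoncentration of the pair $(\wh f(s),\wh f(s+i))$ via a Lindeberg/Berry--Esseen argument. That argument is valid because $\max_x\mu(x)=O(n/N)$ with high probability over $\mu$ (Porter--Thomas), so no single summand dominates.

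\textbf{Step 2: Gaussian comparison for the smoothed term.} The summand $\frac{|\langle u_s,w_s\rangle|^2}{\|u_s\|^2+\delta}$ is a smooth function of the four real-linear forms $(\wh f(s),\wh f(s+i),\wh g(s),\wh g(s+i))$ in the independent variables $\omega_x$. I would replace each $\omega_x$ by a standard complex Gaussian $\gamma_x$ via Lindeberg swapping. The third derivatives are controlled by powers of $\delta^{-1}$ and of $\|w_s\|$. The swap error per $s$ is of order $\sum_x\mu(x)^{3/2}\max|q|^3 N^{-3/2}\cdot \mathrm{poly}(\delta^{-1/2})$. Summing over $s$ and using the Fact $\sum_x\mu(x)^{3/2}=O(\sqrt{n/N})$, this is $o(1)$ provided $\gamma$ is small. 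Replacing all $\omega_x$ at once rather than per $s$ is fine because the swap is done for each summand and the results are then added.

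In the Gaussian model, for $s\ne s'$ the pairs $(\wh f(s),\wh g(s))$ are jointly complex Gaussian. By unitarity, $\wh f(s)$ and $\wh f(s+i)$ have covariance $N^{-1}\sum_x\mu(x)(-1)^{x_i}=O(\mathrm{Bias}(\mu)/N)$, which is negligible by \Cref{lem:muhat-bound}. Moreover, $\E\wh f(s)\overline{\wh g(s)}=N^{-1}\sum_x\mu(x)\bar q(x)=0$ by compatibility, and the cross-covariances between the $s$ and $s+i$ coordinates are again Fourier biases of the measures $\mu$ and $\mu\bar q$. Conditional on $u_s$, the vector $w_s$ is therefore approximately independent of $u_s$ with covariance $\approx N^{-1}I_2$. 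This gives
\[\E\,|\langle v_s,w_s\rangle|^2\approx N^{-1},\]
so $\E A^{(i)}_\delta\le \tfrac12\sum_s N^{-1}+o(1)=\tfrac12+o(1)$.

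\textbf{Main obstacle.} The hard part is uniformity over $q$ in the covariance claim. The Fourier bias of the signed measure $\mu\bar q$ at $e_i$ is not controlled by $\mathrm{Bias}(\mu)$ alone, and $\sum_x\mu(x)\bar q(x)(-1)^{x_i}$ can be of order one. I would handle this directly rather than assume it away. Let
\[\rho_i=\sum_x\mu(x)\bar q(x)(-1)^{x_i}.\]
This is $N$ times the correlation between $\wh f(s)$ and $\wh g(s+i)$. Computing $\E|\langle v_s,w_s\rangle|^2$ exactly in the Gaussian model with the rank-one correlation $\rho_i$ included, I would aim to show the contribution is still at most $N^{-1}(1+o(1))$, with the averaged excess being $\tfrac1n\sum_i|\rho_i|^2$-type terms. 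Those are then bounded by Parseval-style identities: summing over $i$ gives at most $\|q\|_{L^2(\mu)}^2$-type quantities divided by $n$, which is $o(1)$. If that fails, the fallback is to absorb the excess into the $t$ slack via the regularity hypothesis of \Cref{thm:avg-hat-fixed-q}.
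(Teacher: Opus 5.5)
Your plan has the same architecture as the paper's proof. You bound $A^{(i)}\le A^{(i)}_\delta+B^{(i)}_\delta$, Lindeberg-swap the smoothed term into the model $f(x)=\sqrt{\mu(x)}\,\xi_x$ with complex Gaussian $\xi_x$, evaluate there by regressing $w_s$ on $u_s$, and control the $|\rho_i|^2$ excess only after averaging over $i$. Your small-ball route to $\E B_\delta$ is a valid substitute for the paper's negative-moment bound for Steinhaus sums (\Cref{lem:EB-bound}).

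The step that fails as written is your Lindeberg error estimate, which carries a factor $\max_x|q(x)|^3$. The lemma must hold uniformly over all $\mu$-compatible $q$. Such a $q$ can put most of its $L^2(\mu)$ mass on one point $x_0$, forcing $|q(x_0)|\approx\mu(x_0)^{-1/2}$, which is polynomially large in $N$. Your bound is then not $o(1)$. The flatness hypothesis of \Cref{thm:avg-hat-fixed-q} would not rescue it either, since it still allows $|q(x)|\approx N^{5/6}$ where $\mu(x)\approx N^{-2}$. The missing observation is that $\Phi_\delta(u,w)=w^*P_\delta(u)w$ is quadratic in $w$, so $D_w^3\Phi_\delta\equiv 0$. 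Swapping coordinate $x$ therefore only produces terms with $\|a\|^3$, $\|a\|^2\|b\|$ and $\|a\|\|b\|^2$, where $\|b\|=|q(x)|\,\|a\|$ and $\|a\|^2\asymp\mu(x)/N$. Keep $q(x)$ attached to the swapped coordinate rather than taking a max. Then Parseval for $\sum_s\|w_s\|_2^2$ and Cauchy--Schwarz for $\sum_s\|w_s\|_2$ give a total error $\lesssim(N\delta)^{-3/2}\sum_x\mu(x)^{3/2}(1+|q(x)|^2)\le 2(N\delta)^{-3/2}\|\mu\|_\infty^{1/2}$. This is uniform in $q$ because $\sum_x\mu(x)|q(x)|^2=1$; it is exactly \Cref{prop:smooth-replacement}. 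With your $\delta=N^{-1-\gamma}$ it is $O(N^{3\gamma/2}\sqrt{n/N})=o(1)$ for $\gamma<1/3$.

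Two smaller repairs are needed in the Gaussian computation, though neither changes your plan.

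First, $w_s$ does not have covariance $\approx N^{-1}I_2$: its off-diagonal entry is $N^{-1}\sum_x\mu(x)|q(x)|^2(-1)^{x_i}$, which can be of order $N^{-1}$. This is harmless. The component of $w_s$ independent of $u_s$ has covariance dominated by $\wt\Sigma_i$, so it contributes at most $\tr(\wt\Sigma_i\,\E[v_sv_s^*])$. By \Cref{lem:soft-vs}, $\E[v_sv_s^*]$ is $\frac12 I$ up to off-diagonal entries of size $O(\sqrt N|\wh\mu(i)|)$, so only $\tr\wt\Sigma_i=2/N$ matters.

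Second, the averaging over $i$ is not a Parseval identity, since the $\chi_i$ are not orthonormal in $L^2(\mu)$. You need the Bessel-type bound $\sum_i|\langle q,\chi_i\rangle_\mu|^2\le\|G\|_{\op}$ for the Gram matrix $G_{ij}=\langle\chi_i,\chi_j\rangle_\mu$. You also need $\|G-I\|_{\op}\le n\max_{|s|=2}\sqrt N|\wh\mu(s)|=O(n^{3/2}N^{-1/2})$, which follows from \Cref{lem:muhat-bound}.

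Finally, drop your fallback. Flatness does not control $\rho_i$: $q\propto\chi_i-\E_\mu\chi_i$ is flat yet has $|\rho_i|\approx 1$. The averaged bound makes the fallback unnecessary anyway.
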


To evaluate the expectation, the general idea is to use that in the $n\to\infty$ limit, small marginals of the joint distribution $\big(\wh f(s)\big)_{s\in\{0,1\}^n}$ are jointly Gaussian---and each $A^{(i)}$ can be exactly integrated in the Gaussian model.

More formally, let $\mu\sim\bs \mu$ be fixed.
In the Gaussian model we replace $f$ with independent scaled Steinhaus entries with $f$ sampled with independent entries $f(x)\sim\sqrt{\mu(x)}\cdot \mc{CN}(0,1)$.
Denote this distribution of $f$'s by $\mathrm{Gauss}(\mu)$.

Unfortunately, while the $A^{(i)}$ \textit{are} sums of terms involving only a small marginal of the joint distribution $\big(\wh f(s)\big)_s$, these terms are not themselves Lipschitz in the underlying randomness, so standard Berry--Esseen-type comparison theorems do not apply.
To get the desired convergence to the Gaussian model, we must therefore work with the smoothed quantities $A^{(i)}_\delta$ and the error bounds $B^{(i)}_\delta$ introduced earlier.

In particular, because
\[\E A^{(i)}\leq \E A^{(i)}_\delta + \E B^{(i)}_\delta,\]
\Cref{lem:fixed-q-expect} would be true if we could show the RHS terms are at most $1/2+o(1)$ and $o(1)$ respectively.
Schematically, we get the bound via
\begin{align*}
    \E A^{(i)} &\leq \E A^{(i)}_\delta + \E B^{(i)}_\delta\\
    &= \E_{\text{Gauss}(\mu)} A^{(i)}_\delta + \left(\E A^{(i)}_\delta - \E_{\text{Gauss}(\mu)}A^{(i)}_\delta\right) + \E B^{(i)}_\delta\\
    &\leq \E_{\text{Gauss}(\mu)} A^{(i)} + \left(\E A^{(i)}_\delta - \E_{\text{Gauss}(\mu)} A^{(i)}_\delta\right) + \E B^{(i)}_\delta
\end{align*}
where the last line holds because $A^{(i)}_\delta \leq A^{(i)}$ pointwise.
Each of the terms in the last line are bounded in a separate subsection: the first term is proved by direct integration, where we make heavy use of Gaussianity; the second term is by a Lindeberg replacement argument; and the third term is handled via negative moment bounds for Steinhaus sums.

\subsubsection{Bounding $\E A^{(i)}$ in the Gaussian model}

The main claim of this section is as follows:
\begin{lem}
\label{lem:expect-A-Gauss}
With probability $1-2^{-\Omega(n)}$ over $\mu\sim \bs \mu$,
    \[\mathop{\E}_{\mathrm{Gauss}(\mu)}\left[\frac1n\sum_iA^{(i)}\right]\leq \frac12 + o(1)\]
\end{lem}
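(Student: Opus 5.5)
The plan is to compute $\E_{\mathrm{Gauss}(\mu)}A^{(i)}$ exactly, or up to a multiplicative $1+o(1)$ factor, for each fixed $i$, using that all Fourier coefficients are jointly circularly-symmetric complex Gaussian in this model. The only randomness over $\mu$ I need is the event of \Cref{lem:muhat-bound}, that $\Bias(\mu)=\mc O(\sqrt{n/N})$, which holds with probability $1-2^{-\Omega(n)}$. On that event the bound will hold simultaneously for every $\mu$-compatible $q$ and every $i$.

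\textbf{Step 1: covariance structure.} Fix $s$ and $i$. Since $f(x)=\sqrt{\mu(x)}\,\xi_x$ with $\xi_x$ i.i.d.\ $\mc{CN}(0,1)$ and $g=fq$, the vector $(u_s,w_s)\in\C^4$ is centered circular Gaussian. Using $\chi_s\chi_{s+i}=(-1)^{x_i}$, its covariances are:
\begin{itemize}
\item $\E|\wh f(s)|^2=\E|\wh g(s)|^2=\frac1N$, using $\sum_x\mu=\sum_x\mu|q|^2=1$;
\item $\E\wh f(s)\overline{\wh f(s+i)}=\frac{\beta_i}{N}$ with $|\beta_i|\le\Bias(\mu)$;
\item $\E\wh g(s)\overline{\wh f(s)}=\frac1N\sum_x\mu q=0$, by compatibility;
\item $\E\wh g(s)\overline{\wh f(s+i)}=\E\wh g(s+i)\overline{\wh f(s)}=\frac{a_i}{N}$, where $a_i:=\sum_x\mu(x)q(x)(-1)^{x_i}$.
\end{itemize}
By Cauchy--Schwarz in $L^2(\mu)$, $|a_i|\le 1$.

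In the idealized case $\beta_i=0$, we have $\Sigma_u=\frac1N I$. Gaussian regression then gives $w_s=Lu_s+r_s$ with $L=\left(\begin{smallmatrix}0&a_i\\ a_i&0\end{smallmatrix}\right)$, where $r_s$ is independent of $u_s$ and has covariance $\Sigma_r=\frac1N(I-LL^*)$, whose trace is $\frac{2}{N}(1-|a_i|^2)$.

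\textbf{Step 2: exact integration.} Write $\langle u_s,w_s\rangle=u_s^*Lu_s+u_s^*r_s$. Independence and $\E r_s=0$ kill the cross term after conditioning on $u_s$. The two remaining pieces are computed as follows.
\begin{itemize}
\item The noise piece is $\E\, r_s^*\frac{u_su_s^*}{\|u_s\|^2}r_s=\frac12\operatorname{tr}\Sigma_r=\frac1N(1-|a_i|^2)$. This uses that $u_s/\|u_s\|$ is uniform on the sphere, so the projector averages to $I/2$.
\item The signal piece is $|u^*Lu|^2=4|a_i|^2(\mathrm{Re}\,\bar u_1u_2)^2$. Decomposing $u=\|u\|\theta$ with $\|u\|^2$ independent of $\theta$, $\E\|u\|^2=2/N$, and $\E|\theta_1|^2|\theta_2|^2=1/6$, I get $\E|u^*Lu|^2/\|u\|^2=\frac{2|a_i|^2}{3N}$.
\end{itemize}
Hence $\E|\langle u_s,w_s\rangle|^2/\|u_s\|^2=\frac1N\big(1-\tfrac13|a_i|^2\big)\le\frac1N$. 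Summing over the $N$ values of $s$ with the prefactor $\frac12$ gives $\E A^{(i)}\le\frac12$, and averaging over $i$ gives the same bound for $\frac1n\sum_iA^{(i)}$.

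\textbf{Step 3: removing the idealization (the main technical point).} With $\beta_i\neq0$, whiten by setting $u_s=\Sigma_u^{1/2}\tilde u_s$, where $\Sigma_u=\frac1N(I+B)$ and $\|B\|_{\op}\le\Bias(\mu)=\mc O(\sqrt{n/N})$. Then $\|u_s\|^2$ and $\|\tilde u_s\|^2/N$ agree up to a factor $1+\mc O(\sqrt{n/N})$. Likewise, the regression matrix and residual covariance change by $\mc O(\sqrt{n/N})$ in operator norm relative to $L$ and $\frac1N(I-LL^*)$, using $|a_i|\le1$.

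Rerunning Step 2 with $\tilde u_s$ and these perturbed matrices should yield
\[
\E\frac{|\langle u_s,w_s\rangle|^2}{\|u_s\|^2}\le\frac1N\big(1+\mc O(\sqrt{n/N})\big),
\]
uniformly in $s$, $i$ and $q$, since every quantity involved is bounded by $\mc O(1/N)$ times a continuous function of the perturbation. Summing over $s$ then gives $\frac12+o(1)$.

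The only care needed is to track that this perturbation bound is uniform in $q$. That follows because $q$ enters only through $a_i$ and the normalizations $\sum\mu=\sum\mu|q|^2=1$.
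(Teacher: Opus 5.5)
Your argument is correct, and it takes a different, sharper route than the paper. Both arguments start from the regression $w_s = Lu_s + r_s$. The paper then bounds the two pieces crudely.
\begin{itemize}
\item The signal term is bounded by $\|L\|_{\op}^2\,\E\|u_s\|_2^2 \approx 2|a_i|^2/N$.
\item The noise term is bounded via $\E[r_sr_s^*]\preceq\wt\Sigma_i$, which discards the $-|a_i|^2$ reduction in the residual.
\end{itemize}
That yields only $\E A^{(i)}\le \frac12 + O\big(|a_i|^2+\Bias(\mu)\big)$, which is not small for a single $i$ (take $q\approx\chi_i$). So the paper must average over $i$ and control $\sum_i|a_i|^2=\sum_i|\langle\chi_i,q\rangle_\mu|^2\le\|G\|_{\op}$ through the Gram matrix $G_{ij}=\langle\chi_i,\chi_j\rangle_\mu$. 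This step needs level-two bias bounds and gives an $O(1/n)$ error.

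You instead integrate the signal term exactly, getting $\frac{2|a_i|^2}{3N}$, and you keep $\mathrm{tr}\,\Sigma_r=\frac2N(1-|a_i|^2)$. This gives $\E A^{(i)}=\frac12-\frac{|a_i|^2}{6}$ in the idealized case, hence a per-coordinate bound $\frac12+O(\Bias(\mu))$. You need only level-one bias and no Gram-matrix step. The cost is a somewhat more delicate perturbation argument in Step 3.

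There is one slip in Step 3 to repair. Your covariance list omits $\E\,\wh g(s)\overline{\wh g(s+i)}=c_i/N$, where $c_i:=\sum_x\mu(x)|q(x)|^2(-1)^{x_i}$. This need not be small: if $g$ is supported on $\{x_i=0\}$, then $c_i=1$. So even when $\Bias(\mu)=0$, the residual covariance is $\Sigma_r=\frac1N\left(\begin{smallmatrix}1-|a_i|^2 & c_i\\ c_i & 1-|a_i|^2\end{smallmatrix}\right)$, not $\frac1N(I-LL^*)$. Your claim that the perturbed residual covariance is $O(\sqrt{n/N})$-close to $\frac1N(I-LL^*)$ is therefore false. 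Likewise, $q$ enters through $c_i$ as well as through $a_i$.

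The slip is harmless, for two reasons.
\begin{itemize}
\item In Step 2 only $\mathrm{tr}\,\Sigma_r$ matters, because $\E[v_sv_s^*]=\frac12 I$.
\item In Step 3 the noise term equals $\mathrm{tr}(\Sigma_r\,\E[v_sv_s^*])$, where $\E[v_sv_s^*]=\frac12 I+O(\Bias(\mu))$ (\Cref{lem:soft-vs}) and $\|\Sigma_r\|_{S_1}=\mathrm{tr}\,\Sigma_r\le 2/N$. So the off-diagonal entry costs only $O(\Bias(\mu)/N)$.
\end{itemize}
This is exactly how the paper handles its $(**)$ term, using $|c_i|\le 1$.
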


Let us record some facts about $u_s$ and $w_s$ in the Gaussian model.
Begin by noting that for all $s$,
\[u_s = \mathcal{CN}(0,\Sigma_i) \quad\text{with}\qquad \Sigma_i = \frac1N\begin{pmatrix}
    1 & \sqrt{N}\wh\mu(i)\\
    \sqrt{N}\wh\mu(i) & 1
\end{pmatrix}\,.\]
Justification for the off-diagonal values of $\Sigma_i$ is as follows: 
\begin{align*}
\E\left[\overline{\wh f(s)}\wh f(s+i)\right] &=\tfrac{1}{N}\E\left[\tsum_{x,y}\overline{f(x)}f(y)\chi_{s}(x)\chi_{s+i}(y)\right]\\
&= \tfrac{1}{N}\E\left[\tsum_{x}|f(x)|^2\chi_{i}(x)\right]\\
&=\tfrac{1}{N}\tsum_x\mu(x)\chi_i(x)\\
&= \frac{\wh\mu(i)}{\sqrt{N}}\,.
\end{align*}
And similarly,
\[w_s = \mc{CN}(0,\wt\Sigma_i)\qquad\text{with}\qquad \wt\Sigma_i=\E[\ketbra{w_s}{w_s}]=\frac1N\begin{pmatrix}
    1 & \sqrt{N}\wh{\mu |q|^2}(i)\\
    \sqrt{N}\wh{\mu |q|^2}(i) & 1
\end{pmatrix}\]

Moreover, in this model we may relate $w_s$ to $u_s$ via Gaussian regression.
That is, we may decompose
    \begin{equation}
    \label{eq:gaussian-regression}
        w_s = Au_s + \gamma,\quad A = R \Sigma^{-1},\quad \gamma \perp u,\quad R = \E[w_su_s^\dagger]=\frac{1}{\sqrt{N}}\begin{pmatrix}
        0 & \wh{q\mu}(i)\\
        \wh{q\mu}(i) & 0
    \end{pmatrix}\,.
    \end{equation}
    Note that the diagonal is zero because it is $\wh{q\mu}(0)=\E_\mu q=0$.

\medskip

Before continuing, let us record a similar result for the expected outer product of $v_s$, \emph{i.e.}, $\E[v_sv_s^\dagger]$:

\begin{prop}
\label{lem:soft-vs}
Suppose $\sqrt N\,|\wh\mu(i)|\leq 1/2$.
Then
\[
\E[\ket{v_s}\!\bra{v_s}]
=
\frac12
\begin{pmatrix}
1 & \Phi\big(\sqrt N\,\wh\mu(i)\big)\\
\Phi\big(\sqrt N\,\wh\mu(i)\big) & 1
\end{pmatrix},
\]
where
\[
\Phi(x)
:=
\int_0^1 \frac{2t-1+x}{1+x(2t-1)}\,dt
=
\frac{2x-(1-x^2)\log\!\frac{1+x}{1-x}}{2x^2}.
\]
Consequently,
\[
\E[\ket{v_s}\!\bra{v_s}]
=
\frac12
\begin{pmatrix}
1 & \mc O\big(\sqrt N\,\wh\mu(i)\big)\\
\mc O\big(\sqrt N\,\wh\mu(i)\big) & 1
\end{pmatrix}.
\]
\end{prop}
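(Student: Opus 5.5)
We need to compute $\E[v_sv_s^\dagger]$ for $u_s\sim\mc{CN}(0,\Sigma_i)$ with $v_s=u_s/\|u_s\|_2$. Since normalization removes overall scale, we may drop the factor $1/N$ and take $\Sigma=\left(\begin{smallmatrix}1&x\\x&1\end{smallmatrix}\right)$ with $x=\sqrt N\,\wh\mu(i)$. This $x$ is real, because $\mu$ is real and $\chi_i$ is $\pm1$-valued.

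\textbf{Step 1: reduce to two scalars.} Conjugating by the swap of the two coordinates leaves $\Sigma$ invariant, and multiplying $u_s$ by a global phase leaves $v_sv_s^\dagger$ invariant. Both symmetries are respected by the law of $u_s$. Hence the two diagonal entries of $\E[v_sv_s^\dagger]$ are equal, and since $\tr(v_sv_s^\dagger)=1$, each is $\tfrac12$. The off-diagonal entry is $\E[\overline{v_1}v_2]$. Its conjugate is obtained by swapping the coordinates, so it is real, and it only remains to compute it.

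\textbf{Step 2: diagonalize.} Rotate to the eigenbasis $e_\pm=(1,\pm1)/\sqrt2$ of $\Sigma$, with eigenvalues $1\pm x$. Write $u=a\,e_++b\,e_-$, where $a\sim\mc{CN}(0,1+x)$ and $b\sim\mc{CN}(0,1-x)$ are independent. Then
\[
\overline{u_1}u_2=\tfrac12\big(|a|^2-|b|^2\big)+i\,\mathrm{Im}(\cdot),\qquad \|u\|^2=|a|^2+|b|^2 .
\]
The imaginary part has mean zero, by independent phase symmetry of $a$ and $b$. Therefore the off-diagonal entry equals
\[
\tfrac12\,\E\!\left[\frac{|a|^2-|b|^2}{|a|^2+|b|^2}\right].
\]
Here $|a|^2=(1+x)E_1$ and $|b|^2=(1-x)E_2$, with $E_1,E_2$ i.i.d.\ $\mathrm{Exp}(1)$. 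The ratio $T=E_1/(E_1+E_2)$ is $\mathrm{Uniform}[0,1]$. Substituting,
\[
\frac{|a|^2-|b|^2}{|a|^2+|b|^2}=\frac{(1+x)T-(1-x)(1-T)}{(1+x)T+(1-x)(1-T)}=\frac{2T-1+x}{1+x(2T-1)}.
\]
This gives exactly $\E[v_sv_s^\dagger]_{12}=\tfrac12\Phi(x)$ with the stated integral representation.

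\textbf{Step 3: closed form and the $\mc O(x)$ bound.} Substitute $y=2t-1$ to get $\Phi(x)=\tfrac12\int_{-1}^1\frac{y+x}{1+xy}\,dy$. Write
\[
\frac{y+x}{1+xy}=\frac1x-\frac{1-x^2}{x(1+xy)}.
\]
Integrating, using $\int_{-1}^1\frac{dy}{1+xy}=\frac1x\log\frac{1+x}{1-x}$, yields the closed form.

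For the consequence, note that $\Phi$ is odd and analytic on $|x|<1$. Equivalently, the integrand is smooth in $x$ uniformly for $|x|\le1/2$, since $1+xy\ge\tfrac12$ there, and it vanishes in mean at $x=0$. Hence $|\Phi(x)|\le C|x|$ on $|x|\le 1/2$; for instance $\Phi(x)=\tfrac{2}{3}x+\mc O(x^3)$.

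The only delicate point is Step 2: recognizing that the quadratic forms decouple in the eigenbasis, and that the Beta$(1,1)$ law of $T$ applies to complex Gaussians, whose squared moduli are exponential. Everything else is routine algebra.
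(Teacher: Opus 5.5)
Your proof is correct and follows essentially the same route as the paper: diagonalize $\Sigma_i$ in the Hadamard basis $(1,\pm1)/\sqrt2$, remove the cross term by independent phase symmetry, and reduce to $t=E_1/(E_1+E_2)\sim\mathrm{Unif}[0,1]$ for i.i.d.\ exponentials. The differences are cosmetic: you obtain the diagonal entries from swap symmetry and unit trace rather than by conjugating back by $H$, and you additionally verify the closed form and the bound $|\Phi(x)|\lesssim |x|$, which the paper leaves implicit.
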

The proof is by explicit integration.
\begin{proof}
Let $\rho_i=\sqrt{N}\wh{\mu}(i)$.
The Hadamard matrix $H=\frac1{\sqrt2}\left(\begin{smallmatrix}
1&1\\
1&-1
\end{smallmatrix}\right)$ diagonalizes $\Sigma_i$; \textit{i.e.,}
\[
H\Sigma_i H
=
\frac1N
\begin{pmatrix}
1+\rho_i&0\\
0&1-\rho_i
\end{pmatrix}.
\]
So if $\xi_1,\xi_2\overset{\mathsf{iid}}{\sim}\mc{CN}(0,1)$ and
\[
y:=\begin{pmatrix}
\sqrt{1+\rho_i}\,\xi_1\\
\sqrt{1-\rho_i}\,\xi_2
\end{pmatrix},
\]
then $u_s$ has the same distribution as $\frac1{\sqrt N}Hy$.
Since normalization cancels the factor $1/\sqrt N$,
\[
\ket{v_s}\!\bra{v_s}\overset{(\text{dist})}=H\,\frac{yy^*}{\|y\|_2^2}\,H\qquad\text{and}\qquad 
\E[\ket{v_s}\!\bra{v_s}]
=
H\,\E\left[\frac{yy^*}{\|y\|_2^2}\right]H.
\]

Now write
\[
X:=|\xi_1|^2,\qquad Y:=|\xi_2|^2.
\]
Then $X,Y$ are i.i.d. $\mathrm{Exp}(1)$. By phase symmetry,
\[
\E\left[\frac{\xi_1\bar\xi_2}{(1+\rho_i)X+(1-\rho_i)Y}\right]=0,
\]
so
\[
\E\left[\frac{yy^*}{\|y\|_2^2}\right]
=
\begin{pmatrix}
\phi(\rho_i)&0\\
0&1-\phi(\rho_i)
\end{pmatrix},
\]
where
\[
\phi(\rho):=\E\left[\frac{(1+\rho)X}{(1+\rho)X+(1-\rho)Y}\right].
\]
Conjugating by $H$ gives
\[
\E[\ket{v_s}\!\bra{v_s}]
=
\frac12
\begin{pmatrix}
1 & 2\phi(\rho_i)-1\\
2\phi(\rho_i)-1 & 1
\end{pmatrix}.
\]
Thus it remains to identify
\[
\Phi(\rho):=2\phi(\rho)-1
=
\E\left[\frac{(1+\rho)X-(1-\rho)Y}{(1+\rho)X+(1-\rho)Y}\right].
\]

Use the change of variables
\[
r=X+Y,\qquad t=\frac{X}{X+Y}.
\]
For i.i.d. exponentials $X,Y$, $r\sim \Gamma(2,1)$, $t\sim \mathrm{Unif}[0,1]$, and $r,t$ are independent. Since $X=rt$ and $Y=r(1-t)$, the $r$ cancels and we get
\[
\Phi(\rho)
=
\int_0^1 \frac{(1+\rho)t-(1-\rho)(1-t)}{(1+\rho)t+(1-\rho)(1-t)}\,dt
=
\int_0^1 \frac{2t-1+\rho}{1+\rho(2t-1)}\,dt.\qedhere
\]
\end{proof}

\medskip

\Cref{lem:expect-A-Gauss} is proved in two parts.
First an upper bound on each $\E_{\mathrm{Gauss}(\mu)}A^{(i)}$ is obtained in terms of quantities like $\wh{\mu(i)}$, and then the average of these quantities over $i\in[n]$ is shown to be small.

\begin{prop}For all $\mu$ with $\sqrt{N}|\wh \mu(i)|\leq 1/2$ and all $\mu$-compatible $q$, in the Gaussian model just described,
    \[0\leq \mathop{\E}_{f\sim \mathrm{Gauss}(\mu)}[A^{(i)}]\leq \frac12 + \mc O\left(\sqrt{N}|\wh\mu(i)|+\frac{N|\wh{\mu q}(i)|^2}{(1-\sqrt{N}|\wh\mu(i)|)^2}\right)\,.\]
\end{prop}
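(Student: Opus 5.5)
The bound follows from the Gaussian regression \eqref{eq:gaussian-regression} and from \Cref{lem:soft-vs}. In the Gaussian model each summand of $A^{(i)}$ is $\tfrac12|\langle v_s,w_s\rangle|^2$, and the law of $(u_s,w_s)$ does not depend on $s$. So $\E A^{(i)}=\tfrac N2\,\E|\langle v_s,w_s\rangle|^2$. (The pairing $s\leftrightarrow s+i$ double counts, which the factor $\tfrac12$ absorbs.) Nonnegativity is trivial. For the upper bound we substitute $w_s=\mathsf A u_s+\gamma$, writing $\mathsf A=R\Sigma_i^{-1}$ for the regression matrix to avoid a clash with $A^{(i)}$, with $\gamma$ independent of $u_s$. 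Since $\langle v_s,w_s\rangle=\langle v_s,\mathsf A u_s\rangle+\langle v_s,\gamma\rangle$, the inequality $|a+b|^2\le 2|a|^2+2|b|^2$ is too lossy for the main term. Instead we expand exactly. The cross term $\E[\overline{\langle v_s,\mathsf Au_s\rangle}\langle v_s,\gamma\rangle]$ vanishes because $\gamma$ is centered and independent of $u_s$.

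\textbf{The noise term.} By independence,
\[
\E|\langle v_s,\gamma\rangle|^2=\E\big[\langle v_s,\Gamma v_s\rangle\big]=\tr\big(\Gamma\,\E[\ket{v_s}\!\bra{v_s}]\big),\qquad \Gamma:=\E[\gamma\gamma^\dagger]=\wt\Sigma_i-R\Sigma_i^{-1}R^\dagger\preceq\wt\Sigma_i .
\]
\Cref{lem:soft-vs} gives $\E[\ket{v_s}\!\bra{v_s}]=\tfrac12\left(\begin{smallmatrix}1&\Phi\\ \Phi&1\end{smallmatrix}\right)$ with $|\Phi|=\mc O(\sqrt N|\wh\mu(i)|)$. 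Since this matrix is PSD, we may replace $\Gamma$ by $\wt\Sigma_i$. The trace then equals
\[
\tfrac1{2N}\big(2+2\Phi\sqrt N\,\wh{\mu|q|^2}(i)\big).
\]
We have $|\sqrt N\,\wh{\mu|q|^2}(i)|\le\sum_x\mu|q|^2=1$ by compatibility, so the trace is at most $\tfrac1N\big(1+\mc O(\sqrt N|\wh\mu(i)|)\big)$. Multiplying by $\tfrac N2$ gives $\tfrac12+\mc O(\sqrt N|\wh\mu(i)|)$.

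\textbf{The signal term.} We use the crude bound $|\langle v_s,\mathsf Au_s\rangle|^2\le\|\mathsf A\|_{\op}^2\|u_s\|_2^2$. Its expectation is at most $\|\mathsf A\|_{\op}^2\tr\Sigma_i=\tfrac2N\|\mathsf A\|_{\op}^2$. Here $\|R\|_{\op}=|\wh{q\mu}(i)|/\sqrt N$. Also $\|\Sigma_i^{-1}\|_{\op}=N/(1-\sqrt N|\wh\mu(i)|)$, because the eigenvalues of $\Sigma_i$ are $(1\pm\sqrt N\wh\mu(i))/N$. Hence
\[
\|\mathsf A\|_{\op}^2\le \frac{N|\wh{q\mu}(i)|^2}{(1-\sqrt N|\wh\mu(i)|)^2}.
\]
Multiplying by $\tfrac N2\cdot\tfrac2N$ gives exactly the second error term in the statement.

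\textbf{Assembly and main subtlety.} Adding the two contributions yields the claim. The only delicate point is justifying the exact orthogonal decomposition of the noise and signal parts, i.e.\ that the cross term vanishes. This needs joint complex Gaussianity of $(u_s,w_s)$, which holds in the Gaussian model because both are linear images of $f$. It also needs $\gamma$ to be independent of $u_s$, and not merely uncorrelated with it. For circularly symmetric complex Gaussians, uncorrelated implies independent, since the pseudo-covariance is zero. I would verify this circular symmetry first, noting that the entries of $f$ are $\mc{CN}$. Everything else is direct linear algebra with $2\times2$ matrices.
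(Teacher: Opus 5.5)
Your proposal is correct and follows the paper's proof essentially line for line. You use the same Gaussian regression $w_s=\mathsf A u_s+\gamma$ with the cross term removed by independence, the same bound $\tr(\Gamma\,\E\ket{v_s}\!\bra{v_s})\le\tr(\wt\Sigma_i\,\E\ket{v_s}\!\bra{v_s})$ combined with \Cref{lem:soft-vs} for the noise term, and the same operator-norm bound $\tfrac2N\|R\|_{\op}^2\|\Sigma_i^{-1}\|_{\op}^2$ for the signal term; your PSD justification and your circular-symmetry argument for independence just make explicit two points the paper leaves implicit.
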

\begin{proof}
    According to the Gaussian regression identites \eqref{eq:gaussian-regression}, the $s$\textsuperscript{th} term of $A^{(i)}$ decomposes as
    \[
        |\langle v_s,w_s\rangle|^2 = |\braket{v_s|A|u_s}|^2+|\braket{v_s|\gamma}|^2 + 2\Re\overline{\braket{v_s|A|u_s}}\braket{v_s|\gamma}\]
    And by independence,
    \[\E[ |\langle v_s,w_s\rangle|^2]=\underbrace{\E[|\braket{v_s|A|u_s}|^2]}_{(*)}+\underbrace{\E[|\braket{v_s|\gamma}|^2]}_{(**)}.\]
    \medskip

    \noindent \textit{Bound the $(**)$ term.}
Again by independence we have
\[
    (**)
    =
    \tr\left(
        \E[\ketbra{\gamma}{\gamma}]
        \E[\ketbra{v_s}{v_s}]
    \right).
\]
Because the covariance matrix $\wt\Sigma_i$ of $w_s$ dominates that of
$\gamma$, we have $\E[\ketbra{\gamma}{\gamma}]
    \preceq
    \wt\Sigma_i$.
Therefore, using \Cref{lem:soft-vs},
\begin{align*}
(**)
&\leq
\tr\left(
    \wt\Sigma_i
    \E[\ketbra{v_s}{v_s}]
\right)
\\
&=
\tr\left[
    \frac1N
    \begin{pmatrix}
        1
        &
        \sqrt N\,\wh{\mu|q|^2}(i)
        \\
        \sqrt N\,\wh{\mu|q|^2}(i)
        &
        1
    \end{pmatrix}
    \cdot
    \frac12
    \begin{pmatrix}
        1
        &
        \Phi\bigl(\sqrt N\,\wh\mu(i)\bigr)
        \\
        \Phi\bigl(\sqrt N\,\wh\mu(i)\bigr)
        &
        1
    \end{pmatrix}
\right]
\\
&=
\frac1N
\left[
    1
    +
    \sqrt N\,\wh{\mu|q|^2}(i)
    \Phi\bigl(\sqrt N\,\wh\mu(i)\bigr)
\right]
\\
&\leq
\frac1N
\left(
    1
    +
    C\sqrt N\,|\wh\mu(i)|
\right).
\end{align*}
Here we used
\[
\left|
    \sqrt N\,\wh{\mu|q|^2}(i)
\right|
=
\left|
    \sum_x
    \mu(x)|q(x)|^2(-1)^{x_i}
\right|
\leq
\sum_x\mu(x)|q(x)|^2
=
1,
\]
together with
\[
    |\Phi(\rho)|
    \leq
    C|\rho|
    \qquad
    \text{for }
    |\rho|\leq\frac12.
    \tag*{$\lozenge$}
\]

    \medskip
    \noindent \textit{Bounding the $(*)$ term.}
    \begin{align*}
        (*)&= \E\left[\left|\frac{\braket{u_s|A|u_s}}{\|u_s\|_2}\right|^2\right]\leq \|A\|_\text{op}^2\E\|u_s\|^2= \|A\|_\text{op}^2\tr\Sigma_i=\frac{2}{N}\|A\|_\text{op}^2
    \end{align*}
    Now to bound $\|A\|_\op$, use that $\|R\|_\op=|\wh{\mu q}(i)|/\sqrt{N}$ and $\|\Sigma^{-1}\|_\text{op}= N/(1-\sqrt{N}|\wh\mu(i)|)$.
    Thus
    \[(*)\leq \frac{2|\wh{\mu q}(i)|^2}{\big(1-\sqrt{N}|\wh\mu(i)|\big)^2} \tag*{$\lozenge$}\]
    \medskip
    Note that in both parts dependence on $s$ has disappeared.
    Substituting back into $A^{(i)}$ we find
    \begin{align*}\E[A^{(i)}]&\leq \frac{N}{2}\left[\frac1N\left(1+C\sqrt{N}|\wh\mu(i)|\right)+\frac{2|\wh{\mu q}(i)|^2}{(1-|\wh\mu(i)|\sqrt{N})^2}\right]\\
    &=\frac12 + \mc O\left(\sqrt{N}|\wh\mu(i)|+\frac{N|\wh{\mu q}(i)|^2}{(1-\sqrt{N}|\wh\mu(i)|)^2}\right).\qedhere
    \end{align*}
\end{proof}

\medskip

It remains to show that both $\wh\mu(i)$ and $\wh{\mu q}(i)^2$ are small with high probability:

\begin{prop}
    With probability $1-2^{-\Omega(n)}$ over $\mu\sim \bs \mu$, for all $\mu$-compatible $q$,
    \[\frac1n\sum_i\left(\sqrt{N}|\wh\mu(i)|+\frac{N|\wh{\mu q}(i)|^2}{(1-\sqrt{N}|\wh\mu(i)|)^2}\right)=\mc O(1/n)\]
\end{prop}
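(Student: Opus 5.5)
Two terms to control. For the first, we use \Cref{lem:muhat-bound}. With probability $1-2^{-\Omega(n)}$ we have $\Bias(\mu)=\mc O(\sqrt{n/N})$, so $\sqrt N|\wh\mu(i)|=\mc O(\sqrt{n/N})$ for every $i$. In particular this is at most $1/2$, so the denominators $(1-\sqrt N|\wh\mu(i)|)^2$ are at least $1/4$. The average of the first term over $i$ is then $\mc O(\sqrt{n/N})=o(1/n)$.

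The main step is the second term, which must hold uniformly over all $\mu$-compatible $q$. The key observation is that $\sqrt N\,\wh{\mu q}(i)=\sum_x\mu(x)q(x)\chi_i(x)=\langle \chi_i,q\rangle_\mu$, where $\chi_i(x)=(-1)^{x_i}$. Since $q$ is $\mu$-compatible, $\|q\|_{L^2(\mu)}=1$ and $\langle 1,q\rangle_\mu=0$. So $N|\wh{\mu q}(i)|^2=|\langle\chi_i,q\rangle_\mu|^2$, and we need
\[\sum_{i=1}^n|\langle\chi_i,q\rangle_\mu|^2=\mc O(1)\]
uniformly in $q$. This is a Bessel-type inequality for the family $\{\chi_1,\dots,\chi_n\}$ in $L^2(\mu)$. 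By duality, the supremum over unit $q$ is $\lambda_{\max}(G)$, where $G$ is the Gram matrix $G_{ij}=\langle\chi_i,\chi_j\rangle_\mu=\sum_x\mu(x)\chi_{e_i+e_j}(x)$. (Dropping the constraint $q\perp_\mu 1$ only enlarges the supremum.)

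Next we bound $G$. Its diagonal is $G_{ii}=1$. For $i\neq j$ the entries are biases of $\mu$ at the nonzero character $e_i+e_j$, so $|G_{ij}|\le\Bias(\mu)=\mc O(\sqrt{n/N})$ on the same high-probability event. By Gershgorin, $\lambda_{\max}(G)\le 1+n\,\Bias(\mu)=1+\mc O(n^{3/2}/\sqrt N)=1+o(1)$. Hence $\frac1n\sum_i N|\wh{\mu q}(i)|^2/(1-\sqrt N|\wh\mu(i)|)^2\le \frac{4(1+o(1))}{n}=\mc O(1/n)$.

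Adding the two contributions gives the claimed $\mc O(1/n)$, on a single event of probability $1-2^{-\Omega(n)}$ that does not depend on $q$. This independence from $q$ is exactly what the quantifier order requires. The only real obstacle is recognizing that the uniformity over $q$ reduces to this near-orthonormality of the $n$ characters in $L^2(\mu)$. Once that is seen, the bias lemma handles both terms.
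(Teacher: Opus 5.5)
Your proposal is correct and follows the paper's proof essentially verbatim. The bias lemma handles the first term. The second term is bounded uniformly in $q$ by $\|G\|_{\op}$, where $G_{ij}=\langle\chi_i,\chi_j\rangle_\mu$ is the Gram matrix of the level-one characters; its off-diagonal entries are weight-two biases, controlled by a row-sum (Gershgorin) bound. Your explicit lower bound of $1/4$ on the denominators is a small detail the paper leaves implicit.
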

\begin{proof}
    By \Cref{lem:muhat-bound}, with probability $1-2^{-\Omega(n)}$, $\sqrt{N}\wh{\mu}(i)\leq \mc O(\sqrt{n/N})$.
    Thus it remains to understand the numerator of the second term in each summand.
    Summing over $i$, we find
    \[\sum_i N|\wh{\mu q}(i)|^2=\sum_i|\langle q,\chi_i\rangle_\mu|^2\leq  \|G\|_\text{op}\|q\|^2_{L^2(\mu)}=\|G\|_\text{op}\]
    where $G$ is Gram matrix of level-1 Fourier characters wrt $\mu$ inner product:
    \[G_{ij}=\langle\chi_i,\chi_j\rangle_\mu.\]
    Now write $G=\mathbf{1}+M$ and note that
    \[\|M\|_\text{op}\leq \max_i\sum_{j\neq i}|M_{ij}|\leq n\max_{|s|=2}\sqrt{N}|\wh \mu(s)|,\]
    where in the last inequality we have used that for $i\neq j$,
    \[|\langle \chi_i,\chi_j\rangle| = |\tsum_x\mu(x)(-1)^{x_i+x_j}|=\sqrt{N}|\wh\mu(\{i,j\})|\,.\]
    By \Cref{lem:muhat-bound}, with high probability,
    $\max_{|s|=2}\sqrt{N}|\wh \mu(s)|=\mc O(\sqrt{n/N})$.
    Thus we conclude
    \[\frac1n\sum_iN|\wh{\mu q}(i)|^2\leq \frac1n\|G\|_\text{op}\leq \frac1n\big(1+n\mc O(\sqrt{n/N})\big)=\mc O(1/n).\qedhere\]
\end{proof}

\medskip
We are ready to finish this subsection.
\begin{proof}[Proof of \Cref{lem:expect-A-Gauss}]
Combining the propositions, we have
\[\frac1n\sum_i\mathop{\E}_{\mathrm{Gauss}(\mu)}[A^{(i)}]\leq \frac12 + O\left(\frac1n\sum_i\left(\sqrt{N}|\wh\mu(i)|+\frac{N|\wh{\mu q}(i)|^2}{(1-\sqrt{N}|\wh\mu(i)|)^2}\right)\right)=\frac12 + \mc O(1/n)\,.\]
\end{proof}

\subsubsection{Comparing $\E A^{(i)}_\delta$ and $\E_\text{Gauss} A^{(i)}_\delta$}

In this section it will be convenient to use the notation of Fr\'echet derivatives to differentiate random variables as functions with respect to a coordinate of underlying randomness in $\C\cong\mathbb R^2$.
For any $h:\C\cong\mathbb R^2\to \mathbb R$, the notation $\frechet^Jh(z)[v_1,\ldots, v_J]$ denotes the $J$\textsuperscript{th} Fr\'echet derivative at $z$ evaluated along directions $v_1,\ldots, v_J \in \C\cong \mathbb R^{2}$ and extends to mixed derivatives in the natural way.
In the context of Fr\'echet derivatives we shall also use the undecorated norm $\|\cdot\|$ to denote the operator norm of the relevant space.
Concretely,
    \[\|\frechet^Jh(z)\|:=\sup_{\substack{v_j\in \mathbb R^2,\|v_j\|_2=1\\j=1,2,\ldots, J}}\big|\frechet^Jh(z)[v_1,\ldots, v_J]\big|\,.\]

For this section we will also use $\Phi_\delta$ to describe the terms in our smoothed average:
    \[\Phi_\delta(u,w) := \frac{|\langle u,w\rangle|^2}{\|u\|_2^2+\delta}\]

We will need some bounds on the mixed derivatives of $\Phi_\delta(u,w)$; for this it is most convenient to argue in two steps.
\begin{lem}[Master derivative lemma]
\label{lem:master-frechet}
Define
\[
P_\delta(u):=\frac{uu^*}{\|u\|_2^2+\delta},
\qquad u\in\C^2.
\]
View \(P_\delta:\C^2\cong\R^4\to M_2(\C)\) as a smooth real Fr\'echet map.
Then for each \(k=0,1,2,3\),
\[
\|\frechet^kP_\delta(u)\|\lesssim_k \delta^{-k/2}
\qquad\text{for all }u\in\C^2.
\]
\end{lem}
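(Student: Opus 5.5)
The plan is a scaling argument that reduces to $\delta=1$, followed by a compactness-plus-decay argument at $\delta=1$. The key observation is homogeneity: for $\lambda>0$,
\[
P_\delta(\lambda u)=\frac{\lambda^2uu^*}{\lambda^2\|u\|_2^2+\delta}=P_{\delta/\lambda^2}(u).
\]
Writing $u=\sqrt\delta\,v$, we get $P_\delta(u)=P_1(u/\sqrt\delta)$. Differentiating $k$ times by the chain rule for the linear map $u\mapsto u/\sqrt\delta$ gives
\[
\frechet^kP_\delta(u)[h_1,\ldots,h_k]=\delta^{-k/2}\,\frechet^kP_1(u/\sqrt\delta)[h_1,\ldots,h_k].
\]
Hence $\sup_u\|\frechet^kP_\delta(u)\|=\delta^{-k/2}\sup_v\|\frechet^kP_1(v)\|$. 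It therefore suffices to show that $\sup_{v\in\C^2}\|\frechet^kP_1(v)\|<\infty$ for $k=0,1,2,3$.

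For the $\delta=1$ case, note that $P_1(v)=vv^*\cdot r(\|v\|_2^2)$ with $r(t)=1/(t+1)$. The entries of $vv^*$ are real polynomials of degree $2$ in the four real coordinates of $v$, and $r(\|v\|_2^2)$ is smooth because $t+1\ge1$. So $P_1$ is $C^\infty$ on $\R^4$, and each derivative is bounded on any compact ball.

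The remaining work is to bound derivatives for large $\|v\|$. Use Leibniz:
\[
\frechet^k\big(vv^*\cdot r(\|v\|^2)\big)=\sum_{j=0}^{\min(k,2)}\binom kj\,\frechet^j(vv^*)\otimes\frechet^{k-j}\big(r(\|v\|^2)\big).
\]
Here $\|\frechet^j(vv^*)\|\lesssim\|v\|^{2-j}$ for $j\le2$ (and vanishes for $j\ge3$). By Faà di Bruno, $\frechet^m\big(r(\|v\|^2)\big)$ is a sum of terms $r^{(\ell)}(\|v\|^2)$ times products of derivatives of $\|v\|^2$. Each such derivative is either of first order, of size $O(\|v\|)$, or of second order, which is constant. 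Since $|r^{(\ell)}(t)|\lesssim(1+t)^{-\ell-1}$, each term has size at most $(1+\|v\|^2)^{-\ell-1}\|v\|^{a}$, where $a\le\ell$ counts the first-order factors and $\ell\le m$. This gives
\[
\big\|\frechet^m r(\|v\|^2)\big\|\lesssim_m(1+\|v\|)^{-2-m}.
\]
Multiplying, each Leibniz term is $\lesssim(1+\|v\|)^{2-j}(1+\|v\|)^{-2-(k-j)}=(1+\|v\|)^{-k}\le1$. This bounds $\frechet^kP_1$ uniformly on $\R^4$, and even shows decay like $\|v\|^{-k}$.

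The only mildly delicate step is the Faà di Bruno bookkeeping, namely checking that each derivative of $r$ gains two powers of decay while each chain-rule factor costs at most one. An equivalent cleaner route is to notice that $\frechet^kP_1$ is homogeneous of degree $-k$ at infinity: for large $v$, $P_1(v)\approx vv^*/\|v\|^2$, which is $0$-homogeneous. Either way, the conclusion holds with constants depending only on $k$, and the case $k=0$ is immediate since $\|P_\delta(u)\|=\|u\|^2/(\|u\|^2+\delta)\le1$.
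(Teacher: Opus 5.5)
Your proof is correct, and it reaches the bound by a somewhat different route than the paper.

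\textbf{The paper's route.} The paper writes $P_\delta=T\cdot r$ with $T(u)=uu^*$ and $r(u)=(\|u\|_2^2+\delta)^{-1}$. It computes $\frechet P_\delta$, $\frechet^2P_\delta$ and $\frechet^3P_\delta$ explicitly and bounds each resulting term by $\|u\|_2^m/(\|u\|_2^2+\delta)^\ell$ with $m\le2\ell$. It then invokes the elementary inequality $\|u\|_2^m/(\|u\|_2^2+\delta)^\ell\lesssim\delta^{m/2-\ell}$. The homogeneity that makes every term land at $\delta^{-k/2}$ stays implicit in that bookkeeping.

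\textbf{Your route.} You extract the $\delta$-dependence exactly and up front via $P_\delta(u)=P_1(u/\sqrt\delta)$. This gives $\sup_u\|\frechet^kP_\delta(u)\|=\delta^{-k/2}\sup_v\|\frechet^kP_1(v)\|$, which also shows the rate $\delta^{-k/2}$ is attained up to a constant. You then only need a $\delta$-free boundedness statement, which you get from Leibniz plus Fa\`a di Bruno. This handles every order $k$ at once with no explicit third-derivative formulas, at the cost of some combinatorial bookkeeping. The paper's version is longer but completely explicit for $k\le3$.

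\textbf{Small remarks on the write-up.}

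First, in the Fa\`a di Bruno step, the facts $a\le\ell\le m$ alone only yield $(1+\|v\|)^{-\ell-2}$. The sharper $(1+\|v\|)^{-2-m}$ comes from the exact relation $a=2\ell-m$. This follows from $a+2b=m$ and $a+b=\ell$, where $b$ counts the second-order factors. The weaker bound already suffices for uniform boundedness, since $m\ge1$ forces $\ell\ge1$.

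Second, your bounds $|r^{(\ell)}(t)|=\ell!\,(1+t)^{-\ell-1}$ and $\|v\|^a\le(1+\|v\|^2)^{a/2}$ hold for every $v$. So the Leibniz estimate is already global, and the separate compactness step is unnecessary.

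Third, the closing ``homogeneous of degree $-k$ at infinity'' remark is only a heuristic. An approximation $P_1(v)\approx vv^*/\|v\|^2$ does not by itself control derivatives, so you can drop it.
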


\begin{proof}
Write
\[
s(u):=\|u\|_2^2+\delta,
\qquad
T(u):=uu^*,
\qquad
P_\delta(u)=\frac{T(u)}{s(u)}.
\]
We regard all derivatives as real Fr\'echet derivatives on \(\C^2\cong\mathbb R^4\).

For \(h,h_1,h_2,h_3\in\C^2\), define
\[
\beta_u(h):=2\Re\langle u,h\rangle,
\qquad
\gamma(h_1,h_2):=2\Re\langle h_1,h_2\rangle.
\]
Then
\[
DT(u)[h]=uh^*+hu^*,
\qquad
D^2T(u)[h_1,h_2]=h_1h_2^*+h_2h_1^*,
\qquad
D^3T(u)\equiv 0,
\]
and
\[
Ds(u)[h]=\beta_u(h),
\qquad
D^2s(u)[h_1,h_2]=\gamma(h_1,h_2),
\qquad
D^3s(u)\equiv 0.
\]

Now let \(r(u):=s(u)^{-1}\). Then
\[
Dr(u)[h]=-\frac{\beta_u(h)}{s(u)^2},
\]
\[
D^2r(u)[h_1,h_2]
=
\frac{2\beta_u(h_1)\beta_u(h_2)}{s(u)^3}
-
\frac{\gamma(h_1,h_2)}{s(u)^2},
\]
and
\begin{align*}
D^3r(u)[h_1,h_2,h_3]
&=
-\frac{6\beta_u(h_1)\beta_u(h_2)\beta_u(h_3)}{s(u)^4}\\
&\quad
+\frac{2\bigl(\gamma(h_1,h_2)\beta_u(h_3)+\gamma(h_1,h_3)\beta_u(h_2)+\gamma(h_2,h_3)\beta_u(h_1)\bigr)}{s(u)^3}.
\end{align*}

Since \(P_\delta=T\,r\), the product rule gives
\[
DP_\delta(u)[h]
=
\frac{uh^*+hu^*}{s(u)}
-
\frac{\beta_u(h)}{s(u)^2}uu^*.
\]

Differentiating once more,
\begin{align*}
D^2P_\delta(u)[h_1,h_2]
&=
\frac{h_1h_2^*+h_2h_1^*}{s(u)}\\
&\quad
-\frac{\beta_u(h_2)(uh_1^*+h_1u^*)+\beta_u(h_1)(uh_2^*+h_2u^*)}{s(u)^2}\\
&\quad
-\frac{\gamma(h_1,h_2)}{s(u)^2}uu^*
+
\frac{2\beta_u(h_1)\beta_u(h_2)}{s(u)^3}uu^*.
\end{align*}

Differentiating a third time yields
\begin{align*}
D^3P_\delta(u)[h_1,h_2,h_3]
&=
-\frac{\beta_u(h_3)(h_1h_2^*+h_2h_1^*)+\beta_u(h_2)(h_1h_3^*+h_3h_1^*)+\beta_u(h_1)(h_2h_3^*+h_3h_2^*)}{s(u)^2}\\
&\quad
-\frac{\gamma(h_2,h_3)(uh_1^*+h_1u^*)+\gamma(h_1,h_3)(uh_2^*+h_2u^*)+\gamma(h_1,h_2)(uh_3^*+h_3u^*)}{s(u)^2}\\
&\quad
+\frac{2\beta_u(h_2)\beta_u(h_3)(uh_1^*+h_1u^*)+2\beta_u(h_1)\beta_u(h_3)(uh_2^*+h_2u^*)+2\beta_u(h_1)\beta_u(h_2)(uh_3^*+h_3u^*)}{s(u)^3}\\
&\quad
+\frac{2\bigl(\gamma(h_1,h_2)\beta_u(h_3)+\gamma(h_1,h_3)\beta_u(h_2)+\gamma(h_2,h_3)\beta_u(h_1)\bigr)}{s(u)^3}uu^*\\
&\quad
-\frac{6\beta_u(h_1)\beta_u(h_2)\beta_u(h_3)}{s(u)^4}uu^*.
\end{align*}

We now estimate these in operator norm. Assume \(\|h\|_2,\|h_j\|_2\le 1\). Then
\[
\|uh^*+hu^*\|\le 2\|u\|_2,
\qquad
\|h_1h_2^*+h_2h_1^*\|\le 2,
\qquad
\|uu^*\|=\|u\|_2^2,
\]
and
\[
|\beta_u(h)|\le 2\|u\|_2,
\qquad
|\gamma(h_1,h_2)|\le 2.
\]
Therefore
\[
\|DP_\delta(u)\|
\lesssim
\frac{\|u\|_2}{s(u)}+\frac{\|u\|_2^3}{s(u)^2},
\]
\[
\|D^2P_\delta(u)\|
\lesssim
\frac1{s(u)}+\frac{\|u\|_2^2}{s(u)^2}+\frac{\|u\|_2^4}{s(u)^3},
\]
and
\[
\|D^3P_\delta(u)\|
\lesssim
\frac{\|u\|_2}{s(u)^2}
+\frac{\|u\|_2^3}{s(u)^3}
+\frac{\|u\|_2^5}{s(u)^4}.
\]

Using the elementary bound
\[
\frac{\|u\|_2^m}{(\|u\|_2^2+\delta)^\ell}\lesssim_{m,\ell}\delta^{m/2-\ell}
\qquad (m\le 2\ell),
\]
we obtain
\[
\|DP_\delta(u)\|\lesssim \delta^{-1/2},
\qquad
\|D^2P_\delta(u)\|\lesssim \delta^{-1},
\qquad
\|D^3P_\delta(u)\|\lesssim \delta^{-3/2}.
\]
Also,
\[
\|P_\delta(u)\|=\frac{\|u\|_2^2}{\|u\|_2^2+\delta}\le 1.
\]
This proves the lemma.
\end{proof}

\begin{prop}
    \label{prop:frechet-third}    
    Let
    \[
    \Phi_\delta(u,w):=\frac{|\langle u,w\rangle|^2}{\|u\|_2^2+\delta}.
    \]
    Then
    \[
    \|D_u^3\Phi_\delta(u,w)\|\lesssim \delta^{-3/2}\|w\|_2^2,\qquad
    \|D_u^2D_w\Phi_\delta(u,w)\|\lesssim\delta^{-1}\|w\|_2,\qquad
    \|D_uD_w^2\Phi_\delta(u,w)\|\lesssim\delta^{-1/2},
    \]
    and \(D_w^3\Phi_\delta(u,w)\equiv0\).
\end{prop}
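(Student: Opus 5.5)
The plan is to rewrite $\Phi_\delta$ as a bilinear pairing between the matrix-valued map $P_\delta$ of \Cref{lem:master-frechet} and the rank-one matrix $ww^*$:
\[
\Phi_\delta(u,w)=\frac{w^*uu^*w}{\|u\|_2^2+\delta}=\bra{w}P_\delta(u)\ket{w}=\tr\big[P_\delta(u)\,ww^*\big].
\]
This expression is a product of a function of $u$ alone, namely $P_\delta(u)$, and a real-quadratic function of $w$ alone, namely $Q(w):=ww^*$. Consequently every mixed Fr\'echet derivative factorizes:
\[
D_u^aD_w^b\Phi_\delta(u,w)[h_1,\dots,h_a;k_1,\dots,k_b]=\tr\big[\frechet^aP_\delta(u)[h_1,\dots,h_a]\cdot \frechet^bQ(w)[k_1,\dots,k_b]\big].
\]

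Next we record the derivatives of $Q$, viewed as a real map $\C^2\cong\mathbb R^4\to M_2(\C)$:
\[
\frechet Q(w)[k]=wk^*+kw^*,\qquad \frechet^2Q(w)[k_1,k_2]=k_1k_2^*+k_2k_1^*,\qquad \frechet^3Q\equiv0.
\]
For unit directions these satisfy $\|Q(w)\|_1=\|w\|_2^2$, $\|\frechet Q(w)[k]\|_1\le 2\|w\|_2$, and $\|\frechet^2Q[k_1,k_2]\|_1\le 2$ in trace norm, since each is a sum of at most two rank-one matrices.

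We then apply Hölder's inequality for matrices, $|\tr[XY]|\le\|X\|_{\op}\|Y\|_1$, together with the bounds $\|\frechet^aP_\delta(u)\|\lesssim\delta^{-a/2}$ from \Cref{lem:master-frechet}:
\begin{itemize}
\item $(a,b)=(3,0)$ gives $\|D_u^3\Phi_\delta\|\lesssim\delta^{-3/2}\|w\|_2^2$.
\item $(a,b)=(2,1)$ gives $\|D_u^2D_w\Phi_\delta\|\lesssim\delta^{-1}\|w\|_2$.
\item $(a,b)=(1,2)$ gives $\|D_uD_w^2\Phi_\delta\|\lesssim\delta^{-1/2}$.
\end{itemize}
Finally, $D_w^3\Phi_\delta=\tr[P_\delta(u)\,\frechet^3Q(w)]\equiv0$.

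The only point that needs care is justifying that the mixed derivatives factorize in this way. This holds because the map $(X,Y)\mapsto\tr[XY]$ is real-bilinear and $P_\delta$ and $Q$ depend on disjoint sets of variables. No real obstacle is expected, since all the analytic work already sits in \Cref{lem:master-frechet}.
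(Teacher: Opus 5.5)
Your proof is correct and is essentially the paper's argument: the paper also writes $\Phi_\delta(u,w)=w^*P_\delta(u)w$, computes the $w$-derivatives explicitly as $2\Re\langle P_\delta(u)w,k\rangle$ and $2\Re\langle P_\delta(u)k_1,k_2\rangle$, then differentiates in $u$ and invokes \Cref{lem:master-frechet}. Your trace pairing with $\frechet^b(ww^*)$, bounded via $|\tr[XY]|\le\|X\|_{\op}\|Y\|_1$, is a compact repackaging of that same computation.
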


\begin{proof}
Write
\[
\Phi_\delta(u,w)=w^*P_\delta(u)w.
\]
Since \(\Phi_\delta\) is quadratic in \(w\), we immediately have
\[
D_w^3\Phi_\delta(u,w)\equiv0.
\]

Also,
\[
D_w\Phi_\delta(u,w)[k]=2\Re\langle P_\delta(u)w,k\rangle,
\qquad
D_w^2\Phi_\delta(u,w)[k_1,k_2]=2\Re\langle P_\delta(u)k_1,k_2\rangle.
\]
Hence
\[
D_uD_w^2\Phi_\delta(u,w)[h,k_1,k_2]
=
2\Re\langle DP_\delta(u)[h]k_1,k_2\rangle,
\]
so
\[
\|D_uD_w^2\Phi_\delta(u,w)\|
\lesssim
\|DP_\delta(u)\|
\lesssim
\delta^{-1/2}
\]
by \Cref{lem:master-frechet}.

Similarly,
\[
D_u^2D_w\Phi_\delta(u,w)[h_1,h_2,k]
=
2\Re\langle D^2P_\delta(u)[h_1,h_2]w,k\rangle,
\]
and therefore
\[
\|D_u^2D_w\Phi_\delta(u,w)\|
\lesssim
\|D^2P_\delta(u)\|\,\|w\|_2
\lesssim
\delta^{-1}\|w\|_2.
\]

Finally,
\[
D_u^3\Phi_\delta(u,w)[h_1,h_2,h_3]
=
w^*D^3P_\delta(u)[h_1,h_2,h_3]w,
\]
whence
\[
\|D_u^3\Phi_\delta(u,w)\|
\lesssim
\|D^3P_\delta(u)\|\,\|w\|_2^2
\lesssim
\delta^{-3/2}\|w\|_2^2.
\]
This proves the proposition.
\end{proof}

With these technical facts established, we can move on to the main comparison of this subsection.

\begin{prop}[Smooth replacement]
\label{prop:smooth-replacement}
For every fixed $i\in[n]$ and every $\delta=\eta/N$,
\[
\Big|\mathop{\E}_{f\sim F_\mu} A_\delta^{(i)}-\mathop{\E}_{\textnormal{Gauss}(\mu)} A_{\delta}^{(i)}\Big|
\lesssim_\eta \|\mu\|_\infty^{1/2}.
\]
\end{prop}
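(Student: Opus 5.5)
The plan is a Lindeberg replacement argument on the underlying randomness $(\omega_x)_{x}$. In the model $f\sim F_\mu$ we write $f(x)=\sqrt{\mu(x)}\,\omega_x$ with $\omega_x$ i.i.d.\ Steinhaus, and in $\mathrm{Gauss}(\mu)$ we write $f(x)=\sqrt{\mu(x)}\,\xi_x$ with $\xi_x$ i.i.d.\ $\mc{CN}(0,1)$. Since $q$ is fixed, $g=fq$ is also linear in the same variables, with $g(x)=\sqrt{\mu(x)}q(x)\cdot(\text{same variable})$. We order $\{0,1\}^n$ as $x_1,\dots,x_N$ and define hybrids $Z^{(k)}$ that use $\xi$ at the first $k$ points and $\omega$ at the rest. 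We then write
\[
\E_{F_\mu}A^{(i)}_\delta-\E_{\mathrm{Gauss}}A^{(i)}_\delta=\sum_{k}\big(\E A^{(i)}_\delta(Z^{(k-1)})-\E A^{(i)}_\delta(Z^{(k)})\big).
\]

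For a single swap at point $x$, we view $A^{(i)}_\delta$ as a function of the single complex coordinate $z=z_x\in\C\cong\R^2$ and Taylor-expand to third order around $z=0$. The Steinhaus and $\mc{CN}(0,1)$ variables have matching moments up to order two: both have mean zero, $\E|z|^2=1$, and $\E z^2=0$. All of these moments are independent of the other coordinates, so the zeroth, first and second order terms cancel in expectation. What remains is the third-order remainder, bounded by $\E|z|^3\cdot\sup_{z}\|\frechet^3_z A^{(i)}_\delta\|$, where the supremum runs along the segment and everything is conditional on the other coordinates.

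To bound this third derivative, note that $z_x$ enters every $u_s$ linearly via $\partial u_s/\partial z_x=\sqrt{\mu(x)/N}\,(\chi_s(x),\chi_{s+i}(x))$, which has norm $\sqrt{2\mu(x)/N}$. Likewise it enters $w_s$ with derivative of norm $\sqrt{2\mu(x)/N}\,|q(x)|$. The chain rule combined with \Cref{prop:frechet-third} gives, for each $s$, a contribution of order
\[
\Big(\tfrac{\mu(x)}{N}\Big)^{3/2}\Big(\delta^{-3/2}\|w_s\|_2^2+\delta^{-1}\|w_s\|_2|q(x)|+\delta^{-1/2}|q(x)|^2\Big).
\]
With $\delta=\eta/N$ the factors of $N$ combine to $\mu(x)^{3/2}(\eta^{-3/2}N\|w_s\|^2+\cdots)$. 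Summing over $s$ uses $\sum_s\|w_s\|_2^2=2\|g\|_2^2$, whose expectation is of order $1$ conditionally on the frozen coordinates. The cross and last terms, after summing over $s$, contribute $\mu(x)^{3/2}N^{-1/2}(\ldots)$ with factors $|q(x)|$ and $|q(x)|^2$. We control these using $\mu(x)|q(x)|^2\le1$ from compatibility, or the regularity hypothesis from \Cref{thm:avg-hat-fixed-q} if needed. Summing over $x$ then yields a bound $\lesssim_\eta\sum_x\mu(x)^{3/2}\cdot O(1)\le\|\mu\|_\infty^{1/2}\sum_x\mu(x)=\|\mu\|_\infty^{1/2}$.

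The main obstacle is the first term. Its naive size is $\mu(x)^{3/2}N\sum_s\|w_s\|^2$, which carries an extra factor of $N$ unless we exploit the fact that $\|w_s\|^2$ has typical size $1/N$ per $s$, while the sum over $N$ values of $s$ gives $O(1)$. The bookkeeping must be done carefully. The $\sup$ along the interpolation segment depends on $z$ itself, since $w_s$ shifts by $O(\sqrt{\mu(x)/N}|q(x)||z|)$. We handle this by bounding $\|w_s\|^2\le 2\|w_s|_{z=0}\|^2+2\mu(x)|q(x)|^2|z|^2/N$ and using that Gaussian and Steinhaus variables have finite fifth moments. We would check that the correct normalization is $\sum_s\|w_s\|_2^2=2$ rather than $2N$. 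If the $\delta^{-3/2}$ factor leaves an extra $N^{1/2}$, the fallback is to use the sharper form of \Cref{lem:master-frechet}, namely $\|u\|^m/(\|u\|^2+\delta)^\ell$, together with the typical size $\|u_s\|^2\sim1/N$, so that each derivative contributes only the expected scale.
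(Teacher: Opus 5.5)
Your skeleton matches the paper's proof: Lindeberg hybrids over the physical coordinates, second-order moment matching of Steinhaus and $\mathcal{CN}(0,1)$, the chain rule with increments of size $\sqrt{\mu(x)/N}$ and $\sqrt{\mu(x)/N}\,|q(x)|$ fed into \Cref{prop:frechet-third}, and Parseval for $\sum_s\|w_s\|_2^2$ along the segment. However, the two estimates that make up the proposition are mishandled.

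First, your ``main obstacle'' is an arithmetic slip. With $\delta=\eta/N$, your own per-$s$ bound is $(\mu(x)/N)^{3/2}\delta^{-3/2}\|w_s\|_2^2=\mu(x)^{3/2}\eta^{-3/2}\|w_s\|_2^2$. There is no leftover factor of $N$, so after summing over $s$ this term is $\mu(x)^{3/2}\eta^{-3/2}\cdot O(1)$ in expectation. Had there been a genuine extra $N$, neither of your remedies would remove it. ``Typical size $1/N$ per $s$'' is already what $\sum_s\|w_s\|_2^2=O(1)$ encodes. And since $\|u_s\|_2^2\asymp 1/N\asymp\delta$, the sharper quantity $\|u\|_2^m/(\|u\|_2^2+\delta)^\ell$ behind \Cref{lem:master-frechet} is of the same order $\delta^{m/2-\ell}$, so it gains nothing.

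Second, and this is a real gap in the final summation: your conclusion ``$\sum_x\mu(x)^{3/2}\cdot O(1)$'' does not follow. After substituting $\delta$ and using $\sum_s\|w_s\|_2\le\sqrt{2N}\,\|g\|_2$, the remaining terms are $\lesssim_\eta \mu(x)^{3/2}\bigl(|q(x)|+|q(x)|^2\bigr)$, and $|q(x)|$ is not $O(1)$ pointwise. The controls you suggest fail:
\begin{itemize}
\item The pointwise bound $\mu(x)|q(x)|^2\le 1$ gives only $\mu(x)^{1/2}$ per point, and $\sum_x\mu(x)^{1/2}$ is of order $\sqrt N$ for typical $\mu$.
\item The flatness hypothesis gives only $\|\mu\|_\infty^{1/3}\sum_x\mu(x)^{1/2}=N^{1/6+o(1)}$. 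It is also not a hypothesis of this proposition, since \Cref{lem:fixed-q-expect} invokes the proposition for every $\mu$-compatible $q$.
\end{itemize}
What you need is to bound $|q|\le\frac12(1+|q|^2)$ and then sum with the $\mu$-weights:
\[
\sum_x\mu(x)^{3/2}|q(x)|^2\le\|\mu\|_\infty^{1/2}\sum_x\mu(x)|q(x)|^2=\|\mu\|_\infty^{1/2}.
\]
Together with $\sum_x\mu(x)^{3/2}\le\|\mu\|_\infty^{1/2}$, this gives the claimed bound. With these two corrections your argument becomes the paper's proof.
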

\begin{proof}
    We make a Lindeberg replacement argument.
    Let $\xi_j\overset{\mathsf{iid}}{\sim}\mc{CN}(0,1)$, $j\in[N]$ and $\omega_j\overset{\mathsf{iid}}{\sim}\mathrm{Stein}$, $j\in[N]$.
    We will identify these indices $j$ with points $x\in \{0,1\}^n$; any bijection $\pi:[N]\to \{0,1\}^n$ will do.
    Define the ``hybrid'' joint distributions
    \[Z^{(m)}=(\xi_{1},\ldots, \xi_{m},\omega_{m+1},\ldots, \omega_{N}).\]

    Let us consider $A^{(i)}_\delta$ as an explicit function of the underlying randomness in any of these hybrids: write $A^{(i)}_\delta(Z^{(m)})$ to denote $A^{(i)}_\delta$ according to $f(\pi(k))=\sqrt{\mu(\pi(k))}\xi_{k}$ for all $k\leq m$ and $f(\pi(k))=\sqrt{\mu(\pi(k))}\omega_{k}$ for $k>m$.
    Then
    \begin{align*}
    \mathop{\E}_{f\sim F_\mu}A^{(i)}_\delta - \E_{\mathrm{Gauss}(\mu)}A_\delta^{(i)} =\E A^{(i)}_\delta(Z^{(0)})-\E A^{(i)}_\delta(Z^{(N)})
    = \sum_{m=1}^N\E\left[A^{(i)}_\delta(Z^{(m-1)})-A^{(i)}_\delta(Z^{(m)})\right],
    \end{align*}
    so it suffices to control the error of the increments.

    To that end, let us fix $m$, freeze variables except those for the $m$\textsuperscript{th} coordinate and define
    \begin{align*}
    h:\C\cong \mathbb R^2 &\to \mathbb R\\
    z\hspace{1.6em}&\mapsto A^{(i)}_\delta(\xi_1,\ldots, \xi_{m-1},z,\omega_{m+1},\ldots, \omega_{N})\,.
    \end{align*}
    Taking the second-order Taylor expansion of $h$ at $0$, we get
    \[h(z)=h(0)+Dh(0)[z]+\frac12 D^2h(0)[z,z]+R(z).\]
    The remainder $R$ satisfies
    \[R(z)=\int_0^1\frac{(1-t)^2}{2}D^3h(tz)[z,z,z]\,\mathrm{d}t,\qquad\text{so}\qquad |R(z)|\leq \frac12|z|^3\sup_{0\leq t\leq 1}\|D^3 h(tz)\|\,.\]

    Viewed as random vectors in $\mathbb R^2$, the $\xi$'s and $\omega$'s have matching means and covariance matrices,
    so the constant, linear, and quadratic terms have the same expectation.
    Thus:
    \begin{align*}
    \big|\E[h(\omega_{m})-h(\xi_{m})]\big|&\;=\;|\E R(\omega_m)-\E R(\xi_m)|\\
    &\;\leq\;\E|R(\omega_m)|+\E|R(\xi_m)|\\
    &\;\lesssim\;\E\left[\sup_{0\leq t\leq 1}\|D^3h(t\omega_m)\|\right]+\E\left[|\xi_{m}|^3\sup_{0\leq t\leq 1}\|D^3h(t\xi_m)\|\right]\,,
    \end{align*}
    so it remains to bound the third derivative of $h$.

    Recall that
    \[
    h(z)=\frac12\sum_s \Phi_\delta(u_s(z),w_s(z)),
    \]
    where $u_s(z) = \big((\wh{f}(s))(z),(\wh{f}(s+i))(z)\big)$ and $w_s(z) = \big((\wh{fq}(s))(z),(\wh{fq}(s+i))(z)\big)$.
    For each $s$, the dependence of $u_s$ and $w_s$ on the $m$th coordinate is affine, so we may make the definitions
    \[
    u_s(z)=:u_s^{[m]}+z\,a_{m,s},
    \qquad
    w_s(z)=:w_s^{[m]}+z\,b_{m,s},
    \]
    where
    \[
    a_{m,s}=\frac{\sqrt{\mu(\pi(m))}}{\sqrt N}\big(\chi_s(\pi(m)),\chi_{s+i}(\pi(m))\big),
    \qquad
    b_{m,s}=q(\pi(m))\,a_{m,s}.
    \]
    In particular,
    \[
    \|a_{m,s}\|_2\asymp \sqrt{\mu(\pi(m))/N},
    \qquad
    \|b_{m,s}\|_2\asymp \sqrt{\mu(\pi(m))/N}\,|q(\pi(m))|.
    \]
    
    With this notation, the chain rule and the bounds on the third Fr\'echet derivatives of $\Phi_\delta$ in \Cref{prop:frechet-third} give
    \begin{align*}
    \|D^3 h(z)\|
    &\lesssim
    \sum_s\Big(
    \|D_u^3\Phi_\delta(u_s(z),w_s(z))\|\,\|a_{m,s}\|_2^3\\
    &\hspace{4em}
    +\|D_u^2D_w\Phi_\delta(u_s(z),w_s(z))\|\,\|a_{m,s}\|_2^2\|b_{m,s}\|_2\\
    &\hspace{4em}
    +\|D_uD_w^2\Phi_\delta(u_s(z),w_s(z))\|\,\|a_{m,s}\|_2\|b_{m,s}\|_2^2
    \Big)\\
    &\lesssim
    \mu(\pi(m))^{3/2}\Bigg[
    \delta^{-3/2}N^{-3/2}\sum_s\|w_s(z)\|_2^2
    +\delta^{-1}|q(\pi(m))|\,N^{-3/2}\sum_s\|w_s(z)\|_2\\
    &\hspace{11em}
    +\delta^{-1/2}|q(\pi(m))|^2N^{-1/2}
    \Bigg].
    \end{align*}

    Write
\[
Y^{(m,\zeta)}:=(\xi_1,\dots,\xi_{m-1},\zeta,\omega_{m+1},\dots,\omega_N),
\]
so that \(h(\zeta)=A_\delta^{(i)}(Y^{(m,\zeta)})\). Then Parseval gives
\[
\sum_s \|w_s(\zeta)\|_2^2
=
2\sum_{k=1}^N \mu(\pi(k))|q(\pi(k))|^2\,|Y_k^{(m,\zeta)}|^2.
\]

For the Steinhaus interpolation segment, since $|\omega_m|=1$,
\[
\begin{aligned}
\sup_{0\leq t\leq1}
\sum_s\|w_s(t\omega_m)\|_2^2
&=
2\sum_{k<m}
\mu(\pi(k))|q(\pi(k))|^2|\xi_k|^2+
2\sum_{k\geq m}
\mu(\pi(k))|q(\pi(k))|^2.
\end{aligned}
\]
Taking expectations and using $\E|\xi_k|^2=1$ gives
\[
\begin{aligned}
\E\left[
    \sup_{0\leq t\leq1}
    \sum_s\|w_s(t\omega_m)\|_2^2
\right]
&=
2\sum_k
\mu(\pi(k))|q(\pi(k))|^2
=
2.
\end{aligned}
\]

For the Gaussian interpolation segment,
\[
\begin{aligned}
\sup_{0\leq t\leq1}
\sum_s\|w_s(t\xi_m)\|_2^2
&=
2\sum_{k<m}
\mu(\pi(k))|q(\pi(k))|^2|\xi_k|^2+
2\mu(\pi(m))|q(\pi(m))|^2|\xi_m|^2\\
&\qquad+
2\sum_{k>m}
\mu(\pi(k))|q(\pi(k))|^2.
\end{aligned}
\]
Therefore, by independence and the finiteness of the Gaussian moments,
\begin{align*}
\E\left[
    |\xi_m|^3
    \sup_{0\leq t\leq1}
    \sum_s\|w_s(t\xi_m)\|_2^2
\right]
&=
2\E|\xi_m|^3
\sum_{k<m}
\mu(\pi(k))|q(\pi(k))|^2\E|\xi_k|^2\\
&\hspace{3em}+
2\mu(\pi(m))|q(\pi(m))|^2\E|\xi_m|^5
\\
&\hspace{3em}+
2\E|\xi_m|^3
\sum_{k>m}
\mu(\pi(k))|q(\pi(k))|^2
\\
&\lesssim
\sum_k
\mu(\pi(k))|q(\pi(k))|^2
\\
&=
1.
\end{align*}

Also, by Cauchy--Schwarz,
\[
    \sum_s\|w_s(z)\|_2
    \leq
    \sqrt N
    \left(
        \sum_s\|w_s(z)\|_2^2
    \right)^{1/2}.
\]
Consequently,
\begin{align*}
\E\left[
    \sup_{0\leq t\leq1}
    \sum_s\|w_s(t\omega_m)\|_2
\right]
&\leq
\sqrt N
\left(
    \E\left[
        \sup_{0\leq t\leq1}
        \sum_s\|w_s(t\omega_m)\|_2^2
    \right]
\right)^{1/2}\lesssim
\sqrt N,
\end{align*}
and
\begin{align*}
\E\left[
    |\xi_m|^3
    \sup_{0\leq t\leq1}
    \sum_s\|w_s(t\xi_m)\|_2
\right]&\leq
\sqrt N
\left(
    \E|\xi_m|^3
\right)^{1/2}
\left(
    \E\left[
        |\xi_m|^3
        \sup_{0\leq t\leq1}
        \sum_s\|w_s(t\xi_m)\|_2^2
    \right]
\right)^{1/2}
\\
&\lesssim
\sqrt N.
\end{align*}

    Substituting $\delta=\eta/N$, we conclude that
    \[
    \big|\E[h(\omega_m)-h(\xi_m)]\big|
    \lesssim_\eta \mu(\pi(m))^{3/2}\bigl(1+|q(\pi(m))|^2\bigr).
    \]
    Summing over $m$ in the Lindeberg telescoping identity yields
    \[
    \Big|\mathop{\E}_{f\sim F_\mu} A_\delta^{(i)}-\mathop{\E}_{\mathrm{Gauss}(\mu)} A_\delta^{(i)}\Big|
    \lesssim_\eta
    \sum_x \mu(x)^{3/2}(1+|q(x)|^2).
    \]
    Finally,
    \[
    \sum_x \mu(x)^{3/2}\le \|\mu\|_\infty^{1/2},
    \qquad
    \sum_x \mu(x)^{3/2}|q(x)|^2
    \le
    \|\mu\|_\infty^{1/2}\sum_x \mu(x)|q(x)|^2
    =
    \|\mu\|_\infty^{1/2},
    \]
    so
    \[
    \Big|\mathop{\E}_{f\sim F_\mu} A_\delta^{(i)}-\mathop{\E}_{\mathrm{Gauss}(\mu)} A_\delta^{(i)}\Big|
    \lesssim_\eta \|\mu\|_\infty^{1/2}.
    \]
    This proves the proposition.
\end{proof}

\subsubsection{Bounding $\E B_\delta^{(i)}$}
This subsection is devoted to the following:
\begin{prop}
	\label{lem:EB-bound}
	For any $p\in (0,1)$ and with $\delta=\eta/N$,
	\[\mathop{\E}_{f\sim F_\mu} B_\delta \lesssim \eta^{p/4}.\]
\end{prop}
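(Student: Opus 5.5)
The plan is to bound each summand of $B^{(i)}_\delta=\frac12\sum_s \frac{\delta\|w_s\|_2^2}{\|u_s\|_2^2+\delta}$ separately and then sum over the $N$ values of $s$. The difficulty lies entirely in the denominator. The numerator is controlled by the fourth-moment bound \eqref{ineq:Ew_s} from the proof of \Cref{lem:w4}, so everything reduces to a small-ball (negative-moment) estimate for $\|u_s\|_2^2$ in the Steinhaus model. The proof will likely need the hypotheses of \Cref{thm:avg-hat-fixed-q}: $\mu$ typical, in particular $\|\mu\|_\infty=N^{-1+o(1)}$ with high probability, which should be part of the stated hypotheses.

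\emph{Step 1 (H\"older split).} For each $s$, apply Cauchy--Schwarz in the randomness:
\[
\E\left[\|w_s\|_2^2\,\frac{\delta}{\|u_s\|_2^2+\delta}\right]\le \big(\E\|w_s\|_2^4\big)^{1/2}\left(\E\min\Big(1,\tfrac{\delta}{\|u_s\|_2^2}\Big)^{2}\right)^{1/2}\le \frac{\sqrt8}{N}\left(\E\min\Big(1,\tfrac{\delta}{\|u_s\|_2^2}\Big)^{p}\right)^{1/2}.
\]
Here we use $\E\|w_s\|_2^4\le 8/N^2$ and $\min(1,y)^2\le\min(1,y)^p$ for $p\in(0,1)$. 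Summing over $N$ values of $s$, it suffices to show that for every $s$,
\[
\E\min(1,\delta/\|u_s\|_2^2)^p\lesssim_p \eta^{p/2}.
\]
Then $\E B_\delta\lesssim \eta^{p/4}$, which is where the stated exponent comes from.

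\emph{Step 2 (reduction to a small-ball bound).} Since $\|u_s\|_2^2\ge|\wh f(s)|^2$, it suffices to control the scalar Steinhaus sum
\[
S=\sqrt N\,\wh f(s)=\sum_x \sqrt{\mu(x)}\,\chi_s(x)\,\omega_x,\qquad \E|S|^2=1.
\]
Write $\E\min(1,\delta/|\wh f(s)|^2)^p=\int_0^1\Pr\big[|S|^2<\eta\, t^{-1/p}\big]\,dt$. The goal is the small-ball bound
\[
\Pr[|S|\le r]\lesssim r+o(1).
\]
This bound is stated in the lossy form $r$ rather than $r^2$; the $\sqrt{\cdot}$ loss is harmless and matches the $\eta^{p/2}$ target. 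Substituting $r^2=\eta t^{-1/p}$ then gives $\int_0^1\min(1,\eta^{1/2}t^{-1/(2p)})\,dt\lesssim_p\eta^{p}$, which is even stronger than needed. Alternatively, the cruder bound $r$ can be obtained by conditioning on all but one Steinhaus variable of large weight, using $\Pr[|a\omega+b|\le r]\lesssim\sqrt{r/|a|}$. The $o(1)$ additive error, which is $N^{-\Omega(1)}$ for typical $\mu$, is absorbed because $\eta$ is fixed while $n\to\infty$.

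\emph{Step 3 (main obstacle: the small-ball estimate).} I would prove the small-ball bound via Esseen's inequality in $\mathbb R^2$:
\[
\Pr[|S|\le r]\lesssim r^2\int_{|\xi|\le 1/r}|\varphi_S(\xi)|\,d\xi,\qquad \varphi_S(\xi)=\prod_x J_0\big(|\xi|\sqrt{\mu(x)}\big).
\]
For $|y|\le 1$ we have $|J_0(y)|\le e^{-cy^2}$. Because $\|\mu\|_\infty\le N^{-1+o(1)}$, every factor stays in this regime for $|\xi|\le N^{1/2-o(1)}$, so $|\varphi_S(\xi)|\le e^{-c|\xi|^2}$ there. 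This yields $\Pr[|S|\le r]\lesssim r^2$ for all $r\ge N^{-1/2+o(1)}$. Smaller radii are handled by monotonicity: their probability is at most the $N^{-1+o(1)}$ obtained at $r=N^{-1/2+o(1)}$, which is the additive error.

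The delicate point is making this uniform over the realized $\mu$ and not relying on exact product structure. If that proves awkward, the fallback is a Lindeberg comparison, as in \Cref{prop:smooth-replacement}, of the smooth test function $\min(1,\delta/(|S|^2+\delta'))^p$ against its Gaussian counterpart. In the Gaussian model $|S|^2\sim\mathrm{Exp}(1)$, which gives $\E\min(1,\eta/|S|^2)^p\lesssim_p\eta^p$ directly. The Lindeberg error would be $\lesssim_{\eta}\|\mu\|_\infty^{1/2}=o(1)$.
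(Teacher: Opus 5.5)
Your opening reduction matches the paper's: Cauchy--Schwarz splits off $\sum_s\|w_s\|_2^4$, whose expectation is $\lesssim 1/N$, and then $\|u_s\|_2^2\ge|\wh f(s)|^2$ reduces everything to the scalar Steinhaus sum $S=\sqrt N\,\wh f(s)$. The two arguments diverge in the anticoncentration input, and there your argument does not prove the proposition as written. The statement is claimed for every $\mu$ with no additive error, and it holds in that form.

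The paper gets it as follows. It uses $\frac{1}{1+y}\le y^{-p/2}$ to obtain $\E\frac{\delta}{\|u_s\|_2^2+\delta}\le \eta^{p/2}\,\E|S|^{-p}$. It then applies the negative-moment Khintchine inequality for Steinhaus sums (Rapaport--Tkocz--Wu), $(\E|S|^{-p})^{-1/p}\ge C_p(\E|S|^2)^{1/2}=C_p$. This inequality holds uniformly over all coefficient vectors for $p\in(0,1)$, and the range is sharp: two equal coefficients make $\E|S|^{-1}$ diverge.

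Your Esseen argument controls $\Pr[|S|\le r]$ only for $r\gtrsim\|\mu\|_\infty^{1/2}$. What it actually yields is
\[
\E_{f\sim F_\mu}B_\delta\;\lesssim_p\;\eta^{p/2}+\|\mu\|_\infty^{1/2},
\]
and this implies the stated bound only when $\|\mu\|_\infty\lesssim\eta^{p/2}$. The missing ingredient is a small-ball or negative-moment estimate that is uniform in the coefficients. Your remark that the proposition ``should'' carry typicality hypotheses is therefore mistaken; with the right tool none are needed.

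Your weaker version does suffice everywhere the proposition is used. In the proof of the expectation lemma, an error $C_\eta\|\mu\|_\infty^{1/2}$ is already absorbed from the smooth-replacement step. In the per-coordinate concentration proposition, the hypothesis $(t^8/\|\mu\|_\infty)^{1/3}\ge C\log N$ forces $\|\mu\|_\infty^{1/2}\ll t$.

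Three points in the proposal should also be corrected.

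(i) The ``delicate point'' you flag is not one. Conditional on $\mu$ we have $f(x)=\sqrt{\mu(x)}\,\omega_x$ with i.i.d.\ Steinhaus $\omega_x$, so $\varphi_S(\xi)=\prod_x J_0(|\xi|\sqrt{\mu(x)})$ holds exactly. Esseen then gives $\Pr[|S|\le r]\lesssim r^2+\|\mu\|_\infty$ for every $\mu$, with no typicality assumption beyond smallness of $\|\mu\|_\infty$.

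(ii) The conditioning fallback cannot give a bound $\lesssim r$. For typical $\mu$ every weight is about $N^{-1/2}$, so conditioning on a single phase gives at best $\Pr[|S|\le r]\lesssim r/\sqrt{\mu(x)}\approx r\sqrt N$, which is useless.

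(iii) Plugging the lossy bound $\Pr[|S|\le r]\lesssim r$ into your layer-cake integral gives $\eta^{\min(p,1/2)}$, with a logarithmic factor at $p=1/2$, not $\eta^{p}$. This is still at most $\eta^{p/2}$, so your conclusion survives. With the $r^2$ bound that Esseen actually provides, you do get $\eta^p$.
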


\begin{proof}
To begin, separate the $w$'s and the $u$'s with Cauchy--Schwarz:
\begin{align*}
    B_\delta = \sum_s \|w_s\|_2^2\cdot\frac{\delta}{\|u_s\|_2^2+\delta} \leq \left(\sum_s\|w_s\|_2^4\right)^{1/2}\left(\sum_s\left(\frac{\delta}{ \|u_s\|_2^2+\delta}\right)^2\right)^{1/2}
\end{align*}
so that (after another Cauchy--Schwarz),
\begin{align*}\E B_\delta &\leq \left(\E\sum_s\|w_s\|_2^4 \right)^{1/2}\left(\E\sum_s\left(\frac{\delta}{\|u_s\|_2^2+\delta}\right)^2 \right)^{1/2}\\
&\leq \left(\E\sum_s\|w_s\|_2^4 \right)^{1/2}\left(\E\sum_s\frac{\delta}{\|u_s\|_2^2+\delta}\right)^{1/2}
\end{align*}
The first term is controlled again by \Cref{lem:w4}: $\E\sum_s\|w_s\|_2^4\lesssim1/N $.
For the second term, note that for all $y>0$ and $0<p<2$,
    \[\frac{1}{1+y}\leq y^{-p/2}.\]
    Thus,
	\begin{align*}
		\E\frac{\delta}{\|u_s\|_2^2+\delta} &=\E\frac{\eta}{N\|u_s\|_2^2+\eta}\\
        &\leq \eta^{p/2}\E\left[(N\|u_s\|_2^2)^{-p/2}\right]\\
        &\leq \eta^{p/2}\E\left[(N|\wh f(s)|^2)^{-p/2}\right]\\
        &= \left(\frac{\eta}{N}\right)^{p/2}\E\left[|\wh f(s)|^{-p}\right],
	\end{align*}
    where we have used $\|u_s\|_2^2=|\wh f(s)|^2+|\wh f(s+i)|^2\geq |\wh f(s)|^2.$
    Now by the negative-moment Khintchine inequality for Steinhaus sums \cite{rapaport2025negativemomentssteinhaussums}, we have for exponent $q=-p\in(-1,0)$ that,
	\[\left(\E |\wh f(s)|^q\right)^{1/q} \geq C_q \left(\E |\wh f(s)|^2\right)^{1/2} = C_q\left(\frac{1}{N} \E\tsum_{x}\mu(x)\right)^{1/2}=\frac{C_q}{\sqrt{N}}\]
    where $C_q>0$ is an absolute constant depending on $q$ only.
    Therefore $\E|\wh f(s)|^{-p}\leq C_{-p}^{-p}N^{p/2}$ and so
    \[\E\frac{\delta}{\|u_s\|_2^2+\delta}\lesssim_p \eta^{p/2}\]
    Combining these bounds, we conclude:
    \[\E B^{(i)}_\delta\lesssim_p \sqrt{\frac{1}{N}}\sqrt{N\eta^{p/2}}=\eta^{p/4}\,.\qedhere\]
\end{proof}

\subsubsection{Putting it together}

\begin{proof}[Proof of \Cref{lem:fixed-q-expect}]
Fix $t>0$.
Taking $p=1/2$ in \Cref{lem:EB-bound}, there is a universal constant $C>0$
such that, whenever $\delta=\eta/N$,
\[
    \E_{f\sim F_\mu}B_\delta^{(i)}
    \leq
    C\eta^{1/8}.
\]
Choose $\eta\in(0,1]$ sufficiently small that
\[
    C\eta^{1/8}
    \leq
    \frac{t}{3},
\]
and set
\[
    \delta
    :=
    \frac{\eta}{N}.
\]

On the event where \Cref{lem:expect-A-Gauss} holds and
\[
    \|\mu\|_\infty
    =
    N^{-1+o(1)},
\]
the following estimates hold simultaneously for every $\mu$-compatible $q$:
\begin{align*}
\mathop{\E}_{f\sim F_\mu}
\Avg_{\wh f}[\wh{fq}]
&=
\mathop{\E}_{f\sim F_\mu}
\left[
    \frac1n\sum_i A^{(i)}
\right]
\\
&\leq
\mathop{\E}_{f\sim F_\mu}
\left[
    \frac1n\sum_i
    \left(
        A_\delta^{(i)}+B_\delta^{(i)}
    \right)
\right]
\\
&\leq
\mathop{\E}_{f\sim\mathrm{Gauss}(\mu)}
\left[
    \frac1n\sum_i A_\delta^{(i)}
\right]+
\frac1n\sum_i
\left|
    \mathop{\E}_{f\sim F_\mu}A_\delta^{(i)}
    -
    \mathop{\E}_{f\sim\mathrm{Gauss}(\mu)}A_\delta^{(i)}
\right|
+
\frac1n\sum_i
\mathop{\E}_{f\sim F_\mu}B_\delta^{(i)}
\\
&\leq
\mathop{\E}_{f\sim\mathrm{Gauss}(\mu)}
\left[
    \frac1n\sum_i A^{(i)}
\right]
+
C_\eta\|\mu\|_\infty^{1/2}
+
C\eta^{1/8}
\\
&\leq
\frac12
+
\mc O\left(\frac1n\right)
+
C_\eta\|\mu\|_\infty^{1/2}
+
C\eta^{1/8}.
\end{align*}
Here the first inequality uses that $A^{(i)}
    \leq A_\delta^{(i)}+B_\delta^{(i)}$
and the third uses that
$A_\delta^{(i)}
    \leq
    A^{(i)}$
pointwise in the Gaussian model.

The parameter $\eta$ is now fixed.
Therefore, for all sufficiently large $n$,
\[
    \mc O\left(\frac1n\right)
    +
    C_\eta\|\mu\|_\infty^{1/2}
    \leq
    \frac{2t}{3}.
\]
Together with the choice of $\eta$, this gives
\[
    \mathop{\E}_{f\sim F_\mu}
    \Avg_{\wh f}[\wh{fq}]
    \leq
    \frac12+t.
\]
The event used above has probability $1-\exp(-\Omega(n))$ over
$\mu\sim\bs\mu$, which proves the lemma.
\end{proof}

\subsection{ Concentrating $\Avg_{\wh f}[\wh{fq}]$}
\label{sec:fixed-q-conc}

Now that we have shown $\E_{f\sim F_\mu}\Avg_{\wh f}[\wh{fq}]\leq 1/2 + o(1)$, we turn to concentration.
We will show doubly-exponential concentration (in $n$) for each $A^{(i)}$ individually, and then a union bound over coordinates to finish the argument.

Recall our goal is to get an $\exp(-cN^\alpha)$ tail bound on the deviation of the following quantity from its mean
\[A^{(i)}=\sum_s \frac{|\langle u_s,w_s\rangle|^2}{\|u_s\|_2^2}.\]
As mentioned above, $A^{(i)}$ seems to frustrate naive approaches because it does not have a nice gradient.

\begin{theorem}
	There exists a universal $c>0$ such that
	\[\Pr[|A^{(i)}-\E A^{(i)}|>t]\leq \exp(-ct^6N^{1/3})\]
\end{theorem}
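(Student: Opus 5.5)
We follow the smoothing strategy announced earlier: write $A^{(i)} = A^{(i)}_\delta + D^{(i)}_\delta$ with $0\le D^{(i)}_\delta\le B^{(i)}_\delta$, and control the two pieces separately. We take $\delta=\eta/N$ with $\eta=\eta(t)$ a small power of $t$, to be fixed at the end. The main inequality is
\[
|A^{(i)}-\E A^{(i)}|\le |A^{(i)}_\delta-\E A^{(i)}_\delta| + B^{(i)}_\delta + \E B^{(i)}_\delta .
\]
By \Cref{lem:EB-bound}, $\E B^{(i)}_\delta\lesssim \eta^{1/8}$, which is at most $t/4$ for $\eta\asymp t^8$. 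It therefore suffices to prove two tail bounds:
\[
\Pr\big[|A^{(i)}_\delta-\E A^{(i)}_\delta|>t/3\big]\le \exp(-ct^6N^{1/3}), \qquad \Pr\big[B^{(i)}_\delta-\E B^{(i)}_\delta>t/3\big]\le \exp(-ct^6N^{1/3}).
\]
Both are proved by a Herbst argument from the log-Sobolev inequality for the product Steinhaus measure on $\mathbb T^N$. This measure has a dimension-free LSI constant, being a product of circles. We work conditionally on $\mu$, and the randomness is $f(x)=\sqrt{\mu(x)}\,\omega_x$.

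\textbf{Gradient of $A_\delta$.} Write $A^{(i)}_\delta=\frac12\sum_s w_s^*P_\delta(u_s)w_s$, where $u_s$ and $w_s$ are linear in $\omega$ via the Fourier transform. The chain rule and \Cref{lem:master-frechet} give, for a unit tangent direction $h\in\C^N$ (tangential perturbations of the $\omega_x$), a sum of two terms. The $w$-derivative contributes $\sum_s \Re\langle P_\delta(u_s)w_s,\delta w_s\rangle$. Using $\|P_\delta\|\le 1$ and Cauchy--Schwarz together with Parseval, this is at most $(\sum_s\|w_s\|^2)^{1/2}(\sum_s\|\delta w_s\|^2)^{1/2}\lesssim \max_x\sqrt{\mu(x)}|q(x)|$. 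The $u$-derivative contributes at most $\delta^{-1/2}\sum_s\|w_s\|^2\|\delta u_s\|$. By Cauchy--Schwarz this is at most $\delta^{-1/2}W^{1/2}(\sum_s\|\delta u_s\|^2)^{1/2}\lesssim \delta^{-1/2}W^{1/2}\|\mu\|_\infty^{1/2}$, where $W=\sum_s\|w_s\|^4$. Hence
\[
|\nabla A^{(i)}_\delta|^2\lesssim \max_x\mu(x)|q(x)|^2+\eta^{-1}N\,W\,\|\mu\|_\infty .
\]
By \Cref{lem:w4} in the form \eqref{ineq:Wbound}, outside an event of probability $\exp(-c\sqrt K N^{1/3})$ we have $W\le K N^{-1/3}$. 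On this event, with $\|\mu\|_\infty=N^{-1+o(1)}$ and the hypothesis $\max_x\mu|q|^2\le\|\mu\|_\infty^{1/3}$, the squared gradient is at most $\sigma^2:=N^{-1/3+o(1)}(1+K/\eta)$.

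\textbf{Restricting to the good event.} The gradient bound holds only on a good event, so Herbst cannot be applied verbatim. We handle this the standard way: replace $A^{(i)}_\delta$ by a truncated version. Concretely, we either compose with a smooth cutoff in $W$, or apply a Lipschitz extension from the good set (Kirszbraun/McShane), which is $\sigma$-Lipschitz everywhere and agrees with $A^{(i)}_\delta$ on the good set. Herbst then gives Gaussian tails $\exp(-ct^2/\sigma^2)$ for the extension. Its mean differs from $\E A^{(i)}_\delta$ by $O(\Pr[\text{bad}])$, since $A^{(i)}_\delta\lesssim\sum_s\|w_s\|^2$ is integrable with all moments bounded. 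Choosing $K\asymp t^2$ and $\eta\asymp t^8$ balances the two failure probabilities, $\exp(-ct^2N^{1/3}\eta/K)$ and $\exp(-c\sqrt K N^{1/3})$. This gives a tail of the form $\exp(-c\,t^{6}N^{1/3-o(1)})$ after optimizing the powers.

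\textbf{Treatment of $B_\delta$.} The same scheme applies, since $B^{(i)}_\delta=\frac12\sum_s\|w_s\|^2\,\delta/(\|u_s\|^2+\delta)$. The $u$-derivative of $\delta/(\|u\|^2+\delta)$ is bounded by $\delta^{-1/2}$, and the $w$-part is bounded by $\sum_s\|w_s\|\|\delta w_s\|$. So the gradient bound is identical in form. Only the upper tail of $B^{(i)}_\delta$ is needed.

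\textbf{Main obstacle.} The key difficulty is the $u$-derivative. It carries the factor $\delta^{-1/2}=\sqrt{N/\eta}$, which must be compensated by the smallness of $W$ on the typical event. That compensation only holds for $W$ at scale $N^{-1/3}$, not $N^{-1}$, and this is the source of the $N^{1/3}$ exponent. The delicate part is making the truncation/extension rigorous while keeping exact exponents, so the precise power of $t$ may need adjusting. The exponent $t^6$ in the statement suggests $\eta\asymp t^2$, $K\asymp t^2$ or a similar choice, and we would tune these parameters at the end.
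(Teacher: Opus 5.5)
Your skeleton is the paper's: the split $|A^{(i)}-\E A^{(i)}|\le |A^{(i)}_\delta-\E A^{(i)}_\delta|+(B^{(i)}_\delta-\E B^{(i)}_\delta)+2\E B^{(i)}_\delta$, the estimate $\E B^{(i)}_\delta\lesssim\eta^{1/8}$, the carr\'e du champ bound $\|\nabla A^{(i)}_\delta\|_2^2\lesssim\max_x\mu(x)|q(x)|^2+\eta^{-1}N\|\mu\|_\infty W$, and a good event on $W$ from \Cref{lem:w4}. Your directional Cauchy--Schwarz/Parseval derivation of the gradient bound is a clean equivalent of \Cref{prop:A-delta-carre-du-champ}.

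The gap is the step you flag as delicate and then leave open. You need to turn a gradient bound that holds only on $G=\{W\le KN^{-1/3}\}$ into a Gaussian tail at rate $N^{1/3}$, and neither mechanism you name does this as stated.
\begin{itemize}
\item \emph{McShane/Kirszbraun.} This gives a $\sigma$-Lipschitz extension only if $A^{(i)}_\delta$ is already $\sigma$-Lipschitz on $G$ for the geodesic metric of $\mathbb T^N$. A pointwise gradient bound on $G$ says nothing about geodesics that leave $G$.
\item \emph{Smooth cutoff $\phi(W)$.} This adds the term $(A^{(i)}_\delta-m)\,\phi'(W)\nabla W$ to the gradient. With the natural bound $|\nabla W^{1/4}|\lesssim\max_x\sqrt{\mu(x)}|q(x)|\le\|\mu\|_\infty^{1/6}$, this term contributes order $K^{-1/2}N^{-1/6}$ to the squared gradient. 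That swamps $\sigma^2\approx N^{-1/3}$ and yields only $\exp(-ct^2N^{1/6})$.
\end{itemize}

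The paper supplies the missing idea through the localized Herbst bound, \Cref{fact:herbst}. In $(\psi(\lambda)/\lambda)'\lesssim\E[e^{\lambda F}\|\nabla F\|_2^2]/\E e^{\lambda F}$, it splits the expectation over $G$ and $G^c$. On $G^c$ it uses $|F|\le1$, $\E e^{\lambda F}\ge1$, and the crude global bound $\|\nabla A^{(i)}_\delta\|_2^2\lesssim N\|\mu\|_\infty/\eta$ (which follows from $W\le 4$). The resulting error $e^{\lambda}(N\|\mu\|_\infty/\eta)\Pr[G^c]$ is negligible for all $\lambda\lesssim t/b$, because $\Pr[G^c]$ is $\exp(-cN^{1/3-o(1)})$-small.

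Your routes can also be repaired, but each needs an observation you did not supply.
\begin{itemize}
\item For McShane: $W^{1/4}$ is a norm of a linear function of $\omega$, so $G$ intersected with the polydisc is convex in $\C^N$. The same gradient bound holds along chords (since $\sum_s\|w_s\|_2^2\le 2$ there), and chord length is at most geodesic distance.
\item For the cutoff: Hausdorff--Young, $\|\wh F\|_4\le N^{-1/4}\|F\|_{4/3}$, gives $|\nabla W^{1/4}|\lesssim N^{-1/4}(\max_x\mu(x)|q(x)|^2)^{1/4}$. This makes the cutoff term $O(K^{-1/2}N^{-1/2})$, which is harmless.
\end{itemize}

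Your parameter bookkeeping is also off. The choice $K\asymp t^2$, $\eta\asymp t^8$ gives exponent $\min(t^2\eta/K,\sqrt K)\asymp t^8$, which is weaker than $t^6$. The alternative $\eta\asymp t^2$ is incompatible with needing $\E B^{(i)}_\delta\lesssim\eta^{1/8}\le t/4$. Balancing with $K\asymp(t^2\eta)^{2/3}$ and $\eta\asymp t^8$ gives $t^{10/3}$, which suffices for $t\le 1$. For comparison, the paper scales the threshold as $(K/N)(\eta/\|\mu\|_\infty)^{2/3}$ with $K$ fixed and gets $t^{14/3}$.
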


As before, we work with the following smoothed quantity, which has well-behaved derivatives:
\[A_\delta^{(i)} := \frac12\sum_s\frac{|\langle u_s,w_s\rangle|^2}{\|u_s\|_2^2+\delta}\]
Put $D_\delta = A-A_\delta$.
Now $D_\delta$ is not amenable to Herbst, but we make the observation that pointwise,
\[D_\delta = \frac12\sum_s|\langle u_s, w_s\rangle|^2\left(\frac1{\|u_s\|_2^2}-\frac1{\|u_s\|_2^2+\delta}\right)\overset{\text{(Cauchy-Schwarz)}}{\leq}\frac12\sum_s\frac{\delta\|w_s\|_2^2}{\|u_s\|_2^2+\delta}=:B_\delta,\]
and the gradient of $B_\delta$ is much friendlier.

Finally, note that from our definitions so far,
\[|A-\E A|\leq |A_\delta - \E A_\delta| + |B_\delta - \E B_\delta| + 2\E B_\delta.\]
So we will show each of these terms are concentrated in turn.

\subsubsection{Concentrating $A^{(i)}_\delta$}
A Herbst argument.
We begin with the relevant carr\'e du champ, where the gradient is taken with
respect to the independent phase coordinates $(\theta_x)_x$.

\begin{prop}[Carr\'e du champ bound]
\label{prop:A-delta-carre-du-champ}
For every $i\in[n]$,
\begin{equation}
\label{ineq:A-delta-carre-du-champ}
\begin{aligned}
    \Gamma\left(A_\delta^{(i)}\right)
    &:=
    \left\|\nabla A_\delta^{(i)}\right\|_2^2
    =
    \sum_x
    \left|
        \partial_{\theta_x}A_\delta^{(i)}
    \right|^2\lesssim
    \frac{\|\mu\|_\infty}{\delta}
    \sum_s\|w_s\|_2^4
    +
    \|\mu|q|^2\|_\infty.
\end{aligned}
\end{equation}
\end{prop}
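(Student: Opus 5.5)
We differentiate $A_\delta^{(i)}$ in a single phase coordinate and then sum the squared derivatives over $x$. Write $f(x)=\sqrt{\mu(x)}e^{\mathrm{i}\theta_x}$, so that $g(x)=f(x)q(x)$. Then
\[
\partial_{\theta_x}f(x)=\mathrm{i}f(x),\qquad \partial_{\theta_x}g(x)=\mathrm{i}g(x).
\]
Hence the $x$-directional derivatives of the Fourier data are
\[
\partial_{\theta_x}u_s=a_{x,s}:=\tfrac{\mathrm{i}f(x)}{\sqrt N}\big(\chi_s(x),\chi_{s+i}(x)\big),
\qquad
\partial_{\theta_x}w_s=q(x)\,a_{x,s},
\]
exactly as in the Lindeberg step of \Cref{prop:smooth-replacement}. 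Since $A_\delta^{(i)}=\tfrac12\sum_s\Phi_\delta(u_s,w_s)=\tfrac12\sum_s w_s^*P_\delta(u_s)w_s$, the chain rule gives
\[
\partial_{\theta_x}A_\delta^{(i)}=\tfrac12\sum_s\Big(w_s^*\,DP_\delta(u_s)[a_{x,s}]\,w_s+2\Re\big\langle P_\delta(u_s)w_s,\,q(x)a_{x,s}\big\rangle\Big)=:T_1(x)+T_2(x).
\]
By $(a+b)^2\le 2a^2+2b^2$ it suffices to bound $\sum_x|T_1(x)|^2$ and $\sum_x|T_2(x)|^2$ separately, and these produce the two terms of \eqref{ineq:A-delta-carre-du-champ}.

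For $T_1$ we use only the first-derivative bound $\|DP_\delta(u)\|\lesssim\delta^{-1/2}$ from \Cref{lem:master-frechet}. This gives $|T_1(x)|\lesssim\delta^{-1/2}\sum_s\|w_s\|_2^2\|a_{x,s}\|_2$, with $\|a_{x,s}\|_2=\sqrt{2\mu(x)/N}$. Bounding crudely this way and then by Cauchy--Schwarz in $s$ gives
\[
|T_1(x)|^2\lesssim\frac{\mu(x)}{\delta N}\Big(\sum_s\|w_s\|_2^2\Big)^2\le\frac{\mu(x)}{\delta}\sum_s\|w_s\|_2^4 .
\]
Summing over $x$ and using $\sum_x\mu(x)=1$ yields $\delta^{-1}\sum_s\|w_s\|_2^4$. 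This is worse than the target by a factor $\|\mu\|_\infty$, so the crude route fails and we must exploit orthogonality of the characters in $x$.

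To gain the factor $\|\mu\|_\infty$, write $T_1(x)=\frac{f(x)}{\sqrt N}\sum_s\langle c_s,(\chi_s(x),\chi_{s+i}(x))\rangle$ plus its conjugate. Here $c_s\in\C^2$ is the gradient vector of $u\mapsto w_s^*P_\delta(u)w_s$ at $u_s$, and satisfies $\|c_s\|_2\lesssim\delta^{-1/2}\|w_s\|_2^2$. Thus $T_1(x)$ is $f(x)$ times a Fourier transform evaluated at $x$. By Parseval (unitarity of the transform $s\mapsto x$), $\sum_x\big|\tfrac1{\sqrt N}\sum_s c_s^{(1)}\chi_s(x)\big|^2=\sum_s|c_s^{(1)}|^2$, and similarly for the second coordinate (reindexing $s\mapsto s+i$). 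Therefore
\[
\sum_x|T_1(x)|^2\lesssim\|\mu\|_\infty\sum_s\|c_s\|_2^2\lesssim\frac{\|\mu\|_\infty}{\delta}\sum_s\|w_s\|_2^4,
\]
which is the first term. The main point to check carefully here is that the real-part/conjugate structure of the real Fréchet derivative still splits into two such Fourier sums. Writing $2\Re z=z+\bar z$ handles this, with each piece treated by Parseval.

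The term $T_2$ is handled the same way, but it produces no $\delta^{-1/2}$ loss because $\|P_\delta\|\le1$. We have
\[
T_2(x)=\Re\Big[\overline{q(x)}\tfrac{\bar f(x)}{\sqrt N}\sum_s\big\langle P_\delta(u_s)w_s,(\chi_s(x),\chi_{s+i}(x))\big\rangle\Big],
\]
so Parseval again gives
\[
\sum_x|T_2(x)|^2\lesssim\|\mu|q|^2\|_\infty\sum_s\|P_\delta(u_s)w_s\|_2^2\le\|\mu|q|^2\|_\infty\sum_s\|w_s\|_2^2 .
\]
Finally, $\sum_s\|w_s\|_2^2=2\|\wh g\|_2^2=2\|g\|_2^2=2$, which yields the second term $\|\mu|q|^2\|_\infty$ and completes the bound.
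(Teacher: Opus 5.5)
Your proof is correct and is essentially the paper's argument. You split the phase derivative via the chain rule into the $u$- and $w$-contributions. You bound the local gradients by $\delta^{-1/2}\|w_s\|_2^2$ (from \Cref{lem:master-frechet}) and $\|w_s\|_2$ (from $\|P_\delta\|\le 1$). You then use Walsh--Parseval in the physical variable $x$ to extract $\|\mu\|_\infty$ and $\|\mu|q|^2\|_\infty$, finishing with $\sum_s\|w_s\|_2^2=2$.

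The differences are cosmetic. You apply Parseval to each coordinate of the $2$-vector separately (reindexing $s\mapsto s+i$), where the paper uses restricted Parseval on the slices $\{x_i=b\}$. Your displayed formula for $T_2(x)$ drops a factor of $\mathrm{i}$ and a conjugation, which is harmless since only its modulus is used.
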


\begin{proof}
Define
\[
    \Phi(u_s,w_s)
    :=
    \frac{|\langle u_s,w_s\rangle|^2}
    {\|u_s\|_2^2+\delta},
\]
so that
\[
    A_\delta^{(i)}
    =
    \frac12\sum_s\Phi(u_s,w_s).
\]
By the multivariate chain rule,
\[
\partial_{\theta_x}\Phi(u_s,w_s)
=
\Re\left\langle
    \nabla_u^\C\Phi(u_s,w_s),
    \partial_{\theta_x}u_s
\right\rangle
+
\Re\left\langle
    \nabla_w^\C\Phi(u_s,w_s),
    \partial_{\theta_x}w_s
\right\rangle.
\]
Here $\nabla_u^\C$ and $\nabla_w^\C$ denote the usual complex gradients,
equivalently $2\partial_{\bar u}$ and $2\partial_{\bar w}$.

By direct calculation,
\begin{align*}
\nabla_u^\C\Phi(u_s,w_s)
&=
\frac{
    2\overline{\langle u_s,w_s\rangle}
}{
    \|u_s\|_2^2+\delta
}
w_s
-
\frac{
    2|\langle u_s,w_s\rangle|^2
}{
    \left(\|u_s\|_2^2+\delta\right)^2
}
u_s,
\\
\nabla_w^\C\Phi(u_s,w_s)
&=
\frac{
    2\langle u_s,w_s\rangle
}{
    \|u_s\|_2^2+\delta
}
u_s.
\end{align*}
Consequently,
\begin{equation}
\label{ineq:local-complex-gradient-bounds}
    \left\|
        \nabla_u^\C\Phi(u_s,w_s)
    \right\|_2
    \lesssim
    \frac{\|w_s\|_2^2}{\sqrt\delta},
    \qquad
    \left\|
        \nabla_w^\C\Phi(u_s,w_s)
    \right\|_2
    \lesssim
    \|w_s\|_2.
\end{equation}

Write
\[
    \chi_s(x)
    :=
    (-1)^{s\cdot x}.
\]
Since
\[
    \wh f(s)
    =
    \frac1{\sqrt N}
    \sum_y
    \sqrt{\mu(y)}e^{\mathrm{i}\theta_y}\chi_s(y)
\]
and $g=fq$, we have
\[
    \partial_{\theta_x}\wh f(s)
    =
    \mathrm{i}
    \sqrt{\frac{\mu(x)}{N}}
    e^{\mathrm{i}\theta_x}
    \chi_s(x)
\]
and
\[
    \partial_{\theta_x}\wh g(s)
    =
    q(x)\partial_{\theta_x}\wh f(s).
\]
It follows that
\[
    \partial_{\theta_x}u_s
    =
    \mathrm{i}
    \sqrt{\frac{\mu(x)}{N}}
    e^{\mathrm{i}\theta_x}
    \chi_s(x)
    \bigl(1,(-1)^{x_i}\bigr)
\]
and
\[
    \partial_{\theta_x}w_s
    =
    q(x)\partial_{\theta_x}u_s.
\]
Substituting these identities into the chain rule and factoring out the common
scalar of modulus $\sqrt{\mu(x)/N}$ gives
\begin{align}
\label{ineq:exact-phase-gradient-bound}
\left|
    \partial_{\theta_x}A_\delta^{(i)}
\right|
&\le
\frac12
\sqrt{\frac{\mu(x)}{N}}
\left|
    \sum_s
    \chi_s(x)
    \left\langle
        \nabla_u^\C\Phi(u_s,w_s),
        \bigl(1,(-1)^{x_i}\bigr)
    \right\rangle
\right|
\nonumber\\
&\quad+
\frac12
\sqrt{\frac{\mu(x)}{N}}
|q(x)|
\left|
    \sum_s
    \chi_s(x)
    \left\langle
        \nabla_w^\C\Phi(u_s,w_s),
        \bigl(1,(-1)^{x_i}\bigr)
    \right\rangle
\right|.
\end{align}

For any scalar family $(a_s)_s$ and every $b\in\{0,1\}$,
Parseval on the slice $\{x:x_i=b\}$ gives
\begin{equation}
\label{eq:restricted-walsh-parseval}
\begin{aligned}
\sum_{x:x_i=b}
\left|
    \sum_s\chi_s(x)a_s
\right|^2
&=
\frac N2
\sum_{s:s_i=0}
\left|
    a_s+(-1)^ba_{s+i}
\right|^2\le
N\sum_s|a_s|^2.
\end{aligned}
\end{equation}
Applying \eqref{eq:restricted-walsh-parseval} on the two slices $x_i=0$ and
$x_i=1$, and using $
\left\|
    \bigl(1,(-1)^{x_i}\bigr)
\right\|_2^2
=
2$,
we obtain
\begin{equation}
\label{ineq:u-gradient-walsh-sum}
\begin{aligned}
&
\sum_x
\left|
    \sum_s
    \chi_s(x)
    \left\langle
        \nabla_u^\C\Phi(u_s,w_s),
        \bigl(1,(-1)^{x_i}\bigr)
    \right\rangle
\right|^2\le
4N
\sum_s
\left\|
    \nabla_u^\C\Phi(u_s,w_s)
\right\|_2^2
\end{aligned}
\end{equation}
and likewise
\begin{equation}
\label{ineq:w-gradient-walsh-sum}
\begin{aligned}
&
\sum_x
\left|
    \sum_s
    \chi_s(x)
    \left\langle
        \nabla_w^\C\Phi(u_s,w_s),
        \bigl(1,(-1)^{x_i}\bigr)
    \right\rangle
\right|^2\le
4N
\sum_s
\left\|
    \nabla_w^\C\Phi(u_s,w_s)
\right\|_2^2\,.
\end{aligned}
\end{equation}

Squaring \eqref{ineq:exact-phase-gradient-bound}, summing over $x$, and using
\eqref{ineq:u-gradient-walsh-sum} and
\eqref{ineq:w-gradient-walsh-sum}, we conclude that
\begin{align*}
\left\|\nabla A_\delta^{(i)}\right\|_2^2
&\lesssim
\|\mu\|_\infty
\sum_s
\left\|
    \nabla_u^\C\Phi(u_s,w_s)
\right\|_2^2+
\|\mu|q|^2\|_\infty
\sum_s
\left\|
    \nabla_w^\C\Phi(u_s,w_s)
\right\|_2^2.
\end{align*}
Using \eqref{ineq:local-complex-gradient-bounds}, this becomes
\[
\left\|\nabla A_\delta^{(i)}\right\|_2^2
\lesssim
\frac{\|\mu\|_\infty}{\delta}
\sum_s\|w_s\|_2^4
+
\|\mu|q|^2\|_\infty
\sum_s\|w_s\|_2^2.
\]
Finally, Parseval and $\mu$-compatibility give
\begin{align*}
\sum_s\|w_s\|_2^2
&=
\sum_s
\left(
    |\wh g(s)|^2
    +
    |\wh g(s+i)|^2
\right)
\\
&=
2\|g\|_2^2
\\
&=
2\sum_x\mu(x)|q(x)|^2
\\
&=
2.
\end{align*}
Absorbing this factor into the implicit constant proves
\eqref{ineq:A-delta-carre-du-champ}.
\end{proof}

Now $\delta$ will be chosen like $\delta=\eta/N$, so we need $W:=\sum_s\|w_s\|_2^4\lesssim\frac1N$ with high probability.
Fortunately, $W$ is nothing but a Steinhaus chaos of fixed degree, and its concentration is given by \Cref{lem:w4}.
We shall combine \Cref{prop:A-delta-carre-du-champ} and \Cref{lem:w4} using the
following localized form of the Herbst argument.

\begin{fact}[Localized Herbst bound for Steinhaus variables]
\label{fact:herbst}
There exists a universal constant $c>0$ such that the following holds.
Let $F$ be a centered smooth function of independent Steinhaus variables, and
let $G$ be an event such that
\[
    |F|\leq1,\qquad 
    \|\nabla F\|_2^2\leq b
    \quad\text{on }G,\qquad \text{and}\qquad 
    \|\nabla F\|_2^2\leq B
    \quad\text{everywhere}.
\]
If
\[
    B\Pr(G^c)<b,
\]
then, for every
\[
    0<t
    \leq
    cb\log\left(
        \frac{b}{B\Pr(G^c)}
    \right),
\]
one has
\[
    \Pr[|F|>t]
    \leq
    2\exp\left(
        -\frac{ct^2}{b}
    \right).
\]
\end{fact}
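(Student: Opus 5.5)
The plan is the standard Herbst entropy argument applied to $e^{\lambda F}$, with the gradient bound $b$ used on $G$ and the crude bound $B$ used on $G^c$. The error term this creates is controlled by the boundedness $|F|\le 1$ and by the smallness of $\Pr(G^c)$.

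\textbf{Step 1: the log-Sobolev input.} Independent Steinhaus variables, parametrized by phases $\theta_x$, satisfy a log-Sobolev inequality with a universal constant. This follows by tensorization from the circle, whose log-Sobolev constant is comparable to its spectral gap. So for smooth $F$ and $H(\lambda):=\E e^{\lambda F}$ we have
\[
\mathrm{Ent}(e^{\lambda F})\le C\lambda^2\,\E\big[\|\nabla F\|_2^2 e^{\lambda F}\big].
\]

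\textbf{Step 2: splitting on $G$.} Split the right-hand side on $G$ and $G^c$. On $G$ the integrand is at most $b\,e^{\lambda F}$. On $G^c$ it is at most $B e^{\lambda}\Pr(G^c)$, using $|F|\le 1$. Since $F$ is centered, Jensen gives $H(\lambda)\ge 1$. Hence
\[
\lambda H'(\lambda)-H\log H\le C\lambda^2\big(bH+Be^{\lambda}\Pr(G^c)\big)\le C\lambda^2 H\big(b+Be^{\lambda}\Pr(G^c)\big).
\]

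\textbf{Step 3: restricting $\lambda$ and integrating.} Restrict to $0\le\lambda\le\lambda_0:=\log\!\big(b/(B\Pr(G^c))\big)$, which is positive by the hypothesis $B\Pr(G^c)<b$. On this range $Be^{\lambda}\Pr(G^c)\le b$, so the right-hand side is at most $2Cb\lambda^2H$. Dividing by $\lambda^2H$ gives
\[
\frac{d}{d\lambda}\Big(\frac{\log H}{\lambda}\Big)\le 2Cb.
\]
Integrate from $0$; the limit of $\log H(\lambda)/\lambda$ as $\lambda\to 0$ is $\E F=0$. This yields $\log H(\lambda)\le 2Cb\lambda^2$ for $\lambda\le\lambda_0$.

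\textbf{Step 4: Chernoff.} Markov's inequality gives $\Pr[F>t]\le\exp(2Cb\lambda^2-\lambda t)$. Choose $\lambda=t/(4Cb)$, which is admissible provided $t\le 4Cb\lambda_0$. This is exactly the stated range $t\le cb\log\!\big(b/(B\Pr(G^c))\big)$ once $c$ is small enough. The resulting bound is $\exp(-t^2/(8Cb))$. Apply the same argument to $-F$, which satisfies identical hypotheses, and take a union bound to get the factor $2$.

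\textbf{Main obstacle.} The only real subtlety is Step 2, handling the contribution of $G^c$ inside the entropy inequality without any control of $F$ there beyond $|F|\le 1$. The factor $e^{\lambda}$ is what forces the cap $\lambda\le\lambda_0$, and hence the logarithmic restriction on $t$. If one wants to avoid relying on $H\ge 1$, an alternative is to bound $e^{\lambda F}\le e^{\lambda}\le e^{\lambda}H$ directly, which gives the same outcome.
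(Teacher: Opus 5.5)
Your proof is correct and is essentially the paper's argument: log-Sobolev plus the Herbst differential inequality, splitting $\E[\|\nabla F\|_2^2 e^{\lambda F}]$ on $G$ versus $G^c$, using $|F|\le 1$ together with Jensen ($\E e^{\lambda F}\ge 1$) to bound the bad-event contribution by $B e^{\lambda}\Pr(G^c)$, capping $\lambda\le\log\big(b/(B\Pr(G^c))\big)$, and finishing with Chernoff for $\pm F$. The only difference is cosmetic: you impose the cap before integrating $(\log H/\lambda)'$, whereas the paper integrates first using $e^\lambda-1\le\lambda e^\lambda$. One minor point: your closing aside does not actually avoid $H\ge 1$, since $e^\lambda\le e^\lambda H$ is that same inequality; without centering you would instead use $H\ge e^{-\lambda}$ and pay a harmless factor $e^{2\lambda}$.
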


\begin{proof}
For $\lambda\geq0$, write
\[
    M(\lambda)
    :=
    \E e^{\lambda F},
    \qquad
    \psi(\lambda)
    :=
    \log M(\lambda).
\]
The product Steinhaus log-Sobolev inequality \cite{WEISSLER1980218} and the standard Herbst
calculation (see, \textit{e.g.,} \cite{ledoux}) give
\[
    \left(
        \frac{\psi(\lambda)}{\lambda}
    \right)'
    \leq
    C
    \frac{
        \E\left[
            e^{\lambda F}\|\nabla F\|_2^2
        \right]
    }{
        \E e^{\lambda F}
    }.
\]
Since $F$ is centered, Jensen's inequality gives $\E e^{\lambda F}\geq1$.
Since $|F|\leq1$,
\[
    \frac{
        \E\left[
            e^{\lambda F}\mathbf 1_{G^c}
        \right]
    }{
        \E e^{\lambda F}
    }
    \leq
    e^\lambda\Pr(G^c).
\]
It follows that
\[
    \left(
        \frac{\psi(\lambda)}{\lambda}
    \right)'
    \leq
    C\left(
        b+B\Pr(G^c)e^\lambda
    \right).
\]
Integrating from $0$ to $\lambda$, and using
\[
    e^\lambda-1\leq\lambda e^\lambda,
\]
gives
\[
    \psi(\lambda)
    \leq
    C\lambda^2
    \left(
        b+B\Pr(G^c)e^\lambda
    \right).
\]
Therefore, whenever
\[
    0\leq\lambda
    \leq
    \log\left(
        \frac{b}{B\Pr(G^c)}
    \right),
\]
we have
\[
    \psi(\lambda)
    \leq
    Cb\lambda^2.
\]
Chernoff's bound, with $\lambda=ct/b$ for a sufficiently small universal
constant $c>0$, now gives
\[
    \Pr[F>t]
    \leq
    \exp\left(
        -\frac{ct^2}{b}
    \right)
\]
throughout the asserted range of $t$.
Applying the same argument to $-F$ proves the result.
\end{proof}

\begin{cor}
\label{cor:A-delta-concentration}
There exist universal constants $c,C>0$ such that the following holds.
Let $q$ be $\mu$-compatible, let $\eta\in(0,1]$, and set
\[
    \delta
    =
    \frac{\eta}{N}.
\]
Suppose that
\[
    \left(
        \frac{\eta}{\|\mu\|_\infty}
    \right)^{1/3}
    \geq
    C\log N.
\]
Then, for every $i\in[n]$ and every $0<t\leq1$,
\begin{equation}
\label{ineq:A-delta-concentration}
\Pr_{f\sim F_\mu}\left[
    \left|
        A_\delta^{(i)}
        -
        \E_{f\sim F_\mu}A_\delta^{(i)}
    \right|
    >
    t
\right]
\leq
2\exp\left(
    -
    \frac{ct^2}{
        \left(
            \|\mu\|_\infty/\eta
        \right)^{1/3}
        +
        \|\mu|q|^2\|_\infty
    }
\right).
\end{equation}
\end{cor}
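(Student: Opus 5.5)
We apply the localized Herbst bound (\Cref{fact:herbst}) to the centered variable $F:=A_\delta^{(i)}-\E_{f\sim F_\mu}A_\delta^{(i)}$, viewed as a smooth function of the independent phases $(\theta_x)_x$. The gradient input comes from the carr\'e du champ estimate of \Cref{prop:A-delta-carre-du-champ}, which bounds $\Gamma(A_\delta^{(i)})$ by $\frac{\|\mu\|_\infty}{\delta}W+\|\mu|q|^2\|_\infty$ with $W=\sum_s\|w_s\|_2^4$. The only random quantity is $W$. We therefore take the good event to be $G:=\{W\le K/N^{1/3}\}$ for a threshold $K$ chosen below, and control $\Pr(G^c)$ by the instantiation \eqref{ineq:Wbound} of \Cref{lem:w4}.

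\textbf{Step 1 (a priori bounds).} Each summand $\Phi_\delta(u_s,w_s)\le\|w_s\|_2^2$ by Cauchy--Schwarz, so $0\le A_\delta^{(i)}\le\frac12\sum_s\|w_s\|_2^2=1$ by Parseval and compatibility. Hence $|F|\le 1$, as \Cref{fact:herbst} requires. For the global gradient bound $B$ we use $W\le(\sum_s\|w_s\|_2^2)^2=4$, which gives
\[B\lesssim \frac{\|\mu\|_\infty N}{\eta}+\|\mu|q|^2\|_\infty,\]
at most polynomial in $N$. (We may assume $\|\mu\|_\infty\le 1$; otherwise the hypothesis is vacuous in the interesting regime.)

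\textbf{Step 2 (gradient bound on $G$).} On $G$, using $\delta=\eta/N$,
\[\frac{\|\mu\|_\infty}{\delta}W\le \frac{\|\mu\|_\infty N}{\eta}\cdot\frac{K}{N^{1/3}}.\]
This is the step where we must fix $K$. We want the right-hand side to be of order $(\|\mu\|_\infty/\eta)^{1/3}$, which forces $K\approx N^{-2/3}(\eta/\|\mu\|_\infty)^{2/3}$. Since $\|\mu\|_\infty\ge 1/N$, this $K$ is at most a constant multiple of $\eta^{2/3}$ and may even be small. I therefore expect the intended normalization is the cleaner one that uses the full strength of \Cref{lem:w4}. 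Concretely, set $t_0:=(\|\mu\|_\infty/\eta)^{1/3}$ and take $G=\{W\le C'/N+t_0\,\eta/(N\|\mu\|_\infty)\}$. On $G$,
\[\frac{\|\mu\|_\infty N}{\eta}W\lesssim\frac{\|\mu\|_\infty}{\eta}+t_0\lesssim t_0,\]
since $\|\mu\|_\infty/\eta\le 1$. For the tail, the deviation exceeds $t':=t_0\eta/(N\|\mu\|_\infty)=(\eta/\|\mu\|_\infty)^{2/3}/N$, so \Cref{lem:w4} gives
\[\Pr(G^c)\le\exp\!\big(-c\sqrt{Nt'}\big)=\exp\!\big(-c(\eta/\|\mu\|_\infty)^{1/3}\big).\]
This is exactly where the hypothesis $(\eta/\|\mu\|_\infty)^{1/3}\ge C\log N$ enters: it makes $\Pr(G^c)\le N^{-cC}$, so $t'\ge C''/N$ as \Cref{lem:w4} needs. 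We thus get $b\asymp t_0+\|\mu|q|^2\|_\infty$ on $G$.

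\textbf{Step 3 (apply Herbst).} It remains to check $B\Pr(G^c)<b$ and that the admissible range covers $t\le 1$. We have $B\le\mathrm{poly}(N)$, while $b\ge t_0\ge N^{-1/3}$ because $\|\mu\|_\infty\ge 1/N$ and $\eta\le 1$. So for $C$ large, $B\Pr(G^c)\le N^{-cC+O(1)}\ll b$, and moreover $\log\!\big(b/(B\Pr(G^c))\big)\gtrsim C\log N$. The range in \Cref{fact:herbst} is then $t\le cb\log(\cdot)$, and it is not automatically at least $1$ when $b$ is small. This is the main obstacle I anticipate. To handle it, I would use the bound $\psi(\lambda)\le C\lambda^2(b+B\Pr(G^c)e^\lambda)$ from the proof of \Cref{fact:herbst} directly, with $\lambda=ct/b\le c/b$. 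The requirement becomes $B\Pr(G^c)e^{c/b}\lesssim b$, i.e.\ $e^{c/b}\le \exp(c(\eta/\|\mu\|_\infty)^{1/3})$ up to polynomial factors. This holds because $1/b\le t_0^{-1}=(\eta/\|\mu\|_\infty)^{1/3}$ and $\Pr(G^c)$ decays like $\exp(-c'(\eta/\|\mu\|_\infty)^{1/3})$ with a constant $c'$ we can make dominate by shrinking $c$. Chernoff with $\lambda=ct/b$ on $\pm F$ then yields $2\exp(-ct^2/b)$, which is \eqref{ineq:A-delta-concentration}.
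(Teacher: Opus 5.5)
Your proposal is correct and follows the paper's proof essentially step for step: the same centered $F$ with $|F|\le1$, the same good event $\{W\le 8/N+\tfrac{K}{N}(\eta/\|\mu\|_\infty)^{2/3}\}$ controlled by \Cref{lem:w4}, the same global gradient bound from $W\le4$, and the localized Herbst bound. The only difference is how you make the admissible range cover all $t\le1$. The paper enlarges $K$ so that $b\log\bigl(b/(B\Pr(G^c))\bigr)\gtrsim K^{3/2}\ge1$ and then uses \Cref{fact:herbst} as a black box. You keep $K=1$ and instead restrict the Chernoff parameter to $\lambda\le c/b$, with $c$ small enough that $e^{\lambda}$ is absorbed by $\Pr(G^c)\le\exp(-c_1(\eta/\|\mu\|_\infty)^{1/3})$; this is equally valid and costs only universal constants.
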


\begin{proof}
Put
\[
    F
    :=
    A_\delta^{(i)}
    -
    \E_{f\sim F_\mu}A_\delta^{(i)}.
\]
Since
\[
    0
    \leq
    A_\delta^{(i)}
    \leq
    \frac12\sum_s\|w_s\|_2^2
    =
    1,
\]
we have
\[
    |F|\leq1.
\]

Let
\[
    W
    :=
    \sum_s\|w_s\|_2^4.
\]
Let $K\geq1$ be a universal constant to be fixed below, and define
\begin{equation}
\label{eq:A-delta-good-event}
    G
    :=
    \left\{
        W
        \leq
        \frac8N
        +
        \frac{K}{N}
        \left(
            \frac{\eta}{\|\mu\|_\infty}
        \right)^{2/3}
    \right\}.
\end{equation}
Since $\E W\leq8/N$, the hypothesis of the corollary ensures, after
increasing $C$ if necessary, that the deviation in
\eqref{eq:A-delta-good-event} lies in the range of \Cref{lem:w4}.
Consequently,
\begin{equation}
\label{ineq:A-delta-good-event-probability}
    \Pr_{f\sim F_\mu}[G^c]
    \leq
    \exp\left(
        -c\sqrt K
        \left(
            \frac{\eta}{\|\mu\|_\infty}
        \right)^{1/3}
    \right).
\end{equation}

On $G$, \Cref{prop:A-delta-carre-du-champ} gives
\begin{align}
\label{ineq:A-delta-gradient-good-event}
\left\|\nabla A_\delta^{(i)}\right\|_2^2
&\lesssim
\frac{N\|\mu\|_\infty}{\eta}
\left[
    \frac8N
    +
    \frac{K}{N}
    \left(
        \frac{\eta}{\|\mu\|_\infty}
    \right)^{2/3}
\right]
+
\|\mu|q|^2\|_\infty
\nonumber\\
&=
8\frac{\|\mu\|_\infty}{\eta}
+
K
\left(
    \frac{\|\mu\|_\infty}{\eta}
\right)^{1/3}
+
\|\mu|q|^2\|_\infty
\nonumber\\
&\lesssim
(K+8)
\left(
    \frac{\|\mu\|_\infty}{\eta}
\right)^{1/3}
+
\|\mu|q|^2\|_\infty.
\end{align}
In the last line we used
\[
    \frac{\|\mu\|_\infty}{\eta}
    \leq
    1,
\]
which follows from the hypothesis of the corollary.

We also need the everywhere bound.
By Parseval, $\sum_s\|w_s\|_2^2
    =
    2$,
and hence
\[
    W
    \leq
    \left(
        \sum_s\|w_s\|_2^2
    \right)^2
    =
    4.
\]
Therefore, everywhere,
\begin{equation}
\label{ineq:A-delta-gradient-global}
    \left\|\nabla A_\delta^{(i)}\right\|_2^2
    \lesssim
    \frac{4N\|\mu\|_\infty}{\eta}
    +
    \|\mu|q|^2\|_\infty.
\end{equation}

By \eqref{ineq:A-delta-gradient-good-event} and
\eqref{ineq:A-delta-gradient-global}, there is a universal constant $C_0>0$
such that
\[
    \|\nabla F\|_2^2
    \leq
    b
    \qquad\text{on }G
\]
and
\[
    \|\nabla F\|_2^2
    \leq
    B
    \qquad\text{everywhere},
\]
where
\[
\begin{aligned}
    b
    &:=
    C_0
    \left[
        (K+8)
        \left(
            \frac{\|\mu\|_\infty}{\eta}
        \right)^{1/3}
        +
        \|\mu|q|^2\|_\infty
    \right],
    \\
    B
    &:=
    C_0
    \left[
        \frac{4N\|\mu\|_\infty}{\eta}
        +
        \|\mu|q|^2\|_\infty
    \right].
\end{aligned}
\]
Because $q$ is $\mu$-compatible,
\[
    \|\mu|q|^2\|_\infty
    \leq
    \sum_x\mu(x)|q(x)|^2
    =
    1.
\]
Using also $\|\mu\|_\infty/\eta\leq1$, we obtain
\[
    \frac{B}{b}
    \leq
    C\left[
        N
        \left(
            \frac{\|\mu\|_\infty}{\eta}
        \right)^{2/3}
        +
        1
    \right]
    \leq
    CN.
\]
Combining this with
\eqref{ineq:A-delta-good-event-probability} gives
\begin{align*}
\log\left(
    \frac{b}{
        B\Pr_{f\sim F_\mu}(G^c)
    }
\right)
&\geq
c\sqrt K
\left(
    \frac{\eta}{\|\mu\|_\infty}
\right)^{1/3}
-
C\log N
\\
&\geq
\frac{c\sqrt K}{2}
\left(
    \frac{\eta}{\|\mu\|_\infty}
\right)^{1/3},
\end{align*}
after increasing the constant in the hypothesis of the corollary.
In particular,
\[
    B\Pr_{f\sim F_\mu}(G^c)<b.
\]

Moreover,
\[
    b
    \geq
    C_0K
    \left(
        \frac{\|\mu\|_\infty}{\eta}
    \right)^{1/3},
\]
and therefore
\[
    cb
    \log\left(
        \frac{b}{
            B\Pr_{f\sim F_\mu}(G^c)
        }
    \right)
    \geq
    cK^{3/2}.
\]
We now fix the universal constant $K$ large enough that the right-hand side
is at least $1$.
Thus the admissible range in \Cref{fact:herbst} contains every
$t\in(0,1]$.

Applying \Cref{fact:herbst}, we obtain
\[
    \Pr_{f\sim F_\mu}[|F|>t]
    \leq
    2\exp\left(
        -\frac{ct^2}{b}
    \right).
\]
Since $K$ is now a fixed universal constant,
\[
    b
    \lesssim
    \left(
        \frac{\|\mu\|_\infty}{\eta}
    \right)^{1/3}
    +
    \|\mu|q|^2\|_\infty,
\]
which proves \eqref{ineq:A-delta-concentration}.
\end{proof}

\subsubsection{Concentrating $B^{(i)}_\delta$}

The same carr\'e-du-champ estimate holds for the error dominator.

\begin{cor}
\label{cor:B-delta-carre-du-champ}
For every $i\in[n]$,
\[
    \left\|\nabla B_\delta^{(i)}\right\|_2^2
    \lesssim
    \frac{\|\mu\|_\infty}{\delta}
    \sum_s\|w_s\|_2^4
    +
    \|\mu|q|^2\|_\infty.
\]
\end{cor}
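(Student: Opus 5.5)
We follow the proof of \Cref{prop:A-delta-carre-du-champ} line by line, replacing the summand $\Phi$ by
\[
    \Psi(u,w):=\frac{\delta\|w\|_2^2}{\|u\|_2^2+\delta},
    \qquad
    B_\delta^{(i)}=\tfrac12\sum_s\Psi(u_s,w_s).
\]
That argument used only two things: the local gradient bounds \eqref{ineq:local-complex-gradient-bounds}, and the structure of the phase derivatives $\partial_{\theta_x}u_s$ and $\partial_{\theta_x}w_s=q(x)\partial_{\theta_x}u_s$. The second ingredient is untouched. So the only new work is to show that $\Psi$ obeys the same local bounds.

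\textbf{Local gradient bounds.} Directly,
\[
    \nabla_u^\C\Psi=-\frac{2\delta\|w\|_2^2}{(\|u\|_2^2+\delta)^2}\,u,
    \qquad
    \nabla_w^\C\Psi=\frac{2\delta}{\|u\|_2^2+\delta}\,w.
\]
For the $u$-gradient, the elementary bound $\frac{\|u\|_2}{(\|u\|_2^2+\delta)^2}\lesssim\delta^{-3/2}$ (the same one used in \Cref{lem:master-frechet}) gives
\[
    \|\nabla_u^\C\Psi\|_2\lesssim\frac{\|w\|_2^2}{\sqrt\delta}.
\]
For the $w$-gradient, $\frac{\delta}{\|u\|_2^2+\delta}\le1$ gives
\[
    \|\nabla_w^\C\Psi\|_2\le2\|w\|_2.
\]
These are exactly the bounds \eqref{ineq:local-complex-gradient-bounds}, up to constants.

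\textbf{Chain rule and Parseval.} Since $\partial_{\theta_x}u_s$ and $\partial_{\theta_x}w_s$ have the same form as before, the chain-rule bound \eqref{ineq:exact-phase-gradient-bound} holds verbatim with $\Phi$ replaced by $\Psi$. The slice-Parseval bounds \eqref{ineq:u-gradient-walsh-sum} and \eqref{ineq:w-gradient-walsh-sum} then apply without change. This gives
\[
    \|\nabla B_\delta^{(i)}\|_2^2
    \lesssim
    \|\mu\|_\infty\sum_s\frac{\|w_s\|_2^4}{\delta}
    +\|\mu|q|^2\|_\infty\sum_s\|w_s\|_2^2.
\]

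\textbf{Conclusion.} Parseval and $\mu$-compatibility give $\sum_s\|w_s\|_2^2=2$, which we absorb into the constant to obtain the stated bound.

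There is no real obstacle here. The one point to check is the exponent bookkeeping in the $u$-gradient, namely that the factor $\delta$ in the numerator of $\Psi$ exactly compensates the extra power of $(\|u\|_2^2+\delta)$ in the denominator, so that the result is still $\delta^{-1/2}\|w\|_2^2$.
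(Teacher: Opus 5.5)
Your proposal is correct and matches the paper's proof essentially verbatim. You compute the same two complex gradients, obtain the same local bounds $\|w_s\|_2^2/\sqrt\delta$ and $\|w_s\|_2$, and then reuse the unchanged phase derivatives together with the slice-Parseval argument from \Cref{prop:A-delta-carre-du-champ}.
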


\begin{proof}
Define
\[
    \Phi(u_s,w_s)
    :=
    \frac{\delta\|w_s\|_2^2}
    {\|u_s\|_2^2+\delta},
\]
so that
\[
    B_\delta^{(i)}
    =
    \frac12\sum_s\Phi(u_s,w_s).
\]
By direct calculation,
\[
    \nabla_u^\C\Phi(u_s,w_s)
    =
    -
    \frac{
        2\delta\|w_s\|_2^2
    }{
        \left(\|u_s\|_2^2+\delta\right)^2
    }
    u_s
\]
and
\[
    \nabla_w^\C\Phi(u_s,w_s)
    =
    \frac{
        2\delta
    }{
        \|u_s\|_2^2+\delta
    }
    w_s.
\]
Hence
\[
    \left\|
        \nabla_u^\C\Phi(u_s,w_s)
    \right\|_2
    \lesssim
    \frac{\|w_s\|_2^2}{\sqrt\delta},
    \qquad
    \left\|
        \nabla_w^\C\Phi(u_s,w_s)
    \right\|_2
    \lesssim
    \|w_s\|_2.
\]
The phase derivatives of $u_s$ and $w_s$ are unchanged, so the
Parseval argument from
\Cref{prop:A-delta-carre-du-champ} applies without further change.
\end{proof}

We therefore obtain the same concentration estimate as for
$A_\delta^{(i)}$.

\begin{cor}[Concentration of the error dominator]
\label{cor:B-delta-concentration}
There exist universal constants $c,C>0$ such that the following holds.
Let $q$ be $\mu$-compatible, let $\eta\in(0,1]$, and set
\[
    \delta
    =
    \frac{\eta}{N}.
\]
Suppose that
\[
    \left(
        \frac{\eta}{\|\mu\|_\infty}
    \right)^{1/3}
    \geq
    C\log N.
\]
Then, for every $i\in[n]$ and every $0<t\leq1$,
\begin{equation}
\label{ineq:B-delta-concentration}
\Pr_{f\sim F_\mu}\left[
    \left|
        B_\delta^{(i)}
        -
        \E_{f\sim F_\mu}B_\delta^{(i)}
    \right|
    >
    t
\right]
\leq
2\exp\left(
    -
    \frac{ct^2}{
        \left(
            \|\mu\|_\infty/\eta
        \right)^{1/3}
        +
        \|\mu|q|^2\|_\infty
    }
\right).
\end{equation}
\end{cor}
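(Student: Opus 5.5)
The plan is to repeat the proof of \Cref{cor:A-delta-concentration} verbatim, with $A_\delta^{(i)}$ replaced by $B_\delta^{(i)}$. The gradient input is \Cref{cor:B-delta-carre-du-champ}, which gives the same carr\'e du champ bound as \Cref{prop:A-delta-carre-du-champ}. The argument is an application of the localized Herbst bound, \Cref{fact:herbst}, to
\[
F := B_\delta^{(i)} - \E_{f\sim F_\mu}B_\delta^{(i)}.
\]
Only two inputs need checking: a uniform bound $|F|\le 1$, and the good event $G$ on which the gradient is controlled.

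For the uniform bound, since $\delta/(\|u_s\|_2^2+\delta)\in(0,1]$,
\[
0\le B_\delta^{(i)}\le \tfrac12\sum_s\|w_s\|_2^2 = 1
\]
by Parseval and $\mu$-compatibility. Hence $|F|\le 1$.

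For the good event, I take $G$ exactly as in \eqref{eq:A-delta-good-event}, namely
\[
W=\sum_s\|w_s\|_2^4\le \frac8N+\frac KN\Big(\frac{\eta}{\|\mu\|_\infty}\Big)^{2/3}.
\]
\Cref{lem:w4}, applied under the hypothesis $(\eta/\|\mu\|_\infty)^{1/3}\ge C\log N$, gives $\Pr[G^c]\le \exp\!\big(-c\sqrt K(\eta/\|\mu\|_\infty)^{1/3}\big)$.

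On $G$, \Cref{cor:B-delta-carre-du-champ} with $\delta=\eta/N$ gives
\[
\|\nabla F\|_2^2\le b:=C_0\Big[(K+8)\big(\|\mu\|_\infty/\eta\big)^{1/3}+\|\mu|q|^2\|_\infty\Big].
\]
Everywhere, $W\le 4$ gives
\[
\|\nabla F\|_2^2\le B:=C_0\Big[4N\|\mu\|_\infty/\eta+\|\mu|q|^2\|_\infty\Big].
\]
Together with $\|\mu\|_\infty/\eta\le 1$ and $\|\mu|q|^2\|_\infty\le 1$, this yields $B/b\lesssim N$.

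The step needing care is the bookkeeping for \Cref{fact:herbst}. We need $\log\big(b/(B\Pr[G^c])\big)\gtrsim \sqrt K(\eta/\|\mu\|_\infty)^{1/3}$, which requires the $\log N$ loss from $B/b$ to be absorbed. The hypothesis with a large enough $C$ handles this. It then remains to ensure that $cb\log(\cdot)\ge cK^{3/2}\ge 1$, so that every $t\in(0,1]$ is admissible; this fixes the universal constant $K$. \Cref{fact:herbst} then gives $\Pr[|F|>t]\le 2\exp(-ct^2/b)$, and absorbing $K$ into constants produces exactly \eqref{ineq:B-delta-concentration}.

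No new obstacle arises beyond the $A_\delta$ case. The only structural facts used are boundedness by $1$ and the identical gradient bounds, both of which were verified above.
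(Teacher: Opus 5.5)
Your proposal is correct and is essentially the paper's own proof. The paper likewise sets $F=B_\delta^{(i)}-\E B_\delta^{(i)}$, bounds $0\le B_\delta^{(i)}\le\frac12\sum_s\|w_s\|_2^2=1$, and reuses the good event $G$ and the constant $K$ from the proof of \Cref{cor:A-delta-concentration}; it then takes the same $b$ and $B$ from \Cref{cor:B-delta-carre-du-champ} and $W\le 4$, and concludes via \Cref{fact:herbst}.
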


\begin{proof}
Put
\[
    F
    :=
    B_\delta^{(i)}
    -
    \E_{f\sim F_\mu}B_\delta^{(i)}.
\]
Since
\[
    0
    \leq
    B_\delta^{(i)}
    \leq
    \frac12\sum_s\|w_s\|_2^2
    =
    1,
\]
we have
\[
    |F|\leq1.
\]

Let
\[
    W
    :=
    \sum_s\|w_s\|_2^4.
\]
Choose the same sufficiently large universal constant $K$ as in the proof of
\Cref{cor:A-delta-concentration}, and define
\[
    G
    :=
    \left\{
        W
        \leq
        \frac8N
        +
        \frac{K}{N}
        \left(
            \frac{\eta}{\|\mu\|_\infty}
        \right)^{2/3}
    \right\}.
\]
As in that proof, \Cref{lem:w4} gives
\[
    \Pr_{f\sim F_\mu}[G^c]
    \leq
    \exp\left(
        -c\sqrt K
        \left(
            \frac{\eta}{\|\mu\|_\infty}
        \right)^{1/3}
    \right).
\]

On $G$, \Cref{cor:B-delta-carre-du-champ} gives
\[
    \left\|\nabla B_\delta^{(i)}\right\|_2^2
    \lesssim
    (K+8)
    \left(
        \frac{\|\mu\|_\infty}{\eta}
    \right)^{1/3}
    +
    \|\mu|q|^2\|_\infty.
\]
Moreover, since
\[
    W
    \leq
    \left(
        \sum_s\|w_s\|_2^2
    \right)^2
    =
    4,
\]
we have the global bound
\[
    \left\|\nabla B_\delta^{(i)}\right\|_2^2
    \lesssim
    \frac{4N\|\mu\|_\infty}{\eta}
    +
    \|\mu|q|^2\|_\infty.
\]

These are exactly the good-event gradient bound, global gradient bound, and
exceptional-event estimate used in the proof of
\Cref{cor:A-delta-concentration}.
Together with $|F|\leq1$, they verify the hypotheses of
\Cref{fact:herbst} with the same choices of $b$, $B$, and $K$.
Applying that fact proves \eqref{ineq:B-delta-concentration}.
\end{proof}

\subsubsection{Combining the bounds}

We now remove the smoothing and deduce concentration of $A^{(i)}$.

\begin{prop}[Per-coordinate concentration for flat $q$]
\label{prop:A-i-concentration}
There exist universal constants $c,C>0$ such that the following holds.
Let $t\in(0,1]$, let $q$ be $\mu$-compatible, and suppose that
\[
    \|\mu|q|^2\|_\infty
    \le
    \|\mu\|_\infty^{1/3}
\]
and
\[
    \left(
        \frac{t^8}{\|\mu\|_\infty}
    \right)^{1/3}
    \geq
    C\log N.
\]
Then, for every $i\in[n]$,
\[
\Pr_{f\sim F_\mu}\left[
    \left|
        A^{(i)}
        -
        \E_{f\sim F_\mu}A^{(i)}
    \right|
    >
    t
\right]
\leq
4\exp\left(
    -ct^6\|\mu\|_\infty^{-1/3}
\right).
\]
\end{prop}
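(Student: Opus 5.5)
The plan is to use the decomposition already recorded before the smoothing corollaries: pointwise $A^{(i)}_\delta\le A^{(i)}\le A^{(i)}_\delta+B^{(i)}_\delta$, so that
\[
|A^{(i)}-\E A^{(i)}|\le |A^{(i)}_\delta-\E A^{(i)}_\delta| + |B^{(i)}_\delta-\E B^{(i)}_\delta| + 2\E B^{(i)}_\delta .
\]
I will choose $\delta=\eta/N$ with $\eta$ a small power of $t$ so that the deterministic term satisfies $2\E B^{(i)}_\delta\le t/3$. Then I will apply \Cref{cor:A-delta-concentration} and \Cref{cor:B-delta-concentration}, each at deviation $t/3$. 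A union bound over the two events gives the prefactor $4=2+2$.

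\textbf{Step 1 (choice of $\eta$).} Use \Cref{lem:EB-bound} with a fixed exponent $p$. With $p=1/2$ this gives $\E B_\delta\le C\eta^{1/8}$, so taking $\eta=c_0t^8$ with $c_0$ a small universal constant ensures $2\E B_\delta\le t/3$. This explains the $t^8$ appearing in the hypothesis. With this choice, the corollaries' hypothesis $(\eta/\|\mu\|_\infty)^{1/3}\ge C\log N$ is exactly the assumed condition $(t^8/\|\mu\|_\infty)^{1/3}\ge C\log N$, after adjusting constants for $c_0$. Also $\eta\le 1$ holds since $t\le1$.

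\textbf{Step 2 (the two tails).} Both corollaries give
\[
\Pr\bigl[|F|>t/3\bigr]\le 2\exp\left(-\frac{ct^2}{(\|\mu\|_\infty/\eta)^{1/3}+\|\mu|q|^2\|_\infty}\right).
\]
For the denominator, $(\|\mu\|_\infty/\eta)^{1/3}=c_0^{-1/3}t^{-8/3}\|\mu\|_\infty^{1/3}$, and the flatness assumption gives $\|\mu|q|^2\|_\infty\le\|\mu\|_\infty^{1/3}\le t^{-8/3}\|\mu\|_\infty^{1/3}$. The denominator is therefore $\lesssim t^{-8/3}\|\mu\|_\infty^{1/3}$, and the exponent is at least $ct^{2+8/3}\|\mu\|_\infty^{-1/3}=ct^{14/3}\|\mu\|_\infty^{-1/3}$. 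Since $t\le 1$, this is at least $ct^6\|\mu\|_\infty^{-1/3}$. That matches the claimed bound, with room to spare, or exactly if one uses a cruder exponent $p$ in Step 1.

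\textbf{Main obstacle.} The only subtle point is bookkeeping: \Cref{lem:EB-bound} is stated with $\lesssim_p$. I must fix $p$ once, say $p=1/2$, so that the constants stay universal, and check that the smallness of $\eta$ is compatible with the $\log N$ hypothesis. This compatibility is precisely why the hypothesis is phrased with $t^8$.

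A further point is that \Cref{lem:EB-bound} bounds $\E B_\delta$ summed as in $B^{(i)}_\delta$ for each fixed $i$, which is the form needed here. The factor $\tfrac12$ conventions differ between the definitions of $A^{(i)}$ and only affect constants.
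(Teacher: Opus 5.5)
Your proposal is correct and follows the paper's proof essentially step for step. Both take $p=1/2$ in \Cref{lem:EB-bound}, set $\eta=c_0t^8$, use the same triangle-inequality decomposition into the two concentration corollaries plus $2\E B^{(i)}_\delta$, and bound the denominator by $\lesssim t^{-8/3}\|\mu\|_\infty^{1/3}$ to get $t^{14/3}\ge t^6$. The only difference is cosmetic: you split the deviation as $t/3+t/3+t/3$, while the paper uses $t/2+t/4+t/4$.
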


\begin{proof}
Take $p=1/2$ in \Cref{lem:EB-bound}, so that for $\delta=\eta/N$,
\[
    \E_{f\sim F_\mu}B_\delta^{(i)}
    \leq
    C\eta^{1/8}
\]
for a universal constant $C>0$.
Choose a sufficiently small universal constant $c_0>0$ and set
\[
    \eta
    :=
    c_0t^8,
    \qquad
    \delta
    :=
    \frac{\eta}{N},
\]
so that
\[
    2\E_{f\sim F_\mu}B_\delta^{(i)}
    \leq
    \frac t2.
\]
After increasing the constant in the hypothesis of the proposition, we also
have
\[
    \left(
        \frac{\eta}{\|\mu\|_\infty}
    \right)^{1/3}
    \geq
    C\log N,
\]
so \Cref{cor:A-delta-concentration} and
\Cref{cor:B-delta-concentration} apply.

Recall that
\[
\begin{aligned}
\left|
    A^{(i)}-\E A^{(i)}
\right|
\leq\;&
\left|
    A_\delta^{(i)}-\E A_\delta^{(i)}
\right|+
\left|
    B_\delta^{(i)}-\E B_\delta^{(i)}
\right|
+
2\E B_\delta^{(i)}.
\end{aligned}
\]
It follows that
\begin{align*}
&
\Pr_{f\sim F_\mu}\left[
    \left|
        A^{(i)}-\E A^{(i)}
    \right|>t
\right]\leq
\Pr_{f\sim F_\mu}\left[
    \left|
        A_\delta^{(i)}-\E A_\delta^{(i)}
    \right|>\frac t4
\right]
+
\Pr_{f\sim F_\mu}\left[
    \left|
        B_\delta^{(i)}-\E B_\delta^{(i)}
    \right|>\frac t4
\right]
\\
&\qquad\leq
4\exp\left(
    -
    \frac{ct^2}{
        \left(
            \|\mu\|_\infty/\eta
        \right)^{1/3}
        +
        \|\mu|q|^2\|_\infty
    }
\right).
\end{align*}
By the flatness hypothesis and the fact that $\eta\leq1$,
\[
\begin{aligned}
\left(
    \frac{\|\mu\|_\infty}{\eta}
\right)^{1/3}
+
\|\mu|q|^2\|_\infty
&\leq
\eta^{-1/3}\|\mu\|_\infty^{1/3}
+
\|\mu\|_\infty^{1/3}
\\
&\leq
2\eta^{-1/3}\|\mu\|_\infty^{1/3}.
\end{aligned}
\]
Therefore,
\[
\Pr_{f\sim F_\mu}\left[
    \left|
        A^{(i)}-\E A^{(i)}
    \right|>t
\right]
\leq
4\exp\left(
    -ct^2\eta^{1/3}\|\mu\|_\infty^{-1/3}
\right).
\]
Since $\eta=c_0t^8$, this gives
\[
\Pr_{f\sim F_\mu}\left[
    \left|
        A^{(i)}-\E A^{(i)}
    \right|>t
\right]
\leq
4\exp\left(
    -ct^{14/3}\|\mu\|_\infty^{-1/3}
\right).
\]
Finally, because $0<t\leq1$,
\[
    t^{14/3}
    \geq
    t^6,
\]
which proves the proposition.
\end{proof}

\subsection{Finishing up}

\begin{proof}[Proof of \Cref{thm:avg-hat-fixed-q}]
Fix $t>0$.
It suffices to consider $t\in(0,1]$, since
\[
    0
    \leq
    \Avg_{\wh f}[\wh{fq}]
    \leq
    1.
\]

With probability $1-\exp(-\Omega(n))$ over $\mu\sim\bs\mu$, simultaneously
for every $\mu$-compatible $q$,
\[
    \frac1n\sum_i
    \E_{f\sim F_\mu}A^{(i)}
    \leq
    \frac12+o(1)
\]
by \Cref{lem:fixed-q-expect}, and
\[
    \|\mu\|_\infty
    =
    N^{-1+o(1)}.
\]
Fix such a $\mu$ and a $\mu$-compatible $q$ satisfying
\[
    \|\mu|q|^2\|_\infty
    \leq
    \|\mu\|_\infty^{1/3}.
\]
For fixed $t>0$ and all sufficiently large $n$,
\[
    \left(
        \frac{t^8}{\|\mu\|_\infty}
    \right)^{1/3}
    \geq
    C\log N.
\]
Thus \Cref{prop:A-i-concentration} gives, for every $i\in[n]$,
\[
\Pr_{f\sim F_\mu}\left[
    A^{(i)}
    -
    \E_{f\sim F_\mu}A^{(i)}
    >
    t
\right]
\leq
4\exp\left(
    -ct^6N^{1/3-o(1)}
\right).
\]

Recalling that $
    \Avg_{\wh f}[\wh{fq}]
    =
    \frac1n\sum_iA^{(i)}$,
we obtain
\begin{align*}
&
\Pr_{f\sim F_\mu}\left[
    \Avg_{\wh f}[\wh{fq}]
    \geq
    \frac12+t+o(1)
\right]
\\
&\qquad\leq
\Pr_{f\sim F_\mu}\left[
    \frac1n\sum_i
    \left(
        A^{(i)}
        -
        \E_{f\sim F_\mu}A^{(i)}
    \right)
    \geq
    t
\right]
\\
&\qquad\leq
\sum_i
\Pr_{f\sim F_\mu}\left[
    A^{(i)}
    -
    \E_{f\sim F_\mu}A^{(i)}
    \geq
    t
\right]
\\
&\qquad\leq
4n\exp\left(
    -ct^6N^{1/3-o(1)}
\right)
\\
&\qquad\leq
\exp\left(
    -ct^6N^{1/3-o(1)}
\right),
\end{align*}
where the last line follows, after decreasing $c$, for all sufficiently large
$n$.
This proves the theorem.
\end{proof}

\section*{Interlude: Proof of the uncertainty principle for phase states}
\label{sec:phase-state-UP}
\addcontentsline{toc}{section}{\nameref{sec:phase-state-UP}}

As a warmup to the main theorem, we take a moment to prove the uncertainty principle in the special case of random phase states.
The contents of this section are not required for the proof of the main theorem and the reader may freely skip it.

\begin{theorem}
    \label{thm:UP-phase-state}
    For $n$ large enough, for $f$ a random phase function, for all $g$ compatible with $f$,
    \[\Avg_f[g]+\Avg_{\wh f}[\wh g]\leq 1.99905,\]
    except with probability $2^{-\Omega(n)}$ over the choice of $f$.
\end{theorem}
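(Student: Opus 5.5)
We will follow the win-win argument sketched in the technical overview, now made quantitative. For a random phase function $f$ we have $\mu_f=\unif$ deterministically, so $F_\mu$ is just the set of phase functions and \Cref{thm:avg-hat-fixed-q} applies with $\|\mu\|_\infty=1/N$ (the ``with high probability over $\mu$'' clause is vacuous). For $g$ compatible with $f$, write $g=fq$. Then $q$ is a $\unif$-compatible function, i.e.\ $\E q=0$ and $\E|q|^2=1$, and by \eqref{eq:inf-appearance-phase}, $\Avg_f[g]=1-\frac1n\Inf[q]$. We fix a small constant $\kappa$ and split into two cases.

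\textbf{Case 1: $\Inf[q]>\kappa n$.} Then $\Avg_f[g]<1-\kappa$. Since $\Avg_{\wh f}[\wh g]\le 1$, the sum is at most $2-\kappa$.

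\textbf{Case 2: $\Inf[q]\le\kappa n$.} Here we build a net. Writing $\Inf[q]=\sum_s|s||\wh q(s)|^2$, Markov gives that $q$ is within $L^2$ distance $\sqrt{\kappa/\gamma}$ of its truncation to degree at most $\gamma n$. The space of such truncations has dimension
\[
D=\sum_{k\le\gamma n}\binom nk\le 2^{H(\gamma)n}.
\]
So a $\beta$-net $Q$ of the unit ball in that space has $|Q|\le(3/\beta)^{2D}$, and it is a $(\beta+\sqrt{\kappa/\gamma})$-net of all low-influence compatible $q$.

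To apply \Cref{thm:avg-hat-fixed-q} to the net points, we need each $q^*\in Q$ to be (after rescaling) $\unif$-compatible and flat, meaning $N^{-1}|q^*(x)|^2\le N^{-1/3}$.
\begin{itemize}
\item For compatibility, we drop the constant coefficient (it is already $0$) and renormalize. This costs only a constant factor in the net radius.
\item For flatness, hypercontractivity gives $\|q^*\|_\infty\le \sqrt{N}\,\|q^*\|_2$ trivially. More usefully, for degree $\gamma n$ we have $\|q^*\|_\infty^2\le D$, and $D\le N^{2/3}$ once $H(\gamma)<2/3$. So flatness holds automatically.
\end{itemize}
A union bound with \Cref{thm:avg-hat-fixed-q} at fixed $t$ then gives failure probability
\[
(3/\beta)^{2\cdot 2^{H(\gamma)n}}\exp(-ct^6N^{1/3-o(1)}),
\]
which is doubly-exponentially small provided $H(\gamma)<1/3$.

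We then pass from the net to arbitrary $q$. The map $q\mapsto\Avg_{\wh f}[\wh{fq}]=\langle fq,\wh{M_{\wh f}}\,fq\rangle$ is a quadratic form with $\|\wh{M_{\wh f}}\|_\op\le1$, and multiplication by $f$ is an isometry from $L^2(\unif)$ to $\ell^2$. Hence
\[
\bigl|\Avg_{\wh f}[\wh{fq}]-\Avg_{\wh f}[\wh{fq^*}]\bigr|\le 2\|q-q^*\|_{L^2(\unif)},
\]
so in Case 2 we get $\Avg_{\wh f}[\wh g]\le \tfrac12+t+o(1)+2(\beta+\sqrt{\kappa/\gamma})$.

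It remains to tune constants. Choose $\gamma$ with $H(\gamma)<1/3$ (e.g.\ $\gamma=0.06$). Pick $\kappa$ small enough that $2\sqrt{\kappa/\gamma}$ is below $1/4$, and take $t,\beta$ small. Case 2 then yields a sum at most $1+\tfrac34<2-\kappa$. The overall bound is $\max\{2-\kappa,\,1.75+\cdots\}=2-\kappa$, and to get the stated $1.99905$ we take $\kappa\approx 9.5\times10^{-4}$ and check it is compatible with the Case 2 requirement for our $\gamma$.

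The main obstacle I expect is this constant bookkeeping. The tension is that $\kappa$ must make the net distance $\sqrt{\kappa/\gamma}$ small, while $\gamma$ is forced below the entropy threshold by the $N^{1/3}$ concentration rate. With $\gamma\approx0.06$, $\kappa\approx10^{-3}$ gives $\sqrt{\kappa/\gamma}\approx0.13$, so the Case 2 bound is about $0.5+0.26+t$, comfortably below $1-\kappa$. A secondary point is the $o(1)$ and $n$-large requirements in \Cref{thm:avg-hat-fixed-q}, which are absorbed for $n$ large.
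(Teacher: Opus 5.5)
Your proposal is correct and is essentially the paper's argument: the high-/low-influence win-win, a low-degree net, a union bound against \Cref{thm:avg-hat-fixed-q}, and the $2$-Lipschitz extension. It is in fact more careful than the paper's proof, which never checks that its net points are compatible and flat (your bound $\|q^*\|_\infty^2\le D\le N^{H(\gamma)}$ supplies this); the paper also ties the net radius and $t$ to the truncation error $\eps=\sqrt{\kappa/\gamma}$ and so gets only $\tfrac32+4\eps$, whereas you get $\tfrac32+2\sqrt{\kappa/\gamma}+t+2\beta+o(1)$ with $t,\beta$ free. The one thing to tighten is the renormalization of a net point $q^{**}$ with $\|q^{**}\|_{L^2(\unif)}\le 1$. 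Do it by homogeneity, $\Avg_{\wh f}[\wh{fq^{**}}]=\|q^{**}\|_{L^2(\unif)}^2\,\Avg_{\wh f}[\wh{fq^*}]\le\Avg_{\wh f}[\wh{fq^*}]$, rather than by a triangle-inequality factor: a factor $2$ on $\sqrt{\kappa/\gamma}\approx0.126$ would push your Case~2 bound on $\Avg_{\wh f}[\wh g]$ above $1-\kappa$ at $\gamma=0.06$.
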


We need to bound the covering number of the set of low-influence functions
\[\mc Q_{n,\kappa}:=\big\{q:\{0,1\}^n\to\C:\|q\|_{L^2(\unif)}^2=1, \Inf [q]\leq \kappa n\big\}.\]
The following covering number bound is fairly standard but we include it for completeness.
\begin{prop}
    \label{prop:unif-metric-entropy}
    The $L^2(\unif)$ $\eps$-covering number of $\mc Q_{n,\kappa}$ bounded as
    \[
\mathcal N(\eps,\mc Q_{n,\kappa})
\le 
\exp\left(2^{H_2(\min\{\kappa/\varepsilon^2,1/2\})n}\cdot\log\tfrac{C}{\varepsilon^2}\right),
\]
where $H_2$ is the binary entropy.
\end{prop}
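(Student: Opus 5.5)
The plan is the standard spectral truncation plus volumetric net argument. Fix $q\in\mc Q_{n,\kappa}$ and expand it in the Walsh basis, $q=\sum_s\wh q_{\unif}(s)\chi_s$ with $\sum_s|\wh q_{\unif}(s)|^2=\|q\|_{L^2(\unif)}^2=1$. Here $\wh q_{\unif}$ denotes the Fourier coefficients relative to the uniform probability measure. By \eqref{eq:influence}, in the $L^2(\unif)$ normalization, $\Inf[q]=\sum_s|s|\,|\wh q_{\unif}(s)|^2\le\kappa n$.

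\textbf{Step 1: truncate to low degree.} Markov's inequality on the spectral sample gives
\[\sum_{|s|>d}|\wh q_{\unif}(s)|^2\le \frac{\kappa n}{d}.\]
Choosing $d=\kappa n/(\eps/2)^2$ (absorbing constants, so that $d\approx \kappa n/\eps^2$), the tail has $L^2(\unif)$ norm at most $\eps/2$. Hence every $q\in\mc Q_{n,\kappa}$ is within $\eps/2$ of its truncation $q^{\le d}$. This truncation lies in the subspace $V_d=\mathrm{span}\{\chi_s:|s|\le d\}$ and has norm at most $1$.

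\textbf{Step 2: count the low-degree subspace.} The complex dimension of $V_d$ is $\dim V_d=\sum_{k\le d}\binom nk$. If $d/n=\kappa/\eps^2\le 1/2$, the standard entropy bound gives $\dim V_d\le 2^{H_2(\kappa/\eps^2)n}$. Otherwise we use the trivial bound $\dim V_d\le 2^n=2^{H_2(1/2)n}$. Together these account for the $\min\{\kappa/\eps^2,1/2\}$ in the statement.

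\textbf{Step 3: net the unit ball.} By the volumetric argument, the unit ball of $V_d$, viewed as a real space of dimension $2\dim V_d$ with the $L^2(\unif)$ norm (which is Euclidean in Walsh coordinates), has an $(\eps/2)$-net of size at most
\[(1+4/\eps)^{2\dim V_d}\le \exp\bigl(\dim V_d\cdot\log(C/\eps^2)\bigr).\]
The triangle inequality then shows this is an $\eps$-net of $\mc Q_{n,\kappa}$. If a net inside $\mc Q_{n,\kappa}$ itself is required, we pass from an external net to an internal one at the cost of replacing $\eps$ by $2\eps$, which is absorbed into $C$ and into the constants in $H_2$.

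There is no real obstacle here. The only care needed is in the constant bookkeeping: the factor-2 losses in $\eps$ must be absorbed so that the exponent reads exactly $H_2(\min\{\kappa/\eps^2,1/2\})$, for instance by running the truncation with $d=4\kappa n/\eps^2$ and then either rescaling or tolerating a constant inside the minimum. Beyond that, we only need the entropy bound $\sum_{k\le \alpha n}\binom nk\le 2^{H_2(\alpha)n}$ for $\alpha\le 1/2$.
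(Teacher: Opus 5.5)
Your proposal is correct and is essentially the paper's proof. Both use Markov truncation of the spectral sample via $\Inf[q]=\sum_s|s|\,|\wh q(s)|^2$, the entropy bound $\sum_{k\le d}\binom nk\le 2^{H_2(d/n)n}$ (with the trivial $2^n$ bound accounting for the $\min\{\cdot,1/2\}$), and a volumetric net of the unit ball of the degree-$\le d$ subspace. The constant-factor loss you flag is real and is also present, unacknowledged, in the paper's argument, which truncates at $d=\lfloor\kappa n/\eps^2\rfloor$ and then nets the ball at radius $\eps$ without budgeting for the net's own error. Since the tail $q-q^{\le d}$ is orthogonal to $V_d$, the two errors add in quadrature, so $d=2\kappa n/\eps^2$ suffices; either way, the bound is proved with an absolute constant multiplying $\kappa/\eps^2$ inside $H_2$.
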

\begin{proof}
Define $f_{\leq d} = \sum_{|S|\le d} \wh f(S)\chi_S$.
Then
\[
\|f - f_{\leq d}\|_2^2
= \sum_{\substack{S\subseteq[n]\\|S|>d}} \wh f(S)^2
\le \frac{1}{d}\sum_{S\subseteq[n]} |S|\wh f(S)^2
= \frac{I[f]}{d}
\le \frac{\kappa n}{d},
\]
and choosing $d = \lfloor \kappa n/\varepsilon^2 \rfloor$ gives $
\|f - f_{\leq d}\|_2 \le \varepsilon$.

Thus $\mc Q_{n,\kappa}$ is $\varepsilon$-covered by the unit ball of the subspace $\mathrm{Span}\{\chi_S : |S|\le d\}$,
whose dimension is
\[
D_d = \textstyle\sum_{j=0}^{d} \binom{n}{j} \le 2^{H_2(d/n)\,n}
\le 2^{H_2(\kappa/\varepsilon^2)\,n},
\]
where the inequalities hold provided $d\le n/2$.
A standard volumetric bound for the Euclidean ball in $\C^{D_d}$ yields
\[
\mathcal N(\varepsilon,\mc Q_{n,\kappa})
\le \exp\left(D_d \log\frac{C}{\varepsilon^2}\right)
\]
for a universal constant $C$.
Substituting for $D_d$ yields the quoted bound.
\end{proof}

We are now prepared to prove \Cref{thm:UP-phase-state}.

\begin{proof}[{Proof of \Cref{thm:UP-phase-state}}]
    Let $\kappa$ be an influence threshold to be chosen later and let $g=:fq$, which is well-defined with probability 1.
    If $\Inf[q]\geq \kappa$, then $\Avg_f[q]\leq 1-\kappa$.

    Otherwise, $\Inf[q]\leq \kappa$ and by \Cref{prop:unif-metric-entropy}
    \[\log \mathcal N(\varepsilon,\mc Q_{n,\kappa})
\le \mc O\left(
2^{H_2(\min\{\kappa/\varepsilon^2,1/2\})n}\cdot\log\tfrac{1}{\varepsilon}\right),\]
    Let $\mathrm{Net}_\eps(\mc Q_{n,\kappa})$ be a minimum-cardinality $\eps$-net for $\mc Q_{n,\kappa}$.
    Then by \Cref{thm:avg-hat-fixed-q}, we have after putting $t=\eps$ that
    \[\Pr_f\left[\forall q\in \mathrm{Net}_\eps(\mc Q_{n,\kappa}), \Avg_{\wh f}[\wh{fq}] \geq \frac12 + 2\eps + o(1)\right]\leq \exp\left(2^{H_2(\min\{\kappa/\varepsilon^2,1/2\})n}\cdot\log\tfrac{C}{\varepsilon^2}-c\eps^62^{n/3}\right)\]

    Therefore for any $q\in \mc Q_{n,\kappa}$ and the closest $q^*\in \mathrm{Net}_\eps(\mc Q_{n,\kappa})$ to $q$,
    \begin{align*}
        \Avg_{\wh f}[\wh{fq}] =: \langle q,\wt A q\rangle &\leq \langle q^*,\wt A q^*\rangle+2\|\wt A\|_\op\|q-q^*\|_{L^2(\unif)}\\
        &\leq \frac12 + 4\eps+o(1)
    \end{align*}
    with high probability over $f$, provided
    \begin{equation}
    \label{ineq:phase-state-kappa-eps}
    H_2\left(\min\left\{\frac{\kappa}{\eps^2},\frac12\right\}\right) < \frac13.
    \end{equation}
    Conditioned on this event, we therefore have
    \[\Avg_f[g]+\Avg_{\wh f}[\wh g]\leq \max\left\{2-\kappa,\frac32+4\eps+o(1)\right\}\]
    It remains to optimize $\kappa,\eps>0$ subject to the constraint of \eqref{ineq:phase-state-kappa-eps}.
    One setting close to the optimum is
    \[\kappa\approx 0.00096\qquad \eps \approx 0.12476,\]
    yielding the explicit upper bound of $\Avg_f[g]+\Avg_{\wh f}[\wh g]\leq 1.99905$.
\end{proof}

\section{The $A$-$B$ partition and comparing Dirichlet forms on $A$}
\label{sec:AB-split}

We now return to the proof of \Cref{thm:full-uncertainty}. As outlined in the technical overview, the next step is to split $\{0,1\}^n$ into two sets, a large one $A$ such that the $\eps$-net argument works on functions supported on $A$, and a small one $B$ where the $\eps$-net argument is false but we can still bound $\Avg$ for functions supported here by other means.
In particular, for a universal constant $T>1$ to be chosen later, set
\[A=\left\{x:\mu(x)\in\left[\frac{1}{TN}, \frac{T}{N}\right]\right\}\qquad\text{and}\qquad B=A^c\,.\]
Define $q_A=q\cdot\1_A$ and $q_B=q\cdot\1_B$.

The idea is that $\Inf_\mu$ should be comparable (up to $T$) with $\Inf_{\unif}$ on $A$, so the covering number facts that hold for low-influence functions on the uniform-measure hypercube should hold also for $q_A$.
Unfortunately, it is not always true that $q_A$ is low influence, even though $q$ is promised to be.
Recall that in the win-win argument, we are only given $\Inf_\mu(q)\leq \kappa n$.
The only immediate fact we can conclude about $q_A$ from the given bound is that 
\[\Inf_{\unif}^{\Int (A)}(q_A)\simeq_T\Inf_\mu^{\Int(A)}(q_A)=\Inf_\mu^{\Int(A)}(q)\overset{\text{(inclusion)}}\leq \Inf_\mu(q)\leq \kappa\,,\]
where for any function $h$,
\[\Inf_\mu^{\Int(A)}(h):=\sum_{\substack{x,y\in A\\y\sim x}}\frac{2\mu(x)\mu(y)}{\mu(x)+\mu(y)}\left|\frac{h(x)-h(y)}{2}\right|^2\,.\]
Unfortunately, without using more about $A$, having bounded internal influence $\Inf_\mu^{\Int(A)}$ is far from enough to get a covering number bound.
A simple counterexample is $A$ that is an independent set on the hypercube graph (no interior); a more general one is $A$ any collection of very many small, disjoint subgraphs $A_j$, $j=1,\ldots, J$.
As long as $q$ is constant on each of the $A_j$'s, $\Inf^{\Int(A)}_\unif(q_A)$ will always be zero, even though such functions consitute a $J$-dimensional subspace.

But these examples merely show we must use more about $A$.
The main purpose of this section is to show that with high probability over $\mu$, the geometry of $A$ is such that a bound $\Inf_\unif^{\Int(A)}(q_A)$ indeed suffices to control the metric entropy of the $q_A$'s.
\begin{theorem}[Metric entropy of $A$-part of low-energy $q$]
\label{thm:metric-entropy}
    Let
    \[Q_\kappa=\Big\{q_A:\E_\mu q= 0 , \|q\|_{L^2(\mu)}=1,\Inf_\mu^{\Int(A)}(q)\leq \kappa n\Big\}\]
    Then with high probability over $\mu\sim \bs \mu$, for all $\eps\in (0,1]$,
    \[\log\mc N_\eps(Q_\kappa, L^2(\mu))\leq N^{H(\gamma)+o(1)}\log(C/\eps) \qquad\text{for}\qquad\gamma:=C\frac{T^5\kappa}{\eps^2}\,.\]
\end{theorem}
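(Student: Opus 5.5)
We will prove the bound by building, for each $q_A\in Q_\kappa$, an extension $\Ext_A q_A$ to the whole cube whose uniform-measure total influence is $\lesssim T^{O(1)}\kappa n$. Once this is in hand, the argument of \Cref{prop:unif-metric-entropy} applies: an $\eps$-net for the extensions in $L^2(\unif)$ restricts to a net for the $q_A$'s.

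\textbf{Step 1: comparison on $A$.} On $A$ we have $\mu(x)\in[1/(TN),T/N]$. Hence the harmonic-mean edge weight of every interior edge is within a factor $T$ of $1/N$, which gives
\[\Inf_\unif^{\Int(A)}(N^{1/2}\cdot\text{normalized } q_A)\lesssim T\,\Inf_\mu^{\Int(A)}(q)\le T\kappa n.\]
Likewise, $\|q_A\|_{L^2(\mu)}$ and $\|q_A\|_{L^2(\unif)}$ agree up to factors of $T$. We therefore rescale $q_A$ so that $\|q_A\|_{L^2(\unif)}\le T$ and, up to further factors of $T$, pass between $L^2(\mu)$-nets and $L^2(\unif)$-nets on functions supported in $A$.

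\textbf{Step 2: the extension operator.} For $y\in B$, set $\Ext_A q_A(y)$ equal to the average of $q_A$ over a short canonical set of $A$-vertices near $y$. The structural input, which holds with high probability over $\mu$, is a ``local escape property'' in the spirit of \cite{Huang2025}. Membership in $B$ is nearly i.i.d. with small density $p_T=\Pr[\mathrm{Exp}(1)\notin[1/T,T]]\approx 1/T$. Consequently, with probability $1-2^{-\Omega(n)}$ every vertex has all but $O(p_Tn)$ of its neighbors in $A$, and $B$ contains no large connected clusters, with clusters of size at most $O(1)$ or $O(\log n)$ by a union bound over animals in the cube. Using these facts, each boundary edge and each $B$-edge is routed along a canonical path of length $O(1)$ through $A$. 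A canonical-paths (Cauchy–Schwarz) comparison then bounds
\[\Inf_\unif[\Ext_Aq_A]\lesssim (\text{path length})^2\cdot(\text{congestion})\cdot\Inf_\unif^{\Int(A)}(q_A).\]
The main obstacle is showing that the congestion is $O(\mathrm{poly}(T))$ uniformly over all interior edges. This is where the powers $T^5$ enter, through the path length, the congestion, and the comparison factors of Step 1. The first thing to try is to define the canonical paths using a $2$-step neighborhood: every $B$-vertex $y$ is matched to $\Omega(n)$ $A$-neighbors, and the extension value is the average over those neighbors. Each interior edge is then used only by the $O(n)$ $B$-vertices within distance $2$, each with weight $1/\Omega(n)$.

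\textbf{Step 3: covering.} We also need $\|\Ext_Aq_A\|_{L^2(\unif)}\lesssim \mathrm{poly}(T)$, which follows from averaging and the bounded multiplicity established in Step 2. We then apply the degree-truncation argument of \Cref{prop:unif-metric-entropy} to $\Ext_Aq_A$ at level $d=\gamma n$ with $\gamma=CT^5\kappa/\eps^2$. This places every extension within $\eps$ of the unit ball of a space of dimension $N^{H(\gamma)+o(1)}$, and a volumetric net of that ball has log-cardinality $N^{H(\gamma)+o(1)}\log(C/\eps)$. Restricting the net to $A$ and converting back to $L^2(\mu)$ via Step 1 only changes constants absorbed into $C$. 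This proves \Cref{thm:metric-entropy}.
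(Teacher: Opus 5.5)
Your architecture matches the paper's. You extend $q_A$ to $B$ by averaging over the $A$-neighbours $\mc N_A(b)$, compare $\Inf_{\unif}(\Ext_A q_A)$ with the interior energy on $A$ by routing along the canonical $i$-paths of \cite{Huang2025}, then truncate to Fourier degree $d=\gamma n$ and count volumetrically. Your remark that one also needs $\|\Ext_A q_A\|_{L^2(\unif)}\lesssim \mathrm{poly}(T)$ is a point the paper leaves implicit.

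The gap is in Step 2, in the probabilistic input that is supposed to make the routing possible. The facts you cite there are false, and the fact actually needed is never established.
\begin{itemize}
\item $B$ has constant density $p_T$, so a typical $B$-vertex has about $p_T n\to\infty$ neighbours in $B$. Its $B$-cluster therefore already has size $\Omega(n)$, and site percolation on the cube at constant density is supercritical, so $B$ has a giant component. Your animal union bound gives $N(enp_T)^k$, which is useless.
\item ``All but $O(p_Tn)$ neighbours in $A$'' also fails uniformly over the $2^n$ vertices. A Chernoff bound for $\mathrm{Bin}(n,p_T)\ge an$ beats $2^{-n}$ only when $D(a\,\|\,p_T)>\log 2$, and this forces $a\gtrsim 1/\log(1/p_T)\gg p_T$.
\end{itemize}
So the claim that ``each boundary edge and each $B$-edge is routed along a path of length $O(1)$ through $A$'' has no support. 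It is not automatic either: the two geodesics between $A$-neighbours $a=y+e_i$ and $a'=y+e_j$ of $y\in B$ pass through $y$ and $y+e_i+e_j$, and either may lie in $B$.

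What you need is the local escape property. With probability $1-2^{-\Omega(n)}$, every pair $x,y$ with $d(x,y)\le 3$ must have at least $\alpha n$ of its internally disjoint canonical paths $P_i(x,y)$ (length $\le d(x,y)+2$) with all internal vertices in $A$.
\begin{itemize}
\item Distance $2$ handles pairs inside $\mc N_A(b)$, i.e.\ the $AB$ edges.
\item Distances $1$ and $3$ handle $u\in\mc N_A(b)$, $v\in\mc N_A(b')$ with $b\sim b'$, i.e.\ the $BB$ edges, which are plentiful here.
\end{itemize}
The paper proves this in the exponential model using the surrogate $\wt A=\{2/T\le Z_x\le T/2\}\subseteq A$. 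Internal disjointness makes the events ``$P_i(x,y)$ is good'' independent, each with probability $>1/2$ once $T\ge 12$. A Chernoff bound with rate $>\log 2$ then survives the union bound over $2^{n+o(n)}$ pairs.

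Alternatively, the corrected degree bound ``at most $\epsilon n$ neighbours in $B$'' with $\epsilon<1/8$, which holds for $T$ large, gives the property deterministically. The reason is that the internal vertices of $P_k(x,y)$ with $k\notin D$ all have the form $z+e_k$ for at most $8$ fixed $z$.

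The congestion bound then comes from \Cref{lem:path-count-lem}. A fixed edge lies on at most $6(2n)^\ell$ canonical paths joining pairs at distance $\ell$, and each such path carries weight $O((\alpha n)^{-\ell})$. The resulting constant is $O(\alpha^{-3})$, independent of $T$. Your count of ``$O(n)$ $B$-vertices each with weight $1/\Omega(n)$'' is off, since an edge has $\Theta(n^2)$ vertices within distance $2$. The powers of $T$ enter only through your Step 1 comparisons, not through path length or congestion.
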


Here and throughout, $\mc N_\eps(S,L^2(\mu))$ denotes the $\eps$-covering number of a set $S$ with respect to the metric induced by $L^2(\mu)$.

The main idea in the proof is to construct a \textit{linear extension operator} $\Ext_A$ so that $(\Ext_Aq_A)_A=q_A$ and
\[\Inf_{\unif}(\Ext_A q_A)\;\lesssim_T\;\Inf_{\unif}^{\Int(A)}(q_A)\overset{\text{(by the above)}}\leq \kappa n\,.\]
We may then apply standard dimension-counting argument for low-influence function to the $\Ext_Aq_A$'s.

\bigskip
To begin, for any $b\in B$ write $\mc N_A(b)=\{x\in A:x\sim b\}$.
Then define $\Ext_A$ via
\[
(\mathrm{Ext}_A h)(x)=
\begin{cases}
h(x), & x\in A,\\[1mm]
\displaystyle \frac{1}{|\mc N_A(b)|}\sum_{u\in \mc N_A(b)} h(u), & x=b\in B.
\end{cases}
\]
for any $h:\{0,1\}^n\to \C$.
Clearly, $(\Ext_Ah)_A=h_A)$.
Our analysis of $\Ext_A$ draws inspiration from the literature on canonical paths and the ``local escape property'' of \cite{Huang2025}.
We will use the canonical paths employed in \cite{Huang2025}; let us recall these and then assemble some facts about them.

\begin{definition}[Canonical $i$-paths, adapted from \cite{Huang2025}]
Let $(u,v)\in \{0,1\}^n$.
For $i\in [n]$ define the $i$\textsuperscript{th} canonical path, $P_i(u,v)$ to be as follows.
\begin{itemize}
    \item Let $D=\{j:u_j\neq v_j\}$.
    \item Begin at $u$ and flip coordinate $i$
    \item Scan rightward (cyclically) and flip each bit in $D$.
    \item If $i\notin D$, flip $i$ again at the end.
\end{itemize}
\end{definition}

\begin{fact}[{Adapted from \cite[Prop. 11]{Huang2025}}]
\label{prop:canonical-paths}
    For any two vertices $x,y\in\{0,1\}^n$, the paths $P_i(x,y)$ are simple, pairwise internally disjoint, and of length $|P_i(x,y)|\leq d(x,y)+2$.
\end{fact}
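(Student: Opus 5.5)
Fix $x,y$ and $i\in[n]$, and write $D=\{j:x_j\neq y_j\}$. I would prove the three properties of \Cref{prop:canonical-paths} directly from the definition of $P_i(x,y)$, in the order length, simplicity, internal disjointness.

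\emph{Length.} The path flips $i$ first, then flips each bit of $D$ once. If $i\in D$, the first flip already handles coordinate $i$, so the scan flips only the bits of $D\setminus\{i\}$; the total is $|D|=d(x,y)$ flips. If $i\notin D$, there is the first flip of $i$, the $|D|$ scan flips, and a final flip of $i$ to restore it, for $d(x,y)+2$ flips. This gives $|P_i(x,y)|\le d(x,y)+2$.

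\emph{Simplicity.} Every coordinate is flipped at most once, except $i$ when $i\notin D$, which is flipped exactly twice (first and last). A return to an earlier vertex would require every coordinate to be flipped an even number of times between the two visits. The only coordinate flipped twice is $i$, at the two ends, so the only candidate is the whole path, which is a closed walk only when $D=\emptyset$ and $x=y$. That degenerate case I would exclude or treat by convention.

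\emph{Internal disjointness.} This is the main step. Take an interior vertex $z$ of $P_i(x,y)$ and let $S_i(z)$ be the set of coordinates on which $z$ differs from $x$. Every interior vertex has had $i$ flipped an odd number of times, so its $i$-bit is flipped relative to $x$. After the first step, $z$ is reached by flipping $i$ together with the elements of $D$ met in a contiguous cyclic block scanning rightward from $i$. So $S_i(z)\triangle\{i\}$ is a set of the form $D\cap\{i+1,\dots,i+k\}$ (indices mod $n$), excluding the final endpoint.

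Suppose $z$ is interior to both $P_i$ and $P_{i'}$ with $i\neq i'$. From $z$ alone, the coordinates flipped relative to $x$ must contain $i$ with the $i$-bit flipped, and likewise $i'$ with the $i'$-bit flipped. Hence the difference set $S(z)$ contains both $i$ and $i'$. In $P_i$, an index $j\neq i$ lies in $S(z)$ only if $j\in D$ and $j$ lies in the cyclic interval after $i$ already scanned. Case on whether $i,i'\in D$:
\begin{itemize}
\item If $i'\notin D$, then $i'\in S(z)$ forces $z$ to be in $P_{i'}$'s interior, but in $P_i$ the coordinate $i'\notin D\cup\{i\}$ is never flipped. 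This is a contradiction.
\item If both lie in $D$, then $S(z)\setminus\{i\}$ is an initial cyclic segment of $D$ starting right after $i$, and $S(z)\setminus\{i'\}$ is one starting right after $i'$. Then $S(z)$ is a cyclic interval of $D$ beginning at $i$ and also one beginning at $i'$. Two distinct starting points can give the same cyclic interval only if the interval is all of $D$, which means $z=y$, an endpoint.
\end{itemize}
This contradiction gives internal disjointness. The step I expect to be hardest is pinning down the cyclic bookkeeping, in particular that the segment begins at $i$, together with the $i\notin D$ case where $i$ is unflipped at the end. I would double-check the boundary case $z$ adjacent to an endpoint.
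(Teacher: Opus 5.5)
Your proof is correct. The paper gives no argument of its own for this fact; it imports it from \cite[Prop.~11]{Huang2025}, so there is no in-paper route to compare against.

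Your direct verification is the natural self-contained one. Along $P_i(x,y)$, every interior vertex $z$ has difference set $S(z)=\{i\}\cup I$, where $I$ is an initial block of $D\setminus\{i\}$ in the cyclic order after $i$. So a common interior vertex of $P_i$ and $P_{i'}$ forces $i,i'\in D$. It then exhibits a single cyclic block of $D$, of size at most $|D|-1$, with two distinct starting points, which is impossible.

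Two small remarks:
\begin{itemize}
\item The case $i\notin D$, $i'\in D$ is not listed, but it follows from your first bullet by symmetry. That bullet's wording is also inverted: it is membership of $z$ in the interior of $P_{i'}$ that forces $i'\in S(z)$, not the other way round.
\item You are right that the statement needs $x\neq y$, since $P_i(x,x)=(x,x+e_i,x)$ is not simple. This is harmless for the paper, which only applies the fact with $1\le d(x,y)\le 3$.
\end{itemize}
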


We need a version of the local escape property of \cite{Huang2025} for general $A$ (not just the thresholds $[\frac1{11N},\frac{5}{N}]$).
\begin{prop}[{Local escape property, generalized from \cite[Lemma 17]{Huang2025}}]
    For any $T\geq 12$
    there are $\alpha=\alpha(T),\beta=\beta(T) > 0$ such that for all $n$ large enough, with probability at least $1-2^{-\beta n}$,
    \begin{itemize}
        \item each $x\in \{0,1\}^n$ has at least $\alpha n$ neighbors in $A$;
        \item for all $x,y$ with $d(x,y)\leq 3$, at least $\alpha n$ of the  $i$-paths $P_i(x,y)$ have all internal vertices in $A$.
    \end{itemize}
\end{prop}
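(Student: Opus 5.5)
The plan is to follow the template of \cite[Lemma 17]{Huang2025}, replacing their specific thresholds by $[\frac1{TN},\frac TN]$. We use the standard representation of the Porter--Thomas measure $\bs\mu$ as $\mu(x)=E_x/\sum_y E_y$ with $E_x$ i.i.d.\ $\mathrm{Exp}(1)$. Since $\sum_yE_y=N(1\pm o(1))$ except with probability $\exp(-\Omega(N^{c}))$, up to this negligible event membership $x\in A$ is implied by the event $E_x\in[\frac{2}{T},\frac{T}{2}]$. These events are i.i.d.\ over $x$ with failure probability
\[
p(T):=1-e^{-2/T}+e^{-T/2}\le \frac{2}{T}+e^{-T/2}.
\]
For $T\geq 12$ this is at most roughly $0.17$, comfortably below $1/2$. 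So it suffices to prove both bullets for the i.i.d.\ random set $A'=\{x: E_x\in[2/T,T/2]\}\subseteq A$. Enlarging $A$ only helps, since both properties are monotone in $A$.

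\emph{First bullet.} Fix $x$. Its $n$ neighbors lie in $A'$ independently, each with probability $1-p\geq 0.8$. By Chernoff, the probability that fewer than $\alpha n$ of them lie in $A'$ (for, say, $\alpha<1/2$) is at most
\[
\exp\big(-n\,D(\alpha\,\|\,1-p)\big).
\]
A union bound over the $N=2^n$ vertices requires $D(\alpha\|1-p)>\ln 2$. This holds for $\alpha$ small once $\ln\frac1p>\ln 2$, i.e.\ $p<1/2$, which is true for $T\ge 12$ (and the margin grows with $T$).

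\emph{Second bullet.} Fix $x,y$ with $d(x,y)\le3$. By \Cref{prop:canonical-paths} the $n$ paths $P_i(x,y)$ are internally disjoint, each with at most $5$ internal vertices. Hence the events ``$P_i$ has all internal vertices in $A'$'' are independent across $i$, each holding with probability at least $(1-p)^5$. Chernoff again bounds the probability that fewer than $\alpha n$ paths succeed by $\exp(-nD(\alpha\|(1-p)^5))$. The union bound is now over at most $N\cdot O(n^3)$ pairs, so we need
\[
\ln\frac{1}{1-(1-p)^5}>\ln 2,
\]
i.e.\ $(1-p)^5>1/2$, i.e.\ $p<1-2^{-1/5}\approx 0.129$. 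For $T\geq 12$ we have $p\le 2/12+e^{-6}\approx 0.169$, which is not enough with this crude exponential-threshold argument.

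This numerical step is the main obstacle, and I would fix it in two ways. First, sharpen the reduction from $A$ to $A'$: use thresholds $E_x\in[\frac{1+o(1)}{T},(1-o(1))T]$ instead of the factor-2 slack. This gives $p\approx 1-e^{-1/T}+e^{-T}\approx 0.080$ at $T=12$, well below $0.129$. Second, when $i\in D$ the path $P_i$ has only $d(x,y)-1\le 2$ internal vertices, and otherwise at most $d+1\le 4$. Using at most $4$ internal vertices in general only relaxes the requirement further. Both the constant $\alpha(T)$ and the exponent $\beta(T)$ then come out of the Kullback--Leibler gaps, with $\beta$ reduced slightly to absorb the $\mathrm{poly}(n)$ union-bound factors and the $\exp(-N^{c})$ normalization event.
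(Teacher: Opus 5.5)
Your proposal is correct and is essentially the paper's proof: the exponential model, an i.i.d.\ surrogate subset of $A$ valid on a normalization event, and Chernoff plus a union bound using internal disjointness of the canonical paths, where the key condition is that each trial succeeds with probability above $1/2$. The numerical obstacle you hit comes from two overcounts, namely counting $5$ internal vertices (length $\le d+2\le 5$ gives at most $4$) and using the lossy bound $1-e^{-2/T}\le 2/T$. With the exact value $e^{-1/6}-e^{-6}\approx 0.844$ and four internal vertices, the factor-$2$ slack already gives $0.844^4\approx 0.507>1/2$ at $T=12$, which is precisely how the paper calibrates $T\ge 12$; your tighter normalization slack is also a valid fix and gives more margin.
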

\begin{proof}
	Let us use a normalized exponential model for $\mu$.
	That is, define $Z_x$, $x\in \{0,1\}^n$ to be iid $\mathrm{Exp}(1)$ and let $\mu(x)=Z_x/(\sum_yZ_y)$.
	Let
	\[E=\{N/2\leq \tsum_xZ_x\leq 2N\}\]
	and as usual, $\Pr[E]\geq 1-\exp(-\Omega(N))$.
	Let us define a ``surrogate'' $A$ via
	\[\wt A=\{x: 2/T\leq Z_x\leq T/2\}\]
	and note that if $E$ obtains, $\wt A\subseteq A$, because
	\[\frac{1}{TN}\leq \mu(x)=\frac{Z_x}{\sum_y Z_y}\leq \frac TN\,.\]
	Also note for any fixed $x$,
	\[p_{\wt A}:=\Pr\!\big[x\in \wt A\,\big]=e^{-2/T}-e^{-T/2},\]
	so for all $T\geq 12$,
	\[p_{\wt A}^4>\frac12.\]
	\medskip
	
	\noindent\textit{Proof of (i).}
	Denote the number of $\wt A$-neighbors of $x$ by $G_x=|\{j\in[n]:x+e_j\in \wt A\}|$.
	By independence, $G_x\sim \mathrm{Bin}(n,p_{\wt A})$ and so for the appropriate choice of $\alpha_1$,
	\[\Pr[G_x< \alpha_1 n]\leq e^{-c_1 n}\]
	for $c_1>\log 2$.
	(This is possible because $p_{\wt A}> 1/2$.)
	Union-bounding over $x$ gives 
	\[\Pr[\exists x: G_x< \alpha_1 n]\leq 1-2^ne^{-c_1n}\leq2^{\beta_1n}\]
	for an appropriate choice of $\beta_1$ and all $n$ large enough.
	
	\medskip
	\noindent\textit{Proof of (ii)}.
	Fix $x,y\in\{0,1\}^n$ with $d(x,y)\leq 3$.
	By \Cref{prop:canonical-paths} the paths $P_i(x,y)$, $i\in [n]$ are internally disjoint and have length at most $d(x,y)+2\leq 5$.
	So each path has at most 4 internal vertices, disjoint from all the others.
	Let $J_i(x,y)$ indicate the event that all internal vertices of $P_i(x,y)$ lie in $\wt A$.
	By disjointness, the $J_i$'s are independent and
	\[\Pr[J_i(x,y)=1]\geq q_{\wt A}^4>\frac12\,.\]
	Thus, with $K(x,y):=\sum_iJ_i(x,y)$, we may choose an $\alpha_2$ so that
	\[\Pr[K(x,y)<\alpha_2n]\leq e^{-c_2n}\]
	where again $c_2>\log 2$.
	
	For the final union bound, note that the number of pairs $x,y$ of distance at most 3 is upper-bounded by
	\[2^n\tsum_{k=0}^3\binom{n}{k}\leq 2^{n+o(n)}\,.\]
	Thus
	\[\Pr[\exists x,y: d(x,y)\leq 3, K(x,y)< \alpha_2n]\leq 2^{n+o(n)}2^{-c_2n}\leq 2^{-\beta_2 n}\]
	for an appropriate choice of $\beta_2$ and $n$ large enough.
	
	On $E$ we have $\wt A\subseteq A$, so all paths counted by the $K(x,y)$'s are entirely within $A$.
	Combining these union bounds with $\Pr[E]\leq e^{-\Omega(N)}$ and taking $\alpha=\min\{\alpha_1,\alpha_2\}, \beta=\min\{\beta_1,\beta_2\}$ finishes the proposition.
\end{proof}

With our extensions of \cite{Huang2025} established, we are almost prepared to study $\Ext_A$.
We just need a small combinatorial lemma about the number of canonical paths using a fixed edge:
\begin{lem}
    \label{lem:path-count-lem}
    For any fixed edge $e$ in the hypercube graph and any distance $\ell\geq 1$,
    \[|\{(u,v,i):d(u,v)=\ell, e\in P_i(u,v)\}|\leq 6(2n)^\ell.\]
\end{lem}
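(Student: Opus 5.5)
We count triples $(u,v,i)$ by first locating $e$ inside the path $P_i(u,v)$ and then charging the remaining freedom to at most $\ell$ choices of a coordinate together with a side. Write $e=\{a,a+e_j\}$, so $e$ flips coordinate $j$. By the definition of the canonical $i$-path, $P_i(u,v)$ is the walk that starts at $u$, flips $i$, then flips the coordinates of $D=\{k:u_k\neq v_k\}$ in cyclic order starting right after $i$, and finally flips $i$ again if $i\notin D$. Its length is therefore $\ell$ or $\ell+2$. Since $e\in P_i(u,v)$, there is a position $r$ at which the path traverses $e$. At that step the path flips coordinate $j$, and the current vertex is one of the two endpoints of $e$. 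That gives at most $2$ choices of orientation.

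The main step is to reconstruct $(u,v,i)$ from $e$, the orientation, and a short list of data. Condition first on the ordered set of flipped coordinates $(c_1,\dots,c_m)$, where $m\in\{\ell,\ell+2\}$, and on the index $r$ with $c_r=j$. Once these are fixed, the path is determined: walk backward from the endpoint of $e$ by undoing $c_{r-1},\dots,c_1$ to recover $u$, and walk forward to recover $v$. The coordinate $i$ equals $c_1$. The sequence itself is very constrained. It is $i$, followed by the elements of $D\setminus\{i\}$ in cyclic order from $i$, followed possibly by $i$ again. So it is determined by the set $D$, by $i$, and by the flag recording whether $i\in D$, which is fixed by $m$.

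The counting then goes as follows. Choose $i\in[n]$ and the remaining $\ell-1$ or $\ell$ elements of $D$. One of the coordinates in the sequence must equal $j$, and this fixes one choice. Alternatively, if $j=i$, it fixes $i$ and leaves $D$ with at most $\ell$ free elements. In every case the number of free coordinate choices is at most $n^{\ell}$ for the relevant set data. The extra choices are the position $r$ of $e$ (at most $\ell+2$), the orientation (factor $2$), and the case split on $m$ and on whether $j=i$ (constant factor). This gives a bound of order $(\ell+2)\,n^{\ell}\cdot O(1)$.

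Rather than multiplying by $\ell+2$, I would absorb the position factor as follows. Note that $\binom{n}{\ell-1}(\ell)\le n^{\ell-1}\ell\le n^{\ell}$ when $D$ is counted as a set, not as a sequence. Also, $(\ell+2)\le 3\cdot 2^{\ell-1}\cdot$ const for $\ell\ge1$, and this is dominated by $2^{\ell}$ up to the constant $6$. The main obstacle is this bookkeeping. The target $6(2n)^\ell$ leaves room only for constant factors and a $2^\ell$, so the position and side choices have to be paid for out of $2^\ell$, using $\ell+2\le 3\cdot 2^{\ell}/2$. I would verify the small cases $\ell=1,2$ by hand to confirm that the constant $6$ suffices.
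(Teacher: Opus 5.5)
Your reconstruction step is sound, and it pins down $u$ differently from the paper. The flip sequence $(c_1,\dots,c_m)$ is determined by $(i,D)$. Add the step $r$ at which $e$ is traversed and which endpoint of $e$ the walk occupies just before that step. Undoing $c_{r-1},\dots,c_1$ then recovers $u$, after which $v=u+\mathbf 1_D$ and $i=c_1$. The paper counts the same at most $3n^\ell$ admissible pairs $(i,D)$, but then notes that $e$ fixes every bit of $u$ outside $\{i\}\cup D$. That leaves at most $2^{\ell+1}$ choices of $u$, and this is where the factor $2^\ell$ in $6(2n)^\ell$ comes from.

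The gap is in your final accounting. You charge up to $\ell+2$ choices for $r$ and try to pay for them out of $2^\ell$, but the arithmetic does not close. Taking $3n^\ell$ pairs, times $2$ orientations, times $\ell+2\le\tfrac32\cdot 2^\ell$, gives $9(2n)^\ell$. At $\ell=1$ your crude count is $3n\cdot 2\cdot 3=18n>12n$. Deferring the constant to checking small cases by hand does not prove the stated bound.

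The missing observation is one you almost write down (``the index $r$ with $c_r=j$''). In the flip sequence, each coordinate other than $i$ occurs at most once, and $i$ occurs twice only when $i\notin D$. So once $(i,D)$ is fixed, the condition $c_r=j$ leaves at most two values of $r$, and two only when $j=i\notin D$. Splitting into the paper's three cases then gives
\[
\bigl|\{(u,v,i):d(u,v)=\ell,\ e\in P_i(u,v)\}\bigr|\le 2\Big[\ell\binom{n-1}{\ell-1}+2\binom{n-1}{\ell}+(n-1)\binom{n-2}{\ell-1}\Big]\le 8n^\ell\le 6(2n)^\ell .
\]
So your route works once the position of $e$ is recognized as essentially determined rather than a free factor of $\ell+2$. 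It even gives the stronger bound $8n^\ell$, with no $2^\ell$ factor at all.
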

\begin{proof}
    Let $D=\{j:u_j\neq v_j\}$.
    For $e$ to lie on the path between $u$ and $v$, it must be along a coordinate direction $i_e \in \{i\}\cup D$.
    So the already the number of admissible pairs $(i,D)$ for $e$ is at most $3n^\ell$:
    \begin{itemize}
        \item If $i\notin D$, either $i_e=i$ and there are $\binom{n-1}{\ell}\leq n^\ell$ consistent pairs $(i,D)$, or $i_e\in D$, in which case there are $n\binom{n-2}{\ell-1}\leq n^\ell$ consistent pairs.
        \item If $i\in D$, then there are $\binom{n-1}{\ell-1}\ell\leq n^\ell$ possibilities.
    \end{itemize}
    Now observe that the $(i,D)$'s can arise only from a very limited number of paths containing $e$; in particular because $i_e\in\{i\}\cup D$ and the walk never alters bits in $F:=(\{i\}\cup D)^c$, $e$ fixes the bits in $F$ for both $u$ and $v$.
    There are thus at most $2^{r+1}$ starting vertices $u$ compatible with $(e,i,D)$, and the information $(e,i,D,u)$ also fixes the ending vertex $v$.
    So the final count is $6 (2n)^\ell$.
\end{proof}

\begin{prop}[Extension under local escape]
\label{prop:extension-local-escape}
Let $A\subseteq\{0,1\}^n$ and $B:=A^c$ so that local escape holds.
Then
\[
\Inf_{\mathrm{unif}}(\mathrm{Ext}_A h)
\le C_\alpha \, \Inf_{\mathrm{unif}}^{\mathrm{int}(A)}(h)
\qquad \forall h:A\to\C.
\]
Moreover, we may take $C_\alpha \leq 1100/\alpha^3$ provided $\alpha < 1$.
\end{prop}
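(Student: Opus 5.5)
Write $\tilde h:=\Ext_A h$. The plan is to split the edges of the cube into three classes, bound each class separately, and pay for every boundary edge with interior edges via the canonical paths of \Cref{prop:canonical-paths}. The classes are: interior edges (both endpoints in $A$), boundary edges (one endpoint in $A$, one in $B$), and $B$--$B$ edges. Interior edges contribute exactly $\Inf^{\Int(A)}_{\unif}(h)$, with the normalization of the weighted definition specialized to the uniform measure. So the whole task is to bound the other two classes by a constant multiple of the interior energy $\mc E:=\sum_{x\sim y,\,x,y\in A}|h(x)-h(y)|^2$.

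\textbf{Boundary edges.} Take $b\in B$ and $a\in A$ with $a\sim b$. Then
$\tilde h(b)-h(a)=|\mc N_A(b)|^{-1}\sum_{u\in\mc N_A(b)}(h(u)-h(a))$, and by Cauchy--Schwarz
\[|\tilde h(b)-h(a)|^2\le |\mc N_A(b)|^{-1}\sum_{u\in\mc N_A(b)}|h(u)-h(a)|^2.\]
Here $d(u,a)\le 2$ for every $u\in\mc N_A(b)$. By the second bullet of the local escape property, at least $\alpha n$ of the internally disjoint paths $P_i(u,a)$ have all internal vertices in $A$; their endpoints $u,a$ are in $A$ as well, so every edge on these paths is interior. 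Averaging over these good paths and applying Cauchy--Schwarz along each path (length at most $4$) gives
\[|h(u)-h(a)|^2\le \frac{4}{\alpha n}\sum_{i\ \mathrm{good}}\ \sum_{e\in P_i(u,a)}|\nabla_e h|^2 .\]

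\textbf{$B$--$B$ edges.} For $b\sim b'$ in $B$, write $\tilde h(b)-\tilde h(b')$ as a double average of $h(u)-h(u')$ over $u\in\mc N_A(b)$ and $u'\in\mc N_A(b')$; here $d(u,u')\le 3$. Applying the same path-averaging to these pairs explains why the local escape property is required up to distance $3$.

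\textbf{Counting.} Summing over all boundary and $B$--$B$ edges, I bound how often a fixed interior edge $e$ is charged. The first bullet gives $|\mc N_A(b)|\ge\alpha n$, which controls the $1/|\mc N_A(b)|$ weights. Each pair $(u,a)$ or $(u,u')$ arises from at most $\mc O(1)$ (respectively $\mc O(\ell)$) edges $(b,a)$ or $(b,b')$. By \Cref{lem:path-count-lem}, the number of triples $(u,v,i)$ with $d(u,v)=\ell$ and $e\in P_i(u,v)$ is at most $6(2n)^\ell$. Combining these, a fixed interior edge is charged at most
\[\sum_{\ell\le 3}\frac{C}{\alpha n}\cdot\frac{1}{(\alpha n)^{k}}\cdot(2n)^\ell\]
times, where $k$ counts the number of $1/|\mc N_A|$ normalizations used (one for boundary edges, two for $B$--$B$ edges). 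For the $B$--$B$ case, $\ell=3$ and $k=2$, so the count is $\mc O(1/\alpha^3)$. The result is $\Inf_\unif(\tilde h)\le C\alpha^{-3}\,\Inf^{\Int(A)}_\unif(h)$.

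\textbf{Main obstacle.} The delicate step is this counting: the powers of $n$ must cancel exactly. For example, in the boundary case the factor $(2n)^2$ from distance-$2$ pairs must be absorbed by $1/(\alpha n)$ from path averaging and $1/(\alpha n)$ from the neighbor average. Tracking the constants there is what produces the explicit bound $1100/\alpha^3$; I would organize it by the distance $\ell\in\{1,2,3\}$ and treat each case separately.
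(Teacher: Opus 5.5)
Your proposal follows the paper's proof essentially step for step: the same split into interior, boundary and $BB$ edges, Jensen on the neighbor averages with $|\mc N_A(b)|\ge\alpha n$, averaging over the $\ge\alpha n$ good canonical paths, and \Cref{lem:path-count-lem}, with the $\alpha^{-3}$ coming from distance-$3$ pairs in the $BB$ case. The paper differs only cosmetically: it uses the exact variance identity on boundary edges and charges distance-$1$ $BB$ pairs directly to the interior edge they span, which changes only constants.

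Two small corrections. First, a distance-$1$ pair $(u,u')$ in the $BB$ case arises from up to $n-1$ edges $b\sim b'$ (namely $b=u+e_j$, $b'=u'+e_j$), not from $\mc O(\ell)$ of them; this is harmless, since that term is still $\mc O(\alpha^{-3}n^{-1})$ times the interior energy. Second, you assert the constant $1100$ rather than derive it. A distance-$3$ pair has six $BB$ edges joining the neighborhoods of its endpoints (one per geodesic; the paper's tally of four is too small). So with \Cref{lem:path-count-lem} as stated, the bookkeeping you describe gives $5\cdot 6\cdot 48/\alpha^3=1440/\alpha^3$ for the $BB$ term alone. Getting down to $1100$ needs a sharper count of canonical paths through a fixed edge.
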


\begin{proof}
For notational convenience define the unweighted Dirichlet energies
\[
\mathscr{E}(h):=\sum_{x\sim y} |h(x)-h(y)|^2,
\qquad
\mathscr{E}_A(h):=\sum_{\substack{x\sim y\\ x,y\in A}} |h(x)-h(y)|^2.
\]
It suffices to prove
\[
\mathscr{E}(\mathrm{Ext}_A h)\le C_\alpha \mathscr{E}_A(h),
\]
since $\Inf_{\mathrm{unif}}=(4N)^{-1}\mathscr{E}$.

Let us decompose the hypercube edges in the natural way, into the $AA$, $AB$, and $BB$ edges.
Label their contributions to $\mathscr{E}(\Ext_Ah)$ by $I_{AA},I_{AB},$ and $I_{BB}$ respectively.

\medskip
\noindent\textit{$AA$ edges.}
The contribution is exactly $\mathscr{E}_A(h)$:
\[I_{AA}:=\mathscr{E}_A(h).\]

\medskip
\noindent\textit{$AB$ edges.}
Consider some $b\in B$.
The total contribution of $b$ to $AB$ edges is
\[
\sum_{a\in \mc N_A(b)} |\Ext_A h(b)-h(a)|^2
=
\frac{1}{|\mc N_A(b)|}\sum_{\{a,a'\}\subset \mc N_A(b)}|h(a)-h(a')|^2,
\]
(one may recognize the identity for variance).

Now, every pair $u,v\in \mc N_A(b)$ has $d(u,v)=2$.
For any path $P=(x_0=u,x_1,\ldots, x_{\ell-1},x_\ell=v)$ in the hypercube, we may of course write
\[h(u)-h(v)=\sum_{j=1}^\ell\big(h(x_{j-1})-h(x_j)\big),\]
so
\[|h(u)-h(v)|^2\leq \ell\sum_{j=1}^\ell |h(x_{j-1})-h(x_j)|^2\,.\]
Averaging over the $\geq \alpha n$ good canonical paths $\mc P^A_{a,a'}$ of length at most $5$ between $u$ and $v$ yields
\[
|g(u)-g(v)|^2
\le
\frac{5}{\alpha n}
\sum_{P\in\mc{P}_{a,a'}}
\sum_{e\in P} |h(e_\text{start})-h(e_\text{end})|^2,
\]
so in total the $AB$ contribution to $\mathscr{E}(\Ext_Ag)$ is
\[I_{AB}=\sum_{b\in B}\sum_{a\in \mc N_A(b)}|\Ext_Ah(b)-h(a)|^2\leq\frac{5}{\alpha^2n^2}\sum_{b\in B}\sum_{\{a,a'\}\subset \mc N_A(b)}\sum_{P\in \mc P^A_{a,a'}}\sum_{e\in P}|h(e_\mathrm{start})-h(e_\mathrm{end})|^2\,,\]
where we have used $|\mc N_A(b)|\ge \alpha n$.

Note that any $\{a,a'\}\in A$, $d(a,a')=2$, has at most two $b\in B$ such that $\{a,a'\}\subseteq \mc N_A(b)$.
Thus
\[I_{AB}\leq \frac{10}{\alpha^2n^2}\sum_{\substack{a,a'\in A\\d(a,a')=2}}\sum_{P\in\mc P^A_{a,a'}}\sum_{e\in P}|h(e_\text{start})-h(e_\text{end})|^2\,.\]
Reordering the sum and overcounting (forgetting that the paths must be good) and then applying the counting lemma, we have
\begin{align*}
I_{AB} &\leq \frac{10}{\alpha^2n^2}\sum_{e\subset A}\left|\{(a,a',i):d(a,a')= 2,e\in P_i(a,a')\}\right|\cdot|h(e_\text{start})-h(e_\text{end})|^2\\
&\leq \frac{120n^2}{\alpha^2n^2}\sum_{x\sim y\in A}|h(x)-h(y)|^2\tag{\Cref{lem:path-count-lem}}\\
&= \frac{120}{\alpha^2}\mathscr{E}_A(h)\,.
\end{align*}

\medskip
\noindent\textit{$BB$ edges.}
If $b\sim b'$
\begin{align*}
|\Ext_A h(b)-\Ext_A h(b')|^2&= \left|\frac{1}{|\mc N_A(b)|\cdot |\mc N_A(b')|}\sum_{u\in \mc N_A(b)}\sum_{v\in \mc N_A(b')}\left(h(u)-h(v)\right)\right|^2\\
&\leq \frac{1}{\alpha^2n^2}\sum_{u\in \mc N_A(b)}\sum_{v\in \mc N_A(b')}\left|h(u)-h(v)\right|^2.
\end{align*}
Now $d(u,v)\in\{1,3\}$.

For the $d(u,v)=1$ pairs, we get a contribution of at most
\begin{align*}
    \frac{1}{\alpha^2n^2}\sum_{b\sim b'\in B}\sum_{\substack{
    u\in \mc N_A(b)\\
    v\in \mc N_A(b')\\
    u\sim v}}|h(u)-h(v)|^2=\frac{1}{\alpha^2n^2}\sum_{\substack{
    u\in \mc N_A(b)\\
    v\in \mc N_A(b')\\
    u\sim v}}\sum_{b\sim b'\in B}|h(u)-h(v)|^2\leq \frac{1}{\alpha^2n}\mathscr{E}_A(h)\,.
\end{align*}

For the $d(u,v)=3$ part, we follow the idea from the $I_{AB}$ edges: average over the $\geq \alpha n$ canonical good paths of length at most $5$ connecting $u$ and $v$:
\[
|h(u)-h(v)|^2
\le
\frac{5}{\alpha n}
\sum_{P\in\mc{P}^A_{u,v}}
\sum_{e\in P} |h(e_\text{start}-h(e_\text{end})|^2.
\]
Note that any $\{a,a'\}\in A, d(a,a')=3$, has at most four $BB$ edges $b\sim b'\in B$ with $a\in \mc N_A(b)$ and $a' \in \mc N_A(b')$.
Using this observation and applying \Cref{lem:path-count-lem}, we get a total contribution to $I_{BB}$ by the $d(u,b)$ part of at most
\[\frac{20}{\alpha^3n^3}\cdot48n^3\sum_{x\sim y\in A}|h(x)-h(y)|^2=\frac{960}{\alpha^3}\mathscr{E}_A.
\]

Combining the $AA$, $AB$, and $BB$ contributions gives
\[
\mathscr{E}(\Ext_A h)\leq \left(\frac{960}{\alpha^3}+\frac{120}{\alpha^2}+1+o(1)\right)\mathscr{E}_A(h)\,. \qedhere
\]
\end{proof}
\medskip

We are finally able to give the metric entropy bound.
\begin{proof}[Proof of \Cref{thm:metric-entropy}]
Consider $q_A$ for $q\in Q_\kappa$.
By \Cref{prop:extension-local-escape},
\[
\mc E_{\mathrm{unif}}(\Ext_A q_A)
\le
C_\alpha \mc E_{\mathrm{unif}}^{\mathrm{int}(A)}(q_A).
\]
Also, on every edge inside $A$,
\[
\frac{\mu(x)\mu(y)}{\mu(x)+\mu(y)}\ge \frac{1}{2TN},
\]
hence
\[
\mc E_{\mathrm{unif}}^{\mathrm{int}(A)}(q_A)
\le
T\mc E_\mu^{\mathrm{int}(A)}(q)
\le
T\kappa.
\]
Therefore
\[
\mc E_{\mathrm{unif}}(\Ext_A q_A)\le C_\alpha T\kappa.
\]

Let $(\Ext_A q_A)_{\le d}$ be the projection of $\Ext_Aq_A$ onto Fourier degree at most $d$.
Since the normalized cube Laplacian has eigenvalue $|S|/n$
on Fourier level $S$,
\[
\|\Ext_A q_A-(\Ext_A q_A)_{\le d}\|_{L^2(\mathrm{unif})}^2
\le
\frac{n}{d}\mc E_{\mathrm{unif}}(\Ext_A q_A)
\le
C_\alpha T\frac{n}{d}\kappa.
\]
Restricting back to $A$ and using $\mu(x)\le T/N$ on $A$,
\[
\|q_A-((\Ext_A q_A)_{\le d})_A\|_{L^2(\mu)}^2
\le
T\|\Ext_A q_A-(\Ext_A q_A)_{\le d}\|_{L^2(\unif)}^2
\le
C_{\alpha}T^2\frac{n}{d}\kappa.
\]
Choose
\[
d= C_{\alpha}T^2 n\, \frac{\kappa}{\eps^2}.
\]
Then every $q_A$ for $q\in Q_\kappa$ is within $\eps$ in $L^2(\mu)$ of the restriction
to $A$ of a degree-$\le d$ Fourier polynomial.

The dimension of the degree-$\le d$ Fourier space is
\[
D=\sum_{j\le d}\binom{n}{j}\le N^{H(d/n)+o(1)},
\]
where $H(\cdot)$ is binary entropy.
A standard volumetric bound gives an $\eps$-net of cardinality at most $(C/\eps)^D$,
hence
\[
\log \mc N_\eps( Q_\kappa,L^2(\mu))
\le
d\log(C/\eps)
\le
N^{H(\gamma)+o(1)}\log(C/\eps),
\]
with $\gamma=d/n= C_\alpha T^2\kappa/\eps^2\leq 1100T^2\kappa/(\eps^2\alpha^3)$.
\end{proof}

A corollary to this general metric entropy argument is a slightly more technical statement, that guarantees a net comprising polynomials with a certain flatness guarantee that will be needed in the sequel.
\begin{cor}[Compatible flat net for the centered $A$-supported class]
\label{cor:flat-A-net}
Let
\[
    \mathcal Q_{\kappa,A}
    :=
    \left\{
        q:
        \operatorname{supp}(q)\subseteq A,\ 
        \E_\mu q=0,\ 
        \|q\|_{L^2(\mu)}=1,\ 
        \Inf_\mu^{\Int(A)}(q)\leq\kappa n
    \right\}.
\]
Suppose that
\[
    H(\gamma)<\frac13,
    \qquad
    \gamma
    :=
    C\frac{T^5\kappa}{\eps^2}.
\]
Then there exists a finite family
\[
    \mathcal Q_{\kappa,A}^{(\eps)}
\]
such that every $q\in\mathcal Q_{\kappa,A}$ has some
$v\in\mathcal Q_{\kappa,A}^{(\eps)}$ satisfying
\[
    \|q-v\|_{L^2(\mu)}
    \leq
    \eps.
\]
Every $v\in\mathcal Q_{\kappa,A}^{(\eps)}$ is $\mu$-compatible and satisfies
\[
    \max_x\mu(x)|v(x)|^2
    \leq
    \|\mu\|_\infty^{1/3}.
\]
Moreover,
\[
    \log
    \left|
        \mathcal Q_{\kappa,A}^{(\eps)}
    \right|
    \leq
    N^{H(\gamma)+o(1)}
    \log\left(
        \frac C\eps
    \right).
\]
\end{cor}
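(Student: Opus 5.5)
The plan is to start from the net supplied by \Cref{thm:metric-entropy} and post-process each net element so that it becomes $\mu$-compatible and flat, while changing it by only $\mc O(\eps)$ in $L^2(\mu)$. Since $\mc Q_{\kappa,A}$ is contained in the class $Q_\kappa$ of \Cref{thm:metric-entropy}, we may take an $(\eps/4)$-net $\{v_1,\dots,v_M\}$ for $\mc Q_{\kappa,A}$ in $L^2(\mu)$ with $\log M\le N^{H(\gamma)+o(1)}\log(C/\eps)$. The constant $C$ in $\gamma$ is enlarged to absorb the factor $16$ coming from $\eps/4$ and to absorb $\alpha(T)^{-3}$ (or simply $T^5$). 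We may also assume every $v_j$ lies within $\eps/4$ of some element of $\mc Q_{\kappa,A}$; otherwise we drop it. We then replace each $v_j$ by a genuine element $q_j\in\mc Q_{\kappa,A}$ within $\eps/4$ of it, so that $\{q_j\}$ is an $\eps/2$-net consisting of $\mu$-compatible functions supported on $A$. The cardinality bound is unchanged.

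The remaining task is flatness: we need $\mu(x)|q_j(x)|^2\le\|\mu\|_\infty^{1/3}$ pointwise. On $A$ we have $\mu(x)\le T/N$, while $\|\mu\|_\infty\ge 1/N$, so the constraint is violated only at points where $|q_j(x)|^2\gtrsim N^{2/3}/T$. Since $\|q_j\|_{L^2(\mu)}=1$ and $\mu\ge 1/(TN)$ on $A$, the set $S_j$ of violating points has $\mu$-mass at most $T\cdot N^{-2/3}$ by Markov's inequality.

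The key step is to truncate, setting $\tilde q_j=q_j\1_{S_j^c}$. Its $\mu$-mass is $1-\beta_j$, where $\beta_j:=\|q_j\1_{S_j}\|_{L^2(\mu)}^2$, and its mean is $m_j:=\E_\mu\tilde q_j$. The argument splits according to the size of $\beta_j$.

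\textbf{Large $\beta_j$.} We will argue that no element of $\mc Q_{\kappa,A}$ near such a $q_j$ is needed. Concretely, if $\beta_j>\eps^2/16$, then $q_j$ places substantial mass on the tiny set $S_j$. We would like to rule this out for elements of the net. One route is to observe that low interior influence plus the local escape structure forces $|q|$ to be comparable to local averages. A cleaner alternative, which I would try first, is to build the net from the Fourier-truncated polynomials $((\Ext_Aq_A)_{\le d})_A$ used in the proof of \Cref{thm:metric-entropy}. These satisfy a hypercontractive/$L^\infty$ bound: a degree-$d$ polynomial with $L^2(\unif)$ norm $O(\sqrt T)$ has sup norm at most $\sqrt{D}\lesssim N^{H(\gamma)/2+o(1)}$. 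Because $H(\gamma)<1/3$, this gives $\mu(x)|v(x)|^2\le T N^{H(\gamma)-1+o(1)}\ll N^{-2/3+o(1)}$. This comparison is exactly where the hypothesis $H(\gamma)<1/3$ enters. The net elements themselves can then be taken flat from the start, with the points of the volumetric net restricted to the low-degree subspace.

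\textbf{Restoring compatibility.} Having flat low-degree approximants $v$ at distance at most $\eps/2$ from elements of $\mc Q_{\kappa,A}$, we fix compatibility by subtracting the mean and renormalizing: $v\mapsto (v-\E_\mu v\,\1_A/\mu(A))/\|\cdot\|_{L^2(\mu)}$. Because the nearby $q$ has mean $0$ and norm $1$, we have $|\E_\mu v|\le\eps/2$ and $\big|\|v\|_{L^2(\mu)}-1\big|\le\eps/2$. The corrections therefore move $v$ by $O(\eps)$ in $L^2(\mu)$ and change $\mu|v|^2$ pointwise by at most a constant factor plus $O(T/N)$. This preserves flatness up to constants, which are absorbed by shrinking $\eps$ by a constant factor from the outset.

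The main obstacle is the sup-norm bound for the low-degree approximants. The Fourier polynomial $(\Ext_Aq_A)_{\le d}$ is controlled in $L^2(\unif)$, since $\|\Ext_Aq_A\|_{L^2(\unif)}\lesssim\sqrt T$ by averaging over neighbours. The bound $\|p\|_\infty\le\sqrt{D}\,\|p\|_{L^2(\unif)}$ for $p$ in the span of $D$ characters follows from Cauchy--Schwarz, and it suffices only because $D\le N^{H(\gamma)+o(1)}$ with $H(\gamma)<1/3$ strictly. The $o(1)$ slack in the exponent must be checked against $T$-dependent constants, which is fine for fixed $T$ and large $n$.
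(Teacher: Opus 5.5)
Your final route is correct and is exactly the ``net comprising polynomials with a flatness guarantee'' that the paper invokes (the paper states the corollary without a detailed proof). That route restricts to $A$ the points of a volumetric net in the degree-$\le d$ Fourier subspace, gets flatness from $\|p\|_\infty\le\sqrt{D}\,\|p\|_{L^2(\unif)}$ with $D\le N^{H(\gamma)}$ and $\|\Ext_A q_A\|_{L^2(\unif)}=\mc O_T(1)$, and then restores compatibility by $\mu$-centering on $A$ and renormalizing. The truncation/$\beta_j$ detour can simply be deleted; also, flatness only needs $N^{H(\gamma)-1+o(1)}\le N^{-1/3}\le\|\mu\|_\infty^{1/3}$, i.e.\ $H(\gamma)<2/3$ (the hypothesis $H(\gamma)<1/3$ is really for the later union bound against the $\exp(-c\eps^6N^{1/3-o(1)})$ tail), and the constant factors from recentering are absorbed by this polynomial gap in $N$, not by shrinking $\eps$.
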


\section{Concentration of $\|\wh{M_{\wh f}}\|_\op$ for the small subspace $B$}
\label{sec:op-norm-conc}

Now that we have been able to analyze the contribution from the ``good'' $A$-part, we will focus on the ``bad'' $B$-part. It will be convenient to work in the $g$-picture rather than the $q$-picture.
Recall that $\wh M_{\wh f}$ is defined via
\[\langle g, \wh{M_{\wh f}} \,g\rangle = \Avg_{\wh f}[\wh g].\]

This section is devoted to a proof of the following concentration bounds.
\begin{theorem}
    \label{thm:B-op-norm-bounds}
    For any $\eps>0$, there is a $T\geq 1$ such that with probability $1-\exp(-\Omega_\eps(n))$ over Haar-random $f$,
    \begin{align}
    \label{ineq:BB}
    &\|\Pi_B\wh{M_{\wh f}}\Pi_B\|_\op\leq \frac12+\eps\\
    \label{ineq:AB}
    \text{and}\qquad &\|\Pi_A\wh{M_{\wh f}}\Pi_B\|_\op=\|\Pi_B\wh{M_{\wh f}}\Pi_a\|_\op\leq \frac14+\eps\,,
    \end{align}
    where $\Pi_B$ is the coordinate projection onto $B=B(T)=\{x:\mu_f(x)\notin [1/{TN},T/N]\}$ and $\Pi_A$ is onto $A=B^c$.
\end{theorem}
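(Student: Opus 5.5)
We will follow the strategy sketched in the technical overview: treat $\Pi_B$ as a random coordinate projection (a ``random paving'' in the sense of Tropp \cite{MR2379999}) applied to a matrix that is essentially independent of $B$, and control the operator norm through high tracial moments. Concretely, we first write $\wh{M_{\wh f}}=H^{\otimes n}M_{\wh f}H^{\otimes n}$ with
\[
M_{\wh f}=\frac1{2n}\sum_i\sum_s \ket{\tilde v_s^{(i)}}\!\bra{\tilde v_s^{(i)}},
\]
where $\tilde v_s^{(i)}$ is the normalized vector $v_s$ embedded on the coordinates $\{s,s+i\}$. Thus $\wh{M_{\wh f}}=\frac1n\sum_i Q_i$, where each $Q_i$ is, up to the factor $\tfrac12$, the orthogonal projection onto a rank-$N/2$ subspace. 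We then replace $v_s$ by its $\tau$-smoothed analogue $u_s/(\|u_s\|_2^2+\tau)^{1/2}$. The smoothing error in operator norm is controlled exactly as the $B_\delta$ terms of \Cref{sec:concentration}; on $\Pi_B(\cdot)\Pi_B$ this is a $B$-independent bound, using $W=\sum_s\|w_s\|_2^4$ and \Cref{lem:w4} uniformly over $g$ via a net on the $\operatorname{rank}\le|B|$ range.

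\textbf{Diagonal part.} The key quantity is
\[
\E\tr\bigl(\Pi_B\wh{M_{\wh f}}\Pi_B\bigr)^{p},\qquad p=\Theta(n).
\]
Expanding the trace gives sums over closed walks $x_1,\dots,x_p$ of products $\prod_k \1_B(x_k)\,\wh{M}_{x_kx_{k+1}}$. The entries of $\wh{M_{\wh f}}$ are smooth functions of $\mc O(1)$ Fourier coefficients $\wh f(s)$ per summand, so each walk term involves only $\mc O(p)=\mc O(n)$ of the coordinates $(\wh f(s))_s$. We invoke the decoupling heuristic quantitatively: a Lindeberg/Gaussian interpolation (as in \Cref{prop:smooth-replacement}, with the smooth derivative bounds of \Cref{lem:master-frechet}) replaces the joint law of $\bigl((\wh f(s))_{s\in S},\,(\1_B(x))_{x}\bigr)$, for $|S|=\mc O(n)$, by a model in which the $\wh f$-marginal is complex Gaussian and independent of $\mu=|f|^2$. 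In that model $B$ is, up to the normalization event $E$, an i.i.d.\ Bernoulli set with density $p_B(T)=1-e^{-2/T}+e^{-T/2}$ (the exponential model from the local escape proof). This density tends to $0$ as $T\to\infty$ up to $\mc O(1/T)$.

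We then apply Tropp's random paving bound to the frozen matrix $\wh M^{\mathrm{frozen}}$. With Schatten-$p$ norms and decoupling, it yields
\[
\bigl(\E\|\Pi_B X\Pi_B\|^p\bigr)^{1/p}\lesssim \sqrt{p}\,\max_{x}\|X e_x\|+\sqrt{p_B}\,\|X\|+p_B\|X\|+\text{(diag)}.
\]
Here $\|X\|\le 1$. The diagonal entries are $\wh M_{xx}=\frac1{2n}\sum_{i,s}|\langle \tilde v_s^{(i)},H e_x\rangle|^2$, which is about $\frac12$ since $\sum_s\|v_s\|^2$ weights are flat. This diagonal is where the $\tfrac12$ comes from. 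The off-diagonal column norms are $N^{-\Omega(1)}$ after smoothing. So the $p$-th moment is at most $(\frac12+C\sqrt{p_B(T)}+o(1))^p$. Choosing $T$ so that $C\sqrt{p_B}<\eps/2$, Markov with $p=cn$ gives \eqref{ineq:BB} with probability $1-e^{-\Omega_\eps(n)}$.

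\textbf{Cross term.} For \eqref{ineq:AB} we write $\Pi_A\wh M\Pi_B=\Pi_A(\wh M-\tfrac12 I)\Pi_B$, since $\Pi_A\Pi_B=0$. The matrix $\wh M-\tfrac12 I=\frac1n\sum_i(Q_i-\tfrac12)$ has spectrum in $[-\tfrac12,\tfrac12]$, and its diagonal is $o(1)$. Applying the one-sided paving bound (one projection random with density $p_B$, the other arbitrary) to this centered operator gives $\|\Pi_A(\wh M-\tfrac12)\Pi_B\|\le \tfrac12\cdot\tfrac12+C\sqrt{p_B}+o(1)$. The factor $\tfrac14$ comes from the block-norm inequality $\|\Pi_AX\Pi_B\|\le \tfrac12(\lambda_{\max}-\lambda_{\min})$ for Hermitian $X$ with disjoint projections. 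This holds since $\Pi_A X\Pi_B=\Pi_A(X-cI)\Pi_B$ for any $c$, and we take $c$ to be the midpoint. This block inequality alone gives $\tfrac14+o(1)$ once the spectrum of $\wh M$ is known to lie in $[0,\tfrac12]\cup\{\text{top part}\}$. We therefore intend to deduce the cross bound from the spectral structure plus paving, rather than from a second moment computation.

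\textbf{Main obstacle.} The hard step is the decoupling/freezing: justifying that tracial moments of order $\Theta(n)$ of $\Pi_B\wh{M_{\wh f}}\Pi_B$ computed under the true Haar law match those under the model where $\wh f$ is independent of $B$. The difficulty is that $\1_B$ is a discontinuous function of $|f(x)|^2$, which is itself correlated with all $\wh f(s)$. We would first try to smooth $\1_B$ to a Lipschitz cutoff $\beta(\mu(x))$. This is legitimate since $\|\Pi_B X\Pi_B\|\le\|D_\beta X D_\beta\|$ for $D_\beta\ge\1_B$ diagonal with entries $\le 1$. We would then run a Lindeberg replacement coordinate-by-coordinate over the $N$ Steinhaus variables, in which each replaced variable perturbs $\wh f(s)$ by only $\sqrt{\mu(x)/N}$, and bound the accumulated error over the $N^{\mathcal O(1)}\cdot n^{p}$ walk terms by $N^{-\Omega(1)}$ using the third-derivative bounds of \Cref{prop:frechet-third}.
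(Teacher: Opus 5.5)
Your bound for the cross term has a genuine gap, and the step you explicitly set aside is the one that is needed. You derive $\frac14$ from the block inequality $\|\Pi_AX\Pi_B\|\le\frac12(\lambda_{\max}-\lambda_{\min})$, together with the claim that the spectrum of $\wh{M_{\wh f}}$ lies in $[0,\frac12]$ apart from a top part.

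That claim is false. On the Fourier side, the diagonal entry of $M_{\wh f}$ at $s$ is $\frac1n\sum_i|\wh f(s)|^2/(|\wh f(s)|^2+|\wh f(s+i)|^2)$. This is close to $\E_Y[X_s/(X_s+Y)]$ with $X_s=N|\wh f(s)|^2$, so it exceeds $0.8$ for a constant fraction of $s$. By Ky Fan, $\wh{M_{\wh f}}$ then has $\Theta(N)$ eigenvalues above $\frac12$, and the block inequality yields only $\frac12$.

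One-sided paving cannot rescue this. For any scalar $m$ and $R=\wh{M_{\wh f}}-mI$, one has $\|\Pi_AR\Pi_B\|\ge\max_{x\in B}\|\Pi_ARe_x\|_2$. Because the columns are delocalized and $B$ is sparse, $\|\Pi_ARe_x\|_2^2$ is essentially $(R^2)_{xx}$, which is a constant and not $o(1)$. So the size of the cross block is governed precisely by the second moment of $R$.

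The paper proceeds as follows:
\begin{itemize}
\item It writes $\|\Pi_AR\Pi_B\|^2=\|\Pi_BR\Pi_AR\Pi_B\|\le\|\Pi_BR^2\Pi_B\|$.
\item In the Gaussian model it computes $\E R_{\bs\gamma}^2=b_{n,\eta}I$ with $b_{n,\eta}\to b=\pi^2/3-13/4\approx0.04$. This is nonzero because the $i$- and $j$-edge blocks through a common vertex are positively correlated.
\item It paves the centered square $R^2-b_{n,\eta}I$.
\end{itemize}
The $\frac14$ is the numerical fact $b<\frac1{16}$, not a spectral property of $\wh{M_{\wh f}}$.

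For the $BB$ bound your architecture (regularize, decouple, pave, Markov at $p=\Theta(n)$) matches the paper's. However, the same computation exposes an error. The off-diagonal column norms of $\wh{M_{\wh f}}$ are about $\sqrt b\approx0.2$, not $N^{-\Omega(1)}$: entries have size $N^{-1/2}$ and there are $N$ per column. So the $\sqrt p\max_x\|Xe_x\|$ term in your inequality, which is not Tropp's bound, is $\Theta(\sqrt n)$.

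The remedy is to apply \Cref{thm:tropp-random-paving} as stated to the centered $\wh M^{(\eta)}_{\bs\gamma}-m_\eta I$. That theorem needs only entrywise smallness relative to the operator norm. Every entry of the centered matrix, diagonal included, is an average of $N/2$ independent centered bounded terms, hence at most $n^{-2}$ with overwhelming probability. Centering is also what makes your final Markov step work. Your comparison is on $\E\tr(\cdot)^p$, whose factor $N$ cannot be absorbed, for small $\eps$, by a bound of size $(\frac12+\eps')^p$ with $p\le2\lceil\log N\rceil$.

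Two further steps fail as proposed.
\begin{itemize}
\item Controlling the smoothing error by a net on $\operatorname{ran}\Pi_B$ needs $e^{\Theta(\beta N)}$ points, against the $e^{-cN^{1/3}}$ tails of \Cref{lem:w4}. The paper instead uses $0\preceq\wh M_{\wh f}-\wh M^{(\eta)}_{\wh f}\preceq D^{(\eta)}_{\wh f}$ and paves $D^{(\eta)}-d_\eta I$ with $d_\eta\to0$.
\item The freezing needs neither a smoothed $\1_B$ nor a walk expansion. In the physical basis there are $N^p$ walks, and every entry of $\wh{M_{\wh f}}$ depends on all Fourier coefficients. Conditioning on $\mu$ fixes $B$ outright and leaves only the Steinhaus phases random. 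The paper then compares $(\E\tr|W|^p)^{1/p}$ directly: first by a Lindeberg swap of the phases, then by a covariance interpolation to i.i.d.\ Fourier Gaussians. The errors are controlled by $\|\mu\|_{3/2}$, $\Bias(\mu)$, and $|B|/N+\max_y\mu(B+y)$.
\end{itemize}
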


The proof of \Cref{thm:B-op-norm-bounds} proceeds roughly as follows.
Let us start with the bound for $\Pi_B \wh{M_{\wh f}}\Pi_B$ and begin by fixing a $\mu\sim \bs \mu$.
With high probability, $\mu$ is well-behaved in a certain sense required by comparisons below.
Now, notice that because we only need the good event in \Cref{thm:B-op-norm-bounds} to occur with probability $1-\exp(-\Omega(n))$, it suffices to consider tracial moments of $\Pi_B \wh{M_{\wh f}}\Pi_B$ only up to order $\mc O(n)$, rather than $N=2^n$ (here the tracial moments are integrated over $f\sim F_\mu$ for the fixed $\mu$).
Next, we observe that from the perspective of moments of this relatively low order, when $\mu$ is ``nice,'' the Fourier coefficients $\wh f(s)$ are close to i.i.d., and in particular they are well approximated by a product of i.i.d. Gaussians which are independent of $\mu$ entirely.
In this i.i.d. Gaussian model we can then uncondition on $\mu$, which leads to $B$ varying as a random subset independent of the Gaussianized $\wh f(s)$'s.
This means that in $\wh{M_{\wh f}}$, the projector $\Pi_B$ becomes a \textit{uniformly random coordinate projection} of a certain size.
Thus, fixing our Gaussianized $\wh{M_{\wh f}}$ and allowing $B$ to vary, $\Pi_B \wh{M_{\wh f}}\Pi_B$ is (after centering) precisely a \textit{random paving}, for which we appeal to the work of Tropp \cite{MR2379999}.
Provided certain technical conditions hold, \cite{MR2379999} states that sufficiently sparse random coordinate projections of an entrywise small matrix have small operator norm.

The proof of the $BB$ bound, \Cref{ineq:BB}, is the subject of \Cref{sec:BB} and in particular \Cref{prop:BB-sufficient}.
The argument for the bound on $\|\Pi_A \wh{M_{\wh f}}\Pi_B\|_\op$, given in \Cref{sec:AB}, is nearly identical except that we pass to the square of $\Pi_A \wh{M_{\wh f}}\Pi_B$, controlled by (a centered version of) $\Pi_B \wh{M_{\wh f}}^2 \Pi_B$.
This object is then treated with the same argument as above.

Several technical steps are required to execute these proofs, but we can borrow much of the machinery developed in \Cref{sec:fixed-q-conc}, especially the regularization approach and corresponding derivative bounds on terms in $\wh{M_{\wh f}}$.

\subsection{Technical preparations}

\subsubsection{Regularity of $B$}

\begin{prop}
\label{prop:bad-set-regularity}
Set
\[
    \beta_T
    :=
    \Pr\bigl[\mathrm{Exp}(1)\notin[1/T,T]\bigr]
    =
    1-e^{-1/T}+e^{-T}=\E|B|/N.
\]

Then, with probability at least \(1-2^{-\Omega(n)}\),
\[
    \frac{|B|}{N}=\beta_T+o(1),
\]
Moreover, the law of \(B\) is exchangeable under permutations of the cube.
Consequently, conditional on \(|B|=k\), the set \(B\) is uniformly distributed
among all \(k\)-subsets of \(\{0,1\}^n\).
In particular, for every \(\beta>0\), if \(T\) is chosen sufficiently large,
then with probability at least \(1-2^{-\Omega(n)}\),
\[
    |B|\le \beta N.
\]
\end{prop}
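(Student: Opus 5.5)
We will use the normalized exponential model from the proof of the local escape property. Let $Z_x$, $x\in\{0,1\}^n$, be i.i.d.\ $\mathrm{Exp}(1)$ and set $\mu(x)=Z_x/S$ with $S=\sum_yZ_y$, so that $\mu\sim\bs\mu$ (Dirichlet$(\vec 1)$). Then $x\in B$ if and only if $NZ_x/S\notin[1/T,T]$.

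\textbf{Exchangeability.} The vector $(Z_x)_x$ is i.i.d., so its law is invariant under every permutation $\pi$ of the cube. The map $Z\mapsto B$ is permutation-equivariant, since membership of $x$ depends only on $Z_x$ and the symmetric quantity $S$. Hence $\pi(B)\overset{d}{=}B$ for every $\pi$. Every $k$-subset is the image of any fixed $k$-subset under some permutation, so conditional on $|B|=k$ all $k$-subsets are equally likely.

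\textbf{Concentration of $|B|/N$.} We compare $B$ with a surrogate defined without normalization.
\begin{itemize}
\item Chernoff bounds for sums of exponentials give that the event $E_\eta=\{|S/N-1|\le\eta\}$ has probability $1-e^{-\Omega(\eta^2N)}$. We will take $\eta=N^{-1/3}$, say.
\item On $E_\eta$ we have the sandwich
\[
B^-:=\{x: Z_x\notin[(1+\eta)/T,\,(1-\eta)T]\}\subseteq B\subseteq B^+:=\{x:Z_x\notin[(1-\eta)/T,\,(1+\eta)T]\}.
\]
\item The sets $B^\pm$ are i.i.d.\ Bernoulli subsets. Their membership probabilities are
\[
1-e^{-(1\pm\eta)/T}+e^{-(1\mp\eta)T}=\beta_T+O(\eta).
\]
Here we use $\beta_T=\Pr[\mathrm{Exp}(1)\notin[1/T,T]]=1-e^{-1/T}+e^{-T}$, which also equals $\E|B|/N$ up to $o(1)$ by the sandwich.
\item Hoeffding gives $|B^\pm|/N=\beta_T+O(\eta)+O(N^{-1/3})$, except with probability $e^{-\Omega(N^{1/3})}$.
\end{itemize}
Combining these yields $|B|/N=\beta_T+o(1)$, with failure probability far smaller than $2^{-\Omega(n)}$.

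\textbf{Last claim.} As $T\to\infty$, $\beta_T=1-e^{-1/T}+e^{-T}\le 1/T+e^{-T}\to0$. Given $\beta>0$, choose $T$ with $\beta_T<\beta/2$. Then for $n$ large, $|B|\le(\beta_T+o(1))N\le\beta N$ on the good event.

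\textbf{Expected obstacle.} The main obstacle is only the loss of independence caused by normalizing by $S$. The sandwich handles it, since the boundary thresholds move by a $1\pm\eta$ factor and the resulting change in probability is $O(\eta/T+\eta T e^{-T})$, which is $o(1)$. Nothing else is delicate.
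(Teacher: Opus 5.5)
The paper states this proposition without proof, so there is no argument to compare against line by line. Your proof is correct and is the natural one. It uses the same normalized-exponential device the paper uses in its proof of the local escape property, where the event $\{N/2\le\sum_x Z_x\le 2N\}$ yields the one-sided inclusion $\wt A\subseteq A$. Your version takes slack $1\pm\eta$ with $\eta=N^{-1/3}\to 0$ on both sides. That refinement is exactly what the two-sided conclusion $|B|/N=\beta_T+o(1)$ requires. A fixed factor-$2$ slack would only give $|B|/N\le\Pr[\mathrm{Exp}(1)\notin[2/T,T/2]]+o(1)$, which is enough for the final claim $|B|\le\beta N$ but not for the sharp density.

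Your caveat that $\E|B|/N$ equals $\beta_T$ only up to $o(1)$ is also correct. Since $NZ_x/S\sim N\cdot\mathrm{Beta}(1,N-1)$, one has $\E|B|/N=1-(1-\tfrac1{TN})^{N-1}+(1-\tfrac TN)^{N-1}$. The exact equality written in the statement should therefore be read asymptotically.

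One slip to fix: your sandwich inclusions are reversed. On $E_\eta$ you have $[(1+\eta)/T,(1-\eta)T]\subseteq[S/(TN),\,TS/N]\subseteq[(1-\eta)/T,(1+\eta)T]$. Complements reverse inclusions, so $\{x:Z_x\notin[(1-\eta)/T,(1+\eta)T]\}\subseteq B\subseteq\{x:Z_x\notin[(1+\eta)/T,(1-\eta)T]\}$; the complement of the narrower interval is the superset. Swapping the labels $B^\pm$ repairs this. Nothing downstream changes, since both membership probabilities are $\beta_T+O_T(\eta)$ and Hoeffding applies to each.
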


\subsubsection{Random paving}

We will need the following random paving estimate of Tropp~\cite{MR2379999}.
Let us call by a \emph{uniform coordinate projection of density $\eta$} the random diagonal matrix $R$ where the diagonal is distributed uniformly over $0$-$1$ vectors with 1-density $\eta$.
For any matrix $A$, the projection $RAR$ is a \textit{random paving}.

\begin{theorem}[Moments of random pavings {\cite[Prop. 4 \& Thm. 5]{MR2379999}}]
\label{thm:tropp-random-paving}
Fix $\alpha>0$ and  $\eps\in(0,1)$.
There exists
$\beta_{\max}=\beta_{\max}(\eps,\alpha)>0$ such that the following holds for any sufficiently-large integer $N$.

If $A$ is a deterministic $N\times N$ matrix satisfying
\[
    \max_{x,y}|A_{xy}|\le \frac{1}{(\log N)^{1+\alpha}}\|A\|_\op,
\]
and $R$ is a uniform coordinate projection of density at most $\beta_{\max}$,
then for all even integers $p$ with $2\le p\le 2\lceil \log N\rceil$,
\[
    \left(
        \E_R\|RAR\|_{\op}^{p}
    \right)^{1/p}
    \le \eps\|A\|_{\op} .
\]
\end{theorem}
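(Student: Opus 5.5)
The theorem is quoted from Tropp, so my plan follows his proof rather than anything new to this paper. Normalize $\|A\|_\op=1$ and write $A=D+H$, with $D$ the diagonal part and $H$ the hollow (off-diagonal) part. The diagonal piece is trivial:
\[\|RDR\|_\op\le\max_x|A_{xx}|\le(\log N)^{-1-\alpha}=o(1).\]
Also $\|H\|_\op\le 2$. So it suffices to bound $(\E\|RHR\|_\op^p)^{1/p}\le\eps/2$ for $p\le 2\lceil\log N\rceil$.

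\emph{Step 1: from a uniform projection to independent selectors.} I would replace the uniform coordinate projection of density $\eta$ by a diagonal matrix $R_\delta$ with i.i.d.\ Bernoulli$(\delta)$ entries, taking $\delta=2\eta$. For a fixed matrix, $\|R_SHR_S\|_\op$ is monotone in the set $S$. With probability at least $1/2$ the Bernoulli set has at least $\eta N$ elements, and conditionally on its size it is uniform. Hence $\E\|R_\eta HR_\eta\|^p\le 2\,\E\|R_\delta HR_\delta\|^p$, at the cost of only a constant factor in the $p$th moment.

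\emph{Step 2: decoupling.} Since $H$ is hollow, the standard decoupling inequality for $\sum_{x\ne y}\delta_x\delta_y H_{xy}e_xe_y^*$ gives
\[\big(\E\|R_\delta HR_\delta\|^p\big)^{1/p}\le C\big(\E\|R_\delta HR'_\delta\|^p\big)^{1/p},\]
where $R'_\delta$ is an independent copy.

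\emph{Step 3: iterated one-sided estimate.} The tool is the one-sided bound (Rudelson–Vershynin / Tropp, via noncommutative Khintchine and symmetrization): for any matrix $B$,
\[\big(\E\|BR_\delta\|^p\big)^{1/p}\le C\sqrt p\,\big(\E\|BR_\delta\|_{1\to2}^p\big)^{1/p}+\sqrt\delta\,\|B\|_\op,\]
where $\|\cdot\|_{1\to2}$ is the maximum column norm. I would apply it first to $B=R_\delta H$ conditionally on $R_\delta$. Then I would apply the analogous scalar/column version to control $\|R_\delta H R'_\delta\|_{1\to 2}$, whose column norms are themselves random selections of the columns of $H$. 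Unwinding the two applications gives
\[\big(\E\|R_\delta HR'_\delta\|^p\big)^{1/p}\lesssim p\,\max_{x,y}|H_{xy}|+\sqrt{\delta p}\,\big(\|H\|_{1\to2}+\|H^*\|_{1\to2}\big)+\delta\,\|H\|_\op .\]
The first term is at most $2\log N\cdot(\log N)^{-1-\alpha}=o(1)$, and the last is $O(\delta)$.

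\emph{Main obstacle.} The middle, column-norm term is the delicate one: $\|H\|_{1\to2}\le 1$ is all we know, and a factor of $\sqrt p\sim\sqrt{\log N}$ would destroy the bound. This is exactly where Tropp uses a sharper argument. A column of $R_\delta H$ has squared norm $\sum_x\delta_x|H_{xy}|^2$, a sum of independent terms, each at most $(\log N)^{-2-2\alpha}$, with mean at most $\delta$. A Bernstein/Chernoff bound plus a union bound over the $N$ columns then gives
\[\max_y\|R_\delta He_y\|^2\lesssim\delta+\frac{\log N}{(\log N)^{2+2\alpha}}.\]
Multiplied by $p\lesssim\log N$, this yields a contribution of order $\delta\log N+(\log N)^{-\alpha}$ to the squared bound. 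The $\delta\log N$ piece is not small, so I expect Tropp's finer bookkeeping to be needed here: he uses the $p$th moment of the maximum column norm rather than its worst case, which produces $\sqrt{\delta}$ instead of $\sqrt{\delta p}$. Choosing $\beta_{\max}(\eps,\alpha)$ small then closes the argument.
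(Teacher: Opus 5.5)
Your preliminary reductions are fine: normalizing, discarding the diagonal, comparing the uniform projection with Bernoulli selectors via monotonicity of $\|R_SHR_S\|$ in $S$, and decoupling. The two-fold Rudelson/noncommutative-Khintchine step does give
\[
\big(\E\|R_\delta HR_\delta\|^p\big)^{1/p}\lesssim p\max_{x,y}|H_{xy}|+\sqrt{\delta p}\,\big(\|H\|_{1\to2}+\|H^*\|_{1\to2}\big)+\delta\|H\|_\op .
\]
The gap is in the next step. Your proposed fix for the middle term does not work.

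The factor $\sqrt p$ is the Khintchine constant at $p\asymp\log N$. It multiplies the column norms of $R_\delta H$, and those are genuinely of size $\sqrt\delta$. Take $H$ to be the normalized Walsh--Hadamard matrix $N^{-1/2}\big((-1)^{x\cdot y}\big)_{x,y}$ with its diagonal removed. Then:
\begin{itemize}
\item $\|H\|_\op\approx1$;
\item $\max|H_{xy}|=N^{-1/2}$, far below $(\log N)^{-1-\alpha}$;
\item every column of $R_\delta H$ has squared norm $(|S|-\delta_y)/N\approx\delta$.
\end{itemize}
So the $p$th moment of $\max_y\|R_\delta He_y\|$ is $\approx\sqrt\delta$, the same as its worst case. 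The middle term is therefore $\approx\sqrt{\delta\log N}$ however you book-keep it.

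Iterating the inequality over nested restrictions does not help either. If $M=R_\rho HR_\rho$, then $\|M\|_{1\to2}\approx\sqrt\rho$, so a further restriction of density $\theta$ produces $\sqrt{p\rho\theta}$. The middle term is always $\sqrt{p\cdot(\text{total density})}$.

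As written, your argument proves the statement only for densities $\delta\lesssim\eps^2/\log N$. That is the classical $O(\log N)$-block paving. It is useless for the paper's application: in the $BB$ argument, $|B|/N\le\beta_{\max}$ must be a fixed constant depending only on $\eps$.

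What is missing is the genuinely Bourgain--Tzafriri part of the cited result. You need an estimate that uses $\|H\|_\op\le1$ and the entry bound jointly, not only through $\|H\|_{1\to2}$, to reach an $N$-independent scale. The example above shows the one-shot Khintchine bound cannot see this, even though $\|R_\delta HR_\delta\|$ is small there for small constant $\delta$ by the very theorem in question.

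Heuristically, at moment order $\asymp\log N$:
\begin{itemize}
\item index patterns without coincidences must be summed using $\|H\|_\op\le1$, which gives a $\delta$-power scale with no logarithm;
\item each index coincidence must be paid for by an entry factor $(\log N)^{-1-\alpha}$, against $O(\log N)$ combinatorial choices. This is where the exponent $1+\alpha$ comes from.
\end{itemize}
Bourgain and Tzafriri carry this out with a long, delicate calculation. Tropp's Prop.~4 and Thm.~5, which the paper cites, re-derive it with a more careful symmetrization/Khintchine argument than the single inequality you wrote. Your sketch stops exactly where that content begins, and your stated guess about how Tropp closes the gap is not correct.
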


\subsection{Proof of  \Cref{ineq:BB}, the $BB$ bound}
\label{sec:BB}

\paragraph{Step 1: Regularize.}
Recall that
\[\wh{M_{\wh f}} = H^{\otimes n}\cdot \frac{1}{2n}\sum_{i,s}\frac{\ketbra{u_{s,i}}{u_{s,i}}}{\|u_{s,i}\|_2^2}\cdot H^{\otimes n},\]
where $\ket{u_{s,i}}=\wh f(s)\ket{s} + \wh f(s+i)\ket{s+i}$.
Similarly to what was done in \Cref{sec:concentration}, define for $\eta>0$ the regularized operator
\[\wh M_{\wh f}^{(\eta)} = H^{\otimes n}\cdot \frac{1}{2n}\sum_{i,s}\frac{\ketbra{u_{s,i}}{u_{s,i}}}{\|u_{s,i}\|_2^2+\eta/N}\cdot H^{\otimes n}\,.\]
One may calculate that
\begin{align*}
    0\preceq \wh M_{\wh f} - \wh M_{\wh f}^{(\eta)}\preceq H^{\otimes n}\cdot \frac{1}{2n}\sum_{i,s}\frac{(\eta/N) I_{s,s+i}}{\|u_{s,i}\|_2^2+\eta/N}\cdot H^{\otimes n}=: D^{(\eta)}_{\wh f}
\end{align*}
where $I_{s,s+i}$ denotes the all-zeros matrix except with $I$ on the $(s,s+i)$ submatrix.

Thus
\[\|\Pi_B\wh{M_{\wh f}}\Pi_B\|_\op \leq \|\Pi_B\wh{M}^{(\eta)}_{\wh f}\Pi_B\|_\op+\|\Pi_BD_{\wh f}^{(\eta)}\Pi_B\|_\op\]
and \eqref{ineq:BB} follows if we show:

\begin{prop}
\label{prop:BB-sufficient}
For any $\eps>0$, there is a threshold $T>0$ and smoothing parameter $\eta>0$ such that with $B=B(T)$ defined as in \Cref{thm:B-op-norm-bounds}, with probability $1-\exp(-\Omega(n))$ over Haar-random $f$,
\begin{align}
    \|\Pi_B\wh{M}^{(\eta)}_{\wh f}\Pi_B\|_\op &\leq \frac12+\eps\\
    \text{and}\qquad \|\Pi_BD^{(\eta)}_{\wh f}\Pi_B\|_\op &\leq \eps\,.
\end{align}
\end{prop}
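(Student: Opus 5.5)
We prove the two bounds of \Cref{prop:BB-sufficient} by the same moment scheme, following the outline given at the start of this section. First fix $\mu\sim\bs\mu$ on a good event of probability $1-2^{-\Omega(n)}$: $\|\mu\|_\infty=N^{-1+o(1)}$, $\Bias(\mu)=\mc O(\sqrt{n/N})$, and $|B|\le\beta N$ by \Cref{prop:bad-set-regularity}. For an even $p\asymp n$, Markov gives
\[\Pr\big[\|X\|_\op>\theta\big]\le\theta^{-p}\,\E\tr X^p,\]
where $X$ is $\Pi_B\wh M^{(\eta)}_{\wh f}\Pi_B$ (suitably centered) or $\Pi_BD^{(\eta)}_{\wh f}\Pi_B$. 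So it suffices to show
\[\big(\E_{f\sim F_\mu}\tr X^p\big)^{1/p}\le(\text{target}+\eps/2)\,N^{1/p},\]
noting that $N^{1/p}=\mc O(1)$ when $p=Cn$.

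\textbf{Step 1: Gaussianize.} Expand $\tr(\Pi_B Y\Pi_B)^p$, where $Y=H^{\otimes n}(\cdots)H^{\otimes n}$. Each term is a product of $p$ factors $P_\eta(u_{s,i})$ or $\delta\,I/(\|u_{s,i}\|^2+\delta)$, sandwiched by Walsh matrices and $\Pi_B$. Each factor is a smooth function of $\mc O(p)$ Fourier coefficients. Run a Lindeberg replacement in the phase coordinates $\omega_x$, exactly as in \Cref{prop:smooth-replacement}. Replace $f(x)=\sqrt{\mu(x)}\omega_x$ with $\sqrt{\mu(x)}\xi_x$; the derivative bounds of \Cref{lem:master-frechet} give third-derivative control of order $\delta^{-3/2}$ per factor. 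The total Lindeberg error should be $\mathrm{poly}(p,\eta^{-1})\sum_x\mu(x)^{3/2}\le \mathrm{poly}(n)N^{-1/2+o(1)}$, which is negligible against the main term.

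Next, since $\Bias(\mu)$ is tiny, the Gaussian vector $(\wh f(s))_s$ has covariance $\frac1N(I+E)$ with $\|E\|$ small on the relevant marginals. A second interpolation, a Gaussian comparison in the covariance, then replaces it by an i.i.d.\ $\mc{CN}(0,1/N)$ vector $\zeta$ that is independent of $\mu$, and hence of $B$.

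\textbf{Step 2: Uncondition and pave.} Now $\wh M^{(\eta)}_\zeta$ and $D^{(\eta)}_\zeta$ are independent of $B$, and $B$ is exchangeable with density at most $\beta$. Conditional on $|B|$, $\Pi_B$ is a uniform coordinate projection, so \Cref{thm:tropp-random-paving} applies to a fixed realization of $\zeta$.

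For the $BB$ bound, write
\[\wh M^{(\eta)}_\zeta=\tfrac12 I+\big(\wh M^{(\eta)}_\zeta-\tfrac12I\big).\]
The expectation computation of \Cref{lem:soft-vs} (with zero bias) shows that the diagonal of the regularized operator is at most $\frac12$, so the $\frac12 I$ term contributes at most $\frac12$. It remains to verify the hypotheses for the fluctuation $Z=\wh M^{(\eta)}_\zeta-\frac12 I$ (with its diagonal removed):
\begin{itemize}
\item $\|Z\|_\op\le 2$ trivially.
\item Its entries are $\mathcal O(\mathrm{poly}(n)/\sqrt N)$ with overwhelming probability. Each entry is a Walsh transform of a sum of $nN$ bounded terms with phases, so concentration via the Herbst bound in \Cref{fact:herbst} should give this.
\end{itemize}
These are far below $(\log N)^{-1-\alpha}$. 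Tropp's theorem then gives $\big(\E_R\|RZR\|^p\big)^{1/p}\le\eps/4$ for $p\le 2\log N$, and this transfers back to $\E\tr$ via $\tr\le N\|\cdot\|^p$.

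For $D^{(\eta)}_\zeta$ no paving is needed, only its mean. Its diagonal is $\frac1{2n}\sum_{i,s}\E\frac{\delta}{\|u_{s,i}\|^2+\delta}\lesssim\eta^{p'/2}$ by the negative-moment argument of \Cref{lem:EB-bound}. Its off-diagonal part is paved in the same way, so choosing $\eta$ small gives a bound $\le\eps$.

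\textbf{Step 3: Close.} Choose $\eta$ first, then $\beta_{\max}(\eps,\alpha)$, then $T$ large enough that $\beta_T<\beta_{\max}$. Markov over $p\asymp\log N$ moments gives failure probability $2^{-\Omega(n)}$. Finally, union over the bad-$\mu$ event.

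\textbf{Main obstacle.} I expect Step 1 to be the main obstacle. The moment order $p\sim n$ forces each trace term to depend on $\mc O(n)$ coordinates, and the third-derivative bounds must be summed over exponentially many index paths without losing more than $e^{\mc O(n)}$, which is exactly what $N^{-1/2}$ can absorb. The first thing to try is to Lindeberg-replace the whole matrix moment as a single smooth function of $(\omega_x)$. The key input is the bound $\|\frechet^kP_\delta\|\lesssim\delta^{-k/2}$, together with Parseval to control $\sum_s$ as in \Cref{prop:A-delta-carre-du-champ}. If that fails, the fallback is to truncate to an event where all $\|u_{s,i}\|^2\gtrsim\eta/N$ except on a sparse set.
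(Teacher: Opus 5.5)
Your architecture matches the paper's: regularize, Lindeberg-replace the Steinhaus phases by physical Gaussians, interpolate covariances to i.i.d.\ Fourier Gaussians so that $B$ decouples, use exchangeability of $B$ with Tropp's paving, and finish with Markov on $\tr|X|^p$, $p\asymp n$. The genuine gap is the covariance step, which you treat as negligible because $\Bias(\mu)$ is tiny. It is not negligible. In the physical basis, the covariance difference between the two Gaussian models is $\operatorname{diag}\big(\tfrac{1}{N}-\mu(x)\big)$. This is large exactly on $B$, the set $\Pi_B$ projects onto, since $B$ is \emph{defined} by $\mu(x)$ deviating from $1/N$. The trace moment expands into $N^{\Theta(p)}$ signed terms with heavy cancellation, so comparing marginal by marginal does not obviously sum. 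The entrywise bound $|\Delta_{st}|\le\Bias(\mu)/N$ can be exploited through a Schur-multiplier interpolation between $S_\infty$ and $S_2$. However, that only yields factors $\Bias(\mu)^{\Theta(1/p)}=2^{-\Theta(n/p)}$, which are constants when $p\asymp n$. In the terms of $\partial_s\bar\partial_t\tr W^p$ where only $\Pi_B$ separates $\partial_sW$ from $\bar\partial_tW$, this route gives just $\Bias(\mu)^{4/p}$. That is a root-level error of order $C_\eta N^{1/p}\Bias(\mu)^{2/p}\approx C_\eta$, which is useless. The paper handles these endpoint terms by showing that the Schur product of $\Delta$ with shifted copies of $H^{\otimes n}\Pi_BH^{\otimes n}$ has eigenvalues at most $\frac{1}{N}\big(|B|/N+\mu(B+z)\big)$. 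The transfer error therefore contains $N^{1/p}\big(|B|/N+\max_y\mu(B+y)\big)^{1/2}$, which is small only because $T$ is large. Your good event for $\mu$ has no control of $\max_y\mu(B+y)$, and this term is precisely the price of decoupling $B$ from the Gaussians.

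Because the transfer errors are small constants rather than $N^{-\Omega(1)}$, your parameter bookkeeping does not close either. It does not suffice to show $(\E\tr X^p)^{1/p}\le(\text{target}+\eps/2)N^{1/p}$ on the grounds that $N^{1/p}=\mc O(1)$. Markov then gives $N\big(\frac{\text{target}+\eps/2}{\text{target}+\eps}\big)^p$, which is not small for the uncentered target $\tfrac{1}{2}$ when $p\le 2\lceil\log N\rceil$ (Tropp's range). Even for a centered $X$ with ratio $\tfrac{1}{2}$, it forces $p>n$, where the available bound $C_\eta N^{1/p}\Bias(\mu)^{3/p}$ is of order $C_\eta$ rather than $\mc O(\eps)$. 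The paper's choices are as follows:
\begin{itemize}
\item It centers at $m_\eta$. Centering at $\tfrac{1}{2}$ would leave a residual $-\tfrac{d_\eta}{2}\Pi_B$ inside a two-sided moment, which would then have to be below $\eps 2^{-1/c}$ even though $C_\eta$ grows as $\eta\to0$.
\item It takes $p=cn$ with $c=c(\eta,\eps)$ small, so that $C_\eta\Bias(\mu)^{3/p}\approx C_\eta 2^{-3/(2c)}$ lies below $\frac{\eps}{8}2^{-5/(4c)}$.
\item It uses $\beta_{\max}$ and $T$ to push both the paving bound and $C_\eta\big(|B|/N+\max_y\mu(B+y)\big)^{1/2}$ below $\frac{\eps}{8}2^{-5/(4c)}$.
\end{itemize}
Markov at level $\eps/2$ then gives $N\,2^{-5p/(4c)}\approx 2^{-n/4}$. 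The order of choices is $\eta$, then $c$, then $\beta_{\max}$, then $T$. The choice of $c$ is absent from your plan, and it is what makes the constant-size transfer errors affordable.
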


To accomplish this, we will compare both $\wh M^{(\eta)}_{\wh f}$ and $D^{(\eta)}_{\wh f}$ to Gaussian versions, decouple from $B$, and apply random paving.

\paragraph{Step 2: Gaussianize.}
With $\bs\gamma=(\gamma_s)_s$, $s\in\{0,1\}^n$, a family of i.i.d. standard complex Gaussians with variance $1/N$, define $\wh{M}^{(\eta)}_{\bs \gamma}$ and $D_{\bs\gamma}^{(\delta)}$ in analogy to $\wh{M}^{(\eta)}_{\wh f}$ and $D^{(\eta)}_{\wh f}$.
That is, define $\ket{g_{s,i}}=\gamma_s\ket{s}+\gamma_{s+i}\ket{s+i}$ and 
\[\wh{M}^{(\eta)}_{\bs \gamma}=H^{\otimes n}\cdot \frac{1}{2n}\sum_{i,s}\frac{\ketbra{g_{s,i}}{g_{s,i}}}{\|g_{s,i}\|_2^2+\eta/N}\cdot H^{\otimes n}\qquad\text{and}\qquad D_{\bs\gamma}^{(\eta)}=H^{\otimes n}\cdot \frac{1}{2n}\sum_{i,s}\frac{(\eta/N) I_{s,s+i}}{\|g_{s,i}\|_2^2+\eta/N}\cdot H^{\otimes n}\,.\]
Also define the scalars
\begin{equation}
\label{eq:m-and-d_eta-defn}
m_\eta =\E\frac{|\gamma_1|^2}{|\gamma_1|^2+|\gamma_2|^2+\eta/N}\qquad\text{and}\qquad d_\eta = \E\frac{\eta/N}{|\gamma_1|^2+|\gamma_2|^2+\eta/N}=1-2m_\eta\,,
\end{equation}
which will eventually be used to center the Gaussianized operators.
Then we have the following.

\begin{prop}[$BB$ Gaussianization]
\label{prop:BB-gaussianization}
    Fix $\eta>0$ and a measure $\mu$.
    Then there exists $C_\eta>0$ such that for any even $p\geq 8$ satisfying
    \[\Bias(\mu)^{2/p}\leq \frac12\,,\]
    the differences
\[
\begin{aligned}
&
\left|
    \left(\E\tr\big|\Pi_B \big(\wh{M}^{(\eta)}_{\wh f}-m_\eta I\big)\Pi_B\big|^p\right)^{1/p}
    -
    \left(\E\tr\big|\Pi_B \big(\wh{M}^{(\eta)}_{\bs \gamma}-m_\eta I\big)\Pi_B\big|^p\right)^{1/p}
\right|
\end{aligned}
\]
and
\[
\begin{aligned}
&
\left|
    \left(\E\tr\big|\Pi_B \big(D^{(\eta)}_{\wh f}-d_\eta I\big)\Pi_B\big|^p\right)^{1/p}
    -
    \left(\E\tr\big|\Pi_B \big(D^{(\eta)}_{\bs \gamma}-d_\eta I\big)\Pi_B\big|^p\right)^{1/p}
\right|
\end{aligned}
\]
are at most
\[C_\eta N^{1/p}\left[p^3\|\mu\|_{3/2}^{1/2}+\left(\frac{|B|}{N}+\max_{y\in \{0,1\}^n}\mu(B+y)\right)^{1/2}+\Bias(\mu)^{3/p}\right]\,.\]
Here $B+y$ denotes the set $\{b+y:b\in B\}$.
\end{prop}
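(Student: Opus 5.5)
We plan a Lindeberg-type replacement for the $p$-th tracial moment, viewed as a smooth function of the underlying randomness. We work with $f\sim F_\mu$ for fixed $\mu$, so $B$ is deterministic and $\wh f(s)=N^{-1/2}\sum_x\sqrt{\mu(x)}\omega_x\chi_s(x)$ with i.i.d. Steinhaus $\omega_x$. The centered operator $X_f:=\Pi_B(\wh M^{(\eta)}_{\wh f}-m_\eta I)\Pi_B$ is Hermitian, so $\tr|X_f|^p=\tr X_f^p$ for even $p$. This is a sum over closed walks $b_1\to b_2\to\cdots\to b_p\to b_1$ with all $b_j\in B$, of products of entries $(X_f)_{b_jb_{j+1}}$. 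By the $2\times2$ block structure, each entry $(\wh M^{(\eta)}_{\wh f})_{xy}=\frac1{2n}\sum_{i,s}\chi_s(x)\chi_{s}(y)\,[\text{block}]$ is a Fourier-side sum of the smooth terms $P_\eta(u_{s,i})$ controlled by \Cref{lem:master-frechet}.

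We interpolate in two stages. \textbf{Stage 1:} replace each Steinhaus $\omega_x$ by a complex Gaussian $\xi_x$, coordinate by coordinate, exactly as in \Cref{prop:smooth-replacement}. \textbf{Stage 2:} pass from the resulting Gaussian field $\wh f_{\mathrm{Gauss}}(s)$, which has covariance $\E\,\overline{\wh f(s)}\wh f(t)=\wh\mu(s+t)/\sqrt N$, to the i.i.d. field $\bs\gamma$. For Stage 2 we use Gaussian interpolation $\sqrt{\theta}\,\wh f_{\mathrm{Gauss}}+\sqrt{1-\theta}\,\bs\gamma$ together with Gaussian integration by parts. The derivative in $\theta$ is a sum over pairs $s\neq t$ weighted by the off-diagonal covariance $\wh\mu(s+t)/\sqrt N$, whose size is at most $\Bias(\mu)/N$.

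For Stage 1, we write $\Psi(Z)=\tr X^p$. Third derivatives of $\Psi$ in a single coordinate $z=\omega_x$ expand via the chain rule into products of $p-3$ copies of $X$ (operator norm at most $1$) and at most three derivatives of $\wh M^{(\eta)}$. Each such derivative is a sum over $(i,s)$ of $\frechet^kP_\eta(u_{s,i})$ applied to $a_{x,s}$, with $\|a_{x,s}\|\asymp\sqrt{\mu(x)/N}$ and $\|\frechet^kP_\eta\|\lesssim(N/\eta)^{k/2}$. Since the direction vectors have Walsh structure in $s$, we bound the operator norm of each derivative through the same restricted-Parseval trick as \eqref{eq:restricted-walsh-parseval}. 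This gives $\|\partial^k_z\wh M^{(\eta)}\|_\op\lesssim_\eta(N\mu(x))^{k/2}$, and hence $|\partial_z^3\Psi|\lesssim_\eta p^3N(N\mu(x))^{3/2}/N$ once the trace over at most $N$ dimensions is accounted for. Summing over $x$ and taking $p$-th roots, with the difference of $p$-th roots controlled by the mean value theorem applied to $(\cdot)^{1/p}$, should produce the term $C_\eta N^{1/p}p^3\|\mu\|_{3/2}^{1/2}$ after normalization. We will have to be careful to use $\|X\|_\op\le 1$ rather than the trace, so that the losses are only $N^{1/p}$.

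Stage 2 is where we expect the main obstacle. The integration-by-parts term pairs $\partial_{\gamma_s}$ with $\partial_{\bar\gamma_t}$ across distinct Fourier sites, weighted by $\wh\mu(s+t)/\sqrt N$. A naive bound sums $N^2$ pairs and loses badly. Our plan is to split the pairs $(s,t)$ into two kinds. Pairs that both lie in the same $2$-block $\{s,s+i\}$ contribute the $\Bias(\mu)^{3/p}$ term after Hölder, where the hypothesis $\Bias(\mu)^{2/p}\le\frac12$ is used to absorb the self-referential moment. Pairs in genuinely different blocks are handled by conjugating back by $H^{\otimes n}$: the covariance operator becomes multiplication by $N\mu$ in the $x$-picture. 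The comparison error then reduces to how much mass $\mu$ places on the walk support, which is where $\mu(B+y)$ enters, together with the diagonal density $|B|/N$. We expect these to contribute the square-root factor $\bigl(|B|/N+\max_y\mu(B+y)\bigr)^{1/2}$ by Cauchy--Schwarz, with one factor $\|X\|^{p-2}$ and the two differentiated entries paired through a Hilbert--Schmidt bound restricted to rows in $B$.

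The $D^{(\eta)}$ statement follows by the identical argument, since by \Cref{cor:B-delta-carre-du-champ} its summands obey the same derivative bounds.
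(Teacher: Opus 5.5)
Your two-stage plan (Steinhaus $\to$ physical Gaussians by Lindeberg, then physical Gaussians $\to$ i.i.d.\ Fourier Gaussians by covariance interpolation) is the same as the paper's. However, both stages have a genuine gap in how the moment errors are controlled.

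\emph{Stage 1.} You bound the third derivative additively, using $\|X\|_{\op}\le 1$ and a factor $N$ from the trace. Incidentally, the direct-sum structure over $s$ gives $\|\partial_z^k\wh M^{(\eta)}\|_{\op}\lesssim(\mu(x)/\eta)^{k/2}$, not $(N\mu(x))^{k/2}$, and no restricted Parseval is needed. Even with the correct scaling, you only get $|\E\tr X_f^p-\E\tr X_{\bs\xi}^p|\lesssim_\eta p^3N\sum_x\mu(x)^{3/2}$, which is $\approx p^3N^{1/2}$ for typical $\mu$. This error does not shrink with the moments themselves. Since $t\mapsto t^{1/p}$ has unbounded derivative at $0$, the mean value theorem is unavailable without a lower bound. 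The best root-level conclusion is then $|a^{1/p}-b^{1/p}|\le|a-b|^{1/p}\approx N^{1/(2p)}$. For $p\asymp n$ this is a non-vanishing constant fraction of $N^{1/p}$, whereas the claimed bound $N^{1/p}p^3\|\mu\|_{3/2}^{1/2}$ is a vanishing fraction.

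The missing idea is to keep the Schatten factors. By Schatten--H\"older,
\[
|\partial^3_z\tr W^p|\lesssim_\eta\mu(x)^{3/2}\bigl[pN^{1/p}\|W\|_{S_p}^{p-1}+p^2N^{2/p}\|W\|_{S_p}^{p-2}+p^3N^{3/p}\|W\|_{S_p}^{p-3}\bigr],
\]
so the per-coordinate error is proportional to $R^{p-j}$. You then need a self-bounding argument: add a floor $N\Lambda^p$ with $\Lambda\asymp p^3\|\mu\|_{3/2}^{1/2}$, pass to the maximal hybrid $R_*$, and use $R_*^p-R_j^p\ge R_*^{p-1}(R_*-R_j)$. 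You also have to control $\|W\|_{S_p}$ along the Taylor segment for the unbounded Gaussian variable, via the first-derivative Lipschitz bound and exponential moments.

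\emph{Stage 2.} Here the sources of the error terms are misassigned. The second-derivative term $\tr[(\partial_s\bar\partial_tW)W^{p-1}]$ is nonzero only for same-block pairs and contributes a factor linear in $\Bias(\mu)$; no geometric series arises there. The terms $\tr[(\partial_sW)W^{\ell}(\bar\partial_tW)W^{\ell'}]$ involve arbitrary pairs. For interior $1\le\ell\le p-3$, the powers of $W$ on both sides destroy any ``rows in $B$'' structure. Bounding them by $\|X\|_{\op}^{p-2}$ brings back the additive problem from Stage 1.

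Your physical-side heuristic, made precise, writes the generator as $\sum_x(\frac1N-\mu(x))\,\mathcal D_x\overline{\mathcal D}_x$, with $\mathcal D_x$ the derivative along the unit vector $N^{-1/2}\chi_x$. These weights have total mass up to $2$, so this gives no smallness unless you exploit their cancellation, i.e.\ the Fourier coefficients of $\mu$.

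The paper does this as follows.
\begin{itemize}
\item It writes the $\Delta$-weighted sum as a trace against the Schur multiplier $\mathcal S_\Delta$, with $\|\mathcal S_\Delta\|_{S_\infty\to S_\infty}\le 2/N$ and $\|\mathcal S_\Delta\|_{S_2\to S_2}\le\Bias(\mu)/N$.
\item Interpolating these gives $\Bias(\mu)^{\theta_\ell}$ with $\theta_\ell\ge 6/p$. Summing over $\ell$ is a geometric series, and that is exactly where $\Bias(\mu)^{2/p}\le\frac12$ is used and where $\Bias(\mu)^{3/p}$ comes from.
\item Only the endpoint terms $\ell\in\{0,p-2\}$, where one sandwiched factor is exactly $H^{\otimes n}\Pi_BH^{\otimes n}$, use $|B|/N+\max_y\mu(B+y)$. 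This comes from Walsh-diagonalizing $\Delta\circ(S_u^*H^{\otimes n}\Pi_BH^{\otimes n}S_v)$.
\item The square roots in the statement come from a differential inequality for $R(\tau)=(\E\tr|W(\tau)|^p)^{1/p}$ (integrating $\frac{d}{d\tau}(R^2+N^{2/p}\kappa)^{1/2}$), not from Cauchy--Schwarz.
\end{itemize}
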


\begin{proof}
Throughout the proof, $C_\eta$ denotes a constant depending only on $\eta$,
whose value may change from line to line.

Let us begin with the $\wh{M}^{(\eta)}_{\wh f}$ estimate.
We will proceed in two steps: first we do a Gaussian replacement of the Steinhaus-random vector $f$ via a Lindeberg-type argument.
Then, we move to Fourier space and consider the quantity of interest as a function of the random vector $\big(\wh f(s)\big)_s$.
With this perspective, we argue that small tracial moments of the corresponding quantity behave the same as if the Fourier coefficients $\big(\wh f(s)\big)_s$ were actually i.i.d. Gaussians.

\paragraph{Steinhaus to physical Gaussians.}
Let $\bs\xi=(\xi_x)_x$, where $\xi_x$, $x\in\{0,1\}^n$ are independent complex Gaussians
with variances $\mu(x)$ respectively.
We compare
$\wh M_{\wh f}^{(\eta)}$ and $\wh M_{\wh{\bs\xi}}^{(\eta)}$ by replacing the
physical coordinates one at a time.

Choose a physical coordinate $x$, fix a $\tilde z\in\C^{2^n}$, and let
$z=z(\zeta)$ be obtained from $\tilde z$ by replacing its $x$-th
coordinate with $\sqrt{\mu(x)}\,\zeta$.  Define
\[
    W_x(\zeta)
    :=
    \Pi_B
    \left(
        \wh M_{\wh{z(\zeta)}}^{(\eta)}-m_\eta I
    \right)
    \Pi_B.
\]
Note that $0\preceq
    \wh M_{\wh{z(\zeta)}}^{(\eta)}
    \preceq I$, so \[\|W_x(\zeta)\|_\op\leq1\,.\]
    
Since
\[
    \frechet_\zeta\widehat{z(\zeta)}(s)[v]
    =
    \sqrt{\frac{\mu(x)}{N}}
    (-1)^{x\cdot s}v,
\]
\Cref{lem:master-frechet} applied with regularization parameter $\eta/N$,
along with the chain rule, give for $r=1,2,3$,
\[
\begin{aligned}
    \left\|
        \frechet_\zeta^rW_x(\zeta)
    \right\|
    &\leq
    \frac1n\sum_i
    \max_{s:s_i=0}
    \left\|
        \frechet_\zeta^r
        \frac{\ketbra{u_{s,i}(\zeta)}{u_{s,i}(\zeta)}}
        {\|u_{s,i}(\zeta)\|_2^2+\eta/N}
    \right\| \\
    &\lesssim_r
    \left(\frac{N}{\eta}\right)^{r/2}
    \left(\frac{\mu(x)}{N}\right)^{r/2} \\
    &=
    \eta^{-r/2}\mu(x)^{r/2},
\end{aligned}
\]
where $\ket{u_{s,i}(\zeta)}
    :=
    \widehat{z(\zeta)}(s)\ket{s}
    +
    \widehat{z(\zeta)}(s+i)\ket{s+i}$.

Let $p$ be even.
Differentiating $\tr\left[W_x(\zeta)^p\right]$ three times and applying Schatten Hölder to the three possible
types of product-rule terms gives
\begin{equation}
\begin{aligned}
\left\|
    \frechet_\zeta^3\tr\left[W_x(\zeta)^p\right]
\right\|
\leq
C_\eta\mu(x)^{3/2}
\Big[
&
pN^{1/p}\|W_x(\zeta)\|_{S_p}^{p-1}\\
&+
p^2N^{2/p}\|W_x(\zeta)\|_{S_p}^{p-2}\\
&+
p^3N^{3/p}\|W_x(\zeta)\|_{S_p}^{p-3}
\Big],
\end{aligned}
\label{ineq:refined-third-derivative}
\end{equation}
where $\|\cdot\|_p$ denotes the $p$th Schatten norm.

We now perform the Lindeberg replacement at the level of the $p$-th moment roots.
Enumerate the physical coordinates as $x_1,\ldots,x_N$, and let
$f^{(k)}$ be the hybrid vector in which the first $k$ coordinates use
Steinhaus variables and the remaining coordinates use Gaussian variables; so for example $f^{(0)}=\bs\xi$ and $f^{(N)}=f$.

Choose $K_\eta$ sufficiently large and set
\[
    \Lambda
    :=
    K_\eta p^3\|\mu\|_{3/2}^{1/2}.
\]
For $0\leq k\leq N$, put
\[
    R_k
    :=
    \left(
        N\Lambda^p
        +
        \E\tr\left|
            \Pi_B
            \left(
                \wh M_{\widehat{f^{(k)}}}^{(\eta)}-m_\eta I
            \right)
            \Pi_B
        \right|^p
    \right)^{1/p},
\]
and let
\[
    R_*:=\max_{0\leq k\leq N}R_k.
\]
In particular,
\begin{equation}
    R_*\geq N^{1/p}\Lambda.
\label{ineq:root-floor}
\end{equation}

Fix one replacement coordinate $x$.
The first
derivative bound above gives
\begin{equation}
    \|W_x(t\zeta)-W_x(0)\|_{S_p}
    \leq
    C_\eta N^{1/p}\sqrt{\mu(x)}\,|\zeta|.
\label{ineq:segment-lipschitz}
\end{equation}
Moreover, Minkowski's inequality and the Gaussian moment bound
$(\E|g|^p)^{1/p}\lesssim\sqrt p$ give
\begin{equation}
\begin{aligned}
&
\left(
    N\Lambda^p
    +
    \E\|W_{x,0}\|_{S_p}^p
\right)^{1/p}\\
&\qquad\leq
R_*
+
C_\eta N^{1/p}\sqrt{p\mu(x)}\\
&\qquad\leq
R_*
\left(
    1+\frac{C_\eta}{K_\eta p^{5/2}}
\right),
\end{aligned}
\label{ineq:zeroed-root}
\end{equation}
where we also used $\sqrt{\mu(x)}
    \leq
    \|\mu\|_{3/2}^{1/2}$.

The pointwise quantity
\[
    \left(
        N\Lambda^p+\|W_{x,0}\|_{S_p}^p
    \right)^{1/p}
\]
is at least $N^{1/p}\Lambda$.  Combining this observation with
\eqref{ineq:segment-lipschitz} gives, for $j=1,2,3$,
\[
\begin{aligned}
\|W_x(t\zeta)\|_{S_p}^{p-j}
&\leq
\left(
    N\Lambda^p+\|W_{x,0}\|_{S_p}^p
\right)^{(p-j)/p}
\exp\left(
    \frac{C_\eta p\sqrt{\mu(x)}}{\Lambda}
    |\zeta|
\right).
\end{aligned}
\]
Since
\[
    \frac{p\sqrt{\mu(x)}}{\Lambda}
    \leq
    \frac{1}{K_\eta p^2},
\]
\eqref{ineq:zeroed-root} and the exponential moments of a standard complex
Gaussian imply
\begin{equation}
    \E\left[
        |\zeta|^3
        \sup_{0\leq t\leq1}
        \|W_x(t\zeta)\|_{S_p}^{p-j}
    \right]
    \leq
    C_\eta R_*^{p-j},
    \qquad
    j=1,2,3,
\label{ineq:taylor-segment-root}
\end{equation}
for both a Steinhaus variable and a standard complex Gaussian.

Taylor expansion of $|W_x(\cdot)|^p$ to second order at zero, together with the matching first two
real moments of the Steinhaus and complex Gaussian variables, now gives
\begin{equation}
\begin{aligned}
&
\left|
    \E\tr\left|
        \Pi_B
        \left(
            \wh M_{\widehat{f^{(k)}}}^{(\eta)}-m_\eta I
        \right)
        \Pi_B
    \right|^p
    -
    \E\tr\left|
        \Pi_B
        \left(
            \wh M_{\widehat{f^{(k-1)}}}^{(\eta)}-m_\eta I
        \right)
        \Pi_B
    \right|^p
\right|\\
&\qquad\leq
C_\eta\mu(x_k)^{3/2}
\left[
    pN^{1/p}R_*^{p-1}
    +
    p^2N^{2/p}R_*^{p-2}
    +
    p^3N^{3/p}R_*^{p-3}
\right].
\end{aligned}
\label{ineq:one-coordinate-root-lindeberg}
\end{equation}

Let $k_*$ be such that $R_{k_*}=R_*$.  Telescoping from either endpoint to
$k_*$ and summing \eqref{ineq:one-coordinate-root-lindeberg} over the physical
coordinates gives, for $j\in\{0,N\}$,
\begin{equation}
\begin{aligned}
R_*-R_j
\leq
C_\eta N^{1/p}
\left[
    p\sum_x\mu(x)^{3/2}
    +
    \frac{p^2}{\Lambda}\sum_x\mu(x)^{3/2}
    +
    \frac{p^3}{\Lambda^2}\sum_x\mu(x)^{3/2}
\right].
\end{aligned}
\label{ineq:max-hybrid-root}
\end{equation}
Using the definition of $\Lambda$, the fact that
$\sum_x\mu(x)^{3/2}\leq1$, and
\[
    \|\mu\|_{3/2}^{1/2}
    =
    \left(\sum_x\mu(x)^{3/2}\right)^{1/3},
\]
we obtain
\[
    R_*-R_j
    \leq
    C_\eta N^{1/p}p^3\|\mu\|_{3/2}^{1/2}.
\]
Removing the temporary floor $N\Lambda^p$ and using the last estimate for both
endpoints gives the desired comparison to physical Gaussians:
\begin{equation}
\begin{aligned}
&
\left|
    \left(
        \E\tr\left|
            \Pi_B
            \left(
                \wh M_{\wh f}^{(\eta)}-m_\eta I
            \right)
            \Pi_B
        \right|^p
    \right)^{1/p}
    -
    \left(
        \E\tr\left|
            \Pi_B
            \left(
                \wh M_{\wh{\bs\xi}}^{(\eta)}-m_\eta I
            \right)
            \Pi_B
        \right|^p
    \right)^{1/p}
\right|\\
&\qquad\leq
C_\eta N^{1/p}p^3\|\mu\|_{3/2}^{1/2}.
\end{aligned}
\label{ineq:root-lindeberg}
\end{equation}

\paragraph{Physical Gaussians to independent Fourier Gaussians.}

Next we pass from $\wh{M}^{(\eta)}_{\wh{\bs\xi}}$ to
$\wh{M}^{(\eta)}_{\bs\gamma}$ by Gaussian interpolation.
Let $X(\tau)$, $0\leq \tau\leq 1$ be a Gaussian vector interpolating between
$\big(\wh{\bs\xi}(s)\big)_s$ at $X(0)$ and the $\big(\bs\gamma(s)\big)_s$ at $X(1)$.
That is, with $\Gamma_0$ the covariance of $\wh{\bs\xi}$ and $\Gamma_1=N^{-1}I$ the
covariance of $\bs\gamma$, define $X(\tau)$ to have
covariance $
    (1-\tau)\Gamma_0+\tau\Gamma_1$.
Also define
\[
    \Delta_{st}
    :=
    (\Gamma_1-\Gamma_0)_{st}.
\]
The covariances of $\wh{\bs\xi}$ are $1/N$ on the diagonal, while for
$s\neq t$,
\[
    |(\Gamma_0)_{st}|=
    \left|
        \E\wh{\bs\xi}(s)
        \overline{\wh{\bs\xi}(t)}
    \right|=
    \left|
        \frac1N
        \sum_x
        \mu(x)(-1)^{(s+t)\cdot x}
    \right|\le
    \frac{\Bias(\mu)}{N}.
\]
Thus
\begin{equation}
\label{ineq:covariance-entry-bound}
    \Delta_{ss}=0
    \qquad\text{and}\qquad
    |\Delta_{st}|
    \le
    \frac{\Bias(\mu)}{N}
    \quad\text{for all }s\neq t.
\end{equation}

Define
\[
    W:=W(\tau)
    :=
    \Pi_B
    \left(
        \wh M^{(\eta)}_{X(\tau)}-m_\eta I
    \right)
    \Pi_B.
\]
Since $W(\tau)$ is self-adjoint and $p$ is even,
\[
    \tr|W(\tau)|^p
    =
    \tr[W(\tau)^p].
\]
The fundamental theorem of calculus and Gaussian covariance interpolation give
\begin{align*}
&
\E\tr\left|
    \Pi_B
    \left(
        \wh M^{(\eta)}_{\bs\gamma}-m_\eta I
    \right)
    \Pi_B
\right|^p
-
\E\tr\left|
    \Pi_B
    \left(
        \wh M^{(\eta)}_{\wh{\bs\xi}}-m_\eta I
    \right)
    \Pi_B
\right|^p
\\
&\qquad=
\int_0^1
\frac{d}{d\tau}
\E\tr[W(\tau)^p]
\,d\tau
\\
&\qquad=
\int_0^1
\E
\sum_{s,t}
\Delta_{st}
\partial_s\bar\partial_t
\tr[W(\tau)^p]
\,d\tau
\\
&\qquad=
\int_0^1
\E
\sum_{s\neq t}
\Delta_{st}
\partial_s\bar\partial_t
\tr[W(\tau)^p]
\,d\tau.
\end{align*}
Here $\partial_s$ and $\bar\partial_t$ denote the Wirtinger derivatives in the
Fourier coordinates:
\[
    \partial_s
    :=
    \frac{\partial}{\partial z_s},
    \qquad
    \bar\partial_t
    :=
    \frac{\partial}{\partial\overline z_t}.
\]

We now pursue a uniform bound on the interpolation integrand.
By the product rule and cyclicity of trace,
\begin{align}
\label{eq:second-derivative-trace-power}
    \partial_s\bar\partial_t\tr[W^p]
    &=
    p\underbrace{
        \tr\left[
            (\partial_s\bar\partial_tW)W^{p-1}
        \right]
    }_{=:\,T_1(s,t)}+
    p\sum_{\substack{\ell,\ell'\geq0\\\ell+\ell'=p-2}}
    \underbrace{
        \tr\left[
            (\partial_sW)W^\ell
            (\bar\partial_tW)W^{\ell'}
        \right]
    }_{=:\,T_2(s,t,\ell)}.
\end{align}
We first control the covariance-weighted $T_1$ terms.
We then treat separately the interior $T_2$ terms, for which
$1\le\ell\le p-3$, and the two endpoint terms $\ell=0,p-2$.

\paragraph{The $T_1$ terms.}
Recall that
\[
    W(\tau)
    =
    \Pi_BH^{\otimes n}
    \left(
        \frac{1}{2n}
        \sum_{i,s}
        \frac{\ketbra{u_{s,i}}{u_{s,i}}}
        {\|u_{s,i}\|_2^2+\eta/N}
        -
        m_\eta I
    \right)
    H^{\otimes n}\Pi_B,
\]
where $\ket{u_{s,i}}
    =
    X(\tau)_s\ket{s}
    +
    X(\tau)_{s+e_i}\ket{s+e_i}$.
Since $\Delta_{ss}=0$, only mixed derivatives with respect to distinct Fourier coordinates contribute, and for a fixed direction $i$, such a mixed derivative is nonzero only when the two coordinates are the endpoints of a common $i$-edge.
Grouping the two directed copies of each $i$-edge, we therefore have
\begin{align*}
\sum_{r,t}
\Delta_{rt}\partial_r\bar\partial_tW
&\;=\;
\Pi_BH^{\otimes n}
\Bigg[
\frac1n\sum_i
\bigoplus_{s:s_i=0}
\Bigg(
\Delta_{s,s+e_i}
\partial_s\bar\partial_{s+e_i}
\frac{\ketbra{u_{s,i}}{u_{s,i}}}
{\|u_{s,i}\|_2^2+\eta/N}
\\
&\hspace{15em}
+
\Delta_{s+e_i,s}
\partial_{s+e_i}\bar\partial_s
\frac{\ketbra{u_{s,i}}{u_{s,i}}}
{\|u_{s,i}\|_2^2+\eta/N}
\Bigg)
\Bigg]
H^{\otimes n}\Pi_B.
\end{align*}
By \Cref{lem:master-frechet}, each local mixed derivative in this display has operator norm at most $C_\eta N$.
Using the direct-sum structure, unitarity of $H^{\otimes n}$, contractivity of $\Pi_B$, and $|\Delta_{rt}|\le\Bias(\mu)/N$, we obtain
\begin{align}
\label{ineq:weighted-mixed-derivative-bound}
\left\|
    \sum_{r,t}
    \Delta_{rt}\partial_r\bar\partial_tW
\right\|_\op
&\le
\frac1n\sum_i
\max_{s:s_i=0}
\Bigg[
|\Delta_{s,s+e_i}|
\left\|
\partial_s\bar\partial_{s+e_i}
\frac{\ketbra{u_{s,i}}{u_{s,i}}}
{\|u_{s,i}\|_2^2+\eta/N}
\right\|_\op
\nonumber\\
&\hspace{8em}
+
|\Delta_{s+e_i,s}|
\left\|
\partial_{s+e_i}\bar\partial_s
\frac{\ketbra{u_{s,i}}{u_{s,i}}}
{\|u_{s,i}\|_2^2+\eta/N}
\right\|_\op
\Bigg]
\nonumber\\
&\le
C_\eta\Bias(\mu).
\end{align}
Consequently, Schatten Hölder gives
\begin{align}
\label{ineq:T1-refined}
\left|
    \sum_{s,t}
    \Delta_{st}T_1(s,t)
\right|
&=
\left|
    \tr\left[
        \left(
            \sum_{s,t}
            \Delta_{st}\partial_s\bar\partial_tW
        \right)
        W^{p-1}
    \right]
\right|
\nonumber\\
&\le
\left\|
    \sum_{s,t}
    \Delta_{st}\partial_s\bar\partial_tW
\right\|_{S_p}
\left\|W^{p-1}\right\|_{S_{p/(p-1)}}
\nonumber\\
&\le
C_\eta
N^{1/p}
\Bias(\mu)
\|W\|_{S_p}^{p-1}.
\end{align}

\paragraph{The $T_2$ terms.}
Fix $\ell$ and $\ell'=p-2-\ell$.
For this fixed choice,
\begin{align*}
\sum_{s,t}\Delta_{st}T_2(s,t,\ell)
&=
\sum_{s,t}\Delta_{st}
\tr\left[
    (\partial_sW)W^\ell
    (\bar\partial_tW)W^{\ell'}
\right]\\
&=
\sum_{s,t}\Delta_{st}
\tr\left[
    (\partial_sM^{(\eta)}_{X(\tau)})
    \underbrace{
        H^{\otimes n}\Pi_BW^\ell\Pi_BH^{\otimes n}
    }_{=:\,Q}
    (\bar\partial_tM^{(\eta)}_{X(\tau)})
    \underbrace{
        H^{\otimes n}\Pi_BW^{\ell'}\Pi_BH^{\otimes n}
    }_{=:\,Q'}
\right].
\end{align*}

Expand the Fourier-side operator into its $2\times2$ edge blocks:
\[
    M^{(\eta)}_{X(\tau)}
    =
    \frac1{2n}
    \sum_i
    \sum_r
    \sum_{a,b\in\{0,1\}}
    c_{r,a,b,i}
    \ketbra{r+ae_i}{r+be_i}.
\]
The derivative $\partial_s c_{r,a,b,i}$ vanishes unless $r=s$ or $r=s+e_i$.
Define
\[
\begin{aligned}
    c'_{s,a,b,i}
    &:=
    \partial_s c_{s,a,b,i}
    +
    \partial_s c_{s+e_i,a+1,b+1,i},
    \\
    \widetilde c'_{t,c,d,j}
    \text{and}\qquad &:=
    \bar\partial_t c_{t,c,d,j}
    +
    \bar\partial_t c_{t+e_j,c+1,d+1,j},
\end{aligned}
\]
where the additions in $a,b,c,d$ are modulo $2$.
With these definitions,
\[
    \partial_sM^{(\eta)}_{X(\tau)}
    =
    \frac1{2n}
    \sum_i
    \sum_{a,b\in\{0,1\}}
    c'_{s,a,b,i}
    \ketbra{s+ae_i}{s+be_i},
\]
and
\[
    \bar\partial_tM^{(\eta)}_{X(\tau)}
    =
    \frac1{2n}
    \sum_j
    \sum_{c,d\in\{0,1\}}
    \widetilde c'_{t,c,d,j}
    \ketbra{t+ce_j}{t+de_j}.
\]
By \Cref{lem:master-frechet},
\begin{equation}
\label{ineq:raw-fourier-derivative-coefficients}
    |c'_{s,a,b,i}|,
    |\widetilde c'_{t,c,d,j}|
    \le
    C_\eta\sqrt N.
\end{equation}
Substituting these expansions into the preceding covariance-weighted sum gives
\begin{align}
\label{eq:weighted-T2-matrix-entry-expansion}
\sum_{s,t}\Delta_{st}T_2(s,t,\ell)
&=
\frac1{4n^2}
\sum_{i,j}
\sum_{a,b,c,d\in\{0,1\}}
\sum_{s,t}
\Delta_{st}
c'_{s,a,b,i}
\widetilde c'_{t,c,d,j}
\nonumber\\
&\qquad\qquad\cdot
\tr\left[
    \ketbra{s+ae_i}{s+be_i}
    Q
    \ketbra{t+ce_j}{t+de_j}
    Q'
\right]
\nonumber\\
&=
\frac1{4n^2}
\sum_{i,j}
\sum_{a,b,c,d}
\sum_{s,t}
\Delta_{st}
c'_{s,a,b,i}
\widetilde c'_{t,c,d,j}
\nonumber\\
&\qquad\qquad\cdot
\bra{t+de_j}Q'\ket{s+ae_i}
\bra{s+be_i}Q\ket{t+ce_j}.
\end{align}

\paragraph{Interior $T_2$ terms.}
Suppose that
\[
    1\le\ell\le p-3.
\]
Let $S_u$ denote translation in the Fourier coordinates:
\[
    S_u\ket{s}
    :=
    \ket{s+u}.
\]
Let
\[
    \mathcal S_\Delta(A)
    :=
    \Delta\circ A
\]
denote Schur multiplication by $\Delta$.
Writing $\chi_x(s)
    :=
    (-1)^{x\cdot s}$,
we have the exact identity
\begin{equation}
\label{eq:schur-walsh-representation}
    \mathcal S_\Delta(A)
    =
    \frac1N
    \sum_x
    \left(
        \frac1N-\mu(x)
    \right)
    \operatorname{diag}(\chi_x)\,
    A\,
    \operatorname{diag}(\chi_x).
\end{equation}
Indeed, the $(s,t)$ entry of the right-hand side is
\[
    \frac1N
    \sum_x
    \left(
        \frac1N-\mu(x)
    \right)
    (-1)^{x\cdot(s+t)}
    A_{st}
    =
    \Delta_{st}A_{st}.
\]
Since
\[
    \sum_x
    \left|
        \frac1N-\mu(x)
    \right|
    \le2,
\]
we obtain
\[
    \|\mathcal S_\Delta\|_{S_\infty\to S_\infty}
    \le
    \frac2N.
\]
Moreover, since a Schur multiplier is diagonal in the matrix-unit basis,
\[
    \|\mathcal S_\Delta\|_{S_2\to S_2}
    =
    \max_{s,t}|\Delta_{st}|
    \le
    \frac{\Bias(\mu)}N.
\]
Interpolation between these two bounds, together with duality, gives
\begin{equation}
\label{ineq:schur-Sq-bound}
    \|\mathcal S_\Delta(A)\|_{S_q}
    \le
    \frac2N
    \Bias(\mu)^{
        2\min\left\{
            \frac1q,
            1-\frac1q
        \right\}
    }
    \|A\|_{S_q}
\end{equation}
for every $1\le q\le\infty$.

For fixed $i,j,a,b,c,d$, the inner sum over $s,t$ in
\eqref{eq:weighted-T2-matrix-entry-expansion} is exactly
\begin{equation}
\label{eq:exact-schur-factorization}
\begin{aligned}
\tr\Bigg[
&
\operatorname{diag}
    \bigl(
        \widetilde c'_{t,c,d,j}
    \bigr)_t
S_{de_j}^*
Q'
S_{ae_i} \cdot
\mathcal S_\Delta\left(
    \operatorname{diag}
        \bigl(
            c'_{s,a,b,i}
        \bigr)_s
    S_{be_i}^*
    Q
    S_{ce_j}
\right)
\Bigg].
\end{aligned}
\end{equation}
Indeed, expanding the trace in \eqref{eq:exact-schur-factorization} gives
\[
\begin{aligned}
\sum_{s,t}
&
\Delta_{st}
c'_{s,a,b,i}
\widetilde c'_{t,c,d,j}
\bra{t+de_j}Q'\ket{s+ae_i}
\bra{s+be_i}Q\ket{t+ce_j},
\end{aligned}
\]
which is the corresponding term in
\eqref{eq:weighted-T2-matrix-entry-expansion}.

We apply Schatten Hölder with the conjugate exponents
\[
    \frac{p}{\ell'}
    \qquad\text{and}\qquad
    \frac{p}{\ell+2}.
\]
The shifts are unitary, and $W=\Pi_BW\Pi_B$, so
\begin{align}
\label{ineq:first-interior-factor-bound}
&
\left\|
    \operatorname{diag}
        \bigl(
            \widetilde c'_{t,c,d,j}
        \bigr)_t
    S_{de_j}^*
    Q'
    S_{ae_i}
\right\|_{S_{p/\ell'}}
\le
C_\eta\sqrt N
\|Q'\|_{S_{p/\ell'}}
=
C_\eta\sqrt N
\|W\|_{S_p}^{\ell'}.
\end{align}
Similarly,
\begin{align}
\label{ineq:second-interior-factor-bound}
&
\left\|
    \operatorname{diag}
        \bigl(
            c'_{s,a,b,i}
        \bigr)_s
    S_{be_i}^*
    Q
    S_{ce_j}
\right\|_{S_{p/(\ell+2)}}
\le
C_\eta\sqrt N
\|W^\ell\|_{S_{p/(\ell+2)}}
\le
C_\eta\sqrt N
N^{2/p}
\|W\|_{S_p}^{\ell}.
\end{align}
The last inequality is the finite-dimensional Schatten embedding
\[
    \|W^\ell\|_{S_{p/(\ell+2)}}
    \le
    N^{2/p}
    \|W\|_{S_p}^{\ell}.
\]
Applying Schatten Hölder to
\eqref{eq:exact-schur-factorization}, and then using
\eqref{ineq:schur-Sq-bound},
\eqref{ineq:first-interior-factor-bound}, and
\eqref{ineq:second-interior-factor-bound}, gives
\[
\begin{aligned}
&
\left|
\tr\Bigg[
\operatorname{diag}
    \bigl(
        \widetilde c'_{t,c,d,j}
    \bigr)_t
S_{de_j}^*
Q'
S_{ae_i}
\cdot
\mathcal S_\Delta\left(
    \operatorname{diag}
        \bigl(
            c'_{s,a,b,i}
        \bigr)_s
    S_{be_i}^*
    Q
    S_{ce_j}
\right)
\Bigg]
\right|
\\
&\qquad\le
C_\eta
\Bias(\mu)^{
    2\min\left\{
        \frac{\ell+2}{p},
        \frac{\ell'}{p}
    \right\}
}
N^{2/p}
\|W\|_{S_p}^{p-2}.
\end{aligned}
\]
Here the factor $1/N$ from the Schur-multiplier estimate cancels the two
$\sqrt N$ factors from the differentiated edge coefficients.
The prefactor $1/(4n^2)$ in
\eqref{eq:weighted-T2-matrix-entry-expansion} cancels the $n^2$ choices of
$i,j$, up to an absolute constant, while the choices of $a,b,c,d$ contribute
another absolute constant.
Consequently,
\begin{align}
\label{ineq:interior-one-sided-schur}
&
\left|
    \sum_{s,t}
    \Delta_{st}T_2(s,t,\ell)
\right|\le
C_\eta
N^{2/p}
\Bias(\mu)^{
    2\min\left\{
        \frac{\ell+2}{p},
        \frac{\ell'}{p}
    \right\}
}
\|W\|_{S_p}^{p-2}.
\end{align}

Since $\Delta$ is symmetric,
\[
    \tr\left[A\mathcal S_\Delta(B)\right]
    =
    \tr\left[\mathcal S_\Delta(A)B\right].
\]
We may therefore apply the Schur-multiplier estimate to the factor containing
$Q'$ instead.
Taking the better of the two bounds gives
\begin{equation}
\label{ineq:interior-schur-bound}
    \left|
        \sum_{s,t}
        \Delta_{st}T_2(s,t,\ell)
    \right|
    \le
    C_\eta
    N^{2/p}
    \Bias(\mu)^{\theta_\ell}
    \|W\|_{S_p}^{p-2},
\end{equation}
where
\[
\theta_\ell
:=
\max\left\{
    2\min\left(
        \frac{\ell+2}{p},
        \frac{\ell'}{p}
    \right),
    2\min\left(
        \frac{\ell'+2}{p},
        \frac{\ell}{p}
    \right)
\right\}.
\]
As $\ell$ ranges over $1,\ldots,p-3$, each power
$\Bias(\mu)^{2k/p}$ with $k\ge3$ occurs at most four times among the
quantities $\Bias(\mu)^{\theta_\ell}$.
Therefore,
\begin{align}
\label{ineq:interior-exponent-sum}
    \sum_{\ell=1}^{p-3}
    \Bias(\mu)^{\theta_\ell}\le
    4
    \sum_{k=3}^{\infty}
    \Bias(\mu)^{2k/p}=
    \frac{
        4\Bias(\mu)^{6/p}
    }{
        1-\Bias(\mu)^{2/p}
    }\le
    8\Bias(\mu)^{6/p}.
\end{align}

\paragraph{Endpoint $T_2$ terms.}
It remains to treat $\ell=0$ and $\ell=p-2$.
When $\ell=0$,
\[
    Q
    =
    H^{\otimes n}\Pi_BH^{\otimes n},
\qquad\text{and}\qquad
    Q'
    =
    H^{\otimes n}\Pi_BW^{p-2}\Pi_BH^{\otimes n}.
\]
For arbitrary shifts $u,v\in\{0,1\}^n$, one has
\begin{equation}
\label{ineq:endpoint-schur-bound}
\left\|
    \mathcal S_\Delta\left(
        S_u^*
        H^{\otimes n}\Pi_BH^{\otimes n}
        S_v
    \right)
\right\|_\op
\le
\frac1N
\left(
    \frac{|B|}{N}
    +
    \max_y\mu(B+y)
\right).
\end{equation}
Indeed,
\[
\begin{aligned}
&
\left(
    S_u^*
    H^{\otimes n}\Pi_BH^{\otimes n}
    S_v
\right)_{s,t}
=
\frac1N
\sum_{y\in B}
(-1)^{y\cdot(u+v)}
(-1)^{y\cdot(s+t)},
\end{aligned}
\]
while
\[
    \Delta_{s,t}
    =
    \frac1N
    \sum_x
    \left(
        \frac1N-\mu(x)
    \right)
    (-1)^{x\cdot(s+t)}.
\]
Their Schur product is Walsh-diagonalizable, with eigenvalues
\[
    \frac1N
    \sum_x
    \left(
        \frac1N-\mu(x)
    \right)
    \mathbf 1_B(z+x)
    (-1)^{(z+x)\cdot(u+v)}.
\]
The absolute value of every such eigenvalue is at most
\[
    \frac1N
    \left(
        \frac{|B|}{N}
        +
        \mu(B+z)
    \right),
\]
which proves \eqref{ineq:endpoint-schur-bound}.

Schur multiplication commutes with multiplication by a diagonal matrix on the
left, so
\[
\begin{aligned}
&
\left\|
    \mathcal S_\Delta\left(
        \operatorname{diag}
            \bigl(
                c'_{s,a,b,i}
            \bigr)_s
        S_{be_i}^*
        Q
        S_{ce_j}
    \right)
\right\|_\op\le
C_\eta\sqrt N
\frac1N
\left(
    \frac{|B|}{N}
    +
    \max_y\mu(B+y)
\right).
\end{aligned}
\]
The other factor in
\eqref{eq:exact-schur-factorization} satisfies
\[
\begin{aligned}
&
\left\|
    \operatorname{diag}
        \bigl(
            \widetilde c'_{t,c,d,j}
        \bigr)_t
    S_{de_j}^*
    Q'
    S_{ae_i}
\right\|_{S_1}\le
C_\eta\sqrt N
\|W^{p-2}\|_{S_1}.
\end{aligned}
\]
The two $\sqrt N$ factors again cancel the $1/N$ in
\eqref{ineq:endpoint-schur-bound}.
Moreover,
\[
    \|W^{p-2}\|_{S_1}
    \le
    N^{2/p}
    \|W\|_{S_p}^{p-2}.
\]
Applying Schatten Hölder in
\eqref{eq:exact-schur-factorization}, and then summing over
$i,j,a,b,c,d$, gives
\begin{equation}
\label{ineq:endpoint-T2-bound}
\begin{aligned}
&
\left|
    \sum_{s,t}
    \Delta_{st}T_2(s,t,0)
\right|
\le
C_\eta
N^{2/p}
\left(
    \frac{|B|}{N}
    +
    \max_y\mu(B+y)
\right)
\|W\|_{S_p}^{p-2}.
\end{aligned}
\end{equation}
The endpoint $\ell=p-2$ is identical, with the two differentiated factors
interchanged.

We now combine the $T_1$ terms, the interior $T_2$ terms, and the endpoint
$T_2$ terms.
Define
\[
    R(\tau)
    :=
    \left(
        \E\tr|W(\tau)|^p
    \right)^{1/p}.
\]
For $j=1,2$, Hölder's inequality gives
\[
    \E\|W(\tau)\|_{S_p}^{p-j}
    \le
    R(\tau)^{p-j}.
\]
Using \eqref{ineq:T1-refined},
\eqref{ineq:interior-exponent-sum}, and
\eqref{ineq:endpoint-T2-bound} in the Gaussian interpolation identity gives
\begin{equation}
\label{ineq:root-covariance-differential}
\begin{aligned}
\left|
    \frac{d}{d\tau}
    \E\tr|W(\tau)|^p
\right|
\le
C_\eta p\Bigg[
&
N^{1/p}
\Bias(\mu)
R(\tau)^{p-1}
\\
&+
N^{2/p}
\left(
    \frac{|B|}{N}
    +
    \max_y\mu(B+y)
    +
    \Bias(\mu)^{6/p}
\right)
R(\tau)^{p-2}
\Bigg].
\end{aligned}
\end{equation}

We now convert this differential inequality for the $p$-th moment into one
for its $p$-th root.
At every point where $R(\tau)>0$,
\eqref{ineq:root-covariance-differential} gives
\[
\begin{aligned}
    |R'(\tau)|
    \le
    C_\eta\Bigg[
    N^{1/p}\Bias(\mu)+
    \frac{
        N^{2/p}
        \left(
            \frac{|B|}{N}
            +
            \max_y\mu(B+y)
            +
            \Bias(\mu)^{6/p}
        \right)
    }{
        R(\tau)
    }
    \Bigg].
\end{aligned}
\]
Consequently,
\begin{align*}
&
\left|
    \frac{d}{d\tau}
    \left(
        R(\tau)^2
        +
        N^{2/p}
        \left[
            \frac{|B|}{N}
            +
            \max_y\mu(B+y)
            +
            \Bias(\mu)^{6/p}
        \right]
    \right)^{1/2}
\right|
\\
&\qquad\le
C_\eta
N^{1/p}
\left[
    \Bias(\mu)
    +
    \left(
        \frac{|B|}{N}
        +
        \max_y\mu(B+y)
        +
        \Bias(\mu)^{6/p}
    \right)^{1/2}
\right].
\end{align*}
(At points where $R(\tau)=0$, one may apply the same argument after replacing
$\E\tr|W(\tau)|^p$ by
$\E\tr|W(\tau)|^p+\delta^p,$
and then let $\delta\downarrow0$.)
Integrating over $\tau\in[0,1]$ gives the same bound for the difference of the
square-root expressions at $\tau=0$ and $\tau=1$.
For every $\tau$,
\[
\begin{aligned}
0
&\le
\left(
    R(\tau)^2
    +
    N^{2/p}
    \left[
        \frac{|B|}{N}
        +
        \max_y\mu(B+y)
        +
        \Bias(\mu)^{6/p}
    \right]
\right)^{1/2}
-
R(\tau)
\\
&\le
N^{1/p}
\left(
    \frac{|B|}{N}
    +
    \max_y\mu(B+y)
    +
    \Bias(\mu)^{6/p}
\right)^{1/2}.
\end{aligned}
\]
It follows that
\[
\begin{aligned}
    |R(1)-R(0)|
    \le
    C_\eta
    N^{1/p}
    \Bigg[
    \Bias(\mu)+
    \left(
        \frac{|B|}{N}
        +
        \max_y\mu(B+y)
        +
        \Bias(\mu)^{6/p}
    \right)^{1/2}
    \Bigg].
\end{aligned}
\]
Since
\[
    \Bias(\mu)
    \le
    \Bias(\mu)^{3/p}
\]
and
\[
\begin{aligned}
&
\left(
    \frac{|B|}{N}
    +
    \max_y\mu(B+y)
    +
    \Bias(\mu)^{6/p}
\right)^{1/2}\le
\left(
    \frac{|B|}{N}
    +
    \max_y\mu(B+y)
\right)^{1/2}
+
\Bias(\mu)^{3/p},
\end{aligned}
\]
we conclude that
\begin{equation}
\label{ineq:root-covariance-transfer}
\begin{aligned}
&
\left|
    \left(
        \E\tr\left|
            \Pi_B
            \left(
                \wh M_{\wh{\bs\xi}}^{(\eta)}-m_\eta I
            \right)
            \Pi_B
        \right|^p
    \right)^{1/p}
    -
    \left(
        \E\tr\left|
            \Pi_B
            \left(
                \wh M_{\bs\gamma}^{(\eta)}-m_\eta I
            \right)
            \Pi_B
        \right|^p
    \right)^{1/p}
\right|
\\
&\qquad\le
C_\eta
N^{1/p}
\left[
    \left(
        \frac{|B|}{N}
        +
        \max_y\mu(B+y)
    \right)^{1/2}
    +
    \Bias(\mu)^{3/p}
\right].
\end{aligned}
\end{equation}
Combining \eqref{ineq:root-lindeberg} and
\eqref{ineq:root-covariance-transfer} proves the asserted estimate for
$\wh M^{(\eta)}$.

Finally, consider $D^{(\eta)}$.
On each Fourier edge, its local block is
\[
    Q_\eta(u)
    =
    \frac{\eta/N}{\|u\|_2^2+\eta/N}I_2
    =
    \left(
        1-
        \tr
        \frac{uu^*}{\|u\|_2^2+\eta/N}
    \right)I_2.
\]
Its first three derivatives have the same scales as those of the regularized
projector block:
\[
    \|D^rQ_\eta(u)\|
    \lesssim_r
    \left(
        \frac{N}{\eta}
    \right)^{r/2},
    \qquad
    r=1,2,3.
\]
Consequently, the root-level Lindeberg argument, the covariance-weighted
$T_1$ estimate, the interior $T_2$ argument, and the endpoint $T_2$ argument
all apply identically to
\[
    \Pi_B
    \left(
        D^{(\eta)}-d_\eta I
    \right)
    \Pi_B.
\]
This proves the asserted estimate for $D^{(\eta)}$ and completes the proof.
\end{proof}

\paragraph{Step 3: Random pavings.}
We now apply \Cref{thm:tropp-random-paving} to $\wh M^{(\eta)}_{\bs \gamma}$ to complete the argument for the $BB$ case.

\begin{prop}[Gaussian random-paving bounds for the \(BB\) objects]
\label{prop:BB-gaussian-paving}
For every $\eps,\eta>0$, there exists a density threshold
$\beta_{\max}>0$ such that the following holds for $n$ large enough.

Let \(\bs\gamma=(\gamma_s)_{s\in\{0,1\}^n}\) be i.i.d. complex Gaussians with variance $1/N$ and let
$B\subseteq\{0,1\}^n$ be an exchangeable random subset, independent of $\bs \gamma$.
Then for every even $p\le n$, 
\[
    \E \left[\mathbf{1}_{|B|/N\leq \beta_{\max}}\cdot\tr\left|\Pi_B \left(\wh{M}^{(\eta)}_{\bs \gamma}-m_\eta I\right)\Pi_B\right|^p 
\right]^{1/p}
\le
N^{1/p}\eps
\]
and
\[
    \E \left[\mathbf{1}_{|B|/N\leq \beta_{\max}}\cdot\tr\left|
            \Pi_B\left(D_{\bs \gamma}^{(\eta)}-d_\eta I\right)\Pi_B
        \right|^p
\right]^{1/p}
\le
N^{1/p}\eps.
\]
\end{prop}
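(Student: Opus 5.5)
The plan is to condition on $\bs\gamma$, use the independence of $B$ from $\bs\gamma$ to view $\Pi_B$ as a uniform coordinate projection, and apply \Cref{thm:tropp-random-paving} to the fixed matrix $X_{\bs\gamma}:=\wh M^{(\eta)}_{\bs\gamma}-m_\eta I$ (and likewise $Y_{\bs\gamma}:=D^{(\eta)}_{\bs\gamma}-d_\eta I$). Three facts reduce the problem to a bound on operator norms. First, $B$ is exchangeable and independent of $\bs\gamma$, so conditional on $\bs\gamma$ and on $|B|=k$, $\Pi_B$ is a uniform coordinate projection of density $k/N$. Second, for every self-adjoint $X$ we have $\tr|\Pi_BX\Pi_B|^p\le N\|\Pi_BX\Pi_B\|_\op^p$, which produces the factor $N^{1/p}$. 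Third, every even $p\le n$ lies in the admissible range $p\le 2\lceil\log N\rceil$, since $2\ln N=(2\ln 2)\,n>n$. It therefore suffices to show that, for a set of good $\bs\gamma$ of overwhelming probability and every $k\le\beta_{\max}N$,
\[
\E_{R}\|RX_{\bs\gamma}R\|_\op^p\le \eps^p ,
\]
where $R$ is uniform of density $k/N$ (taking the $\beta_{\max}$ of \Cref{thm:tropp-random-paving}, possibly reduced). On the bad set we use the trivial bound $\tr|\cdot|^p\le N$, which holds because $0\preceq\wh M^{(\eta)}_{\bs\gamma}\preceq I$ and $m_\eta\le 1$ give $\|X_{\bs\gamma}\|_\op\le1$. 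This contributes at most $N\Pr[\text{bad}]\le N\exp(-N^{c})\ll N\eps^p$ for $p\le n$.

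For a good $\bs\gamma$ we split into two cases. If $\|X_{\bs\gamma}\|_\op\le\eps$, then $\|RX_{\bs\gamma}R\|_\op\le\eps$ holds trivially. Otherwise $\|X_{\bs\gamma}\|_\op\ge\eps$, and Tropp's hypothesis with $\alpha=1/2$ follows if every entry of $X_{\bs\gamma}$ is at most $N^{-1/2+o(1)}$, since then $\max_{x,y}|X_{xy}|\le\eps/(\log N)^{3/2}\le\|X\|_\op/(\log N)^{3/2}$ for $n$ large. Tropp then gives $\big(\E_R\|RXR\|_\op^p\big)^{1/p}\le\eps\|X\|_\op\le\eps$, as required.

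The substance is therefore the entrywise bound, which I define as the good event. Write $M:=M^{(\eta)}_{\bs\gamma}$, a sum of $2\times2$ edge blocks. Then
\[
(H^{\otimes n}MH^{\otimes n})_{xy}=\frac1N\sum_s M_{ss}\chi_s(x+y)+\frac1N\sum_{s,i}M_{s,s+i}\,\chi_s(x)\chi_{s+i}(y).
\]
Each $M_{ss}=\frac1{2n}\sum_i\frac{|\gamma_s|^2}{\|g_{s,i}\|^2+\eta/N}$ has mean exactly $m_\eta$. The off-diagonal coefficients $M_{s,s+i}$ have mean zero by the phase symmetry of the complex Gaussians. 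Hence $\E X_{xy}=0$ for every $x,y$: the diagonal mean $m_\eta$ cancels the centering, and the character sum kills the mean when $x\ne y$. Each entry is $N^{-1}$ times a sum of $\mathcal O(nN)$ bounded, weakly dependent terms, and each $\gamma_s$ enters only $\mathcal O(n)$ of them. Using the regularization, \Cref{lem:master-frechet} bounds the Lipschitz constant of each summand in $\sqrt N\bs\gamma$ by $\mathcal O_\eta(1)$. So $X_{xy}$ is a $\mathcal O_\eta(n/\sqrt N)$-Lipschitz function of a standard Gaussian vector in the rescaled variables $\sqrt N\gamma_s$. Gaussian concentration then gives $|X_{xy}|\le N^{-1/2}\,\mathrm{poly}(n)$ except with probability $\exp(-N^{c})$, and a union bound over the $N^2$ entries keeps the failure probability at $\exp(-N^{c'})$. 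For $D^{(\eta)}_{\bs\gamma}$ the argument is identical: the diagonal blocks $\frac{\eta/N}{\|g_{s,i}\|^2+\eta/N}I_2$ have mean $d_\eta$ (after the same $\frac1{2n}$ averaging), there are no off-diagonal block entries at all, and the derivative bounds are the ones already noted at the end of the proof of \Cref{prop:BB-gaussianization}.

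I expect the main obstacle to be this entrywise step, namely checking the exact centering and obtaining a Lipschitz constant of order $N^{-1/2}$ rather than $N^{-1/2}\cdot N^{\Omega(1)}$. Two points need care. One must ensure that the $1/N$ normalization of $H^{\otimes n}$ genuinely offsets the $\sqrt N$ scale of the local derivatives. One must also ensure that each Gaussian coordinate influences only the $\mathcal O(n)$ blocks containing it, so that squared gradients add rather than amplify. If that bookkeeping turns out loose by polylogarithmic factors, the argument still works, since Tropp's condition only requires entries smaller than $\|X\|_\op/(\log N)^{1+\alpha}$ and any $N^{-1/2+o(1)}$ bound leaves ample room.
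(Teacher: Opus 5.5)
Your proposal is correct and follows essentially the paper's route. You condition on $\bs\gamma$, establish an entrywise bound on a high-probability event, and use exchangeability to make $\Pi_B$ a uniform coordinate projection given $|B|$. You then apply \Cref{thm:tropp-random-paving}, pay $N^{1/p}$ to pass from trace to operator norm, and bound the bad event trivially. The paper gets its entrywise bound from Bernstein on the independent centered $2\times 2$ blocks within each direction $i$; you use Gaussian Lipschitz concentration instead.

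Your case split on $\|X_{\bs\gamma}\|_\op\le\eps$ is a good addition: it is what makes the relative form of Tropp's entry hypothesis legitimately hold, and the paper's write-up passes over this. Two bookkeeping details need fixing. First, an $\mathcal O_\eta(n/\sqrt N)$-Lipschitz bound at deviation $N^{-1/2}\mathrm{poly}(n)$ gives failure probability $\exp(-\mathrm{poly}(n))$, not $\exp(-N^c)$. That is still enough after the union bound, or you can take deviation $N^{-1/4}$ instead. Second, invoke Tropp with $\eps/2$ so that the bad-event contribution fits under $N\eps^p$.
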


\begin{proof}
    Write
    \[\wh{M}^{(\eta)}_{\bs \gamma}-m_\eta I\;=\;H^{\otimes n}\left(\frac{1}{n}\sum_{i}P_{i}\right)H^{\otimes n}, \qquad P_i = \sum_{s:s_i=0}\ketbra{s_{[1,i-1]}}{s_{[1,i-1]}}\otimes P_{i,s}\otimes \ketbra{s_{[i+1,n]}}{s_{[i+1,n]}}\,,\]
    
    \[\text{and}\qquad P_{i,s} = \frac{\ketbra{g_{s,i}}{g_{s,i}}}{\|\ket{g_{s,i}}\|_2^2+\delta/N}-m_\eta I\,,\]
    where $\ket{g_{s,i}}=\gamma_{s}\ket{0}+\gamma_{s+i}\ket{1}$.
        
    Fix $i\in[n]$ and for convenience reorder the $n$ tensor factors so that $i$ is last.
    Then $P_i$ is a direct sum of $2\times 2$ matrices with operator norm $\mc O_\delta(1)$,
    \[P_i=\sum_{s:s_i=0} \ketbra{s_{[n]\backslash i}}{s_{[n]\backslash i}}\otimes P_{i,s}\,,\]
    so $\|P_i\|_\op=\|H^{\otimes n}P_iH^{\otimes n}\|_\op=\mc O_\delta(1)$ and indeed $\|\wh{M}^{(\eta)}_{\bs \gamma}-m_\eta I\|_\op=\mc O_\delta(1).$
    
    To see that $H^{\otimes n}P_iH^{\otimes n}$ is entrywise small, note that, schematically
    \begin{align*}
    H^{\otimes n}P_iH^{\otimes n}&=\sum_{s:s_i=0} \Big(H^{\otimes n-1}\ketbra{s_{[n]\backslash i}}{s_{[n]\backslash i}}H^{\otimes n-1}\Big)\otimes \Big(HP_{i,s}H\Big)\\
    &=\frac{2}{N}\sum_{s:s_i=0} \Big(\pm1  \text{ sign matrix}\Big)\otimes\Big(HP_{i,s}H\Big).
    \end{align*}
    Each $2\times 2$ matrix in the family $\{P_{i,s}\}_s$ is independent, mean-zero, and bounded, so we see from this display that each entry of $H^{\otimes n}P_iH^{\otimes n}$ is an average of $N/2$-many $\mc O_\delta(1)$-bounded, centered random variables.

    By Bernstein's inequality, this means there is a universal constant $c'=c'(\delta)$ such that for every fixed pair $x,y$,
    \[
        \Pr\left[
            |(H^{\otimes n}P_iH^{\otimes n})_{xy}|>\frac{1}{n^2}
        \right]
        \le \exp\left(-c'\frac{N}{n^4}\right)\,.
    \]
    Taking a union
    bound over all $x,y\in\{0,1\}^n$ and all $i\in[n]$, the event
    \[
        E_M := \left\{\max_{i\in[n]}\max_{x,y}|(H^{\otimes n}P_iH^{\otimes n})_{xy}|
        \le
        \frac{1}{n^2}\right\}
    \]
    holds with probability $1-\exp(-N/\poly(n))$.
    When $E$ obtains, we get
    \[
        \max_{x,y}\left|\left(\wh{M}^{(\eta)}_{\bs \gamma}-m_\eta I\right)_{xy}\right|
        \le
        \frac1n\sum_i\max_{x,y}|(H^{\otimes n}P_iH^{\otimes n})_{xy}|
        \leq \frac{1}{n^2}
    \]
    for all sufficiently large $n$.
    
    The same argument applies to the centered dominator $D_{\bs \gamma}^{(\delta)}-d_\eta I$.
    Indeed,
    \[
        D_{\bs \gamma}^{(\delta)}-d_\eta I
        =
        \frac1n\sum_{i=1}^n Q_i,
    \]
    where \(Q_i\) is the physical-side conjugate of the Fourier-side direct sum over
    \(i\)-edges of the centered scalar block
    \[
        \left(
            \frac{\delta/N}{\|\ket{g_{s,i}}\|_2^2+\delta/N}-d_\eta
        \right)I_2.
    \]
    In each $Q_i$ these blocks are i.i.d., mean-zero, and bounded.
    Hence, on an event $E_D$ analogous to $E_M$, also holding with probability $\exp(-N/\poly(n))$,
    \[
        \left\|D_{\bs \gamma}^{(\delta)}-d_\eta I\right\|_{\op}=\mc O_\eta(1)
        \qquad\text{and}\qquad
        \max_{x,y}
        \left|
            \left(D_{\bs\gamma}^{(\delta)}-d_\eta I\right)_{xy}
        \right|
        \leq\frac{1}{n^2}
    \]
    for all sufficiently large $n$.

    To finish the argument, we apply
    \Cref{thm:tropp-random-paving} to the deterministic matrices
    $\wh{M}^{(\eta)}_{\bs \gamma}-m_\eta I$ and
    $D_{\bs \gamma}^{(\eta)}-d_\eta I$.
    The application is identical for each, so we explain the former.
    Let $E=E_M\cap E_D$ and let $C_\eta\leq \mc O_\eta(1)$ be such that $\|\wh M^{(\eta)}_{\bs \gamma}\|_\op\leq C_\eta$ pointwise.
    Let $\eps$ be as in the statement of \Cref{prop:BB-gaussian-paving} and put $\eps'=\eps/C_\eta$.
    Then according to \Cref{thm:tropp-random-paving} there is a density threshold $\beta_{\max}=\beta_{\max}(\eps'/2,1)$
    such that for all even $2\leq p\leq n$,
    \begin{align*}
        \E\left[\1_{|B|/N\leq \beta_{\max}}\cdot\big\|\Pi_B \big(\wh{M}^{(\eta)}_{\bs \gamma}-m_\eta I\big)\Pi_B\big\|^p_\op\right] &= \sum_{b=0}^{\lfloor\beta_{\max}N\rfloor}\E\left[\mathbf{1}_{|B| = b\wedge E}\cdot \big\|\Pi_B \big(\wh{M}^{(\eta)}_{\bs \gamma}-m_\eta I\big)\Pi_B\big\|^p_\op\right]\\
        &\hspace{3em} +\E\left[\mathbf{1}_{|B|/N\leq \beta_{\max}\wedge E^c}\cdot \big\|\Pi_B \big(\wh{M}^{(\eta)}_{\bs \gamma}-m_\eta I\big)\Pi_B\big\|^p_\op\right]\\
        &\leq (\eps/2)^p+\mc O_\eta(1)^p\cdot \exp(-N/\poly(n))\\
        &\leq \eps^p\,.
    \end{align*}
    for $n$ large enough.
    We conclude that for all even $2\leq p\leq n$,
    \begin{align*}\E\Big[\1_{|B|/N\leq \beta_{\max}}&\cdot\tr\big\lvert\Pi_B \big(\wh{M}^{(\eta)}_{\bs \gamma}-m_\eta I\big)\Pi_B\big\rvert^p\Big]^{1/p}\\
    &\leq N^{1/p}\E\left[\1_{|B|/N\leq \beta_{\max}}\big\|\Pi_B \big(\wh{M}^{(\eta)}_{\bs \gamma}-m_\eta I\big)\Pi_B\big\|^p_\op\right]^{1/p}
    \leq N^{1/p}\eps\,,
    \end{align*}
    as desired.
\end{proof}

\paragraph{Step 4: Combine bounds.}
It remains to combine \Cref{prop:BB-gaussianization} and \Cref{prop:BB-gaussian-paving} to prove the sufficient condition (\Cref{prop:BB-sufficient}) for the $BB$ bound, \Cref{ineq:BB}.

\begin{proof}[Proof of \Cref{prop:BB-sufficient}]
Fix $\eps>0$ and choose $\eta>0$ sufficiently small so that
\[
    d_\eta\le\frac{\eps}{2}.
\]
Recall from \eqref{eq:m-and-d_eta-defn} that $m_\eta\leq m_\eta+d_\eta/2=1/2$.

Let $C_\eta$ be the constant from \Cref{prop:BB-gaussianization}. Choose a
constant $c=c(\eta,\eps)\in(0,1)$ sufficiently small that
\[
    2C_\eta
    \exp\left(-\frac{3\log 2}{2c}\right)
    \le
    \frac{\eps}{8}
    \exp\left(-\frac{5\log 2}{4c}\right).
\]
Apply \Cref{prop:BB-gaussian-paving} with regularization parameter $\eta$ and
paving target
\[
    \frac{\eps}{8}
    \exp\left(-\frac{5\log 2}{4c}\right),
\]
and let $\beta_{\max}>0$ be the resulting density threshold.

By \Cref{prop:bad-set-regularity}, there are deterministic quantities
$r_T\to0$ as $T\to\infty$ such that, with probability
$1-\exp(-\Omega(n))$,
\[
    \|\mu\|_{3/2}^{3/2}\le N^{-1/2+o(1)},
    \qquad
    \Bias(\mu)\le N^{-1/2+o(1)},
\]
and
\[
    \frac{|B|}{N}+\max_y\mu(B+y)\le r_T.
\]
Choose $T$ sufficiently large that
\[
    r_T\le\beta_{\max}
\]
and
\[
    C_\eta\sqrt{r_T}
    \le
    \frac{\eps}{8}
    \exp\left(-\frac{5\log 2}{4c}\right).
\]
Let $E$ denote the event on which these profile bounds hold.

For each $n$, let $p$ be the largest even integer satisfying $p\le cn$.
For all sufficiently large $n$, we have $8\le p\le n$. Moreover, on event
$E$,
\[
    \Bias(\mu)^{2/p}
    \le
    \exp\left(-\frac{\log 2-o(1)}{c}\right)
    \le\frac12.
\]
Also,
\[
    \|\mu\|_{3/2}^{1/2}
    =
    \left(\|\mu\|_{3/2}^{3/2}\right)^{1/3}
    \le
    N^{-1/6+o(1)}.
\]
Thus, after increasing $n$ if necessary,
\[
    C_\eta p^3\|\mu\|_{3/2}^{1/2}
    \le
    \frac{\eps}{8}
    \exp\left(-\frac{5\log 2}{4c}\right)
\]
and
\[
    C_\eta\Bias(\mu)^{3/p}
    \le
    \frac{\eps}{8}
    \exp\left(-\frac{5\log 2}{4c}\right).
\]

The following argument applies identically to the centered regularized
$M$-operator and to the centered dominator. Write $W_{\wh f}$ and
$W_{\bs\gamma}$ for either of the pairs
\[
\begin{aligned}
W_{\wh f}
&=
\Pi_B\left(\wh M_{\wh f}^{(\eta)}-m_\eta I\right)\Pi_B,
&
W_{\bs\gamma}
&=
\Pi_B\left(\wh M_{\bs\gamma}^{(\eta)}-m_\eta I\right)\Pi_B,
\end{aligned}
\]
or
\[
\begin{aligned}
W_{\wh f}
&=
\Pi_B\left(D_{\wh f}^{(\eta)}-d_\eta I\right)\Pi_B,
&
W_{\bs\gamma}
&=
\Pi_B\left(D_{\bs\gamma}^{(\eta)}-d_\eta I\right)\Pi_B.
\end{aligned}
\]

Conditional on $\mu\in E$, \Cref{prop:BB-gaussianization} gives
\[
\begin{aligned}
\left(
    \E_\omega\tr|W_{\wh f}|^p
\right)^{1/p}
&\le
\left(
    \E_{\bs\gamma}\tr|W_{\bs\gamma}|^p
\right)^{1/p}\\
&\quad+
\frac{3\eps}{8}N^{1/p}
\exp\left(-\frac{5\log 2}{4c}\right).
\end{aligned}
\]
Taking the $L_p$-norm over $\mu$ on $E$ and applying Minkowski's
inequality,
\[
\begin{aligned}
\left(
    \E_\mu\mathbf 1_{E}
    \E_\omega\tr|W_{\wh f}|^p
\right)^{1/p}
&\le
\left(
    \E_\mu\mathbf 1_{E}
    \E_{\bs\gamma}\tr|W_{\bs\gamma}|^p
\right)^{1/p}\\
&\quad+
\frac{3\eps}{8}N^{1/p}
\exp\left(-\frac{5\log 2}{4c}\right).
\end{aligned}
\]
Since $E$ implies $|B|/N\le\beta_{\max}$,
\Cref{prop:BB-gaussian-paving} gives
\[
    \left(
        \E_\mu\mathbf 1_{E}
        \E_{\bs\gamma}\tr|W_{\bs\gamma}|^p
    \right)^{1/p}
    \le
    \frac{\eps}{8}N^{1/p}
    \exp\left(-\frac{5\log 2}{4c}\right).
\]
Consequently,
\[
    \left(
        \E_\mu\mathbf 1_{E}
        \E_\omega\tr|W_{\wh f}|^p
    \right)^{1/p}
    \le
    \frac{\eps}{2}N^{1/p}
    \exp\left(-\frac{5\log 2}{4c}\right).
\]

Markov's inequality therefore gives
\[
\begin{aligned}
\Pr\left[
    E
    \ \wedge\
    \|W_{\wh f}\|_\op\ge\frac{\eps}{2}
\right]
&\le
\left(\frac{2}{\eps}\right)^p
\E_\mu\mathbf 1_{E}
\E_\omega\tr|W_{\wh f}|^p\\
&\le
N\exp\left(-\frac{5\log 2}{4c}p\right)\\
&=
\exp\left(-\frac{\log 2}{4}n+O_c(1)\right).
\end{aligned}
\]
Since $\Pr[E^c]\leq\exp(-\Omega(n))$, applying the estimate to both
centered objects and taking a union bound shows that, with probability
$1-\exp(-\Omega(n))$,
\[
    \left\|
        \Pi_B\left(
            \wh M_{\wh f}^{(\eta)}-m_\eta I
        \right)\Pi_B
    \right\|_\op
    \le\frac{\eps}{2}\qquad\text{and}\qquad
    \left\|
        \Pi_B\left(
            D_{\wh f}^{(\eta)}-d_\eta I
        \right)\Pi_B
    \right\|_\op
    \le\frac{\eps}{2}.
\]

Therefore,
\[
\begin{aligned}
\left\|\Pi_B\wh M_{\wh f}^{(\eta)}\Pi_B\right\|_\op
&\le
m_\eta+\frac{\eps}{2}\\
&\le
\frac12+\eps,
\end{aligned}
\]
and
\[
\begin{aligned}
\left\|\Pi_BD_{\wh f}^{(\eta)}\Pi_B\right\|_\op
&\le
d_\eta+\frac{\eps}{2}\\
&\le
\eps.
\end{aligned}
\]
This proves the proposition.
\end{proof}

\subsection{Proof of  \Cref{ineq:AB}, the $AB$, $BA$ bounds}
\label{sec:AB}
The proof of \Cref{ineq:AB} goes much the same way as the $BB$ bound, except we analyze the Hermitian square.
We describe it only briefly.
For notational convenience define
\[
    R_{\wh f}^{(\eta)}
    :=
    \wh M_{\wh f}^{(\eta)}-m_\eta I.
\]
Since $A\cap B=\varnothing$,
\[
    \Pi_A\wh M_{\wh f}^{(\eta)}\Pi_B
    =
    \Pi_AR_{\wh f}^{(\eta)}\Pi_B.
\]
Consequently,
\begin{align}
\label{ineq:AB-square-reduction}
\left\|
    \Pi_A\wh M_{\wh f}^{(\eta)}\Pi_B
\right\|_\op^2
&=
\left\|
    \Pi_BR_{\wh f}^{(\eta)}
    \Pi_A
    R_{\wh f}^{(\eta)}\Pi_B
\right\|_\op
\le
\left\|
    \Pi_B\left(R_{\wh f}^{(\eta)}\right)^2\Pi_B
\right\|_\op.
\end{align}

Define also
\[
    R_{\bs\gamma}^{(\eta)}
    :=
    \wh M_{\bs\gamma}^{(\eta)}-m_\eta I.
\]

\begin{lem}[Gaussian square mean]
\label{lem:AB-gaussian-square-mean}
There is a scalar $b_{n,\eta}$ such that
\[
    \E_{\bs\gamma}
    \left(R_{\bs\gamma}^{(\eta)}\right)^2
    =
    b_{n,\eta}I.
\]
Moreover,
\[
    \lim_{\eta\downarrow0}
    \lim_{n\to\infty}
    b_{n,\eta}
    =
    b,
\]
where
\[
    b
    =
    \frac{\pi^2}{3}-\frac{13}{4}
    <
    \frac1{16}.
\]
\end{lem}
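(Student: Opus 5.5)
The plan is to compute $\E_{\bs\gamma}\big(R^{(\eta)}_{\bs\gamma}\big)^2$ exactly on the Fourier side. As in the proof of \Cref{prop:BB-gaussian-paving}, write $R^{(\eta)}_{\bs\gamma}=H^{\otimes n}\big(\tfrac1n\sum_i P_i\big)H^{\otimes n}$. Here $P_i$ is the direct sum over $i$-edges $\{s,s+e_i\}$ of the $2\times2$ blocks
\[
P_{i,s}=\frac{\ketbra{g_{s,i}}{g_{s,i}}}{\|g_{s,i}\|_2^2+\eta/N}-m_\eta I .
\]
Conjugation by $H^{\otimes n}$ preserves multiples of $I$. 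So it suffices to show
\[
\E\Big(\tfrac1n\sum_i P_i\Big)^2=\tfrac1{n^2}\sum_{i,j}\E[P_iP_j]
\]
is a scalar multiple of the identity.

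The key tool is phase invariance. The law of $\bs\gamma$ is invariant under $\gamma_v\mapsto e^{\mathrm i\theta_v}\gamma_v$ independently for each $v$, and under this rotation the matrix entry $(P_i)_{x,z}$ picks up the phase $e^{\mathrm i(\theta_x-\theta_z)}$.

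Fix $x,y$ and expand
\[
(P_iP_j)_{xy}=\sum_z (P_i)_{xz}(P_j)_{zy},
\qquad z\in\{x,x+e_i\},\quad y\in\{z,z+e_j\}.
\]
Each product carries the total phase $e^{\mathrm i(\theta_x-\theta_y)}$, so its expectation vanishes unless $y=x$.

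\emph{Case $i\neq j$.} Here $y=x$ forces $z=x$, since $x+e_i+e_j\neq x$. Only the diagonal–diagonal term $\E[(P_i)_{xx}(P_j)_{xx}]$ survives.

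\emph{Case $i=j$.} The diagonal entries of $\E[P_{i,s}^2]$ are equal by the symmetry $\gamma_s\leftrightarrow\gamma_{s+e_i}$. The off-diagonal entries vanish by the same phase argument.

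Since the $\gamma_v$ are i.i.d., each surviving expectation is independent of $x$, $i$ and $j$. This gives
\[
b_{n,\eta}=\big(1-\tfrac1n\big)c_{\times}(\eta)+\tfrac1n c_{=}(\eta).
\]
The two constants are as follows. With $a(\gamma_1;\gamma_2):=\frac{|\gamma_1|^2}{|\gamma_1|^2+|\gamma_2|^2+\eta/N}-m_\eta$,
\[
c_{\times}(\eta)=\E\big[a(\gamma_1;\gamma_2)\,a(\gamma_1;\gamma_3)\big],
\]
and $c_{=}(\eta)$ is a diagonal entry of $\E P_{i,s}^2$. Rescaling $\gamma\mapsto\sqrt N\gamma$ shows that both constants depend on $\eta$ only and not on $N$. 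Both are $\mc O(1)$, because every ratio involved lies in $[0,1]$. Hence $\lim_n b_{n,\eta}=c_\times(\eta)$.

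For $\eta\downarrow0$, write $X,Y,Z$ for i.i.d.\ $\mathrm{Exp}(1)$ variables (the rescaled $|\gamma_v|^2$). By dominated convergence, $m_\eta\to\E\frac{X}{X+Y}=\tfrac12$, and
\[
c_\times(\eta)\to \E\Big[\tfrac{X^2}{(X+Y)(X+Z)}\Big]-\tfrac14 .
\]
To evaluate the expectation, I will use the Laplace representation $\frac1{X+Y}=\int_0^\infty e^{-t(X+Y)}\,dt$ together with $\E[X^2e^{-(t+u)X}]=\frac{2}{(1+t+u)^3}$. Fubini then gives
\[
\E\Big[\tfrac{X^2}{(X+Y)(X+Z)}\Big]=\int_0^\infty\!\!\int_0^\infty\frac{2\,dt\,du}{(1+t+u)^3(1+t)(1+u)} .
\]
I expect this integral to equal $\frac{\pi^2}{3}-3$. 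That would give
\[
b=\tfrac{\pi^2}{3}-\tfrac{13}{4}\approx0.0399<\tfrac1{16}.
\]
To verify the value, substitute $\alpha=1+t$, $\beta=1+u$. Integrate first in $\beta$ by partial fractions in $\beta$ with $\alpha$ fixed. The remaining one-dimensional integral over $\alpha\in[1,\infty)$ should produce $\log$ and $\mathrm{Li}_2$-type terms, which evaluate to $\zeta(2)=\pi^2/6$.

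I expect two points to need the most care. The first is the exact evaluation of this double integral. The second is checking in the phase-cancellation bookkeeping that no cross term with $i\neq j$ survives beyond the diagonal–diagonal one, including the off-diagonal block entries.
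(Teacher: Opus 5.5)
Your proposal is correct, and its structural part is the paper's own argument. Phase invariance kills the off-diagonal entries of $\E P_iP_j$. For $i\neq j$ only the shared vertex survives, contributing $\E[(P_i)_{xx}(P_j)_{xx}]$. Translation invariance, which is your i.i.d.\ remark, makes the surviving terms constant. This gives $b_{n,\eta}=a_\eta/n+(1-\frac1n)b_\eta$, and dominated convergence then yields $b_\eta\to\E\frac{X^2}{(X+Y)(X+Z)}-\frac14$. You write out the phase bookkeeping that the paper only asserts, which is fine.

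The only difference is in evaluating that last expectation, and here your proposal stops short. You set up a correct Laplace integral but only expect it to equal $\frac{\pi^2}{3}-3$.

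The paper instead uses that $(X,Y,Z)/(X+Y+Z)$ is uniform on the simplex. It integrates out $y$ by partial fractions and arrives at $-4\int_0^1\frac{x^2\log x}{1+x}\,dx=4\sum_{k\ge0}\frac{(-1)^k}{(k+3)^2}=\frac{\pi^2}{3}-3$.

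Your route closes just as quickly if you skip partial fractions in $\beta$ and substitute $s=t+u$ instead. Since $\int_0^s\frac{dt}{(1+t)(1+s-t)}=\frac{2\log(1+s)}{2+s}$, your double integral becomes $4\int_1^\infty\frac{\log w}{w^3(1+w)}\,dw$ with $w=1+s$. The further substitution $w=1/x$ turns this into exactly the paper's $-4\int_0^1\frac{x^2\log x}{1+x}\,dx$, and no dilogarithms are needed.

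So the two evaluations meet at the same one-dimensional integral. The simplex representation saves one integration. Yours uses only the Laplace transform of $\mathrm{Exp}(1)$ and no facts about Dirichlet distributions.
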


\begin{proof}
Write
\[
    R_{\bs\gamma}^{(\eta)}
    =
    H^{\otimes n}
    \left(
        \frac1n\sum_{i=1}^nP_i^{(\eta)}
    \right)
    H^{\otimes n},
\]
where
\[
    P_i^{(\eta)}
    =
    \bigoplus_{s:s_i=0}P_{i,s}^{(\eta)}
\]
and
\[
    P_{i,s}^{(\eta)}
    :=
    \frac{\ketbra{g_{s,i}}{g_{s,i}}}
    {\|g_{s,i}\|_2^2+\eta/N}
    -
    m_\eta I_2,
    \qquad
    \ket{g_{s,i}}
    :=
    \gamma_s\ket0+\gamma_{s+e_i}\ket1.
\]

The law of $\bs\gamma$ is invariant under translations of the cube and under
independent phase rotations of its coordinates.
It follows that, for some scalars $a_\eta$ and $b_\eta$,
\[
    \E\left(P_i^{(\eta)}\right)^2
    =
    a_\eta I
\]
and, whenever $i\neq j$,
\[
    \E P_i^{(\eta)}P_j^{(\eta)}
    =
    b_\eta I.
\]
Moreover, $\|P_i^{(\eta)}\|_\op\le1$, so $0\le a_\eta\le1$.

To identify $b_\eta$, fix distinct $i,j$ and a vertex $s$.
The $i$-edge and $j$-edge through $s$ meet only at $s$, and hence the
$s$-th diagonal entry of $P_i^{(\eta)}P_j^{(\eta)}$ is the product of the
corresponding diagonal entries.
Thus, if $X,Y,Z$ are independent $\operatorname{Exp}(1)$ random variables,
\[
    b_\eta
    =
    \E
    \left[
        \left(
            \frac{X}{X+Y+\eta}-m_\eta
        \right)
        \left(
            \frac{X}{X+Z+\eta}-m_\eta
        \right)
    \right],
\]
where
\[
    m_\eta
    =
    \E\frac{X}{X+Y+\eta}.
\]

Consequently,
\begin{align*}
\E_{\bs\gamma}
\left(R_{\bs\gamma}^{(\eta)}\right)^2
&=
H^{\otimes n}
\E\left[
    \left(
        \frac1n\sum_iP_i^{(\eta)}
    \right)^2
\right]
H^{\otimes n}=
\left(
    \frac{a_\eta}{n}
    +
    \frac{n-1}{n}b_\eta
\right)I.
\end{align*}
We may therefore take
\[
    b_{n,\eta}
    :=
    \frac{a_\eta}{n}
    +
    \frac{n-1}{n}b_\eta.
\]

Since $a_\eta$ is bounded,
\[
    \lim_{n\to\infty}b_{n,\eta}
    =
    b_\eta.
\]
Also, by dominated convergence,
\[
    m_\eta\longrightarrow\frac12
\]
and
\[
    b_\eta
    \longrightarrow
    \E
    \left[
        \left(
            \frac{X}{X+Y}-\frac12
        \right)
        \left(
            \frac{X}{X+Z}-\frac12
        \right)
    \right]
    =
    \E\frac{X^2}{(X+Y)(X+Z)}
    -
    \frac14.
\]

The normalized vector
\[
    \frac{(X,Y,Z)}{X+Y+Z}
\]
is uniform on the two-dimensional simplex.
Therefore,
\begin{align*}
\E\frac{X^2}{(X+Y)(X+Z)}
&=
2\int_0^1
\int_0^{1-x}
\frac{x^2}{(x+y)(1-y)}
\,dy\,dx\\
&=
-4\int_0^1
\frac{x^2\log x}{1+x}
\,dx\\
&=
4\sum_{k=0}^\infty
\frac{(-1)^k}{(k+3)^2}\\
&=
\frac{\pi^2}{3}-3.
\end{align*}
It follows that
\[
    b
    =
    \frac{\pi^2}{3}-3-\frac14
    =
    \frac{\pi^2}{3}-\frac{13}{4}
    <
    \frac1{16}.\qedhere
\]
\end{proof}

We claim that, for every fixed $\eps>0$, after choosing $T$ sufficiently
large, with probability $1-\exp(-\Omega(n))$,
\begin{equation}
\label{ineq:regularized-square-concentration}
    \left\|
        \Pi_B
        \left(
            \left(R_{\wh f}^{(\eta)}\right)^2-b_{n,\eta}I
        \right)
        \Pi_B
    \right\|_\op
    \le
    \eps.
\end{equation}
The proof is the same root-level Gaussian-transfer and paving argument used in
the $BB$ case.
Indeed, since $\|R^{(\eta)}\|_\op\le1$, the product rule and
\Cref{lem:master-frechet} give the same physical-coordinate derivative bounds
for $\left(R^{(\eta)}\right)^2$.
Its first and second raw Fourier-coordinate derivatives likewise have the same
$C_\eta\sqrt N$ and $C_\eta N$ scales as before.
After expanding the product rule, the covariance-interpolation terms have the
same $T_1$, interior-$T_2$, and endpoint-$T_2$ forms as in the $BB$ proof;
the additional factors of $R^{(\eta)}$ are contractions and are absorbed into
the adjacent Schatten factors.
The centered Gaussian square is uniformly bounded and satisfies the same
entrywise estimate required by \Cref{thm:tropp-random-paving}.
Thus the proofs of \Cref{prop:BB-gaussianization} and
\Cref{prop:BB-gaussian-paving} apply without further change and yield
\eqref{ineq:regularized-square-concentration}.

Combining \eqref{ineq:AB-square-reduction} and
\eqref{ineq:regularized-square-concentration}, we obtain
\[
    \left\|
        \Pi_A\wh M_{\wh f}^{(\eta)}\Pi_B
    \right\|_\op
    \le
    \sqrt{b_{n,\eta}+\eps}.
\]
Choosing $\eta$ sufficiently small, then $n$ sufficiently large, and taking
the tolerance in \eqref{ineq:regularized-square-concentration} sufficiently
small gives
\[
    \left\|
        \Pi_A\wh M_{\wh f}^{(\eta)}\Pi_B
    \right\|_\op
    \le
    \frac14+\frac{\eps}{2}.
\]

It remains only to remove the regularization.
Put
\[
    E_{\wh f}^{(\eta)}
    :=
    \wh M_{\wh f}-\wh M_{\wh f}^{(\eta)}.
\]
Since
\[
    0\preceq E_{\wh f}^{(\eta)}
    \preceq D_{\wh f}^{(\eta)}
    \qquad\text{and}\qquad
    E_{\wh f}^{(\eta)}\preceq I,
\]
we have
\begin{align*}
\left\|
    \Pi_AE_{\wh f}^{(\eta)}\Pi_B
\right\|_\op^2
&=
\left\|
    \Pi_BE_{\wh f}^{(\eta)}
    \Pi_A
    E_{\wh f}^{(\eta)}\Pi_B
\right\|_\op\\
&\le
\left\|
    \Pi_B\left(E_{\wh f}^{(\eta)}\right)^2\Pi_B
\right\|_\op\\
&\le
\left\|
    \Pi_BD_{\wh f}^{(\eta)}\Pi_B
\right\|_\op.
\end{align*}
Applying \Cref{prop:BB-sufficient} with a sufficiently small error parameter
therefore gives
\[
    \left\|
        \Pi_AE_{\wh f}^{(\eta)}\Pi_B
    \right\|_\op
    \le
    \frac{\eps}{2}
\]
with probability $1-\exp(-\Omega(n))$.
Hence
\[
    \left\|
        \Pi_A\wh M_{\wh f}\Pi_B
    \right\|_\op
    \le
    \frac14+\eps,
\]
which proves \Cref{ineq:AB}.

\section{Finishing the proof}
\label{sec:finishing}

\begin{proof}[Proof of \Cref{thm:full-uncertainty}]
Fix $\eps<1/20$ and choose $T>0$ large enough to satisfy the conditions in
\Cref{thm:B-op-norm-bounds}; \textit{i.e.,} so that the errors in the $AB$-
and $BB$-bounds are at most $\eps$.  Then choose $\kappa>0$ small enough that
\[
    2\sqrt{\frac{CT^5\kappa}{\eps^3}}<\frac13
    \qquad\text{and}\qquad
    \kappa<\frac14-5\eps,
\]
where $C$ is the constant in \Cref{thm:metric-entropy}.

If $\Avg_f[g]\le 1-\kappa$, we are done, so assume
$\Avg_f[g]>1-\kappa$, \textit{i.e.},
\[
    \Inf_\mu[q]<\kappa n.
\]

It will be convenient to work with a shifted $q$ (equiv. $g$) so that its $A$-part is centered with respect to $\mu$.
To that end, let $A=B^c$, and define
\[
    \alpha
    :=
    \frac{\langle \1,q\1_A\rangle_\mu}
    {\langle \1,\1_A\rangle_\mu}.
\]
Then define the shifted versions of $g$ and $q$ by:
\begin{align*}
    &g^\perp:=g-\alpha f,
    \qquad
    g_A^\perp:=g^\perp\1_A,
    \qquad
    g_B^\perp:=g^\perp\1_B\\
    \text{and}\qquad &q^\perp:=q-\alpha \1,
    \qquad
    q_A^\perp:=q^\perp\1_A,
    \qquad
    q_B^\perp:=q^\perp\1_B
\end{align*}
Note that since $g=fq$, $\langle\1,q\rangle_\mu=0$, and
$\|q\|_{L^2(\mu)}=1$, we have
\[
    \langle f,g^\perp\rangle
    =
    \langle\1,q^\perp\rangle_\mu
    =
    -\alpha
\qquad\text{and}\qquad
    \|g^\perp\|_2^2
    =
    \|q^\perp\|_{L^2(\mu)}^2
    =
    1+|\alpha|^2.
\]
Moreover, the definition of $\alpha$ gives
\[
    \langle f,g_A^\perp\rangle
    =
    \E_\mu q_A^\perp
    =
    0.
\]

Since $f$ is a $+1$ eigenvector of $\wh M_{\wh f}$,
\begin{align*}
\Avg_{\wh f}[\wh g]
&=
\langle g,\wh M_{\wh f}g\rangle\\
&=
\left\langle
    g^\perp+\alpha f,
    \wh M_{\wh f}(g^\perp+\alpha f)
\right\rangle\\
&=
\langle g^\perp,\wh M_{\wh f}g^\perp\rangle
+
2\Re\left(
    \overline\alpha\langle f,g^\perp\rangle
\right)
+
|\alpha|^2\\
&=
\langle g^\perp,\wh M_{\wh f}g^\perp\rangle
-
|\alpha|^2.
\end{align*}
Decomposing $g^\perp$, we get
\begin{equation}
    \label{eq:gperp-decomp}
\langle g^\perp,\wh M_{\wh f}g^\perp\rangle
=
\langle g_A^\perp,\wh M_{\wh f}g_A^\perp\rangle
+
2\Re\langle g_A^\perp,\wh M_{\wh f}g_B^\perp\rangle +
\langle g_B^\perp,\wh M_{\wh f}g_B^\perp\rangle.
\end{equation}
We bound these three terms separately.
Note that our choice of $\alpha$ ensures that
$q_A^\perp$ is centered with respect to $\mu$, as required for the
$AA$-part argument, while one appreciates that the $AB$- and $BB$-operator bounds of \Cref{thm:B-op-norm-bounds} require no
centeredness.

Indeed, by \Cref{thm:B-op-norm-bounds} and our choice of $T$, with probability
$1-\exp(-\Omega(n))$ we have
\[
\left|
    \langle g_A^\perp,\wh M_{\wh f}g_B^\perp\rangle
\right|
\le
\left(
    \frac14+\eps
\right)
\|g_A^\perp\|_2\|g_B^\perp\|_2
\qquad\text{and}\qquad
\langle g_B^\perp,\wh M_{\wh f}g_B^\perp\rangle
\le
\left(
    \frac12+\eps
\right)
\|g_B^\perp\|_2^2.
\]

\paragraph{$AA$-part.}
For the $AA$ part, $\langle g_A^\perp,\wh M_{\wh f}g_A^\perp\rangle$, we pass to the $q$ picture.
Put
\[
    \wt M
    :=
    \operatorname{Diag}(1/f)\,
    \wh M_{\wh f}\,
    \operatorname{Diag}(f).
\]
This is well-defined with probability one.
The map $u\mapsto fu$ is an
isometry from $L^2(\mu)$ to $\ell_2$, and
\[
    \langle u,\wt Mv\rangle_\mu
    =
    \langle fu,\wh M_{\wh f}fv\rangle.
\]
Consequently, $\wt M$ is self-adjoint and contractive on $L^2(\mu)$, and
\[
    \langle g_A^\perp,\wh M_{\wh f}g_A^\perp\rangle
    =
    \langle q_A^\perp,\wt Mq_A^\perp\rangle_\mu.
\]

On the internal edges of $A$, the functions $q_A^\perp$ and
$q-\alpha\1$ agree.  Therefore
\begin{align*}
\Inf_\mu^{\Int(A)}[q_A^\perp]
&=
\Inf_\mu^{\Int(A)}[q-\alpha\1]\\
&\le
\Inf_\mu[q-\alpha\1]\\
&=
\Inf_\mu[q]\\
&\le
\kappa n.
\end{align*}

We claim that, with high probability,
\begin{equation}
\label{eq:A-part-bound}
\langle g_A^\perp,\wh M_{\wh f}g_A^\perp\rangle=\langle q_A^\perp,\wt Mq_A^\perp\rangle_\mu
\;\leq\;
\left(
    \frac12+2\eps+o(1)
\right)
\|q_A^\perp\|_{L^2(\mu)}^2
+
\eps.
\end{equation}
This follows by considering two cases.

\medskip
\noindent\textit{Case I: Small norm.}
If $\|q_A^\perp\|_{L^2(\mu)}^2\le\eps$,
then contractivity of $\wt M$ gives
\[
    \langle q_A^\perp,\wt Mq_A^\perp\rangle_\mu
    \le
    \|q_A^\perp\|_{L^2(\mu)}^2
    \le
    \eps,
\]
so \eqref{eq:A-part-bound} holds.

\medskip
\noindent\textit{Case II: Large norm.}
Suppose instead that $\|q_A^\perp\|_{L^2(\mu)}^2>\eps$.
Then
\[
    \tilde{q}^\perp_A:=\frac{q_A^\perp}{\|q_A^\perp\|_{L^2(\mu)}}
\]
satisfies
\[\E_\mu \tilde{q}^\perp_A=0,\qquad\|\tilde{q}^\perp_A\|_{L^2(\mu)}=1,\qquad\text{and}\qquad
    \Inf_\mu^{\Int(A)}
    \left[
        \tilde{q}^\perp_A
    \right]
    \le
    \frac{\kappa}{\eps}n.
\]

On the high-probability event over $\mu$ on which
\Cref{cor:flat-A-net} and \Cref{thm:avg-hat-fixed-q} both hold, let
\[
    \mathcal Q_{\kappa/\eps,A}^{(\eps/4)}
\]
be the compatible flat net supplied by \Cref{cor:flat-A-net}.
Every $
    v\in\mathcal Q_{\kappa/\eps,A}^{(\eps/4)}$
is $\mu$-compatible and satisfies
\[
    \max_x\mu(x)|v(x)|^2
    \leq
    \|\mu\|_\infty^{1/3}.
\]
Moreover, by the definition of $\wt M$, $
    \langle v,\wt Mv\rangle_\mu
    =
    \Avg_{\wh f}[\widehat{fv}]$.
We may therefore apply \Cref{thm:avg-hat-fixed-q} to every point of the net.
Taking the deviation parameter there to be $\eps/2$ and applying a union
bound gives
\begin{align*}
&
\Pr_{f\sim F_\mu}\left[
    \exists
    v\in\mathcal Q_{\kappa/\eps,A}^{(\eps/4)}
    \text{ such that }
    \langle v,\wt Mv\rangle_\mu
    >
    \frac12+\frac{\eps}{2}+o(1)
\right]
\\
&\qquad\leq
\exp\left(
    N^{
        H\left(
            CT^5\kappa/\eps^3
        \right)
        +o(1)
    }
    \log\left(
        \frac{C}{\eps}
    \right)
    -
    c\eps^6N^{1/3-o(1)}
\right).
\end{align*}
Here and below, the universal constants absorb the fixed factors arising from
using $\kappa/\eps$ as the energy threshold and $\eps/4$ as the net radius.
By our choice of $\kappa$,
\[
    H\left(
        \frac{CT^5\kappa}{\eps^3}
    \right)
    <
    \frac13.
\]
Consequently, the preceding probability is
\[
    \exp\left(
        -N^{1/3-o(1)}
    \right),
\]
and in particular is at most $\exp(-\Omega(n))$.

Condition on the complementary event.
Since
\[
    \tilde q_A^\perp
    \in
    \mathcal Q_{\kappa/\eps,A},
\]
we may choose
\[
    v\in
    \mathcal Q_{\kappa/\eps,A}^{(\eps/4)}
\]
such that
\[
    \left\|
        \tilde q_A^\perp-v
    \right\|_{L^2(\mu)}
    \leq
    \frac{\eps}{4}.
\]
Since $\tilde q_A^\perp$ and $v$ both have unit $L^2(\mu)$ norm and
$\|\wt M\|_{\op,L^2(\mu)}\leq1$,
\begin{align*}
\left\langle
    \tilde q_A^\perp,
    \wt M\tilde q_A^\perp
\right\rangle_\mu
&\leq
\langle v,\wt Mv\rangle_\mu
+
\left|
    \left\langle
        \tilde q_A^\perp-v,
        \wt M\tilde q_A^\perp
    \right\rangle_\mu
\right|
+
\left|
    \left\langle
        v,
        \wt M\left(
            \tilde q_A^\perp-v
        \right)
    \right\rangle_\mu
\right|
\\
&\leq
\frac12+\frac{\eps}{2}+o(1)
+
2
\left\|
    \tilde q_A^\perp-v
\right\|_{L^2(\mu)}
\\
&\leq
\frac12+\eps+o(1).
\end{align*}
Multiplying by
\[
    \|q_A^\perp\|_{L^2(\mu)}^2
\]
gives
\[
\langle q_A^\perp,\wt Mq_A^\perp\rangle_\mu
\leq
\left(
    \frac12+\eps+o(1)
\right)
\|q_A^\perp\|_{L^2(\mu)}^2,
\]
which is stronger than \eqref{eq:A-part-bound}.

\paragraph{Combining the bounds.}
Intersecting the good events for the $AA$, $AB$, and $BB$ bounds and substituting into \eqref{eq:gperp-decomp}, we obtain, with probability
$1-\exp(-\Omega(n))$,
\begin{align*}
\langle g^\perp,\wh M_{\wh f}g^\perp\rangle
&\le
\begin{pmatrix}
\|g_A^\perp\|_2 &
\|g_B^\perp\|_2
\end{pmatrix}
\begin{pmatrix}
\frac12+2\eps+o(1) & \frac14+\eps\\
\frac14+\eps & \frac12+\eps
\end{pmatrix}
\begin{pmatrix}
\|g_A^\perp\|_2\\
\|g_B^\perp\|_2
\end{pmatrix}
+
\eps\\
&\le
\left(
    \frac34+3\eps+o(1)
\right)
\left(
    \|g_A^\perp\|_2^2+\|g_B^\perp\|_2^2
\right)
+
\eps\\
&=
\left(
    \frac34+3\eps+o(1)
\right)
\left(
    1+|\alpha|^2
\right)
+
\eps.
\end{align*}
Here the maximum eigenvalue of the displayed $2\times2$ matrix is at most its
largest row sum, namely $\frac34+3\eps+o(1)$.

Returning to the earlier identity,
\begin{align*}
\Avg_{\wh f}[\wh g]
&=
\langle g^\perp,\wh M_{\wh f}g^\perp\rangle
-
|\alpha|^2\\
&\le
\frac34+4\eps+o(1)
-
\left(
    \frac14-3\eps-o(1)
\right)
|\alpha|^2\\
&\le
\frac34+5\eps
\end{align*}
for all sufficiently large $n$.
By our choice of $\kappa$, this means $\Avg_{\wh f}[\wh g]< 1-\kappa$.

We have therefore shown that, with probability $1-\exp(-\Omega(n))$ over $f$, for every
admissible $g$,
\[
    \min\!\big\{\Avg_f[g],
    \Avg_{\wh f}[\wh g]\big\}\le1-\kappa.
\]
This proves \Cref{thm:full-uncertainty}.
\end{proof}

\newpage

\bibliographystyle{alpha}
\bibliography{references}

\newcommand{\etalchar}[1]{$^{#1}$}
\begin{thebibliography}{SGM{\etalchar{+}}25}

\bibitem[BK15]{7148912}
John~J. Benedetto and Paul~J. Koprowski.
\newblock Graph theoretic uncertainty principles.
\newblock In {\em 2015 International Conference on Sampling Theory and Applications (SampTA)}, pages 357--361, 2015.

\bibitem[BOW19]{Badescu2019}
Costin Badescu, Ryan O'Donnell, and John Wright.
\newblock Quantum state certification.
\newblock In Moses Charikar and Edith Cohen, editors, {\em Proceedings of the 51st Annual {ACM} {SIGACT} Symposium on Theory of Computing, {STOC} 2019, Phoenix, AZ, USA, June 23-26, 2019}, pages 503--514. {ACM}, 2019.

\bibitem[CLSW26]{CLSW26}
Andrea Coladangelo, Jerry Li, Joseph Slote, and Ellen Wu.
\newblock The power of two bases: Robust and copy-optimal certification of nearly all quantum states with few-qubit measurements.
\newblock In {\em Proceedings of the 58th Annual ACM Symposium on Theory of Computing}, pages 1771--1776, 2026.

\bibitem[DLH{\etalchar{+}}26]{du2025certifying}
Zhenyu Du, Jinchang Liu, Elias~X Huber, Zi-Wen Liu, and Xiongfeng Ma.
\newblock Certifying localizable quantum properties with constant sample complexity.
\newblock {\em arXiv preprint arXiv:2509.17580v3}, 2026.

\bibitem[GHO25]{guptaHeODonnell2025}
Meghal Gupta, William He, and Ryan O'Donnell.
\newblock Few single-qubit measurements suffice to certify any quantum state.
\newblock {\em CoRR}, abs/2506.11355, 2025.

\bibitem[HPS25]{Huang2025}
Hsin-Yuan Huang, John Preskill, and Mehdi Soleimanifar.
\newblock Certifying almost all quantum states with few single-qubit measurements.
\newblock {\em Nature Physics}, Sep 2025.

\bibitem[KR21]{KlieschRoth}
Martin Kliesch and Ingo Roth.
\newblock Theory of quantum system certification.
\newblock {\em PRX Quantum}, 2:010201, Jan 2021.

\bibitem[Led99]{ledoux}
Michel Ledoux.
\newblock {\em Concentration of measure and logarithmic Sobolev inequalities}, page 120–216.
\newblock Springer Berlin Heidelberg, 1999.

\bibitem[LP17]{LevinPeres2017}
David~A. Levin and Yuval Peres.
\newblock {\em Markov Chains and Mixing Times}.
\newblock American Mathematical Society, Providence, RI, second edition, October 2017.
\newblock With contributions by Elizabeth L. Wilmer.

\bibitem[LZ25]{li2025universal}
Yunting Li and Huangjun Zhu.
\newblock Universal and efficient quantum state verification via schmidt decomposition and mutually unbiased bases.
\newblock {\em arXiv preprint arXiv:2506.19809}, 2025.

\bibitem[O'D14]{ODonnell2014}
Ryan O'Donnell.
\newblock {\em Analysis of Boolean Functions}.
\newblock Cambridge University Press, 2014.

\bibitem[RTW25]{rapaport2025negativemomentssteinhaussums}
Martin Rapaport, Tomasz Tkocz, and Isabella Wu.
\newblock Negative moments of steinhaus sums, 2025.

\bibitem[SGM{\etalchar{+}}25]{sater2025efficientcertificationintractablequantum}
Sami~Abdul Sater, Maxime Garnier, Thierry Martinez, Harold Ollivier, and Ulysse Chabaud.
\newblock Efficient certification of intractable quantum states with few pauli measurements, 2025.

\bibitem[Tro08]{MR2379999}
Joel~A. Tropp.
\newblock The random paving property for uniformly bounded matrices.
\newblock {\em Studia Math.}, 185(1):67--82, 2008.

\bibitem[Wei80]{WEISSLER1980218}
Fred~B Weissler.
\newblock Logarithmic sobolev inequalities and hypercontractive estimates on the circle.
\newblock {\em Journal of Functional Analysis}, 37(2):218--234, 1980.

\bibitem[ZH19]{ZhuHayashi}
Huangjun Zhu and Masahito Hayashi.
\newblock Optimal verification and fidelity estimation of maximally entangled states.
\newblock {\em Phys. Rev. A}, 99:052346, May 2019.

\end{thebibliography}
\newpage
\appendix

\crefalias{section}{appendix}

\section{Notational glossary}
\hypertarget{gloss}{}
\fancyfoot[R]{}
\label{sec:gloss}

\begin{notation}
    \item[$F_\mu$]
    The set of $f:\{0,1\}^n\to \C$ such that $\mu_f=\mu$, where $\mu_f(x)=|f(x)|^2$.
    \item[\(\eta\)]
    Dimensionless smoothing parameter used in the regularization.

    \item[\(\delta\)]
    Dimensional smoothing parameter, related to \(\eta\) by
    \(\delta=\eta/N\).

    \item[\(\beta\)]
    Density of the bad set \(B\subseteq\{0,1\}^n\).

    \item[\(\beta_{\max}\)]
    Density threshold at which Tropp's random-paving result applies.

    \item[\(\Bias(\mu)\)]
    Largest magnitude among $\wh \mu(s)$, $s\neq 0^n$.

    \item[$\frechet$]
    Fréchet derivative.

    \item[\(m_\eta\)]
    Centering constant used in the Gaussianization of
    \(\widehat{M}_{\widehat{f}}\).

    \item[\(d_\eta\)]
    Centering constant used in the Gaussianization of
    \(D_{\widehat{f}}\).
\end{notation}

\section{Large covering number but small \texorpdfstring{$\Inf_\mu$}{influence}}
\label{app:badspace}
\begin{prop}
For every $\kappa>0$, with probability at least $1-\exp\bigl(-\Omega_\kappa(2^n)\bigr)$,
there exists a subspace $V$ of functions on $\{0,1\}^n$ of dimension $\Omega_\kappa(2^n)$
such that every $q\in V$, $\|q\|_{L^2(\mu)}=1$, satisfies
\[
\Inf_\mu[f]\leq \kappa n.
\]
\end{prop}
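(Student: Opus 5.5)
We will build $V$ explicitly from vertices that carry abnormally large mass but whose neighbours all look typical, using the fact (stated in the technical overview) that $\Inf_\mu$ is insensitive to the value of $q$ at such a point. Here $\mu=\mu_f$ with $f$ Haar-random, so $\mu$ is Dirichlet$(\vec 1)$. We use the normalized exponential model $\mu(x)=Z_x/\sum_y Z_y$ with $Z_x$ i.i.d.\ $\mathrm{Exp}(1)$, and work on the event $E=\{N/2\le \sum_y Z_y\le 2N\}$, which fails with probability $\exp(-\Omega(N))$.

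Fix a large constant $L=L(\kappa)$. Call $x$ \emph{spiky} if $Z_x\ge L$ and $Z_y\le 2$ for every neighbour $y\sim x$. The goal is to exhibit many spiky points that are pairwise at distance at least $3$, so that their closed neighbourhoods are disjoint. To get independence, partition the cube into $\Theta(N/n^2)$ sets with this distance property, for example the cosets of a code of minimum distance $3$ (Hamming-code cosets, padded when $n+1$ is not a power of two). Within one such set the events ``$x$ is spiky'' depend on disjoint collections of the $Z$'s, so they are independent. Each has probability
\[p=e^{-L}(1-e^{-2})^n.\]
This is exponentially small in $n$, and fixing that is the main obstacle.

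To repair it, relax the neighbourhood condition. Instead of requiring every neighbour to be small, require only that $\sum_{y\sim x}\min\{Z_y,1\}$, or more usefully $\sum_{y\sim x} Z_y$, is at most $2n$. This holds with probability bounded below by a constant, by the law of large numbers, and the weights behave well under it because
\[\frac{2\mu(x)\mu(y)}{\mu(x)+\mu(y)}\le 2\mu(y)\le \frac{4Z_y}{N}.\]
With this relaxation each point is spiky with probability $p\ge c_\kappa>0$. A Chernoff bound, applied within each code class and then union-bounded over the classes, yields a set $S$ of at least $c_\kappa N/2$ pairwise well-separated spiky points, with failure probability $\exp(-\Omega_\kappa(N))$.

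Now let $V=\mathrm{span}\{\1_x: x\in S\}$ and take $q=\sum_{x\in S}a_x\1_x$. Because the points of $S$ have disjoint neighbourhoods, every edge sees at most one nonzero value, and the influence decouples:
\[\Inf_\mu[q]=\sum_{x\in S}|a_x|^2\,\frac{1}{4}\sum_{y\sim x}\frac{2\cdot 2\mu(x)\mu(y)}{\mu(x)+\mu(y)}\le \sum_{x\in S}|a_x|^2\sum_{y\sim x}\mu(y)\le \sum_{x\in S}|a_x|^2\,\frac{4n}{N}.\]
The normalization gives
\[1=\|q\|_{L^2(\mu)}^2=\sum_{x\in S}|a_x|^2\mu(x)\ge \sum_{x\in S}|a_x|^2\,\frac{L}{2N}.\]
Combining the two,
\[\Inf_\mu[q]\le \frac{8n}{L}\le \kappa n\qquad\text{once } L\ge 8/\kappa.\]
Since $p$ depends only on $L$, hence only on $\kappa$, we get $\dim V=|S|=\Omega_\kappa(2^n)$. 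The overall failure probability is $\exp(-\Omega_\kappa(2^n))$, as claimed. (The $f$ in the displayed statement should read $q$.)
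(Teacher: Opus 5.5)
Your construction of $S$ has a genuine gap. You claim at least $c_\kappa N/2$ spiky points that are pairwise at distance at least $3$, but no such set exists for large $n$. The radius-$1$ balls around such points are disjoint and each contains $n+1$ vertices, so the set has at most $N/(n+1)$ elements. What your per-coset Chernoff bound and union bound actually produce is a union of the separated sets from different cosets. Points in different cosets can be adjacent; indeed $x$ and $x+e_i$ always lie in different cosets of a linear distance-$3$ code. So the claim that every edge sees at most one nonzero value of $q$ fails. Keeping a single coset instead gives $\dim V\le N/(n+1)$, which is not $\Omega_\kappa(2^n)$. (Also, a distance-$3$ code splits the cube into $\Theta(n)$ cosets of size $\Theta(N/n)$, not $\Theta(N/n^2)$ sets.)

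This part is easy to repair, because the decoupling is not needed. From $|q(x)-q(y)|^2\le 2|q(x)|^2+2|q(y)|^2$, for $q$ supported on an arbitrary $S$ you get $\Inf_\mu[q]\le\sum_{x\in S}|q(x)|^2\sum_{y\sim x}\frac{2\mu(x)\mu(y)}{\mu(x)+\mu(y)}\le 2\sum_{x\in S}|q(x)|^2\sum_{y\sim x}\mu(y)$. Your relaxed condition already controls the full neighbourhood sum, so $L\ge 16/\kappa$ suffices. Alternatively, keep only the even-weight spiky points, which form an independent set, as the paper does.

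The second problem is the probability bound, and it survives that repair. Each coset has only about $N/(n+1)$ points, so the Chernoff failure probability per coset is $\exp(-\Omega_\kappa(N/n))$. The union bound then leaves $\exp(-\Omega_\kappa(2^n/n))$, not the claimed $\exp(-\Omega_\kappa(2^n))$.

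The missing idea is the paper's parity conditioning. Restrict candidates to the even class $E$ and set $R_x=\sum_{y\sim x}Z_y$. Then $\sum_{x\in E}R_x=n\sum_{y\in O}Z_y$, so on the event $\sum_{y\in O}Z_y\le 3N/4$ (which fails with probability $e^{-\Omega(N)}$) a counting argument gives at least $N/8$ points $x\in E$ with $R_x\le 2n$. Conditioned on the odd variables, the events $\{Z_x\ge 2/\kappa\}$ for these $x$ are independent. A single Chernoff bound over $\Omega(N)$ trials then gives failure probability $\exp(-\Omega_\kappa(N))$.

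The even/odd split thus supplies both the independence needed for linear-size concentration and an independent set for the influence computation. The influence estimate itself is the same as yours: harmonic weight at most the neighbour mass, normalization via the large $Z_x$. The one difference is that the paper uses scale invariance of $Z_y/(Z_x+Z_y)$, so it never needs the event $\sum_yZ_y\in[N/2,2N]$.
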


The idea is to construct a subspace of functions with basis $q_x:=\mathbf 1_{\{x\}}/\sqrt{\mu(x)}$ for $x$ with $\mu(x)$ large yet $\mu(y)$ small for all neighbors $y\sim x$.
Because the smaller weight on an edge governs the conductance (edge weight), $\Inf_\mu(q_x)$ will stay small  for these $x$.

\begin{proof}
It will be convenient to work with the Exponential model for the Porter--Thomas distribution.
Write
\[
\mu(x)=\frac{Z_x}{\sum_z Z_z},
\]
where the $Z_x$, $x\in\{0,1\}^n$ are independent $\mathrm{Exp}(1)$ random variables.

Let $E,O\subset\{0,1\}^n$ be all even and odd strings respectively.
For $x\in E$, set
\[
R_x=\sum_{y\sim x} Z_y.
\]
Since every $y\in O$ has exactly $n$ neighbors in $E$,
\[
\tsum_{x\in E} R_x
=
n\tsum_{y\in O} Z_y.
\]

By a Chernoff bound,
\[
\Pr\left(\sum_{y\in O} Z_y>\frac{3N}{4}\right)
\leq \exp(-cN).
\]
On the complementary event, the number of vertices $x\in E$ for which
$R_x>2n$ is at most
\[
\frac{1}{2n}\sum_{x\in E}R_x
=
\frac12\sum_{y\in O}Z_y
\leq \frac{3N}{8}.
\]
Since $|E|=N/2$, there are therefore at least $N/8$ vertices $x\in E$
satisfying
\[
R_x\leq 2n.
\]

Condition on the variables $(Z_y)_{y\in O}$. For the vertices satisfying
$R_x\leq 2n$, the events
\[
Z_x\geq \frac{2}{\kappa}
\]
are independent.
Define
\[
p_\kappa
=
\Pr\left(Z\geq \frac{2}{\kappa}\right)>0
\]
for a fresh $Z\sim\mathrm{Exp}(1)$.
Another Chernoff bound shows that, except with probability
$\exp(-\Omega_\kappa(N))$, at least $p_\kappa N/16$ vertices satisfy
both
\begin{equation}
\label{eq:sad-event}
R_x\leq 2n
\qquad\text{and}\qquad
Z_x\geq \frac{2}{\kappa}.
\end{equation}

Let $S\subseteq E$ be the set of such vertices. For every $x\in S$,
\[
\sum_{y\sim x}
\frac{\mu(y)}{\mu(x)+\mu(y)}
=
\sum_{y\sim x}
\frac{Z_y}{Z_x+Z_y}
\leq
\frac{R_x}{Z_x}
\leq \kappa n.
\]

Now let
\[
V=\{q:\operatorname{supp}(q)\subseteq S\}.
\]
Then
\[
\dim V=|S|\geq \frac{p_\kappa}{16}N.
\]
Moreover, since $S$ is contained in one parity class, it is an
independent set.
Hence, for every $q\in V$, conditioned on the event in \eqref{eq:sad-event}, we get
\[
\operatorname{Inf}_\mu[q]
=
\sum_{x\in S}
\mu(x)|q(x)|^2
\sum_{y\sim x}
\frac{\mu(y)}{\mu(x)+\mu(y)}
\leq
\kappa n\sum_{x\in S}\mu(x)|q(x)|^2
\leq
\kappa n\|f\|_{L^2(\mu)}^2=\kappa n.
\]
as desired.
\end{proof}

\section{Relating $\Avg_f$ to $\Inf_\mu$}
\label{app:identity}
Here we provide a proof of the identity 
$$\Avg_f[g] = 1-\frac1n\Inf_{\mu_f}\left[\frac{g}{f}\right] \,.$$

We first use the following elementary identity. For any \(a,b\geq 0\)
and \(u,v\in\mathbb C\),
\[
\frac{|au+bv|^2}{a+b}
=
a|u|^2+b|v|^2-\frac{ab}{a+b}|u-v|^2.
\]
Indeed,
\begin{align*}
|au+bv|^2
&=a^2|u|^2+b^2|v|^2+2ab\operatorname{Re}(u\overline v),
\end{align*}
and
\begin{align*}
(a+b)\bigl(a|u|^2+b|v|^2\bigr)-ab|u-v|^2
&=(a^2+ab)|u|^2+(ab+b^2)|v|^2
   -ab\bigl(|u|^2+|v|^2-2\operatorname{Re}(u\overline v)\bigr)
\\
&=a^2|u|^2+b^2|v|^2+2ab\operatorname{Re}(u\overline v)
\\
&=|au+bv|^2.
\end{align*}
Dividing by \(a+b\) gives the claimed identity.

Let $q = g/f$ (here we are assuming $f$ is nonzero everywhere, which happens with probability $1$ for a Haar-random $f$).
We now apply the above identity with
\[
a=|f(x)|^2,\qquad
b=|f(x+i)|^2,\qquad
u=q(x),\qquad
v=q(x+i).
\]
This gives
\begin{align*}
\frac{
\left|
  |f(x)|^2q(x)+|f(x+i)|^2q(x+i)
\right|^2
}{
  |f(x)|^2+|f(x+i)|^2
}
&=
|f(x)|^2|q(x)|^2
+
|f(x+i)|^2|q(x+i)|^2
\\
&\hspace{3em}
-
\frac{|f(x)|^2|f(x+i)|^2}
     {|f(x)|^2+|f(x+i)|^2}
\left|q(x)-q(x+i)\right|^2.
\end{align*}
Therefore,
\begin{align*}
\Avg_f[g] &=\frac{1}{2n}\sum_i\sum_x\frac{\Big|\overline{f(x)}g(x)+\overline{f(x+i)}g(x+i)\Big|^2}{|f(x)|^2+|f(x+i)|^2} \\
&= \frac{1}{2n}
\sum_{i}
\sum_{x}
\frac{
\left|
  |f(x)|^2q(x)+|f(x+i)|^2q(x+i)
\right|^2
}{
  |f(x)|^2+|f(x+i)|^2
}
\\
&=
\frac{1}{2n}
\sum_{i,x}
\Big(
 |f(x)|^2|q(x)|^2
 +
 |f(x+i)|^2|q(x+i)|^2
\Big)
\\
&\hspace{3em}
-
\frac{1}{2n}
\sum_{i,x}
\frac{|f(x)|^2|f(x+i)|^2}
     {|f(x)|^2+|f(x+i)|^2}
\left|q(x)-q(x+i)\right|^2.
\end{align*}
Now, notice that
\begin{align*}
\frac{1}{2n}
\sum_{i,x}
\Big(
 |f(x)|^2|q(x)|^2
 +
 |f(x+i)|^2|q(x+i)|^2
\Big)
&=
\frac{1}{2n}\sum_i
2\sum_x |f(x)|^2|q(x)|^2
\\
&=
\sum_x |f(x)|^2|q(x)|^2
\\
&=
\sum_x |g(x)|^2
=1.
\end{align*}
Substituting this into the previous expression, we get
\begin{align*}
&\frac{1}{2n}
\sum_i\sum_x
\frac{
\left|
  |f(x)|^2q(x)+|f(x+i)|^2q(x+i)
\right|^2
}{
  |f(x)|^2+|f(x+i)|^2
}
\\
&=
1-
\frac{1}{2n}
\sum_i\sum_x
\frac{|f(x)|^2|f(x+i)|^2}
     {|f(x)|^2+|f(x+i)|^2}
\left|q(x)-q(x+i)\right|^2 \\
&= 1-\frac1n\Inf_{\mu_f}\left[\frac{g}{f}\right] \,,
\end{align*}
as desired.

\end{document}